\numberwithin{equation}{section}
\theoremstyle{plain}
\newtheorem{thm}{Theorem}[section]
\newtheorem{prp}{Proposition}[section]
\newtheorem{lem}{Lemma}[section]
\newtheorem{cor}{Corollary}[section]
\theoremstyle{definition}
\newtheorem{defn}{Definition}[section]
\newtheorem{rmk}{Remark}[section]
\newtheorem{exm}{Example}[section]
\newtheorem{exr}{Exercise}[section]
\theoremstyle{remark}
\newcommand{\mf}{\mathfrak}
\newcommand{\mc}{\mathcal}
\newcommand{\bb}{\mathbb}
\newcommand{\sV}{\mathscr{V}}
\newcommand{\sW}{\mathscr{W}}
\newcommand{\C}{\mathbb{C}}
\newcommand{\Z}{\mathbb{Z}}
\newcommand{\F}{\mathbb{F}}
\newcommand{\bZ}{\mathbb{Z}}
\newcommand{\blu}{\color{blue}}
\newcommand{\End}{{\rm End}}
\newcommand{\Der}{{\rm Der}}
\newcommand{\Lie}{{\rm Lie}}
\newcommand{\Ker}{{\rm Ker}}
\newcommand{\Res}{{\rm Res\,}}
\newcommand{\Vir}{{\rm Vir}}
\newcommand{\ad}{{\rm ad}}
\newcommand{\tc}{{\rm tc}}
\newcommand{\Span}{{\rm span}}
\newcommand{\vac}{{|0\rangle}}
\newcommand{\Zhu}{{\rm Zhu}}
\newcommand{\1}{\mathds{1}}
\newcommand{\one}{|1\rangle}
\begin{document}

%\rightline{Last modified on \today, by \xh.}

\begin{center}
{\Large{\textbf{Introduction to vertex algebras, Poisson vertex algebras,\\[.5em]
and integrable Hamiltonian PDE}}} 
\bigskip

Victor Kac %\footnote{Last modified on \today}
\end{center}

\begin{abstract}
\noindent
These lectures were given in Session 1: ``Vertex algebras,
W-algebras, and applications'' of INdAM Intensive research period
``Perspectives in Lie Theory'' at the  Centro di Ricerca Matematica Ennio De Giorgi,  Pisa, Italy, December 9, 2014 -- February 28, 2015.
\end{abstract}

\tableofcontents

\newpage
\section*{Notation}

\begin{itemize}

\item $U[z]$: polynomials with coefficients in a vector space $U$.

\item $U[z,z^{-1}]$: Laurent polynomials.

\item $U[[z]]$: formal power series.

\item $U((z))$: formal Laurent series.

\item $U[[z,z^{-1}]]$: bilateral series.

\item $\Z_+ = \{0,1,2,\ldots\}$.

\item $\F$: the base field, a field of characteristic 0. All vector spaces are considered over $\F$.

\end{itemize}

\section*{About the LaTeX'ing of these notes}
These lecture notes were typeset by Vidas Regelskis (Lectures 1 and 6), Tam\'{a}s F. G\"orbe (2), Xiao He (3), Biswajit Ransingh (4) and Laura Fedele (5 and 6). 

The author would like to thank all the above mentioned scribes for their work, especially Laura Fedele and Vidas Regelskis for many corrections to the edited manuscript.

\newpage

\section{Lecture 1 (December 9, 2014)} \label{sec:1}

In the first lecture we give the definition of a vertex algebra and explain calculus of formal distributions. We end the lecture by giving two examples of non-commutative vertex algebras: the free boson and the free fermion. 

\subsection{Definition of a vertex algebra}
\label{subsec:1.1}

%\begin{defn}
%A \emph{$z$-algebra} is a vector space $V$ endowed with a bilinear (over $\F$)
%product, valued
%on $V((z))$. Given $a,b\in V$, we use the following notation for this product:
%\begin{equation}
%a(z)b=\sum_{n\in\Z}(a_{(n)}b)z^{-n-1}.
%\label{1.1}
%\end{equation}
%The bilinear (over $\F$) product $a_{(n)}b$ is called the $n$-th product.
%The operator of the left multiplication by $a$ on $V$ is
%\begin{equation}
%a(z)=\sum_{n\in\Z}a_{(n)}z^{-n-1}, \mbox{ where } a_{(n)}\in \End V. 
%\label{1.2}
%\end{equation}

From a physicist's point of view, a vertex algebra can be understood as an algebra of chiral fields of a 2-dimensional conformal field theory. This point of view is explained in my book \cite{VAB}.

From a mathematician's point of view a vertex algebra can be understood as a natural ``infinite'' analogue of a unital commutative associative differential algebra. Recall that a differential algebra is an algebra $ V $ with a derivation $ T. $ A simple, but important remark is that a unital algebra $ V $ is commutative and associative if and only if 
\begin{equation}
\label{e1.1}
\hat{a}\hat{b} =\hat{b} \hat{a}, \quad a,b \in V,
\end{equation}
where $\hat{a}$ denotes the operator of left multiplication by $a\in V$.
\begin{exr}
\label{exer1.1}
Prove this remark.
\end{exr}

A vertex algebra is roughly a unital differential algebra  with a product, depending on a parameter $ z, $ satisfying a locality axiom, similar to \eqref{e1.1}. To be more precise, let me first introduce the notion of a $ z $-algebra. (Sorry for the awkward name, but I was unable to find a better one.)
\begin{defn}
\label{def1.1}
A \emph{$ z $-algebra} is a vector space $ V $ endowed with a bilinear (over $ \mathbb{F} $) product, valued in $ V((z)), a \otimes b \mapsto a(z)b, $ endowed with a derivation $ T $ of this product:
\begin{enumerate}
\item[(i)] $ T(a(z)b) = (Ta)(z)b + a(z)Tb, $
 
such that the following consistency property holds:
\item[(ii)] $ (Ta)(z) = \partial_z a(z).$
\end{enumerate}

Here and further on we denote by $ a(z) $ the operator of left multiplication by $ a\in V $ in the $ z $-algebra $ V. $ Using the standard notation 
\begin{equation}
\label{e1.2}
a(z)b = \sum_{n \in \mathbb{Z}} (a_{(n)} b) z^{-n-1}, 
\end{equation}
\noindent we can write
\begin{equation}
\label{e1.3}
a(z) = \sum_{n \in \mathbb{Z}} a_{(n)}  z^{-n-1}, \mbox{ where } a_{(n)} \in \End V.
\end{equation}
\end{defn}
The bilinear (over $ \mathbb{F} $) product $ a_{(n)}b $ is called the \emph{$ n $-th product}. Note that $ a(z) $ is an \emph{$\End V$-valued distribution}, i.e., an element of $(\End V)[[z,z^{-1}]]$.
% and $a_{(n)}$ are called {\it Fourer coefficients} of $a(z)$. 
Moreover, $a(z)$ is a \emph{quantum field}, i.e., $a_{(n)}b=0$ for  $b\in V$ and sufficiently large $n$ (depending on $b$). 
%\begin{defn}
%A \emph{unit element} of a $z$-algebra $V$ is a non-zero vector $\vac\in V$, also called a {\it vacuum vector}, such that
%\begin{itemize}
%\item[(i)] $\vac(z)=I_V$, i.e., $\vac_{(n)}a=\delta_{n,-1}a$, for all $n\in\Z$ and
%$a\in V$,
%\item[(ii)] $a(z)\vac=a+a_{(-2)}\vac z+o(z)$, where $o(z)\in z^2\End V[[z]]$, i.e., $a_{(n)}\vac=\delta_{n,-1}a$, for all  $a\in V$ and $n\geq -1$.
%\end{itemize}
%\end{defn}
%\begin{exr}
%Provided that it exists, show that the unit element is unique.
%\end{exr}
%Set $T a:=a_{(-2)}\vac$. By definition of $z$-algebra, $T\in \End V$;
%it is called the {\it translation operator}. Moreover, it satisfies the property 
%\begin{equation}
%T\vac(=\vac_{(-2)}\vac)=0.
%\label{1.3}
%\end{equation}
%\begin{defn}
%An $\End V$-valued quantum field $a(z)$ is called \emph{translation covariant} if
%\end{defn}
\begin{rmk}
\label{rem1.1}
Axioms (i) and (ii) of a $ z $-algebra imply the following \emph{translation covariance} property of $ a(z): $
\begin{equation}
[T,a(z)]=\partial_za(z),\quad\text{i.e.,}\quad [T,a_{(n)}]=-na_{(n-1)},
\quad\forall n\in\Z .
\label{1.4}
\end{equation}
Moreover, the translation covariance of $ a(z) $ and either of the axioms (i) or (ii) in Definition \ref{def1.1} imply the other axiom. 
\end{rmk}

Next, we define a unital $z$-algabra. 
\begin{defn}
\label{def1.2}
A \emph{unit element} of a $ z $-algebra $ V $ is a non-zero vector $ 1 \in V $, such that 
\[ 1(z)a = a, \,\,\hbox{and}\,\,a(z)1 = a \mod z V[[z]] .\]
\end{defn}
\begin{lem}
\label{L:2}
Let $V$ be a vector space, let $1 \in V$ and $T\in \End V$ be such that $T1 =0$. Then
\begin{enumerate}
\item[(a)] For any translation covariant (with respect to $T$) quantum field $a(z)$, we have $a(z)1 \in V[[z]]$.
\item[(b)] \begin{equation}
a(z)\one=e^{zT}a\;\;(=\sum_{n=0}^\infty\frac{z^n}{n!}\,T^n(a)), \mbox{ where } a = a_{(-1)}1.
\label{e1.5}
\end{equation}
\end{enumerate}
\end{lem}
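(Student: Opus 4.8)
The plan is to make everything flow from a single identity obtained by feeding the translation covariance relation \eqref{1.4}, namely $[T,a_{(n)}]=-na_{(n-1)}$, into the vector $1$. Since $T1=0$, the term $a_{(n)}(T1)$ drops out and I am left with
\begin{equation*}
T(a_{(n)}1)=-n\,a_{(n-1)}1,\qquad n\in\Z,
\end{equation*}
which I will run in two opposite directions to get (a) and (b). So the first step is just to write down this identity and record that it is the whole engine of the proof.

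For part (a), it suffices to show $a_{(n)}1=0$ for every $n\geq 0$, since then the expansion $a(z)1=\sum_{n\in\Z}(a_{(n)}1)z^{-n-1}$ retains only non-negative powers of $z$ and hence lies in $V[[z]]$. I would start from the quantum field property, which supplies some $N$ with $a_{(n)}1=0$ for all $n\geq N$, and then perform a downward induction: whenever $a_{(n)}1=0$ with $n\geq 1$, the displayed identity gives $n\,a_{(n-1)}1=T(a_{(n)}1)=0$, and dividing by the nonzero scalar $n$ forces $a_{(n-1)}1=0$. Descending from $N$ down to index $1$ then kills every $a_{(n)}1$ with $n\geq 0$, which is exactly what (a) requires.

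For part (b), I would use (a) to rewrite $a(z)1=\sum_{n\geq 0}b_n z^n$ with $b_n:=a_{(-n-1)}1$, so in particular $b_0=a_{(-1)}1=a$. Evaluating the displayed identity at index $-n$ gives $T b_{n-1}=n\,b_n$, i.e.\ the recursion $b_n=\frac{1}{n}T b_{n-1}$ for $n\geq 1$; solving it by a trivial induction yields $b_n=\frac{1}{n!}T^n a$, and summing gives $a(z)1=\sum_{n\geq 0}\frac{z^n}{n!}T^n a=e^{zT}a$, which is \eqref{e1.5}.

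The one point that needs genuine care — and the reason the statement is true rather than trivially overkilling — is that the descent in (a) must terminate exactly at $n=0$. The identity at $n=0$ reads $T(a_{(0)}1)=0$, which is automatically satisfied and carries no information about $a_{(-1)}1$, so the induction cannot be pushed past the index $0$. This is precisely the behavior one wants, since $a_{(-1)}1=a$ is generically nonzero: the vanishing of the scalar coefficient exactly at $n=0$, together with its nonvanishing for $n\geq 1$, is what cleanly separates the coefficients that must die (the negative powers of $z$) from the ones that survive and assemble into $e^{zT}a$. I expect this boundary bookkeeping, rather than any hard computation, to be the only delicate part of the argument.
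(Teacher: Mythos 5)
Your proof is correct and takes essentially the same route as the paper: part (a) is the identical downward induction from the quantum-field truncation via $T(a_{(n)}1)=-n\,a_{(n-1)}1$, and your coefficient recursion $n\,b_n=Tb_{n-1}$ in part (b) is precisely the coefficient-wise form of the paper's argument that both sides solve $\frac{df}{dz}=Tf(z)$ in $V[[z]]$ with initial condition $f(0)=a$. (One harmless sign slip: the identity gives $-n\,a_{(n-1)}1=T(a_{(n)}1)$, not $+n\,a_{(n-1)}1$, but since the right-hand side vanishes in your induction step, the conclusion is unaffected.)
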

\begin{proof}
For (a) we have to prove that $a_{(n)}1 =0$ for all $n\in\Z_+$. Since $a(z)$ is a quantum field,
$a_{(n)}1=0$ for $n\geq N$ with some $N\in\Z_+$. Also by translation covariance
we have $[T,a_{(n)}]=-na_{(n-1)}$ for all $n\in\Z$. Apply both sides of the last equality to $1$:
\begin{equation}
[T,a_{(n)}]1=Ta_{(n)}1-a_{(n)}T1=Ta_{(n)}1=-na_{(n-1)}1.
\label{1.5}
\end{equation}
Therefore $a_{(n)}1=0$ for $n>0$ implies $a_{(n-1)}1=0$. Hence $a_{(n)}1=0$
for all $n\in\Z_+$ and $a(z)1\in V[[z]]$.

Now we prove (b). By (a), the LHS of \eqref{e1.5} lies in $ V[[z]]. $ Both sides are solutions of the differential equation
\begin{equation}
\label{e1.7}
\frac{df}{dz} = T f(z), \quad f(z) \in V[[z]].
\end{equation}
For the RHS it is obvious, and for the LHS it follows from \eqref{1.4} and 
$ T 1 = 0 $:
\begin{equation}
\label{e1.8}
\partial_z a(z) 1 = Ta(z)1 - a(z) T 1 = Ta(z) 1.
\end{equation}
Both sides obviously satisfy the same initial condition $ f(0) = a, $ hence they are equal. 
\end{proof}
Since,  $1_{(-1)}1=1$ and $T$ is a derivation of $n$-th products, we have in a unital $z$-algebra:  
\begin{equation}
\label{e1.9} 
T1 = 0.
\end{equation}
Note that Lemma \ref{L:2}(a) implies that $ a(z)1 \in V[[z]], $  and by Lemma \ref{L:2}(b) one actually has (\ref{e1.5}).
%\[ a(z)\one = e^{zT}a. \]
Lemma \ref{L:2}(b) implies that
\begin{equation}
\label{e1.10}
Ta = a_{(-2)}1,
\end{equation}
\noindent so that the derivation $ T $ is ``built in'' the product of a unital $ z $-algebra.

%\begin{lem} \label{L:1}
%If $a(z)$ is a translation covariant quantum field of a unital $z$-algebra, then
%\end{lem}

%\begin{proof}
%By Lemma \ref{L:2}, the LHS of (\ref{1.5}) lies in $ V[[z]] $. Both sides of \eqref{1.5} satisfy the %differential equation
%\begin{equation}
%\frac{df(z)}{dz}=Tf(z),
%\label{1.6}
%\end{equation}
%where $f(z)\in V[[z]]$ with the initial condition $f(0)=a$. For the right-hand side
%it is obvious. For the left-hand side use \eqref{1.3} and \eqref{1.4}:
%\begin{equation}
%\frac{d}{dz}a(z)\vac=[T,a(z)]\vac=Ta(z)\vac-a(z)T\vac=Ta(z)\vac.
%\label{1.8}
%\end{equation}
%The quantum field $a(z)$ also satisfies the initial condition since
%\begin{equation}
%a(0)\vac=a_{(-1)}\vac=a.
%\label{1.9}
%\end{equation}
%This concludes the proof.
%\end{proof}

%\begin{defn}
%\label{def:4}
%A \emph{vertex algebra} is a unital $z$-algebra such that
%\begin{itemize}
%\item[(i)] all its quantum fields $\{a(z)\}_{a\in V}$ are translation 
%covariant,
%\item[(ii)] all pairs of its quantum fields are \emph{local}, i.e.,
%$(z-w)^{N_{ab}}[a(z),b(w)]=0$ for some $N_{ab}\in\Z_+,\, \forall a, b \in V$.
%\end{itemize}
%\end{defn}
Now we can define a vertex algebra.
\begin{defn}
\label{def1.3}
\begin{enumerate}
\item[(a)] A $ z $-algebra is called \emph{local} if 
\begin{equation}
\label{e1.11}
(z-w)^{N_{ab}} a(z) b(w) = (z-w)^{N_{ab}} b(w) a(z), \mbox{ for some } N_{ab} \in \bZ_+ (\mbox{depending on } a,b \in V).
\end{equation}
\item[(b)] A \emph{vertex algebra} is a local unital $ z $-algebra.
\end{enumerate}
\end{defn}
A frequently asked question is: why one cannot cancel $ (z-w)^{N_{ab}} $ on both sides of \eqref{e1.11}? As we will see in a moment, the answer is: due to the existence of the delta function. In fact, the case $ N_{ab} = 0 $ for all $ a,b \in V $ is not very interesting, since all such vertex algebras correspond bijectively to unital commutative associate differential algebras, as 
Exercise \ref{exr1.2} below demonstrates.  
\begin{exm} \label{ex1} %% Exampl 1
A commutative vertex algebra, i.e., $[a(z),b(w)]=0$ for all $a,b\in V$, can 
be constructed
as follows. Take $V$ to be a unital commutative associative algebra with a derivation
$T$ . Then $V$ is a commutative vertex algebra with the product
%$\vac = 1$ and 
$a(z)b=(e^{zT}a)b$.
\end{exm}
\begin{exr}
\label{exr1.2}
Check that the above example is indeed a commutative vertex algebra.
Using Lemma \ref{L:2}, prove that all commutative vertex algebras are of the
form given in Example \ref{ex1}.
\end{exr}

%\begin{exr}
%Show that if a vertex algebra is finite dimensional, then it must be commutative and hence a finite dimensional associative unital differential algebra.
%\end{exr}

\begin{rmk}
\label{remark1.2}  %% Rem 2
A unital $z$-algebra $V$ is a vector space with unit element 1 and  bilinear products $a_{(n)}b$, $n\in\Z$. (Recall that $ T $ is obtained by \eqref{e1.9}.) Through these bilinear products we can naturally define $z$-algebra homomorphisms/isomorphisms, and subalgebras/ideals. Namely, a \emph{homomorphism} between two $z$-algebras $V$ and $V^\prime$ is a linear map $f$ such that $ f(1) =1 $ and $f(a)_{(n)}f(b)=f(a_{(n)}b), \, \forall a, b \in V$ and $\forall n\in \Z$. It is an \emph{isomorphism} if it is a homomorphism of $z$-algebras and also an isomorphism as vector spaces. A \emph{subalgebra} is a subspace $W$ of $V$ which contains $1$, such that $a_{(n)}b\in W, \, \forall a, b \in W$ and $\forall n\in \Z$. And an \emph{ideal} is a subspace $I$ such that $a_{(n)}b, b_{(n)}a\in I, \, \forall a \in V,\, \forall b \in I$ and $\forall n\in \Z$. Note that both a subalgebra and an ideal are $T$-invariant due to \eqref{e1.9}, and if an ideal $I$ contains $1$, then it must be the whole vertex algebra $V$ .
\end{rmk}

Now I will give another definition of a vertex algebra, in the spirit of quantum field theory, using language closer to physics: a unit element is called a vacuum vector, element of a vector space is called a state, etc. 
\begin{defn}
\label{def1.4}
A vertex algebra is a vector space $ V $ (the space of states) with a non-zero vector $ \vac $ (the vacuum vector) and a linear map from $ V $  to the space of $ \End V $-valued quantum fields (the state-field correspondence) $ a \mapsto a(z), $ satisfying the following axioms:
\begin{itemize}
\item[] vacuum axiom: $ \vac(z) = I_V, \ a(z) \vac = a + (Ta) z + \ldots ,  $ where $ T \in \End V $;
\item[] translation covariance axiom \eqref{1.4}; 
\item[] locality axiom \eqref{e1.11}.
\end{itemize}
\end{defn}

Remark \ref{rem1.1} demonstrates that a vertex algebra defined in the spirit of differential algebra is a vertex algebra defined in the spirit of quantum field theory. However, in order to prove the converse, one has to show that axiom (ii) of Definition \ref{def1.1} holds. This will follow from the proof of the Extension theorem in Lecture 3. 
\begin{defn}
\label{def1.5}
Given a vertex algebra $V$, the map of the space of its quantum fields to $V$, defined by
\begin{equation}
\mathit{fs}\colon a(z)\mapsto a(z)\vac_{z=0}=a_{(-1)}\vac=a,
\label{1.12}
\end{equation}
is called the \emph{field-state correspondence}. This map is obviously surjective. If this map is also injective, then the inverse map
\begin{equation}
\mathit{sf}\colon a\mapsto a(z)
\label{1.13}
\end{equation}
is called the \emph{state-field correspondence}.
\end{defn}
The first fundamental theorem, which allows one to construct non-commutative
vertex algebras, is the so-called Extension theorem. 
%In order to state this theorem, it is convenient to introduce the following definition. \begin{defn}\label{def1.6}
\begin{thm} (Extension theorem).
\label{Th1.1}
Let $V$ be a vector space, $\vac\in V$ a non-zero vector, $T\in\End V$ and 
\begin{equation}
\mc{F}=\bigg\{a^j(z)=\sum_{n\in\Z}a^j_{(n)}z^{-n-1}\bigg\}_{j\in J}
\label{1.14}
\end{equation}
a collection of $\End V$-valued quantum fields indexed by a set $J$. Suppose that the following properties hold:
\begin{itemize}
\item[(i)](vacuum axiom) $T\vac=0$,
\item[(ii)](translation covariance) $[T,a^j(z)]=\partial_z a^j(z)$ for all $j\in J$,
\item[(iii)](locality) $(z-w)^{N_{ij}}[a^i(z),a^j(w)]=0$ for all $i,j\in J$
with some $N_{ij}\in\Z_+$,
\item[(iv)](completeness) $\Span\{a^{j_1}_{(n_1)}\cdots a^{j_s}_{(n_s)}\vac\mid
j_i\in J,\ n_i\in\Z,\ s\in\Z_+\}=V$.
\end{itemize}
%Then the quadruple 
%$(V,\vac,T, \mc{F})$ is called a \emph{pre-vertex algebra}.
%\end{defn}
%Let $(V,\vac,T, \mc{F})$ be a \emph{pre-vertex algebra}, where $\mc{F}$ is as in (\ref{1.14}).
Let $\mc{F}_{\max}$ denote the set of all translation covariant quantum fields $a(z)$
such that $a(z)$, $a^j(z)$ is a local pair for all $j\in J$. Then the map
\begin{equation}
\mathit{fs}\colon\mc{F}_{\max}\to V,\quad a(z) 
\mapsto a(z) \vac_{z=0}  
\label{1.15}
\end{equation}
is bijective and the inverse map $\mathit{sf}\colon V\to\mc{F}_{\max}$ 
 endows $V$ with a structure of a vertex algebra (in the sense of Definition 
\ref{def1.4}) 
with vacuum vector $\vac$ and translation operator $T$.
\end{thm}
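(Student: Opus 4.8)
The plan is to establish the two halves of the bijectivity of $\mathit{fs}$ separately---injectivity by a uniqueness argument and surjectivity by closing the generating fields under $n$-th products---and then to transport the resulting structure back to $V$ and check the axioms of Definition \ref{def1.4}. Throughout I will use that, by (i), (ii) and Lemma \ref{L:2}, every $a(z)\in\mathcal{F}_{\max}$ satisfies $a(z)\vac=e^{zT}(a_{(-1)}\vac)\in V[[z]]$, so a field in $\mathcal{F}_{\max}$ is completely determined on the vacuum by the single vector $\mathit{fs}(a(z))=a_{(-1)}\vac$. Note also that $\mathcal{F}_{\max}$ is a vector space (translation covariance is linear, and locality of $a(z)\pm b(z)$ with $a^j(z)$ holds with $N$ the larger of the two orders) and that $\mathit{fs}$ is linear, so for injectivity it suffices to treat its kernel.

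\emph{Injectivity (uniqueness).} Suppose $d(z)\in\mathcal{F}_{\max}$ with $\mathit{fs}(d(z))=d_{(-1)}\vac=0$; then $d(z)\vac=e^{zT}(d_{(-1)}\vac)=0$. I would show $d(z)=0$ by proving $d(z)v=0$ for every $v=a^{j_1}_{(n_1)}\cdots a^{j_s}_{(n_s)}\vac$, which suffices by completeness (iv), arguing by induction on the length $s$. The case $s=0$ is $d(z)\vac=0$. For the inductive step write $v=a^{j_1}_{(n_1)}w_0$ with $w_0$ a word of length $s-1$, so $d(z)w_0=0$ by hypothesis. Locality of $d(z)$ with $a^{j_1}(z)$ gives $N$ with $(z-w)^N d(z)a^{j_1}(w)=(z-w)^N a^{j_1}(w)d(z)$; applying this to $w_0$ and using $d(z)w_0=0$ yields $(z-w)^N\,d(z)a^{j_1}(w)w_0=0$. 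Since $a^{j_1}(z)$ and $d(z)$ are quantum fields, $d(z)a^{j_1}(w)w_0\in V((z))((w))$, a torsion-free module over the field $\F((z))((w))$, on which multiplication by the nonzero polynomial $(z-w)^N$ is injective; hence $d(z)a^{j_1}(w)w_0=0$, and extracting the coefficient of $w^{-n_1-1}$ gives $d(z)v=0$. This closes the induction, so $\Ker\mathit{fs}=0$.

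\emph{Surjectivity and closure.} Here I would first develop the calculus of $n$-th products of $\End V$-valued fields, $a(z)_{(n)}b(z)$, and record two facts: the $n$-th product and $\partial_z$ preserve translation covariance and the quantum-field property, and there is the basic identity $\mathit{fs}\big(a(z)_{(n)}b(z)\big)=a_{(n)}\big(\mathit{fs}(b(z))\big)$ (for $n=-1$ this is the computation $:a(z)b(z):\vac|_{z=0}=a_{(-1)}b_{(-1)}\vac$, using $a(z)_-\vac=0$). Let $\overline{\mathcal{F}}$ be the smallest family of fields containing all $a^j(z)$ and the identity field $I_V$ and closed under all $n$-th products. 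Using Dong's Lemma---if $a(z),b(z),c(z)$ are pairwise local then $a(z)_{(n)}b(z)$ is local with $c(z)$---together with the pairwise locality of the generators (iii) and the triviality of locality with $I_V$, an induction on the way each field of $\overline{\mathcal{F}}$ is built shows that $\overline{\mathcal{F}}$ is a pairwise local family of translation covariant quantum fields; in particular $\overline{\mathcal{F}}\subseteq\mathcal{F}_{\max}$. Iterating the basic identity, $\mathit{fs}$ sends $a^{j_1}(z)_{(n_1)}\big(\cdots(a^{j_s}(z)_{(n_s)}I_V)\cdots\big)$ to $a^{j_1}_{(n_1)}\cdots a^{j_s}_{(n_s)}\vac$, so by (iv) the map $\mathit{fs}$ is already surjective on $\overline{\mathcal{F}}$, hence on $\mathcal{F}_{\max}$.

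Since $\mathit{fs}$ is injective on $\mathcal{F}_{\max}$ and already surjective from $\overline{\mathcal{F}}\subseteq\mathcal{F}_{\max}$, the two families must coincide, $\overline{\mathcal{F}}=\mathcal{F}_{\max}$; this is the step that upgrades ``local with the generators'' to ``pairwise local'', so $\mathcal{F}_{\max}$ is a pairwise local family and the locality axiom holds. The inverse bijection $\mathit{sf}=\mathit{fs}^{-1}$ is linear, and setting $a(z):=\mathit{sf}(a)$ defines the state-field correspondence whose modes recover the products via the basic identity; the vacuum axiom follows from $\mathit{sf}(\vac)=I_V$ and $a(z)\vac=e^{zT}a=a+(Ta)z+\cdots$ (Lemma \ref{L:2}), translation covariance is built into $\mathcal{F}_{\max}$, and locality was just established, so $\mathit{sf}$ makes $V$ a vertex algebra in the sense of Definition \ref{def1.4}. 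I expect the main obstacle to be Dong's Lemma and the supporting formal-distribution calculus (the decomposition of a local distribution via $\partial_w^k\delta(z-w)$): proving that $\overline{\mathcal{F}}$ is \emph{pairwise} local, rather than merely local with the fixed generators, is the real content, and it is exactly the step where the delta function---and hence the impossibility of cancelling $(z-w)^N$---enters.
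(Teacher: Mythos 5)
Your proposal is correct, and its overall architecture is the same as the paper's proof in Lecture 3: your $\overline{\mathcal{F}}$ is the paper's $\mathcal{F}_{\min}$, your two ``recorded facts'' are the paper's Lemma \ref{L:4} (closure of translation covariant fields under $n$-th products) and Lemma \ref{L:5} (the identity $\mathit{fs}(a(z)_{(n)}b(z))=a_{(n)}b$), pairwise locality of the closed family comes from Dong's lemma in both cases, and your final step --- injective on $\mathcal{F}_{\max}$ plus surjective already on the subfamily, hence bijective and $\overline{\mathcal{F}}=\mathcal{F}_{\max}$ --- is exactly the paper's concluding argument. Where you genuinely diverge is the injectivity step. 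The paper isolates a standalone Uniqueness Lemma (Lemma \ref{L:u}): if $\mathit{fs}(a(z))=0$, $a(z)$ is local with a family $\mathcal{F}'$, and $\mathit{fs}(\mathcal{F}')=V$, then $a(z)=0$; it is applied with $\mathcal{F}'=\mathcal{F}_{\min}$ by pairing $a(z)$ against $b(w)\in\mathcal{F}_{\min}$, killing one side via $a(z)\vac=0$, and setting $w=0$. This makes injectivity logically dependent on the surjectivity step (the hypothesis $\mathit{fs}(\mathcal{F}')=V$) and, implicitly, on a Dong-type induction showing each element of $\mathcal{F}_{\max}$ is local with all of $\mathcal{F}_{\min}$, not merely with the generators. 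Your argument instead inducts on the length of the monomials $a^{j_1}_{(n_1)}\cdots a^{j_s}_{(n_s)}\vac$ and uses only locality with the generating fields (which is precisely the defining property of $\mathcal{F}_{\max}$), at the price of the observation that multiplication by $(z-w)^N$ is injective on $V((z))((w))$ viewed as a vector space over the field $\F((z))((w))$ --- which is a valid claim, so the coefficient extraction goes through. The trade-off: your route makes injectivity self-contained (independent of Dong's lemma and of surjectivity), while the paper's route packages the same content into a reusable uniqueness statement and avoids the iterated-Laurent-series bookkeeping.
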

\begin{rmk}
By conditions (ii) and (iii) 
%of the definition of a pre-vertex algebra 
we have
$\mc{F}\subset\mc{F}_{\max}$, hence the name Extension theorem.  
\end{rmk}
\medskip\noindent
Some historical remarks:
%{\bf Historical remarks.}
%
\begin{itemize}
\item Vertex algebras first appeared implicitly in the paper of Belavin, Polyakov, Zamolodchikov
\cite{BPZ} in 1984.
\item The first definition of vertex algebras was given by Borcherds \cite{B1}
in  1986.
\item The Extension Theorem was proved in \cite{DSK06}.
In \cite{VAB} a weaker version was given.
\item Connection to physics (Wightman axioms of a quantum field theory \cite{W}
in the 1950's) is discussed e.g. in \cite{VAB}.
\end{itemize}
\begin{rmk}[Super version]
A vertex superalgebra $V$ is a local unital $z$-superalgebra $V$,
cf. Definition \ref{def1.3}. Namely $V$ is a vector superspace
\begin{equation}
V=V_{\bar 0}\oplus V_{\bar 1}, \,\,\{\bar 0,\bar 1\}=\Z/2\Z,
\label{1.16}
\end{equation}
%(where $V_{\bar 0}$ is called the \emph{even subspace} and
%$V_{\bar 1}$ is called the \emph{odd subspace}),
% and $1\in V_{\bar 0}$.
 $a(z)b\in V_{\alpha +\beta}((z))$ if $a\in V_\alpha , b\in V_\beta$,
and $TV_{\alpha}\subset V_\alpha$, $\alpha, \beta\in\Z/2\Z$.
An element $a\in V$ has \emph{parity} $p(a)=\alpha$ if $a\in V_\alpha$.
% and it is called homogeneous.The product $a(z)b$ is \emph{$\Z/2\Z$-graded}, i.e.\ $a_{(n)}b\in V_{p(a)+p(b)}$ for all $n\in\Z$ and homogeneous elements $a$ and $b$. 
Finally, the locality axiom (\ref{e1.11}) is written as $(z-w)^{N_{ab}}[a(z),b(w)]=0$, where the commutator is understood in the ``super'' sense, i.e. 
\[[a(z),b(w)]=a(z) b(w)
-(-1)^{p(a)p(b)}b(w)a(z).\]
 % for h\omogeneous vectors $a$ and $b$. 
%{\blu For example, all commutative vertex superalgebras are obtained from differential superalgebras. // V: Check with video //} 
All the identities in the ``super'' case are obtained from the respective identities in the purely even case by the Koszul-Quillen rule: there is a sign change if the order of two odd elements is reversed; no change otherwise.
It is a general convention to drop the adjective ``super'' in the case of
vertex superalgebras.
\end{rmk}

%%%%%%%%%%%%%%%%%%%%%%%%%%%%%%%%%%%%%%%%%%%%%%%%
\subsection{Calculus of formal distributions}
\label{subsec:1.2}

\begin{defn}
Let $U$ be a vector space. A \emph{$U$-valued formal distribution} $a(z)$ is an
element of ${U}[[z,z^{-1}]]$:
\begin{equation}
a(z)=\sum_{n\in\Z}a_{n}z^{n},\quad a_{n}\in U.
\label{1.17}
\end{equation}
The \emph{residue} of $a(z)$ is
\begin{equation}
\Res a(z)dz=a_{-1}.
\label{1.18}
\end{equation}
Most often one uses a different indexing of coefficients:
\begin{equation}
a(z)=\sum_{n\in\Z}a_{(n)}z^{-n-1},\quad\text{so that}\quad a_{(n)}=\Res a(z)z^ndz.
\label{1.19}
\end{equation}
\end{defn}

Note that $a(z)$ is a linear function on the space of test functions $\F[z,z^{-1}]$:
\begin{equation}
\langle a(z),\varphi(z)\rangle=\Res a(z)\varphi(z)dz\in{U},\quad
\forall\varphi(z)\in
\F[z,z^{-1}],
\label{1.20}
\end{equation}
and it is easy to see that one thus gets all linear functions on 
$\F[z,z^{-1}]$.

A formal distribution in two variables $z$ and $w$ is an element $a(z,w)\in U[[z,z^{-1},w,w^{-1}]]$.
\begin{defn}
A formal distribution $a(z,w)$ is called \emph{local} if $(z-w)^{N}a(z,w)=0$ for some $N\in\Z_+$.
\end{defn}
\begin{exm}
The \emph{formal delta function} $\delta(z,w)$, defined by
\begin{equation}
\delta(z,w)=\sum_{n\in\Z}z^{-n-1}w^{n},
%=z^{-1}\sum_{n\in \Z} (\dfrac{w}{z})^n
\label{1.21}
\end{equation}
is an example of an $\F$-valued formal distribution in two variables.
It is local since $(z-w)\delta(z,w)=0$. In fact, one can write $\delta(z,w)$ as
\begin{equation}
\delta(z,w)=i_{z,w}\frac{1}{z-w}-i_{w,z}\frac{1}{z-w},
\label{1.22}
\end{equation}
where $i_{z,w}$ denotes the expansion in the domain $|z|>|w|$ and $i_{w,z}$ denotes
the expansion in the domain $|z|<|w|$, i.e.
\begin{equation}
i_{z,w}\frac{1}{z-w}=z^{-1}\frac{1}{1-\dfrac{w}{z}}=\sum_{n\geq 0}z^{-n-1}w^n,
\quad\text{and}\quad
i_{w,z}\frac{1}{z-w}=-w^{-1}\frac{1}{1-\dfrac{z}{w}}=-\sum_{n<0}z^{-n-1} w^n.
\label{1.23}
\end{equation}
\end{exm}

\noindent
The following formula, which is derived by differentiating 
\eqref{1.21} and \eqref{1.22} $n\in \Z_+$ times, 
will be useful:
\begin{equation}
\frac{\partial^n_w\delta(z,w)}{n!}=i_{z,w}\frac{1}{(z-w)^{n+1}}
-i_{w,z}\frac{1}{(z-w)^{n+1}}=\sum_{j\in\Z}{j\choose n}w^{j-n}z^{-j-1}.
\label{1.24}
\end{equation}

\noindent
Let us list some properties of the formal delta function, which are straightforward by \eqref{1.24}:
\begin{itemize}
\item[(1)] $(z-w)^m\dfrac{\partial_w^n\delta(z,w)}{n!}
=\begin{cases}\dfrac{\partial^{n-m}_w \delta(z,w)}{(n-m)!}&\text{if}\ n\geq m\geq 0,
\\0,&\text{if}\ m>n,\end{cases}$
\item[(2)] $\delta(z,w)=\delta(w,z)$,
\item[(3)] $\partial_z \delta(z,w)=-\partial_w \delta(z,w)$,
\item[(4)] $a(z)\delta(z,w)=a(w)\delta(z,w)$, where $a(z)$ is any formal distribution,
\item[(5)] $\Res a(z)\delta(z,w)dz=a(w)$.
%, which implies that $\delta(z,w)$ is unique. 
%\item[(6)] $\exp(\lambda(z-w))\partial_w^n\delta(z,w)=(\lambda+\delta_w)^n
%\delta(z,w)$. 
\end{itemize}

\begin{thm}[Decomposition theorem]
Any local formal distribution $a(z,w)$ can be uniquely decomposed as a finite 
sum
of derivatives of the formal delta function with formal distributions
in $w$ as coefficients:
\begin{equation}
a(z,w)=\sum_{j\geq 0}c^j(w)\frac{\partial^j_w \delta(z,w)}{j!}.
\label{1.25}
\end{equation}
Moreover
\begin{equation}
c^j(w) = \Res a(z,w)(z-w)^jdz.
\label{1.26}
\end{equation}
\end{thm}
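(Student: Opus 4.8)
The plan is to separate the statement into two parts: uniqueness together with the coefficient formula \eqref{1.26}, which I expect to fall out directly from the algebra of the delta function, and existence, which I would establish by induction on the order of locality $N$, reducing everything to the case $N=1$.

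For uniqueness and the formula, I would start by assuming a decomposition of the form \eqref{1.25} is given and recover its coefficients. Multiplying \eqref{1.25} by $(z-w)^k$ and applying property (1) of the delta function collapses the term $c^j(w)\partial_w^j\delta(z,w)/j!$ to $c^j(w)\partial_w^{j-k}\delta(z,w)/(j-k)!$ when $j\geq k$ and kills it when $j<k$. Taking $\Res\,(\cdot)\,dz$ and using that $\Res\delta(z,w)dz=1$ (property (5) with the constant distribution) while $\Res\partial_w^m\delta(z,w)dz=\partial_w^m\Res\delta(z,w)dz=0$ for $m\geq 1$, only the $j=k$ term survives and yields $c^k(w)=\Res a(z,w)(z-w)^k dz$. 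This both proves uniqueness and establishes \eqref{1.26}. The same computation, applied to the defining hypothesis $(z-w)^N a(z,w)=0$, shows $c^j(w)=0$ for $j\geq N$, so the sum in \eqref{1.25} is automatically finite.

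For existence I would first treat the base case $N=1$ as a separate lemma: if $(z-w)a(z,w)=0$, then $a(z,w)=c^0(w)\delta(z,w)$ with $c^0(w)=\Res a(z,w)dz$. I would prove this by writing $a(z,w)=\sum_{m,n}a_{m,n}z^m w^n$ and reading off that $(z-w)a=0$ forces $a_{m-1,n}=a_{m,n-1}$, i.e. that $a_{m,n}$ depends only on $m+n$; resumming then identifies $a(z,w)$ with a multiple of $\delta(z,w)$, and comparing the coefficient of $z^{-1}$ identifies the multiplier as $\Res a(z,w)dz$. Granting this, the inductive step for general $N$ proceeds by peeling off the top term: set $b(z,w)=a(z,w)-c^{N-1}(w)\partial_w^{N-1}\delta(z,w)/(N-1)!$ with $c^{N-1}(w)=\Res a(z,w)(z-w)^{N-1}dz$. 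Using property (1), $(z-w)^{N-1}b(z,w)=(z-w)^{N-1}a(z,w)-c^{N-1}(w)\delta(z,w)$; since $(z-w)^{N-1}a(z,w)$ is annihilated by $(z-w)$, the order-$1$ lemma identifies it with $c^{N-1}(w)\delta(z,w)$, so $b$ is local of order $N-1$ and the induction hypothesis applies.

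The one point that needs care --- and which I regard as the main obstacle --- is checking that the induction closes with the coefficients in the right form: the decomposition of $b(z,w)$ produced by the inductive hypothesis has coefficients $\Res b(z,w)(z-w)^j dz$, and I must verify these equal $\Res a(z,w)(z-w)^j dz$ for $j\leq N-2$. This again reduces to the residue computation above, since $\Res[(z-w)^j\partial_w^{N-1}\delta(z,w)/(N-1)!]dz=0$ for $j<N-1$ by property (1). Beyond this bookkeeping, the only genuinely substantive ingredient is the explicit coefficient analysis in the $N=1$ lemma; the rest is formal manipulation with the delta function identities already recorded in properties (1)--(5).
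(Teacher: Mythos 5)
Your proof is correct. The uniqueness half --- multiplying \eqref{1.25} by $(z-w)^k$, applying property (1), and taking $\Res\,(\cdot)\,dz$ --- is exactly the paper's argument, as is your observation that $c^j=0$ for $j\geq N$, so the sum is finite. The existence half, however, is organized genuinely differently. The paper subtracts all the delta terms at once: it sets $b(z,w)=a(z,w)-\sum_{j\geq 0}c^j(w)\partial_w^j\delta(z,w)/j!$ with the $c^j$ given by \eqref{1.26}, notes that $\Res\,b(z,w)(z-w)^jdz=0$ for all $j\in\Z_+$ forces $b$ to contain only non-negative powers of $z$, and then kills $b$ using the fact that a nonzero distribution of the form $\sum_{n\geq 0}b_n(w)z^n$ cannot be local. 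You instead induct on the order of locality $N$, peeling off the top coefficient at each step, with the base case being the structure lemma that $(z-w)a(z,w)=0$ implies $a(z,w)=\big(\Res a(z,w)\,dz\big)\,\delta(z,w)$, proved by the explicit coefficient identity $a_{m-1,n}=a_{m,n-1}$. The two routes rest on comparable elementary facts: your $N=1$ lemma and the paper's ``local with no negative powers of $z$ implies zero'' are close cousins, each provable by the same kind of diagonal coefficient chase, and each easily deducible from the other. What your version buys is explicitness --- the paper's final step is asserted in one line with the coefficient argument left to the reader, whereas your base case spells it out, and the $N=1$ classification is a useful standalone statement. The cost is the extra bookkeeping of the inductive step (checking that the coefficients produced for $b$ at stage $N-1$ agree with \eqref{1.26} for $a$), which you handle correctly via property (1) and which the paper's one-shot subtraction avoids entirely.
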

\begin{proof}
Multiply both sides of \eqref{1.25} by $(z-w)^j$ and take residues. Using properties of the delta function listed above we obtain \eqref{1.26}. To show \eqref{1.25} we set
\begin{equation}
b(z,w)=a(z,w)-\sum_{j\geq 0}c^j(w)\frac{\partial^j_w \delta(z,w)}{j!}
\label{1.26+}
\end{equation}
with $c^j(w)$ given by \eqref{1.26}. It is immediate that
\begin{equation}
\Res b(z,w)(z-w)^jdz=0 \quad\text{for all}\quad j\in\Z_+ ,
\label{1.26++}
\end{equation}
hence
\begin{equation}
b(z,w)=\sum_{n\geq 0}b_n(w)z^n .
\label{1.26+++}
\end{equation}
By definition $b(z,w)$ is local, therefore \eqref{1.26+++} implies that $b(z,w)=0$.
\end{proof}

\begin{rmk}
If we have a local pair $a(z),b(z)\in\mf{g}[[z,z^{-1}]]$, where $\mf{g}$ is a Lie (super)algebra (i.e. 
$[a(z), b(w)]$ is a local formal distribution in $z$ and $w$), 
then, by the Decomposition theorem, we have:
\begin{equation}
[a(z),b(w)]=\sum_{j\geq 0}(a(w)_{(j)} b(w))\frac{\partial^j_w\delta(z,w)}{j!},
\label{1.27}
\end{equation}
where the sum over $j$ is finite, and
\begin{equation}
\mf{g}[[w,w^{-1}]]\ni a(w)_{(j)}b(w):=\Res(z-w)^j [a(z),b(w)]dz\;\;(=c^j(w)).
\label{1.28}
\end{equation}
Using \eqref{1.24} and comparing the coefficients of $z^m w^n$ on both sides 
of \eqref{1.27}, we find
\begin{equation}
[a_{(m)},b_{(n)}]=\sum_{j\geq 0}{m\choose j}(a_{(j)}b)_{(m+n-j)},\quad \forall m,n\in\Z.
\label{1.29}
\end{equation}
\end{rmk}

\subsection{Free boson and free fermion vertex algebras}
\label{subsec:1.3}

\begin{exm}[Free boson] \label{Ex:Free-Boson}
Let $B=\F[x_1,x_2,\ldots]$, $\vac=1$,
$T=\sum_{j\geq 2}jx_j\frac{\partial}{\partial x_{j-1}}$ and 
\begin{equation}
\mc{F}=\bigg\{a(z)=\sum_{n\in\Z}a_{(n)}z^{-n-1}\bigg\},\quad\text{where}\quad
a_{(n)}=\begin{cases}\dfrac{\partial}{\partial x_n},&\text{if}\ n>0,\\
-nx_{-n},&\text{if}\ n<0,\\[4pt]
0,&\text{if}\ n=0,
\end{cases}
\label{1.30}
\end{equation}
so that
\begin{equation}
[a_{(m)},a_{(n)}]=m\delta_{m,-n},\quad \forall m,n\in\Z.
\label{1.31}
\end{equation}
The quantum
field $a(z)$ is called the \emph{free boson} field. Since \eqref{1.31} is equivalent to
\begin{equation}
[a(z),a(w)]=\partial_w\delta(z,w),
\label{1.32}
\end{equation}
we have
\begin{equation}
(z-w)^2[a(z),a(w)]=0,
\label{1.33}
\end{equation}
i.e., $a(z)$ is local with itself.

The translation covariance of the free boson field $a(z)$, that is $[T,a_{(n)}]=-na_{(n-1)}$, $\forall n\in\Z$,
can be verified directly. Vacuum axiom and completeness are obviously satisfied.
Locality is \eqref{1.33}. So, by the Extension theorem, $B$ carries a vertex algebra
structure.
\end{exm}

\begin{exm}[Free fermion]
Let $F=\Lambda[\xi_1,\xi_2,\ldots]$ be a Grassmann superalgebra, i.e.,
\[\xi_i\xi_j=-\xi_j\xi_i,\,\,\, p(\xi_i)=\bar{1} \, . \] 
Let $\vac=1$ and
$T=\sum_{j\geq 1}j\xi_{j+1}\frac{\partial}{\partial\xi_j}$, where 
$\frac{\partial}{\partial\xi_j}$
is an odd derivation of the superalgebra $F$
(i.e. $\frac{\partial (ab)}{\partial\xi_j}=
\frac{\partial a}{\partial\xi_j}b+(-1)^{p(a)}\frac{\partial b}{\partial\xi_j}$), such that
\begin{equation}
\frac{\partial}{\partial\xi_j}\xi_i=\delta_{ij}.
\label{1.35}
\end{equation}
Set
\begin{equation}
\mc{F}=\bigg\{\varphi(z)=\sum_{n\in\Z}\varphi_{(n)}z^{-n-1}\bigg\},
\quad\text{where}\quad
\varphi_{(n)}=\begin{cases}\dfrac{\partial}{\partial\xi_{n+1}},&\text{if}\ n\geq 0,\\
\xi_{-n},&\text{if}\ n<0,
\end{cases} 
\label{1.36}
\end{equation}
then
\begin{equation}
[\varphi_{(m)},\varphi_{(n)}]=\delta_{m,-n-1},\quad \forall m,n\in\Z.
\label{1.37}
\end{equation}
The odd quantum field  $\varphi(z)$ is called the \emph{free fermion} field. Since \eqref{1.37} is equivalent to
\begin{equation}
[\varphi(z),\varphi(w)]=\sum_{n\in\Z}z^{-n-1}w^n=\delta(z,w),
\label{1.38}
\end{equation}
$\varphi(z)$ is local to itself. As in Example \ref{Ex:Free-Boson}, the 
vacuum axiom 
and completeness are immediate. Translation covariance follows from the
exercise below.  
\end{exm}

\begin{exr}
Show that the free fermion field is translation covariant, i.e.,
\begin{equation}
[T,\varphi_{(n)}]=-n\varphi_{(n-1)},\quad \forall n\in\Z.
\label{1.39}
\end{equation}
\end{exr}

\newpage

\section{Lecture 2 (December 11, 2014)}
\label{sec:2}

In the first lecture we discussed the two simplest examples of non-commutative
vertex algebras (see Examples 1.3 and 1.4). In this lecture we will consider further
important examples, among them a generalization of those two mentioned previously.
First, we need to introduce the necessary notions.

\subsection{Formal distribution Lie algebras and their universal vertex algebras.}
\label{subsec:2.1}

\begin{defn}
A \emph{formal distribution Lie (super)algebra} is a pair $(\mf{g},\mc{F})$,
where $\mf{g}$ is a Lie (super)algebra and $\mc{F}$ is a collection of pairwise
local $\mf{g}$-valued formal distributions $a^j(z)=\sum_{n\in\Z}a^j_{(n)}z^{-n-1}$,
$j\in J$, such that the coefficients $\{a^j_{(n)}\mid j\in J,\ n\in\Z\}$ span
$\mf{g}$. A formal distribution Lie (super)algebra $(\mf{g},\mc{F})$ is called
\emph{regular} if:
\begin{itemize}
\item[(i)] the $\bb{F}[\partial_z]$-span of $\mc{F}$ is closed under all
\emph{$n$-th products} for $ n\in\Z_+$,
\begin{equation}
a(z)_{(n)}b(z):=\Res(w-z)^n[a(w),b(z)]dw,
\label{2.1}
\end{equation}
i.e., if $a(z)$ and $b(z)$ are elements of the form $\sum_{j\in J} f_j(\partial_z)a^j(z)$,
where $f_j(\partial_z)\in \F[\partial_z]$, and only finitely many $f_j(\partial_z)\neq 0$, then their $n$-th product for $n\in \Z_+$ is still an element of the same form.
\item[(ii)] there exists a derivation $T\in\Der\,\mf{g}$ such that
\begin{equation}
T(a^j(z))=\partial_za^j(z),\ \text{i.e.},\ T(a^j_{(n)})=-na^j_{(n-1)}
\quad\forall j\in J.
\label{2.2}
\end{equation}
\end{itemize}
The \emph{annihilation subalgebra} of $\mf{g}$ is
$\mf{g}_-=\Span\{a^j_{(n)}\mid j\in J,\ n\in\Z_+\}$.
\end{defn}

\begin{exr}
Show that $\mf{g}_-$ is a $T$-invariant subalgebra of $\mf{g}$.
(\emph{Hint}: use the commutation formulas \eqref{1.29} and \eqref{2.2}.)
\end{exr}

The following theorem allows one to construct vertex algebras
via the Extension theorem. Let $U(\mf{g})$ denote the universal enveloping algebra of $\mf{g}$.

\begin{thm}
Let $(\mf{g},\mc{F}_0)$ be a regular formal distribution Lie algebra,
and let $\mf{g}_-$ be the annihilation subalgebra. Let
$V=U(\mf{g})/U(\mf{g})\mf{g}_-$ (also known as the induced $\mf{g}$-module 
$\mathrm{Ind}_{\mf{g}_-}^{\mf{g}}(\C)$) and let $\pi$ be the representation
of $\mf{g}$ in $V$ induced via the left multiplication. Let $\vac=\bar{1}$ be the image of $1$ in $V$  and
$T\in\End V$ be the endomorphism of $V$ induced by the derivation of $\mf{g}$.
Let $\mc{F}$ be the collection of $\End V$-valued formal distributions
\begin{equation}
\mc{F}=\bigg\{\pi\big(a^j(z)\big)=\sum_{n\in\Z}\pi\big(a^j_{(n)}\big)z^{-n-1}
\bigg| a^j(z)\in\mc{F}_0,\ j\in J
\bigg\}.
\label{2.3}
\end{equation}
Then $\mc{F}$ consists of quantum fields and $(V,\vac,T,\mc{F})$ satisfies the
conditions of the Extension theorem, hence $V$ is a vertex algebra, which we
denote by $V(\mf{g},\mc{F}_0)$.
\end{thm}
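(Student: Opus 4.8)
The plan is to verify the four hypotheses (i)--(iv) of the Extension theorem (Theorem \ref{Th1.1}) for the quadruple $(V,\vac,T,\mc{F})$, after first checking that each $\pi(a^j(z))$ is a quantum field; the conclusion then follows at once from that theorem. Throughout I write $\partial$ for the derivation of $\mf{g}$ satisfying \eqref{2.2}, and I use that $\partial$ extends uniquely to a derivation of $U(\mf{g})$ which, since $\partial(\mf{g}_-)\subseteq\mf{g}_-$ by the preceding exercise, preserves the left ideal $U(\mf{g})\mf{g}_-$ and hence descends to the operator $T\in\End V$. I also use that $V=U(\mf{g})\vac$ is a cyclic $\mf{g}$-module, with $\pi$ acting by left multiplication, so that $\pi\colon U(\mf{g})\to\End V$ is an algebra homomorphism whose restriction to $\mf{g}$ is a homomorphism of Lie algebras.

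The substantive point, and the one I expect to be the main obstacle, is the quantum field property: for each $j$ and each $v\in V$ one must produce an $N$ with $\pi(a^j_{(n)})v=0$ for $n\geq N$. I would argue by induction on the length of $v$ as a word in the generators $a^i_{(m)}$ applied to $\vac$. For the base case $v=\vac$, note that $a^j_{(n)}\in\mf{g}_-$ for $n\in\Z_+$, so $\pi(a^j_{(n)})\vac=0$ for all $n\geq 0$. For the inductive step write $v=\pi(a^i_{(m)})w$ and move $\pi(a^j_{(n)})$ past $\pi(a^i_{(m)})$:
\begin{equation}
\pi(a^j_{(n)})v=\pi(a^i_{(m)})\big(\pi(a^j_{(n)})w\big)+\pi\big([a^j_{(n)},a^i_{(m)}]\big)w .
\label{plan-qf}
\end{equation}
The first summand vanishes for large $n$ by the inductive hypothesis applied to $w$. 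For the second I would expand the bracket by the commutation formula \eqref{1.29}, which is a \emph{finite} sum (by locality) of coefficients of the $n$-th products $a^j(z)_{(l)}a^i(z)$; here regularity is essential, since it guarantees that these $n$-th products lie in the $\F[\partial_z]$-span of $\mc{F}_0$, so each such coefficient is a finite linear combination of generators $a^k_{(q)}$ with $q$ differing from $n+m-l$ by a bounded amount. Applying the inductive hypothesis to $w$ for the finitely many fields $a^k$ that occur then forces every term of the second summand to vanish once $n$ is large, completing the induction.

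With the quantum field property in hand, the remaining axioms are routine. For (i) the vacuum axiom $T\vac=0$ holds because a derivation of $U(\mf{g})$ annihilates $1$. For (ii) translation covariance, the compatibility of $T$ with $\pi$ gives $[T,\pi(x)]=\pi(\partial x)$ for $x\in\mf{g}$; taking $x=a^j_{(n)}$ and using \eqref{2.2} yields $[T,\pi(a^j_{(n)})]=-n\,\pi(a^j_{(n-1)})$, that is, $[T,\pi(a^j(z))]=\partial_z\pi(a^j(z))$. For (iii) locality, since $\pi$ is a homomorphism of Lie algebras we have $[\pi(a^i(z)),\pi(a^j(w))]=\pi\big([a^i(z),a^j(w)]\big)$ coefficientwise, so applying $\pi$ to the relation $(z-w)^{N_{ij}}[a^i(z),a^j(w)]=0$ (pairwise locality in $\mf{g}$) gives the desired locality in $\End V$. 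Finally (iv) completeness is immediate: the coefficients $\{a^j_{(n)}\}$ span $\mf{g}$, hence their products span $U(\mf{g})$, and $V=U(\mf{g})\vac$ is therefore spanned by the vectors $\pi(a^{j_1}_{(n_1)})\cdots\pi(a^{j_s}_{(n_s)})\vac$. An invocation of Theorem \ref{Th1.1} then finishes the proof.
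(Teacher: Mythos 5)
Your proposal is correct and follows essentially the same route as the paper: the paper likewise reduces everything to the quantum field property (treating the remaining axioms as immediate) and proves it by induction on the length of monomial vectors, using the commutation formula \eqref{1.29}, locality to truncate the sum over products $a^j(z)_{(l)}a^i(z)$, and regularity to re-express those products in the $\F[\partial_z]$-span of $\mc{F}_0$ so the inductive hypothesis applies. Your write-up merely makes explicit the routine verifications of axioms (i)--(iv) that the paper leaves implicit.
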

\begin{proof}
The only non-obvious part is to check that all $\pi\big(a^j(z)\big)$ are
quantum fields, i.e., $\pi\big(a^j(z)\big)v\in V((z))$ for each $v\in V$.
Due to the $PBW$ theorem, it is  sufficient to check it for vectors of the following form (we use the same notation for elements in $U(\mf{g})$ and their images in $V$):
\begin{equation}
v=a^{j_1}_{(n_1)}\cdots a^{j_s}_{(n_s)}\vac,\, \hbox{where}\,\, 
j_1,\ldots,j_s\in J. 
\label{2.4}
\end{equation}
We argue by induction on $s$. For $s=0$ we have $v=\vac$, hence
\begin{equation}
\pi\big(a^j(z)\big)\vac=\sum_{n\in\Z}\pi\big(a^j_{(n)}\big)z^{-n-1}\vac
=\sum_{n<0}\pi\big(a^j_{(n)}\big)z^{-n-1}\in V[[z]].
\label{2.5}
\end{equation}
The last equality follows from the fact that $a^j_{(n)}\vac =0$ for $n\geq 0$. We proceed by proving the induction step:
\begin{equation}
\label{quantumfield}
\pi\big(a^j(z)\big)a^{j_1}_{(n_1)}\cdots a^{j_s}_{(n_s)}\vac=[a^j(z), a^{j_1}_{(n_1)}]a^{j_2}_{(n_2)}\cdots a^{j_s}_{(n_s)}\vac+ a^{j_1}_{(n_1)}a^j(z)a^{j_2}_{(n_2)}\cdots a^{j_s}_{(n_s)}\vac.
\end{equation}
By assumption of induction, the second term in the right-hand side is in $V((z))$, so we only need to show that the first term is also in  $V((z))$. Now recall the commutation formula \eqref{1.29}. We have 
\begin{equation}
\label{commu}
[a^j(z),a^{j_1}_{(n_1)}]=\sum_{m\in \Z}\sum_{k\geq 0}{m\choose k}(a^j_{(k)}a^{j_1})_{(m+n_1-k)}z^{-m-1},
\end{equation}
where $(a^j_{(k)}a^{j_1})_{(m+n_1-k)}$ is the Fourier coefficient of the formal distribution $a^j(z)_{(k)}a^{j_1}(z)$. By the regularity property, we know that $a^j(z)_{(k)}a^{j_1}(z)$ is contained in the $\bb{F}[\partial_z]$-span of $\mc{F}$, thus we can assume that 
\begin{equation}
a^j(z)_{(k)}a^{j_1}(z)= \sum_{l\in J} f_l^k(\partial_z)a^l(z).
\end{equation}
Since  $a^j(z), a^{j_1}(z)$ is a local pair, we know that there exists an integer $N\in \Z_+$ such that $a^j(z)_{(k)}a^{j_1}(z)=0$ for $k\geq N$. This allows us to rewrite formula \eqref{commu} as follows,
\begin{equation}
[a^j(z),a^{j_1}_{(n_1)}]=\sum_{0\leq k\leq N} \sum_{m\in \Z}{m \choose k}\big(\sum_{l\in J} f_l^k(\partial_z)a^l(z)\big)_{(m+n_1-k)}z^{-m-1}.
\end{equation}
By assumption of induction, for each $k$, 
\begin{equation}
 \sum_{m\in \Z}\big(\sum_{l\in J} f_l^k(\partial_z)a^l(z)\big)_{(m+n_1-k)}z^{-m-1}a^{j_2}_{(n_2)}\cdots a^{j_s}_{(n_s)}\vac \in V((z))
\end{equation}
thus the first term in the right-hand side of \eqref{quantumfield} is also in $V((z))$.  
\end{proof}

\begin{rmk}
Recall that by the Decomposition theorem for any local pair $a(z),b(w)$ we have
\begin{equation}
[a(z),b(w)]=\sum_{j\geq 0}(a(w)_{(j)} b(w))\frac{\partial^j_w\delta(z,w)}{j!},
\label{2.5+}
\end{equation}
which is equivalent to the commutator formula
\begin{equation}
[a_{(m)},b_{(n)}]=\sum_{j\geq 0}{m\choose j}(a_{(j)}b)_{(m+n-j)},\quad \forall m,n\in\Z,
\label{2.5++}
\end{equation}
where $ (a_{(j)} b)(w) = a(w)_{(j)} b(w) $ is given by (\ref{1.28}). 
This, along with the obvious formula
\begin{equation}
\big(\partial_wa(w)\big)_{(n)}=-na(w)_{(n-1)},
\label{2.6}
\end{equation}
allows us to convert the commutator formula into the decomposition formula,
thereby establishing locality.
\end{rmk}

Let us now discuss the next important example of a non-commutative vertex algebra.
\begin{exm} %% Virasoro
\label{ex2.1}
Let $\mf{g}=\Vir$ be the Virasoro algebra with commutation relations
\begin{equation}
[L_m,L_n]=(m-n)L_{m+n}+\delta_{m,-n}\frac{m^3-m}{12}C,\quad [C,L_m]=0,\quad \forall m,n\in\Z.
\label{2.7}
\end{equation}
Consider the formal distribution
\begin{equation}
L(z)=\sum_{n\in\Z}L_nz^{-n-2},
\label{2.8}
\end{equation}
so that $L_{(n)}=L_{n-1}$. Then the commutation relations \eqref{2.7} can be written
in the equivalent form
\begin{equation}
[L(z),L(w)]=\partial_wL(w)\delta(z,w)+2L(w)\partial_w\delta(z,w)
+\frac{C}{2}\partial_w^3\delta(z,w).
\label{2.9}
\end{equation}
Indeed, by \eqref{2.9} we have: $ L_{(0)}L = \partial L, \  L_{(1)}L = 2 L, \ L_{(3)}L = \tfrac{C}{2},  $ and  $ L_{(j)}L = 0 $ for all other $ j \geq 0. $ Hence by \eqref{2.5++} and \eqref{2.6}, \eqref{2.9} is equivalent to \eqref{2.7}.
It follows that $L(z)$ is local with itself, hence $(\Vir,\{L(z),C\})$ is a formal distribution Lie algebra .
Furthermore, 
%$(\Vir,\{L(z),C\})$ 
it is regular. There are two conditions (i) and (ii) we need to
check: (i) is obvious, for (ii) take
$T=\ad L_{-1}$, then $[L_{-1},L_n]=(-1-n)L_{n-1}$, which gives \eqref{2.2}.
The annihilation subalgebra is
\begin{equation}
\Vir_-=\sum_{n\geq -1}\F L_n.
\label{2.11}
\end{equation}
So, by Theorem 2.1 and the Extension theorem, we get the associated vertex algebra
\begin{equation}
V(\Vir,\{L(z),C\}),
\label{2.12}
\end{equation}
called the \emph{universal Virasoro vertex algebra}. One can make it slightly smaller
by taking $c\in\F$ and factorizing by the ideal generated by $(C-c)$.
Let $V^c$ stand for the corresponding factor vertex algebra, which is called the
\emph{universal Virasoro vertex algebra with central charge $c$}.
\end{exm}
\begin{rmk}
$V^c$ can be non-simple for certain values of $c$. Namely, $V^c$ is non-simple if and only if \cite{GK07}
\begin{equation}
c=1-\frac{6(p-q)^2}{pq},\quad\text{with}\quad p,q\in\Z_{\geq 2}\,\text{ coprime}.
\label{2.13}
\end{equation}
\begin{exr}
The vertex algebra $ V^c $ has a unique maximal ideal $ J^c $.
\end{exr}
Let $ V_c = V^c / J^c $. Since $c$ in \eqref{2.13} is symmetric in $p$ and $q$ we may assume that $p<q$.
The smallest example $p=2$, $q=3$ gives $c=0$; {\blu $V_0$ is the one-dimensional vertex
algebra.}
The next example is $p=3$, $q=4$ when $c=1/2$; the vertex algebra $ V_{\frac{1}{2}} $ is related to the Ising model.
The simple vertex algebras $V_c$ with $c$ of the form \eqref{2.13} are called \emph{discrete series vertex algebras}. They play a fundamental role in conformal field theory 
\cite{BPZ}.
\end{rmk}

\begin{exm}	%% Kac-Moody affinization
\label{ex2.2}
Let $\mf{g}$ be a finite dimensional Lie algebra with a non-degenerate symmetric invariant
bilinear form $(.\vert.)$. Let
$\widehat{\mf{g}}=\mf{g}[t,t^{-1}]+\F K$ be the associated Kac-Moody affinization, with commutation relations
\begin{equation}
[at^m,bt^n]=[a,b]t^{m+n}+m\delta_{m,-n}(a\vert b)K,\quad [K,at^m]=0,
\label{2.14}
\end{equation}
where $a,b\in\mf{g}$, $m,n\in\Z$. Let $a(z)=\sum_{n\in\Z}(at^n)z^{-n-1}$
and $\mc{F}=\{a(z)\}_{a\in\mf{g}}\cup\{K\}$ be an (infinite) collection of formal
distributions. The commutation relations \eqref{2.14} are equivalent to
\begin{equation}
[a(z),b(w)]=[a,b](w)\delta(z,w)+(a\vert b)\partial_w\delta(z,w)K,\quad
[K, a(z)] =0.
\label{2.15}
\end{equation}
Hence $\mc{F}$ is a local family. So $(\widehat{\mf{g}},\mc{F})$ is a formal distribution Lie algebra.  The annihilation subalgebra is 
$\widehat{\mf{g}}_- =\mf{g} [t]$.
\begin{exr}
Show that the formal distribution Lie algebra $(\widehat{\mf{g}},\mc{F})$ defined above is regular with $T=-\partial_t$.
\end{exr}
The associated vertex algebra $V(\widehat{\mf{g}},\mc{F})$ is called the \emph{universal
affine vertex algebra associated to $\big(\mf{g},(.\vert.)\big)$}. Again,
it can be made a little smaller by taking $k\in\F$ and considering
\begin{equation}
V^k(\mf{g})=V(\widehat{\mf{g}},\mc{F})/(K-k)V(\widehat{\mf{g}},\mc{F}),
\label{2.16}
\end{equation}
which is called the \emph{universal affine vertex algebra of level $k$}. There are certain
values of $k$ for which   $V^k(\mf{g})$ is non-simple (it is a known set of rational numbers \cite{GK07}). 

\end{exm}

\begin{exm} %%% Clifford affinization
Let $A$ be a finite dimensional vector superspace with a non-degenerate skewsymmetric
bilinear form $\langle.\vert.\rangle$:
\begin{equation}
\langle a\vert b\rangle=-(-1)^{p(a)p(b)}\langle b\vert a\rangle,\quad a,b\in A.
\label{2.18}
\end{equation}
Take the associated Clifford affinization
\begin{equation}
\hat A=A[t,t^{-1}]+\F K,
\label{2.19}
\end{equation}
with commutation relations
\begin{equation}
[at^m,bt^n]=\delta_{m,-n-1}\langle a\vert b\rangle K,\quad [K,at^m]=0, \ a, b \in A.
\label{2.20}
\end{equation}
Consider the formal distributions
\begin{equation}
a(z)=\sum_{n\in\Z}(at^n)z^{-n-1},\quad a\in A ,
\label{2.21}
\end{equation}
and define $\mc{F}$ to be
\begin{equation}
\mc{F}=\{a(z)\}_{a\in A}\cup\{K\}.
\label{2.22}
\end{equation}
Then the commutation relations \eqref{2.20} are equivalent to
\begin{equation}
[a(z),b(w)]=\langle a\vert b\rangle\delta(z,w)K, \ [K, a(z)] = 0.
\label{2.23}
\end{equation}
Hence $\mc{F}$ is a local 
family, and  $(\hat A,\mc{F})$ is a formal distribution Lie superalgebra. Its annihilation subalgebra is $\hat{A}_-=A[t]$.  
Furthermore, 
$(\hat A,\mc{F})$ 
is regular with
$T=-\partial_t$ and
\begin{equation}
F(A)=V(\hat A,\mc{F})/(K-1)V(\hat A,\mc{F})
\label{2.24}
\end{equation}
is the associated vertex algebra called the \emph{vertex algebra of free
superfermions}. 
\end{exm}
\begin{exr}
\begin{itemize}
\item[(1)]If $A$ is a $1$-dimensional odd superspace we get the free fermion vertex algebra $F= F(A)$.
\item[(2)]If $\mf{g}$ is the $1$-dimensional Lie algebra $\F$, with bilinear form $(a\vert b)=ab$ and level $k=1$, then we get the free boson vertex algebra $B = V^1(\mathbb{F})$.
\end{itemize}
\end{exr}

\begin{exr}
Show that the vertex algebra $F(A)$ is always simple. 
\end{exr}

%%%%%%%%%%%%%%%%%%%%%%%%%%%%%%%%%%%%%%%%%%%%%%%%%%%%%%%%%%%%%%%%

\subsection{Formal Cauchy formulas and normally ordered product}
\label{subsec:2.2}

We proceed by proving some statements which are analogous to the Cauchy
formula and are true for any formal distribution. Let $U$ be a vector space
and $a(z)=\sum_{n\in\Z}a_{(n)}z^{-n-1}$ be a $U$-valued formal
distribution. We call
\begin{equation}
a(z)_+=\sum_{n<0}a_{(n)}z^{-n-1}
\label{2.26}
\end{equation}
the \emph{creation part} or \emph{``positive'' part} of $a(z)$ and
\begin{equation}
a(z)_-=\sum_{n\geq 0}a_{(n)}z^{-n-1}
\label{2.27}
\end{equation}
the \emph{annihilation part} or \emph{``negative'' part}  of $a(z)$. Note that
$\partial_z\big(a(z)_\pm\big)=\big(\partial_za(z)\big)_\pm$.

\begin{prp}
Formal Cauchy formulas can be written as follows:
\begin{itemize}
\item[(a)] For the ``positive'' and ``negative'' parts of $a(z)$ we have
\begin{equation} \label{2.28}
a(w)_+=\Res a(z)i_{z,w}\frac{1}{z-w}dz, \qquad -a(w)_-=\Res a(z)i_{w,z}\frac{1}{z-w}dz. %\label{2.29}
\end{equation}
\item[(b)] For the derivatives of $a(z)_\pm$ we have
\begin{equation} \label{2.30}
\frac{1}{n!}\,\partial_w^na(w)_+=\Res a(z)i_{z,w}\frac{1}{(z-w)^{n+1}}dz, \qquad -\frac{1}{n!}\,\partial_w^na(w)_- =\Res a(z)i_{w,z}\frac{1}{(z-w)^{n+1}}dz. %\label{2.31}
\end{equation}
\end{itemize}
\end{prp}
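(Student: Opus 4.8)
The plan is to prove (a) by a direct residue computation using the explicit geometric series expansions of the two formal Cauchy kernels recorded in \eqref{1.23}, and then to deduce (b) by differentiating the identities in (a) $n$ times with respect to $w$.

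First, for (a), I would substitute $a(z)=\sum_{m\in\Z}a_{(m)}z^{-m-1}$ together with $i_{z,w}\frac{1}{z-w}=\sum_{n\geq 0}z^{-n-1}w^n$ into the first residue. Multiplying the two series gives $\sum_{m\in\Z,\,n\geq 0}a_{(m)}z^{-m-n-2}w^n$, and taking $\Res$ (the coefficient of $z^{-1}$) forces $m=-n-1$. Since $n\geq 0$, this selects exactly the indices $m<0$, producing $\sum_{n\geq 0}a_{(-n-1)}w^n$, which is precisely $a(w)_+$ after the reindexing $n=-m-1$. The second identity is entirely analogous: using $i_{w,z}\frac{1}{z-w}=-\sum_{n<0}z^{-n-1}w^n$, the residue again forces $m=-n-1$, but now with $n<0$, hence $m\geq 0$, and the minus sign from the kernel gives $-a(w)_-$.

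For (b), I would simply apply $\frac{1}{n!}\partial_w^n$ to both identities in (a). The left-hand sides become $\frac{1}{n!}\partial_w^n a(w)_{\pm}$, which are exactly the left-hand sides of (b). On the right, $\partial_w$ commutes with $\Res$ in $z$ (which involves only the variable $z$) and acts on the kernel through $\frac{1}{n!}\partial_w^n\, i_{z,w}\frac{1}{z-w}=i_{z,w}\frac{1}{(z-w)^{n+1}}$, valid because the expansion map $i_{z,w}$ commutes with $\partial_w$ term by term and $\frac{1}{n!}\partial_w^n(z-w)^{-1}=(z-w)^{-n-1}$; the same applies to $i_{w,z}$, in agreement with the differentiated delta-function identity \eqref{1.24}. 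This yields precisely the stated formulas.

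The computation is routine; the only point requiring care is the bookkeeping that the two complementary expansions $i_{z,w}$ and $i_{w,z}$ of the single rational kernel $\tfrac{1}{z-w}$ project $a$ onto its complementary creation and annihilation parts. Concretely, one must track that the residue condition $m=-n-1$, combined with the range of $n$ (either $n\geq 0$ or $n<0$), picks out exactly $m<0$ in the first case and $m\geq 0$ in the second. Alternatively, one could prove (b) directly in the same manner as (a) using the expansion implicit in \eqref{1.24}, but the differentiation argument is shorter and I would prefer it.
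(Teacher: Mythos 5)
Your proof is correct and follows essentially the same route as the paper's: the paper simply packages your coefficient bookkeeping by starting from $a(w)=\Res a(z)\delta(z,w)\,dz$ together with the splitting \eqref{1.22} and then collecting non-negative versus negative powers of $w$, which is exactly your residue condition $m=-n-1$ with $n\geq 0$ versus $n<0$. Part (b) is obtained in the paper precisely as you do it, by differentiating (a) $n$ times in $w$.
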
 

\begin{proof}
Use property (5) of the delta function and \eqref{1.22} to get
\begin{equation}
a(w)=\Res a(z)\delta(z,w)dz=\Res a(z)\bigg(i_{z,w}\frac{1}{z-w}
-i_{w,z}\frac{1}{z-w}\bigg)dz.
\label{2.32}
\end{equation}
Collect the (non-negative) powers of $w$ on both sides to get (a).
Differentiating (a) by $w$ $n$ times gives (b).
\end{proof}

Multiplying two quantum fields na\"ively would lead to divergences.
The next definition is introduced to circumvent this problem.

\begin{defn}
The \emph{normally ordered product} of $\End V$-valued quantum fields $a(z)$ and $b(z)$
is defined by
\begin{equation}
:a(z)b(z):\,=a(z)_+b(z)+(-1)^{p(a)p(b)}b(z)a(z)_-.
\label{2.33}
\end{equation}
\end{defn}

It must be proved that $:a(z)b(z):$ is an ``honest'' quantum field, i.e., all the 
divergences are removed.

\begin{prp}
If $a(z)$ and $b(z)$ are quantum fields then so is $:a(z)b(z):$.
\end{prp}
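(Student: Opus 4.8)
The plan is to verify directly that $(:a(z)b(z):)v$ lies in $V((z))$ for every $v\in V$, since this is exactly the statement that $:a(z)b(z):$ is a quantum field. Recall that $a(z)$ being a quantum field means $a(z)v=\sum_{n}(a_{(n)}v)z^{-n-1}\in V((z))$, i.e. $a_{(n)}v=0$ for all sufficiently large $n$; the same holds for $b(z)$. First I would split the normally ordered product into its two summands
\[
:a(z)b(z):\,=a(z)_+b(z)+(-1)^{p(a)p(b)}b(z)a(z)_-
\]
and treat each separately. The whole point is that the reordering has moved all potentially ``divergent'' contributions into a form where each coefficient of $z^p$ is a finite sum.

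Next I would handle the annihilation-part term $b(z)a(z)_-v$. Since $a(z)_-=\sum_{n\geq 0}a_{(n)}z^{-n-1}$ and $a(z)$ is a quantum field, we have $a_{(n)}v=0$ for $n$ large, so $a(z)_-v=\sum_{n\geq 0}(a_{(n)}v)z^{-n-1}$ is in fact a finite sum, i.e. a Laurent polynomial in $z^{-1}$ with coefficients in $V$. Applying $b(z)$ to each of these finitely many vectors yields elements of $V((z))$, because $b(z)$ is a quantum field, and multiplying by the monomials $z^{-n-1}$ only shifts degrees by a finite amount; a finite sum of elements of $V((z))$ is again in $V((z))$, so this term causes no trouble.

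The more delicate term is the creation-part term $a(z)_+b(z)v$, and this is where I expect the only real care is needed. Here $a(z)_+=\sum_{k\geq 0}a_{(-k-1)}z^{k}$ involves infinitely many powers of $z$, so one must first check that the product is even well defined. Writing $b(z)v=\sum_{j}w_jz^{j}\in V((z))$ with $w_j=0$ for $j$ below some bound $-M$, the coefficient of $z^{p}$ in $a(z)_+\big(b(z)v\big)$ is $\sum_{k\geq 0}a_{(-k-1)}w_{p-k}$; since $w_{p-k}=0$ once $p-k<-M$, i.e. once $k>p+M$, this is a finite sum for each fixed $p$, so the product is well defined. Moreover any nonzero contribution requires $p-k\geq -M$ with $k\geq 0$, forcing $p\geq -M$, so only finitely many negative powers of $z$ occur; hence $a(z)_+b(z)v\in V((z))$.

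Combining the two, $(:a(z)b(z):)v\in V((z))$ for all $v$, which is the assertion. The essential mechanism is that normal ordering makes the annihilation modes $a_{(n)}$, $n\geq 0$, act first, where the quantum-field property makes $a(z)_-v$ a finite sum, while keeping the creation modes $a_{(n)}$, $n<0$, as a one-sided power series whose product with a series bounded below is automatically well defined and bounded below. This is precisely the rearrangement that removes the divergences one would encounter in the naive product $a(z)b(z)v$, where the coefficient of a fixed power of $z$ would be an infinite sum.
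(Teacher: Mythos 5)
Your proof is correct and takes essentially the same route as the paper's: split $:a(z)b(z):v$ into $a(z)_+b(z)v$ and $(-1)^{p(a)p(b)}b(z)a(z)_-v$, observe that $a(z)_-v$ is a Laurent polynomial so the second term stays in $V((z))$, and that multiplying the Laurent series $b(z)v$ by the formal power series $a(z)_+$ keeps it in $V((z))$. Your coefficient-by-coefficient check that $a(z)_+\bigl(b(z)v\bigr)$ is well defined simply makes explicit what the paper asserts in a single line.
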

\begin{proof}
Apply $:a(z)b(z):$, defined by \eqref{2.33}, to any vector $v\in V$:
\begin{equation}
:a(z)b(z):v=a(z)_+b(z)v+(-1)^{p(a)p(b)}b(z)a(z)_-v.
\label{2.34}
\end{equation}
Since $b(z)$ is assumed to be a quantum field, $b(z)v$ in the first term of the
right-hand side of \eqref{2.34} is a Laurent series by definition. The creation part $a(z)_+$ has
only non-negative powers of $z$, therefore $a(z)_+b(z)v$ is still a Laurent series.
In the second term $a(z)_-v$ consists of finitely many terms with negative powers,
i.e., it is a Laurent polynomial. Now $b(z)a(z)_-v$ is a Laurent series multiplied
by a Laurent polynomial which is still a Laurent series. Hence we proved that
$:a(z)b(z):$ is a sum (or a difference) of two Laurent series, thus it is a Laurent
series.
\end{proof}

\begin{exr}
Let $a(z)$ and $b(z)$ be quantum fields. Show that their $n$-th product $a(z)_{(n)}b(z)$,
$n\in\Z_+$ and derivatives $\partial_za(z),\partial_zb(z)$ are also quantum
fields.
\end{exr}

On the space of quantum fields we have defined $a(w)_{(n)}b(w)$ for $n\geq 0$.
Introduce
\begin{equation}
a(w)_{(-n-1)}b(w)=\frac{1}{n!}:\partial_w^na(w) b(w):,
\label{2.35}
\end{equation}
so that $a(w)_{(-1)}b(w)=\,:a(w)b(w):$. Thus for each $n\in\Z$ we have the $ n$-th product $a(w)_{(n)}b(w)$. Using the formal Cauchy formulas above,  we get the \emph{unified formulas for all $n$-th products} of
quantum fields
\begin{equation}
a(w)_{(n)}b(w)=\Res\big(a(z)b(w)i_{z,w}(z-w)^n
-(-1)^{p(a)p(b)}b(w)a(z)i_{w,z}(z-w)^n\big)dz,\quad
n\in\Z.
\label{2.36}
\end{equation}

\begin{rmk}
For a local pair of quantum fields physicists write
\begin{equation}
a(z)b(w)=\sum_{n\in\Z}\frac{a(w)_{(n)}b(w)}{(z-w)^{n+1}}.
\label{2.37}
\end{equation}
This way of writing is useful but might be confusing, since different parts
of it are expanded in different domains. Therefore it is worth giving a
rigorous interpretation of \eqref{2.37} by writing
\begin{equation}
a(z)b(w)=\sum_{n\geq 0}a(w)_{(n)}b(w)\,i_{z,w}\frac{1}{(z-w)^{n+1}}\,+:a(z)b(w):
\label{2.38}
\end{equation}
and
\begin{equation}
(-1)^{p(a)p(b)}
b(w)a(z)=\sum_{n\geq 0}a(w)_{(n)}b(w)\,i_{w,z}\frac{1}{(z-w)^{n+1}}\,+:a(z)b(w):
\label{2.39}
\end{equation}
By taking the difference \eqref{2.38}--\eqref{2.39} we get
\begin{equation}
[a(z),b(w)]=\sum_{j\in\Z_+}\big(a(w)_{(j)}b(w)\big)\frac{\partial_w^j\delta(z,w)}{j!}.
\label{2.40}
\end{equation}
Conversely, by separating the negative (nesp. non-negative) powers of $z$ in \eqref{2.40}
we get \eqref{2.38} (resp. \eqref{2.39}). We still need to explain \eqref{2.37} for negative $ n $. By Taylor's formula in the domain $|z-w|<|w|$ (\cite{VAB}, (2.4.3)), we have 
\begin{equation}
:a(z)b(w):\,=\sum_{n\geq 0}:\partial_w^na(w)b(w):\frac{(z-w)^n}{n!}
=\sum_{n\geq 0}\big(a(w)_{(-n-1)}b(w)\big)(z-w)^n ,
\label{2.41}
\end{equation}
i.e., the 
$n$-th products for negative $n$ are ``contained'' in the normally ordered product. 
\end{rmk}

%%%%%%%%%%%%%%%%%%%%%%%%%%%%%%%%%%%%%%%%%%%%%%%%

\subsection{Bakalov's formula and Dong's lemma.}
\label{subsec:2.4}

Locality of the pair $a(z),b(z)$ of $\End V$-valued quantum fields means that
\begin{equation}
(z-w)^Na(z)b(w)=(-1)^{p(a)p(b)}(z-w)^Nb(w)a(z)\,\,\hbox{for some}\,\,N\in\Z_+. 
\label{2.42}
\end{equation}
Denote either side of this equality by $F(z,w)$.
Then for each $k\in\Z_+$ we have
\begin{equation}
\Res F(z,w)\frac{\partial_w^k\delta(z,w)}{k!}dz
=\Res F(z,w)\,i_{z,w}\frac{1}{(z-w)^{k+1}}dz
-\Res F(z,w)\,i_{w,z}\frac{1}{(z-w)^{k+1}}dz.
\label{2.43}
\end{equation}
The first term of the left-hand side of \eqref{2.43} is
\begin{equation}
\Res a(z)b(w)\,i_{z,w}(z-w)^{N-k-1}dz,
\label{2.44}
\end{equation}
while the second term of the right-hand side of \eqref{2.43} is
\begin{equation}
-\Res a(z)b(w)\,i_{w,z}(z-w)^{N-k-1}dz.
\label{2.45}
\end{equation}
Applying the unified formula \eqref{2.36} the sum of \eqref{2.44} and \eqref{2.45}
can be written as
\begin{equation}
a(w)_{(N-k-1)}b(w).
\label{2.46}
\end{equation}
Hence we obtain \emph{Bakalov's formula}
\begin{equation}
  a(w)_{(N-k-1)}b(w)=
  \Res F(z,w)\frac{\partial_w^k\delta(z,w)}{k!}dz=
  \frac{1}{k!} (\partial_z^kF(z,w))|_{z=w},
\label{2.47}
\end{equation}
which holds for each non-negative integer $k$ and sufficiently large
positive integer $N$. The second equality follows from the first one by properties (3) and (5) of the formal delta function.
\begin{rmk}
Since $a(z)$ and $b(z)$ are quantum fields, it follows from (\ref{2.42}) that $F(z,w)v$ lies in the space $V[[z,w]][z^{-1},w^{-1}]$
for each $v\in V$. Hence (\ref{2.47}) makes sense.
%\begin{itemize}
%\item[(1)] In Bakalov's formula, the expression $\Res F(z,w)\frac{\partial_w^k\delta(z,w)}{k!}dz$ should be understood in the following way. Since the equality is about two quantum fields, we consider their values at a given vector $v\in V$ and then compare if they are equal for every vector $v\in V$. $a(w)_{(N-k-1)}b(w) v$ could be calculated directly, but for $\Res F(z,w)\frac{\partial_w^k\delta(z,w)}{k!}dz\,v$, before we calculate the residue, we have to calculate $F(z,w) v$ first. Otherwise, calculating residue $\Res F(z,w)\frac{\partial_w^k\delta(z,w)}{k!}dz$ does not make sense since the multiplication $F(z,w)\frac{\partial_w^k\delta(z,w)}{k!}$ is not well-defined.
%\item[(2)] The factor $(z-w)^N$ in $F(z, w)$ can not move freely to multiply $\frac{\partial_w^k\delta(z,w)}{k!}$ first (otherwise we would get that a negative $n$-th product is always zero). Thus we really need to consider $F(z, w)$ as a unit. 
%\end{itemize}
\end{rmk}
\begin{rmk}
\label{rmk2.5}
It follows from (\ref{2.42}) that if we replace $a(z)$ in this equation
by $\partial_z^ka(z)$ for some positive integer $k$, then it still holds with
$N$ replaced by $N+k$. 
\end{rmk}

\begin{lem}[Dong]
If $a(z)$, $b(z)$ and $c(z)$ are pairwise mutually local quantum fields,
then $a(z)_{(n)}b(z)$, $c(z)$ is a local pair for any $n\in\Z$.
\end{lem}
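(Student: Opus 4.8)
The statement to prove is that, for every $n\in\Z$, the pair $\big(a(z)_{(n)}b(z),c(z)\big)$ is local. Since $a(z)_{(n)}b(z)$ is itself a quantum field (by the Proposition on normally ordered products together with the preceding exercise), it suffices to produce an $M\in\Z_+$ with $(w-y)^M\big[a(w)_{(n)}b(w),c(y)\big]=0$, the bracket being taken in the super sense with the first field assigned parity $p(a)+p(b)$; write $\epsilon=(-1)^{(p(a)+p(b))p(c)}$, and let $N_{ac},N_{bc}\in\Z_+$ denote the locality bounds of the pairs $(a,c)$ and $(b,c)$. The whole strategy is to use Bakalov's formula \eqref{2.47} to rewrite $a(w)_{(n)}b(w)$ as a derivative of $(x-w)^N a(x)b(w)$ restricted to the diagonal $x=w$; this trades the three-field quantity for a two-variable computation into which the pairwise localities can be fed one at a time.

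First I would fix $N$ large (in particular $N\ge N_{ab}$ and $N\ge n+1$) and set $k=N-n-1\ge0$, so that \eqref{2.47} gives $a(w)_{(n)}b(w)=\tfrac1{k!}\,\partial_x^k\big[(x-w)^N a(x)b(w)\big]\big|_{x=w}$. Since $c(y)$ is independent of $x$, it may be carried through both $\partial_x^k$ and the restriction $x=w$, giving
\[\big[a(w)_{(n)}b(w),c(y)\big]=\tfrac1{k!}\,\partial_x^k D\big|_{x=w}=:D_k,\qquad D:=(x-w)^N\big[a(x)b(w),c(y)\big]=R-\epsilon L,\]
where $R:=(x-w)^N a(x)b(w)c(y)$ and $L:=(x-w)^N c(y)a(x)b(w)$. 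For $N\ge N_{ab}$ the remark following \eqref{2.47}, applied with the spectator field $c(y)$, shows that $R$ and $L$, hence $D$, are regular on the diagonal $x=w$; thus $D$ expands in non-negative powers of $(x-w)$, say $D=\sum_{l\ge0}D_l(x-w)^l$ with $D_l=\tfrac1{l!}\partial_x^l D|_{x=w}$, and the object we must control is precisely the coefficient $D_k$.

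Next I would inject the pairwise localities. The super-Leibniz rule $[AB,C]=A[B,C]+(-1)^{p(B)p(C)}[A,C]B$, applied with $A=a(x),B=b(w),C=c(y)$, gives $\big[a(x)b(w),c(y)\big]=a(x)\big[b(w),c(y)\big]+(-1)^{p(b)p(c)}\big[a(x),c(y)\big]b(w)$. Multiplying by the scalar $(x-y)^{N_{ac}}(w-y)^{N_{bc}}$ kills both summands, since $(w-y)^{N_{bc}}[b(w),c(y)]=0$ and $(x-y)^{N_{ac}}[a(x),c(y)]=0$. Hence $(x-y)^{N_{ac}}(w-y)^{N_{bc}}\big[a(x)b(w),c(y)\big]=0$, and multiplying by the scalar $(x-w)^N$ yields $(x-y)^{N_{ac}}(w-y)^{N_{bc}}D=0$.

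The decisive step — which I expect to be the main obstacle — is to convert this mixed-variable vanishing into a statement about $(w-y)$ acting on the single coefficient $D_k$. The device is to expand about the diagonal: write $x-y=(x-w)+(w-y)$ and substitute $D=\sum_l D_l(x-w)^l$, so that the relation becomes $\big((x-w)+(w-y)\big)^{N_{ac}}(w-y)^{N_{bc}}\sum_{l\ge0}D_l(x-w)^l=0$. Reading off the coefficient of $(x-w)^m$ gives, for every $m\ge0$,
\[\sum_{j=0}^{\min(N_{ac},m)}\binom{N_{ac}}{j}(w-y)^{N_{ac}+N_{bc}-j}\,D_{m-j}=0.\]
The $j=0$ term is $(w-y)^{N_{ac}+N_{bc}}D_m$, while every $j\ge1$ term involves a lower coefficient $D_{m-j}$; an induction on $m$ — multiplying this identity by a sufficiently high power of $(w-y)$ to annihilate the lower terms via the inductive hypothesis — shows that each $D_m$, and in particular $D_k$, is killed by some power of $(w-y)$. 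Taking $M$ to be that power gives $(w-y)^M D_k=(w-y)^M\big[a(w)_{(n)}b(w),c(y)\big]=0$, the desired locality. The two delicate points are the regularity of $D$ at $x=w$ (so that the coefficients $D_l$ exist and the coefficient extraction is legitimate) and the bookkeeping of exponents in the induction; the rest is routine.
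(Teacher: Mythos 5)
Your proof is correct, and while it starts from the same point as the paper's proof — Bakalov's formula \eqref{2.47} — the mechanism after that is genuinely different. The paper proves the explicit identity \eqref{2.50} with $M=2N+k$: it expands $\partial_{z_1}^k\big((z_1-z_2)^N a(z_1)b(z_2)c(z_3)\big)\big|_{z_1=z_2}$ by the Leibniz rule, splits the prefactor as $(z_2-z_3)^{2N+k}=(z_2-z_3)^N(z_1-z_3)^{N+k}\big|_{z_1=z_2}$, and then moves $c(z_3)$ first past $b(z_2)$ (locality of $(b,c)$) and then past $\partial_{z_1}^{k-i}a(z_1)$ (locality of $(a,c)$ upgraded to derivatives via Remark \ref{rmk2.5}); for this it needs a single $N$ serving all three pairs. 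You instead keep the whole super-commutator $D=(x-w)^N[a(x)b(w),c(y)]$, kill it by the scalar $(x-y)^{N_{ac}}(w-y)^{N_{bc}}$ using the Leibniz rule for commutators together with the two pairwise localities, and then descend from this three-variable vanishing to the single Taylor coefficient $D_k$ by expanding $x-y=(x-w)+(w-y)$ and running an induction on the coefficient index; that induction is your replacement for the paper's prefactor-splitting trick. What each approach buys: the paper's argument is shorter and yields the sharp explicit bound $M=2N+k$, but uses Remark \ref{rmk2.5} and a common locality order; yours never differentiates $a$, uses the bounds $N_{ac},N_{bc}$ as given, and is more conceptual (pairwise locality annihilates the full bracket $[a(x)b(w),c(y)]$, and regularity on the diagonal lets you pass to each coefficient), at the price of a weaker, recursively defined bound of order roughly $(k+1)(N_{ac}+N_{bc})$. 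The two delicate points you flagged do go through: the coefficients $D_l$ are well defined because, for each fixed power of the spectator variable $y$, $D$ applied to a vector lies in $V[[x,w]][x^{-1},w^{-1}]$ by the remark following \eqref{2.47}, and the induction closes because $j\le N_{ac}$ keeps every exponent $N_{ac}+N_{bc}-j$ non-negative, so multiplying by a sufficiently high power of $(w-y)$ annihilates all lower-index terms by the inductive hypothesis.
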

\begin{proof}\cite{Bak15}
It suffices to prove that for $ N $ and $ k $ as in \eqref{2.47} we have for some $ M \in \mathbb{Z}_+: $
\begin{equation}
\label{2.50}
(z_2 - z_3)^M (a(z_2)_{(N-k-1)}b(z_2) ) c(z_3) = \pm (z_2 - z_3)^M c(z_3) a(z_2)_{(N-k-1)} b(z_2),
\end{equation}
where $ \pm $ is the Koszul-Quillen sign, if \eqref{2.42} holds for all three pairs $ (a,b), (a,c) $ and $ (b,c) $. We let $ M = 2N + k. $ By Bakalov's formula \eqref{2.47}, the left-hand side of \eqref{2.50} is equal to
\[ \begin{split}
& \frac{1}{k!} (z_2 - z_3)^{2N + k} \left(  \partial^k_{z_1} ((z_1 - z_2)^N a(z_1)b(z_2)c(z_3))  \right) \bigg{|}_{z_1 = z_2}\\
= & (z_2 - z_3)^{2N+k} \sum_{i=0}^{k} 
\binom{N}{i} 
(z_1-z_2)^{N-i} 
(\frac{\partial^{k-i}_{z_1}}{(k-i)!} 
a(z_1)) b(z_2) c(z_3) \bigg{|}_{z_1 = z_2} \\
= & \sum_{i=0}^{k} 
\binom{N}{i} 
(z_2 - z_3)^N (z_1 - z_3)^{N+k} (z_1 - z_2)^{N-i} 
(\frac{\partial^{k-i}_{z_1}}{(k-i)!}  a(z_1))
b(z_2) c(z_3)\bigg{|}_{z_1 = z_2} \, . \\
\end{split}
 \]
Due to \eqref{2.42} for the pair $ (b,c)$, we can permute $ c(z_3) $ with $ b(z_2) $ (up to the Koszul-Quillen sign), and after that similarly permute $ c(z_3) $ and the $(k-i)$-th derivative of
$ a(z_1),  $ using Remark \ref{rmk2.5}. We thus obtain the right-hand side of \eqref{2.50}.
\end{proof}

\section{Lecture 3 (December 16, 2014)}
\label{sec:3}

In this lecture, we will prove the Extension theorem, the Borcherds identity and the skewsymmetry. We will also introduce the concepts of conformal vector, conformal weight and Hamiltonion operators. In the end, we give some properties of the Formal Fourier Transform.

\subsection{Proof of the Extension Theorem} \label{subsec:3.1}

First of all, let us give a name for the data which appeared in the Extension theorem.
 
\begin{defn}
\label{def:PreVa}
A \emph{pre-vertex algebra} is a quadruple
$\{V,\vac,T,\mc{F}=\{a^j(z)=\sum_{n\in\Z}a^j_{(n)}z^{-n-1}\}_{j\in J}\}$,
where $V$ is $z$-algebra with unit element $\vac$, $T \in \End V$ and $\mc{F}$
is a collection of quantum fields with values in $\End V$ satisfying the following
conditions:
\begin{itemize}
\item[(i)](vacuum axiom) $T\vac=0$,

\item[(ii)](translation covariance) $[T,a^j(z)]=\partial_z a^j(z)$ for all $j\in J$,

\item[(iii)](locality) $(z-w)^{N_{ij}}[a^i(z),a^j(w)]=0$ for all $i,j\in J$
with some $N_{ij}\in\Z_+$,

\item[(iv)](completeness) $\Span\{a^{j_1}_{(n_1)}\cdots a^{j_s}_{(n_s)}\vac\mid
j_i\in J,\ n_i\in\Z,\ s\in\Z_+\}=V$.
\end{itemize}
\end{defn}

Let $\{V,\vac,T,\mc{F}\}$ be a pre-vertex algebra. Define 
\begin{equation}
\mc{F}_{\min}=\Span\bigg\{\big(a^{j_1}(z)_{(n_1)}\big(a^{j_2}(z)_{(n_2)}\cdots
\big(a^{j_s}(z)_{(n_s)}I_{V}\big)\cdots\big)\mid
n_i\in\Z,\ j_i\in J,\ s\in\Z_+\bigg\} ,
\end{equation}
where $I_V$ is the constant field equal to the identity
operator $I_V$ on $V$. Let, as in
 Lecture 1, $\mc{F}_{\max}$ be the set of all translation
covariant quantum fields $a(z)$, such that $a(z), a^j(z)$ is a local pair
for all $j\in J$. The following is a more precise version of the Extension theorem, stated in Lecture  1.
\begin{thm}[Extension theorem]
\label{T:4}
For a pre-vertex algebra  $\{V,\vac,T,\mc{F}\}$, let $\mc{F}_{\min},  \mc{F}_{\max}$ be defined as above, then we have, 
\begin{itemize}
\item[(a)] $\mc{F_{\min}} = \mc{F}_{\max}$,
\item[(b)] The map
\begin{equation}
\mathit{fs}\;:\;\mc{F}_{\max} \longrightarrow V , \quad a(z) \longmapsto a(z)\vac\big|_{z=0}
\end{equation}
is well-defined and bijective. Denote by $\mathit{sf} $ the inverse map. 
\item[(c)] The $z$-product $a(z)b:=\mathit{sf}(a)b$ endows $V$ with a vertex algebra
structure, which extends the pre-vertex algebra structure.
\end{itemize}
\end{thm}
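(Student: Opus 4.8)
The plan is to prove the three claims by establishing injectivity and surjectivity of $\mathit{fs}$ separately, from which (a) and (b) follow formally, and then to read off (c) from what has been proven. Two preliminary observations set things up. First, $\mc{F}_{\max}$ is a vector space (a linear combination of translation covariant fields, each local with every $a^j(z)$, is again such a field) and $\mathit{fs}$ is linear on it, so injectivity reduces to showing that $a(z)\in\mc{F}_{\max}$ with $a(z)\vac|_{z=0}=a_{(-1)}\vac=0$ forces $a(z)=0$. Second, I would prove the reduction $\mc{F}_{\min}\subseteq\mc{F}_{\max}$ together with the stronger fact that $\mc{F}_{\min}$ is a \emph{pairwise} local family: the generators $a^j(z)$ and the constant field $I_V$ are translation covariant quantum fields (trivially for $I_V$, by axiom (ii) for $a^j(z)$) and pairwise local (axiom (iii), and $I_V$ commutes with everything); since $n$-th products and $\partial_z$ send quantum fields to quantum fields and translation covariance is inherited by $n$-th products (because $[T,-]$ is a derivation of all $n$-th products and agrees with $\partial_z$ on the generators), every iterated product is a translation covariant quantum field, and Dong's lemma propagates locality through the family. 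In particular each element of $\mc{F}_{\min}$ is local with every $a^j(z)$, so $\mc{F}_{\min}\subseteq\mc{F}_{\max}$.

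The heart of the argument, and the step I expect to be the main obstacle, is injectivity of $\mathit{fs}$ on $\mc{F}_{\max}$, i.e. Goddard's uniqueness theorem. Let $a(z)\in\mc{F}_{\max}$ with $a_{(-1)}\vac=0$. Since $a(z)$ is translation covariant, Lemma \ref{L:2} gives $a(z)\vac=e^{zT}(a_{(-1)}\vac)=0$. I then show $a(z)$ annihilates every completeness-spanning vector $a^{j_1}_{(n_1)}\cdots a^{j_s}_{(n_s)}\vac$ by induction on $s$, phrased through the generating series: it suffices that $a(z)\,a^{j_1}(w_1)\cdots a^{j_s}(w_s)\vac=0$. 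For the inductive step, write $g=a^{j_2}(w_2)\cdots a^{j_s}(w_s)\vac$, so that $a(z)g=0$ by the induction hypothesis. Locality of the pair $a(z),a^{j_1}(w_1)$ gives $(z-w_1)^N a(z)a^{j_1}(w_1)g=\pm (z-w_1)^N a^{j_1}(w_1)\big(a(z)g\big)=0$. The remaining ingredient is a cancellation lemma: if $f\in V((z))((w))$ satisfies $(z-w)^N f=0$, then $f=0$, which follows by extracting coefficients of powers of $w$ (starting from the lowest) and using that multiplication by $z^N$ is injective on $V((z))$. Applying this to $a(z)a^{j_1}(w_1)g$ closes the induction, so $a(z)=0$.

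For surjectivity I would compute $\mathit{fs}$ on the iterated products generating $\mc{F}_{\min}$. The key identity is that for any translation covariant quantum field $b(z)$ with $b(z)\vac\in V[[z]]$ and any generator, one has $\big(a^j(z)_{(n)}b(z)\big)\vac|_{z=0}=a^j_{(n)}\big(b(z)\vac|_{z=0}\big)$ for every $n\in\Z$: for $n<0$ this is the normally ordered case, where $a^j(z)_-\vac=0$ leaves $a^j(z)_+\,b(z)\vac|_{z=0}=a^j_{(-\,\cdots)}b$, and for $n\geq 0$ it is a residue computation from the unified formula \eqref{2.36}, where $a^j(x)\vac\in V[[x]]$ kills the reordered term and only the diagonal power of $z$ survives at $z=0$. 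Iterating this and using $\mathit{fs}(I_V)=\vac$ yields $\mathit{fs}\big(a^{j_1}(z)_{(n_1)}(\cdots(a^{j_s}(z)_{(n_s)}I_V)\cdots)\big)=a^{j_1}_{(n_1)}\cdots a^{j_s}_{(n_s)}\vac$, so by completeness the image of $\mc{F}_{\min}$ is all of $V$.

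Finally I would assemble the pieces. Surjectivity holds on $\mc{F}_{\min}\subseteq\mc{F}_{\max}$, hence on $\mc{F}_{\max}$; and given any $a(z)\in\mc{F}_{\max}$, choosing $b(z)\in\mc{F}_{\min}$ with $\mathit{fs}(b(z))=\mathit{fs}(a(z))$ and invoking injectivity on $\mc{F}_{\max}$ (which contains both) gives $a(z)=b(z)\in\mc{F}_{\min}$, so $\mc{F}_{\max}\subseteq\mc{F}_{\min}$; this proves (a), and with injectivity and surjectivity it proves that $\mathit{fs}\colon\mc{F}_{\max}\to V$ is a bijection, which is (b). For (c), the product $a(z)b:=\mathit{sf}(a)b$ gives quantum fields by construction; the vacuum axiom $\vac(z)=I_V$ holds since $\mathit{fs}(I_V)=\vac$; the expansion $a(z)\vac=e^{zT}a=a+(Ta)z+\cdots$ is Lemma \ref{L:2}(b); translation covariance is built into membership in $\mc{F}_{\max}$; and locality of every pair $\mathit{sf}(a),\mathit{sf}(b)$ is precisely the pairwise locality of $\mc{F}_{\max}=\mc{F}_{\min}$ established above. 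Since $\mathit{fs}(a^j(z))=a^j_{(-1)}\vac$ shows $\mathit{sf}$ recovers $a^j(z)$ on the state $a^j_{(-1)}\vac$, this vertex algebra structure extends the given pre-vertex algebra data.
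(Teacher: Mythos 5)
Your proof is correct, and its overall architecture coincides with the paper's: the chain $\mc{F}\subset\mc{F}_{\min}\subset\mc{F}_{\max}$ (closure of translation covariance under $n$-th products, plus Dong's lemma), surjectivity of $\mathit{fs}$ on $\mc{F}_{\min}$ by evaluating iterated $n$-th products on the vacuum (the paper's Lemma \ref{L:5}, which in fact treats all $n\in\Z$ uniformly via the unified formula \eqref{2.36}, avoiding your case split at $n<0$), injectivity on $\mc{F}_{\max}$, and then (a), (b), (c) assembled exactly as you do. The one step where you take a genuinely different route is injectivity. The paper isolates it as a standalone Uniqueness Lemma (Lemma \ref{L:u}): for $a(z)\in\mc{F}_{\max}$ with $\mathit{fs}(a(z))=0$ it uses locality of $a(z)$ against every element $b(w)$ of a family $\mc{F}'$ with $\mathit{fs}(\mc{F}')=V$; applying this with $\mc{F}'=\mc{F}_{\min}$ presupposes the surjectivity already proved and the locality of $a(z)$ with all iterated products, which needs a further inductive pass of Dong's lemma; the argument is then a single cancellation, $(z-w)^N a(z)b(w)\vac=\pm(z-w)^N b(w)a(z)\vac=0$, followed by setting $w=0$ to get $z^N a(z)b=0$ for all $b\in V$. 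You instead run the Goddard-style induction on the length $s$ of the spanning monomials, using only the locality of $a(z)$ with the generators $a^j(w)$ — the defining property of $\mc{F}_{\max}$ — together with the cancellation lemma that $(z-w)^N f=0$ forces $f=0$ for $f\in V((z))((w))$, which you justify correctly by extracting the lowest power of $w$ and using injectivity of multiplication by $z^N$ on $V((z))$. Your route buys logical independence: injectivity rests neither on surjectivity nor on a second use of Dong's lemma, and it needs only the weaker hypothesis of locality with the generating family. The paper's route buys brevity and a cleanly reusable lemma, at the cost of quietly invoking Dong's lemma to know that $\mc{F}_{\max}$ is local with all of $\mc{F}_{\min}$. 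The remaining ingredients — Lemma \ref{L:2} giving $a(z)\vac=e^{zT}a$, Lemma \ref{L:4}, and Dong's lemma for the pairwise locality needed in part (c) — are shared by both proofs.
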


\begin{rmk}
\begin{itemize}
\item[(1)] The map $\mathit{fs}$ is called the \emph {field-state correspondence} since it sends a field to a vector in $V$, called a \textquotedblleft  state\textquotedblright \, in physics. Its inverse map, called the \emph{state-field correspondence}, is denoted by
\begin{equation}
\mathit{sf}\;:\; V\to\mc{F}_{\max},\quad a\mapsto a(z) .
\end{equation}
\item[(2)] Denote by $\mc{F}_{\tc}=\{a(z)\mid[T,a(z)]=\partial_za(z)\}$ the space of translation covariant quantum fields. By Lemma \ref{L:2},
 $a(z)\vac\in V[[z]]$ for $a(z)\in\mc{F}_{\tc}$, hence $\mathit{fs}(a(z))\in V$
is well-defined.
\end{itemize}
\end{rmk}

\begin{lem}
\label{L:4}
$\mc{F}_{\tc}$ contains $I_V$, it is $\partial_z$-invariant and is closed under all
$n$-th product, i.e., $a(z)_{(n)}b(z)\in\mc{F}_{\tc}$ for any $n\in\Z$ if $a(z),b(z)\in\mc{F}_{\tc}$ .
\end{lem}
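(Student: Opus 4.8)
The plan is to verify the three assertions separately: that $I_V\in\mc{F}_{\tc}$, that $\mc{F}_{\tc}$ is $\partial_z$-invariant, and that it is closed under all $n$-th products. The first two are immediate. The constant field $I_V$ satisfies $[T,I_V]=0=\partial_z I_V$ and is trivially a quantum field, so $I_V\in\mc{F}_{\tc}$. For $\partial_z$-invariance, suppose $a(z)\in\mc{F}_{\tc}$, so that $[T,a(z)]=\partial_z a(z)$; since $T$ carries no $z$-dependence, $\partial_z$ commutes with $[T,\,\cdot\,]$, and differentiating the defining relation in $z$ yields $[T,\partial_z a(z)]=\partial_z(\partial_z a(z))$. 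As $\partial_z a(z)$ is again a quantum field (by the exercise stating that derivatives and non-negative $n$-th products of quantum fields are quantum fields), we conclude $\partial_z a(z)\in\mc{F}_{\tc}$.

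The substance of the lemma is closure under $n$-th products, for which I would first check the quantum-field property and then translation covariance. For $n\geq 0$ the field $a(z)_{(n)}b(z)$ is a quantum field by the cited exercise; for $n<0$ one uses \eqref{2.35}, together with the fact that the normally ordered product of quantum fields is a quantum field and with the $\partial_z$-invariance of quantum fields. Thus $a(z)_{(n)}b(z)$ is a quantum field for every $n\in\Z$.

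For translation covariance I would work directly from the unified formula \eqref{2.36}, valid for all $n\in\Z$, writing $a(w)_{(n)}b(w)=\Res F\,dz$ with
\[
F=a(z)b(w)\,i_{z,w}(z-w)^n-(-1)^{p(a)p(b)}b(w)a(z)\,i_{w,z}(z-w)^n .
\]
Since $T$ is $z$-independent, $[T,\Res F\,dz]=\Res[T,F]\,dz$, and $[T,\,\cdot\,]$ acts as a (even) derivation on products of operators. Replacing $[T,a(z)]$ by $\partial_z a(z)$ and $[T,b(w)]$ by $\partial_w b(w)$ via the translation covariance of $a(z)$ and $b(z)$, a short computation shows that $[T,F]=\partial_w F+\partial_z F$. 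The crucial input here is the antisymmetry $\partial_z\big(i_{z,w}(z-w)^n\big)=-\partial_w\big(i_{z,w}(z-w)^n\big)$ (and likewise for $i_{w,z}$), coming from the fact that $(z-w)^n$ depends only on $z-w$; this forces the kernel-derivative terms to cancel, leaving exactly the operator-derivative terms of $[T,F]$. Taking residues in $z$, the total $z$-derivative drops out, $\Res\partial_z F\,dz=0$, while $\partial_w$ passes through $\Res_{dz}$, giving $[T,a(w)_{(n)}b(w)]=\partial_w\big(a(w)_{(n)}b(w)\big)$, as required.

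The main obstacle is purely bookkeeping in this last step: one must keep the Koszul--Quillen signs and the two expansions $i_{z,w},i_{w,z}$ correctly matched while verifying $[T,F]=\partial_w F+\partial_z F$. The conceptual content, however, is clean — it rests entirely on the derivation property of $\ad T$, the commutation of $T$ with both $\partial_z$ and $\Res_{dz}$, and the antisymmetry $\partial_z=-\partial_w$ on the kernel $(z-w)^n$.
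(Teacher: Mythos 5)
Your proposal is correct and follows essentially the same route as the paper: both arguments run through the unified $n$-th product formula \eqref{2.36}, the derivation property of $\ad T$ together with translation covariance of $a(z)$ and $b(z)$, the kernel identity $(\partial_z+\partial_w)\,i_{z,w}(z-w)^n=0$ (and likewise for $i_{w,z}$), and the vanishing of residues of total $z$-derivatives. The only difference is organizational — you establish the single identity $[T,F]=(\partial_z+\partial_w)F$ at the level of the integrand and then take $\Res\,dz$, whereas the paper separately shows that $\partial_w$ and $\ad T$ each act on $a(w)_{(n)}b(w)$ as the derivation $a_{(n)}\partial_w b+(\partial_w a)_{(n)}b$ — but the inputs and the computation are the same.
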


\begin{proof}
Since $[T, I_V]=0=\partial_zI_V$, we have $I_V \in \mc{F}_{tc}$. Now if $a(z)$ is translation covariant, we need to show that $[T, \partial_za(z)]=\partial_z\partial_za(z)$ and so $\partial_z a(z)$ is also tranlation covariant. But 
\begin{equation}
\begin{split}
[T, \partial_za(z)]&=[T, \sum_{n\in \Z}(-n-1)a_{(n)}z^{-n-2}]=\sum_{n\in \Z}(-n-1)[T, a_{(n)}]z^{-n-2}\\
&= \sum_{n\in \Z}(-n-1)(-n)a_{(n-1)}z^{-n-2}
\end{split}
\end{equation}
and
\begin{equation}
\begin{split}
\partial_z\partial_za(z)&=\partial_z(\sum_{n\in \Z}(-n-1)a_{(n)}z^{-n-2})=\sum_{n\in \Z}(-n-1)(-n-2)a_{(n)}z^{-n-3}\\
&= \sum_{n\in \Z}(-n-1)(-n)a_{(n-1)}z^{-n-2} .
\end{split}
\end{equation}
For the last part of this lemma, let us recall the definition of the $n$-th product,
\begin{equation}
a(w)_{(n)}b(w)=\Res\big(a(z)b(w)i_{z,w}(z-w)^n
-b(w)a(z)i_{w,z}(z-w)^n\big)dz,\quad
n\in\Z.
\end{equation}
We want to prove $[T, a(w)_{(n)}b(w)]=\partial_w(a(w)_{(n)}b(w))$.
Both $T$ and $\partial_w$ commute with $\Res $, moreover, $\partial_w$ commutes with $i_{z, w}$ and  $i_{w, z}$. So we have 
\begin{equation}
\begin{split}
\partial_w(a(w)_{(n)}b(w))&=\Res\big(\partial_w(a(z)b(w)i_{z,w}(z-w)^n)
-\partial_w(b(w)a(z)i_{w,z}(z-w)^n)\big)dz\\
&=\Res\big(a(z)(\partial_wb(w))i_{z,w}(z-w)^n
-(\partial_wb(w))a(z)i_{w,z}(z-w)^n\big)dz\\
&+\Res\big(a(z)b(w)i_{z,w}(\partial_w(z-w)^n)
-b(w)a(z)i_{w,z}(\partial_w(z-w)^n)\big)dz .
%&=a(w)_{(n)}\partial_wb(w)
\end{split}
\end{equation}
Note that $\partial_w(z-w)^n=-\partial_z(z-w)^n$ and $-\Res a(z)i_{w,z}\partial_z(z-w)^ndz=\Res(\partial_za(z))i_{w,z}(z-w)^ndz$. So
\begin{equation}
\partial_w(a(w)_{(n)}b(w))=a(w)_{(n)}\partial_wb(w)+(\partial_wa(w))_{(n)}b(w) .
\end{equation}
This shows that $\partial_w$ is a derivation for the $n$-th product. Now 
\begin{equation}
\begin{split}
[T, a(w)_{(n)}b(w)] &=\Res\big(Ta(z)b(w)i_{z,w}(z-w)^n-a(z)b(w)Ti_{z,w}(z-w)^n\\
&-Tb(w)a(z)i_{w,z}(z-w)^n+b(w)a(z)Ti_{w,z}(z-w)^n\big)dz\\
&=\Res\big(Ta(z)b(w)i_{z,w}(z-w)^n-a(z)Tb(w)i_{z,w}(z-w)^n\\
&+a(z)Tb(w)i_{z,w}(z-w)^n-a(z)b(w)Ti_{z,w}(z-w)^n\\
&-Tb(w)a(z)i_{w,z}(z-w)^n+b(w)Ta(z)i_{w,z}(z-w)^n\big)dz\\
&-b(w)Ta(z)i_{w,z}(z-w)^n+b(w)a(z)Ti_{w,z}(z-w)^n\big)dz\\
&=\Res\big([T, a(z)]b(w)i_{z,w}(z-w)^n-b(w)[T, a(z)]i_{w, z}(z-w)^n\\
&+\Res\big(a(z)[T, b(w)]i_{z,w}(z-w)^n-[T, b(w)]a(z)i_{w, z}(z-w)^n .
\end{split}
\end{equation}
Since both $a(z), b(z)$ are translation covariant, we have 
\begin{equation}
[T, a(w)_{(n)}b(w)] = a(w)_{(n)} \partial_w b(w) + (\partial_wa(w))_{(n)}b(w) .
\end{equation}
This completes the proof.
\end{proof}

We have inclusions
\begin{equation} 
\mc{F}\subset\mc{F}_{\min}\subset\mc{F}_{\max}\subset\mc{F}_{\tc}.
\end{equation}
The first inclusion is because for any $a(z)\in\mc{F}$, we have
$a(z)_{(-1)}I_V=a(z)\in\mc{F}_{\min}$. 
The second inclusion is by Lemma \ref{L:4} and Dong's Lemma (locality).
The last inclusion is by definition.  

\begin{exr}
Show that the constant field $T$ is translation covariant, but is not local to any non-constant
field.
%, if $T$ itself is non-zero.
\end{exr}

\begin{lem}
\label{L:5}
Let $a(z),b(z)\in\mc{F}_{\tc}$, and $a=\mathit{fs}(a(z)),b=\mathit{fs}(b(z))$. Then:
\begin{itemize}
\item[(a)] $\mathit{fs}(I_V)=\vac$,
\item[(b)] $\mathit{fs}(\partial_za(z))=Ta$,
\item[(c)] $\mathit{fs}(a(z)_{(n)}b(z))=a_{(n)}b$.
Here we write $a(z)=\sum_{n\in\Z}a_{(n)}z^{-n-1}$.
\end{itemize}
\end{lem}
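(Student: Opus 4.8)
The plan is to use the concrete description of the field-state map, $\mathit{fs}(c(z)) = c(z)\vac\big|_{z=0}$, which is well defined on all of $\mc{F}_{\tc}$ because a translation covariant quantum field sends $\vac$ into $V[[z]]$ by Lemma \ref{L:2}(a) (here $T\vac=0$ from the vacuum axiom). By Lemma \ref{L:4} each of the three fields $I_V$, $\partial_z a(z)$, and $a(z)_{(n)}b(z)$ again lies in $\mc{F}_{\tc}$, so all three expressions $\mathit{fs}(\cdot)$ make sense and the claims are not vacuous.

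Parts (a) and (b) should be immediate. For (a), the constant field $I_V$ satisfies $I_V\vac = \vac$, whose value at $z=0$ is $\vac$. For (b), I would invoke Lemma \ref{L:2}(b) in the form $a(z)\vac = e^{zT}a$ with $a = \mathit{fs}(a(z))$; differentiating in $z$ gives $(\partial_z a(z))\vac = Te^{zT}a$, and setting $z=0$ yields $Ta$. Equivalently, translation covariance together with $T\vac=0$ forces $Tc_k = (k+1)c_{k+1}$ for the coefficients $c_k = a_{(-k-1)}\vac$ of $a(z)\vac$, so the $z^0$-coefficient of $(\partial_z a(z))\vac$ is $Tc_0 = Ta$.

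The substance is in (c), where I would start from the unified formula \eqref{2.36} for the $n$-th product and apply the operator $a(w)_{(n)}b(w)$ to $\vac$:
\[
(a(w)_{(n)}b(w))\vac = \Res\big(a(z)\,(b(w)\vac)\,i_{z,w}(z-w)^n - (-1)^{p(a)p(b)} b(w)\,(a(z)\vac)\,i_{w,z}(z-w)^n\big)dz.
\]
The key observation is that the second term vanishes identically: since $a(z)\vac\in V[[z]]$ involves only nonnegative powers of $z$, and $i_{w,z}(z-w)^n=\sum_{k\geq0}\binom{n}{k}(-1)^{n-k}z^kw^{n-k}$ likewise contains only nonnegative powers of $z$, their product has no $z^{-1}$-term, hence residue zero (and $b(w)$, acting only in $w$, commutes with $\Res$ in $z$). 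For the first term I would expand $i_{z,w}(z-w)^n = \sum_{k\geq0}\binom{n}{k}(-1)^k z^{n-k}w^k$ and $b(w)\vac = \sum_{m\geq0}(b_{(-m-1)}\vac)w^m$, use $\Res a(z)\,z^{n-k}dz = a_{(n-k)}$, and collect powers of $w$; the $w^0$-coefficient forces $k=m=0$ and returns exactly $a_{(n)}(b_{(-1)}\vac) = a_{(n)}b$. Evaluating at $w=0$ then gives $\mathit{fs}(a(w)_{(n)}b(w)) = a_{(n)}b$.

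The main obstacle is the bookkeeping in (c): one must correctly separate the two opposite expansions $i_{z,w}$ and $i_{w,z}$ of $(z-w)^n$ (which coincide only for $n\geq0$) and verify that it is precisely the annihilation-side term that drops out on the vacuum. Once this vanishing is established, extracting the $w^0$-coefficient is a routine residue computation that is uniform in $n\in\Z$, so no separate treatment of the normally ordered ($n<0$) and bracket ($n\geq0$) cases is required.
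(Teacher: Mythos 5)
Your proposal is correct and follows essentially the same route as the paper's proof: apply the unified $n$-th product formula \eqref{2.36} to $\vac$, observe that the annihilation-side term has zero residue because $a(z)\vac\in V[[z]]$ pairs against an expansion with only non-negative powers of the residue variable, and then evaluate the surviving term at the outer variable equal to zero to get $a_{(n)}b$. The only cosmetic difference is that you extract the $w^0$-coefficient by expanding series, while the paper sets the outer variable to zero before taking the residue; parts (a) and (b) are handled identically.
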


\begin{proof}
$(a)$ is obvious. For $(b)$, since $a(z)\vac=e^{zT}a= a+ (Ta)z +\frac {T^2a}{2}z^2+o(z^2)$ we have $\partial_za(z)\vac=Ta+T^2az +o(z)$, so $\mathit{fs}(\partial_za(z))= \partial_za(z)\vac_{z=0}= Ta$. For $(c)$, by definition, we have
\begin{equation}
\mathit{fs}(a(z)_{(n)}b(z))=a(z)_{(n)}b(z)\vac\big|_{z=0} \, ,
\end{equation}
and the right hand side, by definition of the $n$-th product, is equal to 
\begin{equation}
\Res\big(a(w)b(z)i_{w,z}(w-z)^n\vac
-b(z)a(w)i_{z,w}(w-z)^n\vac dw\big)\big|_{z=0} \, .
\end{equation}
Now, since $a(w)\vac \in V[[w]]$ and $i_{z,w}(w-z)^n$ has only non-negative powers of $w$, we have 
 $$\Res b(z)a(w)i_{z,w}(w-z)^ndw\vac =0.$$
For the first term, since $b(z)\vac \in V[[z]]$, we can let $z=0$ before we calculate the residue, which gives 
\begin{equation}
\Res a(w)b(z)i_{w,z}(w-z)^n\vac)dw\big|_{z=0} =\Res a(w)bw^ndw =a_{(n)}b .
\end{equation}
This completes the proof.
\end{proof}

\begin{lem}
\label{L:6}
Let $a(z)\in\mc{F}_{\tc}$. Then $e^{wT}a(z)e^{-wT}=i_{z, w}a(z+w)$.
\end{lem}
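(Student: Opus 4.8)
The plan is to treat
\[
G(w) := e^{wT}a(z)e^{-wT}
\]
as an element of $(\End V)[[z,z^{-1}]][[w]]$ and to show that it satisfies the same first-order differential equation in $w$, with the same initial value, as the right-hand side $i_{z,w}a(z+w)$. This mirrors the uniqueness-of-solution strategy already used in the proof of Lemma \ref{L:2}(b).

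\textbf{Step 1: the differential equation.} First I would differentiate $G$ with respect to $w$. Using $\partial_w e^{\pm wT}=\pm T e^{\pm wT}$ and the fact that $T$ commutes with $e^{\pm wT}$, the product rule gives
\[
\partial_w G=[T,G]=e^{wT}[T,a(z)]e^{-wT}.
\]
By translation covariance, $[T,a(z)]=\partial_z a(z)$, and since $\partial_z$ commutes with conjugation by $e^{\pm wT}$ (which does not involve $z$), this becomes
\[
\partial_w G=e^{wT}\,\partial_z a(z)\,e^{-wT}=\partial_z G .
\]
The initial value is clearly $G(0)=a(z)$.

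\textbf{Step 2: solving and identifying the series.} Next I would solve this equation in the ring of formal power series in $w$. Writing $G(w)=\sum_{k\ge 0}G_k(z)\,w^k$ with $G_k(z)\in(\End V)[[z,z^{-1}]]$, the relation $\partial_w G=\partial_z G$ yields the recursion $(k+1)G_{k+1}=\partial_z G_k$, which together with $G_0=a(z)$ forces $G_k=\tfrac{1}{k!}\partial_z^k a(z)$. Hence $G(w)=\sum_{k\ge 0}\frac{w^k}{k!}\partial_z^k a(z)$, which is exactly the Hadamard expansion $\sum_{k\ge 0}\frac{w^k}{k!}(\ad T)^k a(z)$ rewritten via $(\ad T)a(z)=\partial_z a(z)$. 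To recognize this as $i_{z,w}a(z+w)$ I would compute term by term: $\frac{1}{k!}\partial_z^k z^{-n-1}=\binom{-n-1}{k}z^{-n-1-k}$, so that $\sum_{k\ge 0}\frac{w^k}{k!}\partial_z^k z^{-n-1}=\sum_{k\ge 0}\binom{-n-1}{k}z^{-n-1-k}w^k$, which is precisely the expansion of $(z+w)^{-n-1}$ in the domain $|z|>|w|$, i.e. $i_{z,w}(z+w)^{-n-1}$. Summing over $n$ gives $i_{z,w}a(z+w)$.

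\textbf{Main obstacle.} The difficulty is not conceptual but a matter of bookkeeping: one must verify that $G(w)$ genuinely lives in $(\End V)[[z,z^{-1}]][[w]]$, so that the coefficient-wise recursion and the uniqueness of the solution of $\partial_w G=\partial_z G$ are legitimate, and that the reordering of the double sum over $n$ and $k$ is valid because for each fixed power of $w$ only finitely many $k$ contribute. Crucially, the binomial identity $\sum_{k\ge 0}\binom{-n-1}{k}z^{-n-1-k}w^k=i_{z,w}(z+w)^{-n-1}$ is nothing but the \emph{definition} of the expansion $i_{z,w}$, so no analytic convergence input is needed; the whole argument stays within formal power series.
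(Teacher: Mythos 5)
Your proof is correct and takes essentially the same route as the paper: the paper's (very terse) proof also regards both sides as elements of $(\End V)[[z,z^{-1}]][[w]]$ and invokes uniqueness of the solution of a first-order formal ODE in $w$ with initial condition $a(z)$, there written as $\frac{df(w)}{dw}=(\ad T)f(w)$. The only difference in execution is that the paper checks that the right-hand side satisfies the same ODE (leaving that verification implicit), whereas you solve the ODE explicitly for the left-hand side and identify the resulting Taylor series with $i_{z,w}a(z+w)$ through the binomial expansion — a legitimate way of filling in the same detail.
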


\begin{proof}
Both sides are in $(\End V)[[z,z^{-1}]][[w]]$, and both satisfy the differential
equation $\frac{df(w)}{dw}=(\ad T)f(w)$ with the initial condition $f(0)=a(z)$.
\end{proof}

\begin{lem}[Uniqueness Lemma]
\label{L:u}
Let $\mc{F}'\subset\mc{F}_{\tc}$ and let $a(z)$ be some quantum field in $\mc{F}_{\tc}$. Assume that
\begin{itemize}
\item[(i)] $\mathit{fs}(a(z))=0$,
\item[(ii)] $a(z)$ is  local with any element in $\mc{F}'$,
\item[(iii)] $\mathit{fs}(\mc{F}')=V$.
\end{itemize}
Then $a(z)=0$.
\end{lem}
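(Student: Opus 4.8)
The plan is to first upgrade hypothesis (i) from a statement about the single coefficient $a_{(-1)}\vac$ to the vanishing of the whole field applied to the vacuum, $a(z)\vac=0$, and then to propagate this vanishing from $\vac$ to every vector of $V$ by combining locality (ii) with completeness (iii).

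First I would invoke Lemma \ref{L:2}(b): since $a(z)\in\mc{F}_{\tc}$ is a translation covariant quantum field and $T\vac=0$, we have $a(z)\vac=e^{zT}(a_{(-1)}\vac)=e^{zT}\,\mathit{fs}(a(z))$. Hypothesis (i) says $\mathit{fs}(a(z))=0$, so in fact $a(z)\vac=0$ identically in $z$, not merely at $z=0$. Next I would fix an arbitrary $v\in V$; by completeness (iii) there is some $b(z)\in\mc{F}'$ with $\mathit{fs}(b(z))=v$, and again by Lemma \ref{L:2}(b) this means $b(w)\vac=e^{wT}v\in V[[w]]$. Because $a(z)$ is local with $b(w)$ by (ii), there is $N\in\Z_+$ with $(z-w)^N a(z)b(w)=\pm(z-w)^N b(w)a(z)$ (the sign being Koszul--Quillen). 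Applying both sides to $\vac$ and using $a(z)\vac=0$ annihilates the right-hand side, so $(z-w)^N a(z)b(w)\vac=0$.

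The delicate step, and the one I expect to be the main obstacle, is cancelling the factor $(z-w)^N$: this is illegitimate in the space of bilateral series, where $(z-w)^N$ has zero divisors (the delta function is exactly such a divisor). The resolution is to identify the correct domain. Since $b(w)\vac=e^{wT}v\in V[[w]]$ and $a(z)$ is a quantum field, the product $a(z)b(w)\vac=\sum_{k\geq 0}\tfrac{w^k}{k!}a(z)(T^k v)$ lies in $V((z))[[w]]$, and there $(z-w)^N=z^N(1-w/z)^N$ is a unit, because its lowest $w$-coefficient $z^N$ is invertible in $\F((z))$ and the remaining factor has invertible $w$-constant term. Hence multiplication by $(z-w)^N$ is injective on $V((z))[[w]]$, and we conclude $a(z)b(w)\vac=0$ there.

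Finally I would extract the coefficient of $w^0$ (equivalently, set $w=0$) in $a(z)b(w)\vac=0$; the $k=0$ term gives $a(z)v=0$. Since $v\in V$ was arbitrary, $a(z)=0$ as an $\End V$-valued field, completing the argument.
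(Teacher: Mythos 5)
Your proof is correct and follows essentially the same route as the paper: derive $a(z)\vac=0$ from (i) via Lemma \ref{L:2}(b), apply locality to the vacuum to kill the right-hand side, and use $b(w)\vac\in V[[w]]$ to conclude $a(z)b=0$ for every $b\in V$. The only difference is the order of the final two steps — the paper sets $w=0$ first, turning $(z-w)^N$ into $z^N$, which is then cancelled coefficient-wise, whereas you first cancel $(z-w)^N$ by the (valid) unit argument in $V((z))[[w]]$ and then set $w=0$; this is a slightly heavier but equally legitimate way to justify the same cancellation.
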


\begin{proof}
Let $b(z)\in\mc{F}'$. By the locality of $a(z)$ and $b(z)$, we have
$(z-w)^N[a(z),b(w)]=0$ for some $N\in\Z_+$. Apply both sides to $\vac$. We get
\begin{equation}
(z-w)^Na(z)b(w)\vac= \pm (z-w)^Nb(w)a(z)\vac .
\end{equation}
By the property (i) we have $a_{(-1)}\vac=0$ and $ a(z) $ is translation covariant, hence by Lemma \ref{L:2}(b), $a(z)\vac =0$. Now, by Lemma \ref{L:2}(a), 
$b(w)\vac\in V[[w]]$, so we can let
$w=0$ and get $z^Na(z)b=0$, which means $a_{(n)}b=0$ for any
$n\in\Z$. This is true for any $b\in V$ by the property (iii). So in fact, we have
$a(z)=0$.
\end{proof}

\begin{proof}[Proof of the Extension Theorem]
\label{pf:ExTh}
We can get the following two statements about the map $\mathit{fs}$:
\begin{itemize}
\item[(i)] the map $\mathit{fs}\;:\;\mc{F}_{\min}\to V$ defined by
$fs(a(z)) = a(z)\vac\big|_{z=0}$ is given by 
\begin{equation}
(a^{j_1}(z)_{(n_1)}( a^{j_2}(z)_{(n_2)} \cdots (a^{j_s}(z)_{(n_s)}I_{V})\cdots)
\mapsto a^{j_1}_{(n_1)} a^{j_2}_{(n_2)} \cdots a^{j_s}_{(n_s)}\vac,
\label{3.5}
\end{equation}
\end{itemize}
\noindent and it is surjective, by (a), (c) of Lemma \ref{L:5} and the completeness axiom;
\begin{itemize}
\item[(ii)] $\mathit{fs}\;:\;\mc{F}_{\max}\to V$ is injective using the Uniqueness
Lemma with $\mc{F}'=\mc{F}_{\min}$.
\end{itemize}
Recall the inclusion $\mc{F}_{\min}\subset\mc{F}_{\max}$. We now have that
$\mathit{fs}\colon\mc{F}_{\min}\to V$ is surjective and
$\mathit{fs}\colon\mc{F}_{\max}\to V$ is injective,
so we can conclude that it is in fact bijective and $\mc{F}_{\min}=\mc{F}_{\max}$.
This proves (a) and (b) in the Extension Theorem.
For (c), we need to show that $a(z)$ is translation covariant $\forall a \in V$ and that each pair $a(z), b(w)\, \forall a, b \in V$ is a local pair. But translation covariance comes from Lemma \ref{L:4} and locality comes from Dong's lemma. 
\end{proof}

\begin{cor}[of the proof]
\label{cor3.1}
\begin{itemize}
\item[(a)] $\mathit{sf}(a^{j_1}_{(n_1)} a^{j_2}_{(n_2)}\cdots a^{j_s}_{(n_s)}\vac)
=(a^{j_1}(z)_{(n_1)}(a^{j_2}(z)_{(n_2)}\cdots(a^{j_s}(z)_{(n_s)}I_{V})\cdots)$.
\item[(b)] $(Ta)(z)=\partial_za(z)$. 
\item[(c)] $(a_{(n)}b)(z)=a(z)_{(n)}b(z)$, which is called
the \emph{$n$-th product identity}.
\end{itemize}
\end{cor}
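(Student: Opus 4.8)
The three statements are all \emph{corollaries of the proof} in the sense that they follow at once from the bijectivity of $\mathit{fs}\colon\mc{F}_{\max}\to V$ established in Theorem \ref{T:4}, together with the computation of $\mathit{fs}$ on iterated $n$-th products in Lemma \ref{L:5}. The guiding principle for (b) and (c) is the following \textbf{injectivity trick}: to prove that a candidate field $X(z)$ equals $\mathit{sf}(x)$ for a given state $x\in V$, it suffices to check that $X(z)\in\mc{F}_{\max}$ and that $\mathit{fs}(X(z))=x$; since $\mathit{fs}(\mathit{sf}(x))=x$ as well and $\mathit{fs}$ is injective on $\mc{F}_{\max}$, one concludes $X(z)=\mathit{sf}(x)$. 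Thus the entire content reduces to verifying membership in $\mc{F}_{\max}$ and invoking Lemma \ref{L:5}.

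Part (a) requires no trick: the explicit formula \eqref{3.5} describes $\mathit{fs}$ on $\mc{F}_{\min}$, and since $\mathit{fs}$ is a bijection with $\mc{F}_{\min}=\mc{F}_{\max}$, its inverse $\mathit{sf}$ is obtained simply by reading \eqref{3.5} from right to left, which is exactly the claimed identity.

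For part (b) I would first argue that $\partial_z a(z)\in\mc{F}_{\max}$ whenever $a(z)=\mathit{sf}(a)\in\mc{F}_{\max}$: translation covariance of $\partial_z a(z)$ follows from the $\partial_z$-invariance of $\mc{F}_{\tc}$ in Lemma \ref{L:4}, while locality of $\partial_z a(z)$ with each generator $a^j(z)$ follows from Remark \ref{rmk2.5}, which says that replacing $a(z)$ by its derivative preserves locality at the cost of raising $N$. Given this, Lemma \ref{L:5}(b) gives $\mathit{fs}(\partial_z a(z))=Ta$, and the injectivity trick with $x=Ta$ and $X(z)=\partial_z a(z)$ yields $(Ta)(z)=\partial_z a(z)$.

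Part (c) is the most substantial and is where I expect the main obstacle to lie, namely showing $a(z)_{(n)}b(z)\in\mc{F}_{\max}$. Translation covariance is handled by the closure of $\mc{F}_{\tc}$ under all $n$-th products, again from Lemma \ref{L:4}. The harder point is locality of $a(z)_{(n)}b(z)$ with each generating field $a^j(z)$: here I would invoke Dong's lemma, using that $a(z)$, $b(z)$ and $a^j(z)$ are pairwise mutually local — the pairwise locality of all fields in $\mc{F}_{\max}$ having been established in part (c) of the proof of Theorem \ref{T:4}. With $a(z)_{(n)}b(z)\in\mc{F}_{\max}$ in hand, Lemma \ref{L:5}(c) gives $\mathit{fs}(a(z)_{(n)}b(z))=a_{(n)}b$, and the injectivity trick applied to $x=a_{(n)}b$ delivers the $n$-th product identity $(a_{(n)}b)(z)=a(z)_{(n)}b(z)$. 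The only genuine work is therefore the verification of locality with the generators, for which Dong's lemma and Remark \ref{rmk2.5} do all the heavy lifting.
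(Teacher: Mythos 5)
Your proof is correct, but it takes a different route from the paper's for parts (b) and (c). The paper treats both as pure specializations of (a): setting $s=1$, $n_1=-2$ gives (b), since $a_{(-2)}\vac = Ta$ on the state side while $a(z)_{(-2)}I_V = \,:\!(\partial_z a(z))I_V\!:\, = \partial_z a(z)$ on the field side; and setting $s=2$, $n_1=n$, $n_2=-1$ gives (c), since $b_{(-1)}\vac = b$ and $b(z)_{(-1)}I_V = b(z)$. You instead re-run the injectivity argument from the proof of Theorem \ref{T:4} separately for each part: verify that the candidate field lies in $\mc{F}_{\max}$ (translation covariance via Lemma \ref{L:4}, locality with the generators via Remark \ref{rmk2.5} for derivatives and Dong's lemma for $n$-th products), compute its image under $\mathit{fs}$ via Lemma \ref{L:5}(b),(c), and conclude by injectivity of $\mathit{fs}$. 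What the paper's approach buys is brevity: once (a) is granted, (b) and (c) are numerical specializations requiring no further locality checks — though note that applying (a) to arbitrary $a,b\in V$ rather than to generators implicitly uses the fact that the full family $\{a(z)\}_{a\in V}$ is again a mutually local translation covariant family, which is exactly what your explicit $\mc{F}_{\max}$-membership verifications supply. Your approach is longer but makes those hidden closure properties explicit, and it avoids having to identify $\partial_z a(z)$ and $b(z)$ with the $(-2)$-nd and $(-1)$-st products against the constant field $I_V$. One small point worth adding to your write-up: membership in $\mc{F}_{\max}$ also requires $\partial_z a(z)$ and $a(z)_{(n)}b(z)$ to be quantum fields, which is the content of Exercise 2.5 and the proposition on the normally ordered product in Lecture 2.
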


\begin{proof}
(a) is by definition since $\mathit{sf}$ is the inverse of  $\mathit{fs}$,
while $\mathit{fs}$ is given  by \eqref{3.5}. Letting $s=1$, $n_1=-2$ in (a) we get (b).
Letting $s=2,$ $n_1=n$, $n_2=-1$ in (a) we get (c).
\end{proof}

\begin{rmk}
\label{rem3.1}
Due to Corollary \ref{cor3.1}(b) and Remark \ref{rem1.1}, the Definitions \ref{def1.3} and \ref{def1.5} of a vertex algebra are equivalent. 
\end{rmk}

\begin{rmk}[Special case of (a) in the corollary] 
\label{rem3.3}
For $n_1, \ldots, n_s\in\Z_+$, we have, 
\begin{equation}
\mathit{sf}(a^{j_1}_{(-n_1-1)} a^{j_2}_{(-n_2-1)}\cdots a^{j_s}_{(-n_s-1)}\vac)
=\frac{:\partial_z^{n_1}a^{j_1}(z)\partial_z^{n_2}a^{j_2}(z)\cdots
\partial_z^{n_s}a^{j_s}(z):}{n_1!n_2!\cdots n_s!} \ .
\label{3.17}
\end{equation}
\end{rmk}

\begin{cor}[of the proof]
$\Lie_V:=\Span\{a_{(n)} |\, a\in V,\ n\in\Z\}\subset\End V$ is a subalgebra of
the Lie superalgebra
$\End V$ with the commutator formula
\begin{equation}
\label{e3.19}
[a(z),b(w)]=\sum_{j\geq 0}(a(w)_{(j)}b(w))\frac{\partial_w^j\delta(z, w)}{j!},
\end{equation}
which is equivalent to each of the following two expressions
\begin{align}
[a_{(m)}, b(z)]&=\sum_{j \geq 0}{m \choose j}(a_{(j)}b)(z)z^{m-j}, \label{3.19}\\
[a_{(m)}, b_{(n)}]&=\sum_{j \geq 0}{m \choose j}(a_{(j)}b)_{(m+n-j)}. \label{3.20}
\end{align}
Moreover, $ \Lie_V $ is a regular formal distribution Lie algebra with the data
$(\Lie_V,\mc{F}=\{a(z)\}_{a\in V},\ad T)$.
\end{cor}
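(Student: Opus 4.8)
The plan is to read everything off the Extension Theorem together with Corollary~\ref{cor3.1}. The two facts I will lean on throughout are that, by part~(c) of the Extension Theorem, every pair $a(z),b(w)$ with $a,b\in V$ is local, and that the $n$-th product identity $(a_{(n)}b)(z)=a(z)_{(n)}b(z)$ of Corollary~\ref{cor3.1}(c) lets me pass freely between products of fields and products of states, always staying inside the family $\{c(z)\mid c\in V\}$.

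First I would derive the displayed formula \eqref{e3.19}. Fix $a,b\in V$ and form $a(z)=\mathit{sf}(a)$, $b(w)=\mathit{sf}(b)$; these are a local pair, so the Decomposition Theorem in the form \eqref{2.40} gives
\[
[a(z),b(w)]=\sum_{j\geq 0}\big(a(w)_{(j)}b(w)\big)\frac{\partial_w^j\delta(z,w)}{j!},
\]
where the sum is finite because locality bounds the number of delta-derivatives. By Corollary~\ref{cor3.1}(c) each coefficient $a(w)_{(j)}b(w)$ equals $(a_{(j)}b)(w)$, so the coefficients are the $\Lie_V$-valued distributions attached to the vectors $a_{(j)}b\in V$; this is exactly \eqref{e3.19}. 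The equivalences with \eqref{3.19} and \eqref{3.20} are then pure Fourier bookkeeping, identical to the passage from \eqref{2.5+} to \eqref{2.5++}: inserting the expansion \eqref{1.24} of $\partial_w^j\delta(z,w)/j!$ into \eqref{e3.19} and comparing coefficients of $z^{-m-1}w^{-n-1}$ yields \eqref{3.20}, while collecting only the $z$-powers (equivalently, re-indexing \eqref{3.20} by $p=m+n-j$) yields \eqref{3.19}. I would show one comparison and note the other is immediate.

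With \eqref{3.20} in hand the subalgebra claim is automatic: since $a_{(j)}b\in V$, each summand $(a_{(j)}b)_{(m+n-j)}$ lies in $\Lie_V$, and the sum is finite by locality, so $[a_{(m)},b_{(n)}]\in\Lie_V$; as $\Lie_V$ is by construction a subspace of $\End V$, it is a Lie subalgebra. For regularity of the datum $(\Lie_V,\mc{F}=\{a(z)\}_{a\in V},\ad T)$, I first observe that $\mc{F}$ is a family of pairwise local $\Lie_V$-valued distributions whose coefficients span $\Lie_V$ by definition, so $(\Lie_V,\mc{F})$ is a formal distribution Lie algebra. Condition~(i) holds because $\partial_z a(z)=(Ta)(z)\in\mc{F}$ by Corollary~\ref{cor3.1}(b), so the $\F[\partial_z]$-span of $\mc{F}$ is $\mc{F}$ itself, and this is closed under $n$-th products for $n\in\Z_+$ since $a(z)_{(n)}b(z)=(a_{(n)}b)(z)\in\mc{F}$. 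Condition~(ii) holds with the derivation $\ad T$: translation covariance \eqref{1.4} gives $[T,a_{(n)}]=-na_{(n-1)}\in\Lie_V$, so $\ad T$ preserves $\Lie_V$ and, being a derivation of $\End V$ by the Jacobi identity, restricts to a derivation of $\Lie_V$ with $\ad T(a(z))=[T,a(z)]=\partial_z a(z)$, which is precisely \eqref{2.2}.

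Since the statement is a corollary of the proof, I do not expect any genuine obstacle. The only points that require care are the finiteness of every sum, which is guaranteed uniformly by locality through the Decomposition Theorem, and the verification that all coefficients genuinely remain inside $\Lie_V$ (equivalently, inside the family $\{c(z)\mid c\in V\}$), which is exactly what the $n$-th product identity and Corollary~\ref{cor3.1}(b) supply.
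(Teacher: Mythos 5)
Your proposal is correct and is essentially the argument the paper intends: the corollary is stated ``of the proof'' precisely because locality of all pairs $a(z),b(w)$ (via Dong's lemma), the Decomposition theorem, and the $n$-th product identity $(a_{(n)}b)(z)=a(z)_{(n)}b(z)$ of Corollary \ref{cor3.1} together yield \eqref{e3.19}, its coefficient forms \eqref{3.19}--\eqref{3.20}, and the regularity of $(\Lie_V,\{a(z)\}_{a\in V},\ad T)$ exactly as you describe. Your handling of the two regularity conditions (closure of the $\F[\partial_z]$-span under $n$-th products via Corollary \ref{cor3.1}(b),(c), and $\ad T$ preserving $\Lie_V$ by translation covariance) fills in the details the paper leaves implicit, with no gaps.
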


\subsection{Borcherds identity and some other properties}
\label{subsec:3.2}

%Borcherds originally defined his vertex algebras as $z$-algebras which satisfy the following identity.

\begin{prp}[Borcherds identity]
\label{BI}
For $n\in\Z$, $a,b\in V$, where $V$ is a vertex algebra, we have
\begin{equation}
a(z)b(w)i_{z,w}(z-w)^n-
(-1)^{p(a)p(b)}
b(w)a(z)i_{w,z}(z-w)^n
=\sum_{j\in\Z_+}(a_{(n+j)}b)(w)\frac{\partial_w^j\delta(z, w)}{j!}.
\label{3.10}
\end{equation} 
\end{prp}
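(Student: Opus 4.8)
The plan is to prove the Borcherds identity \eqref{3.10} by reducing it to facts already established in the excerpt. The key observation is that the left-hand side of \eqref{3.10} is precisely the expression whose residue against test functions computes the $n$-th products, and the right-hand side is the Decomposition theorem applied to a local formal distribution. First I would verify that, for fixed $n\in\Z$, the left-hand side defines a \emph{local} formal distribution in $z$ and $w$. For $n\geq 0$ this is immediate since $i_{z,w}(z-w)^n=i_{w,z}(z-w)^n=(z-w)^n$ is a genuine polynomial, so the left-hand side is just $(z-w)^n[a(z),b(w)]$ (in the super sense), which is local because $a(z),b(w)$ form a local pair in the vertex algebra (Definition \ref{def1.3}(a), extended via Dong's lemma and the Extension theorem to all $a,b\in V$). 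For $n<0$ one must argue that multiplying the local distribution $[a(z),b(w)]$ by the expansions $i_{z,w}(z-w)^n$ and $i_{w,z}(z-w)^n$ still yields a local distribution; this follows because after multiplying by a high enough power $(z-w)^{M}$ the two expansions $i_{z,w}$ and $i_{w,z}$ agree and reconstitute $(z-w)^{M+n}[a(z),b(w)]$, which vanishes for $M$ large.

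Once locality is in hand, I would apply the Decomposition theorem \eqref{1.25}--\eqref{1.26} to the left-hand side, call it $G(z,w)$. The Decomposition theorem guarantees a unique expansion
\begin{equation}
G(z,w)=\sum_{j\geq 0}c^j(w)\frac{\partial_w^j\delta(z,w)}{j!},\qquad
c^j(w)=\Res G(z,w)(z-w)^j\,dz.
\end{equation}
The heart of the proof is then the identification of the coefficients $c^j(w)$. Computing the residue, $c^j(w)=\Res\big(a(z)b(w)i_{z,w}(z-w)^{n+j}-(-1)^{p(a)p(b)}b(w)a(z)i_{w,z}(z-w)^{n+j}\big)dz$, and this is exactly the unified formula \eqref{2.36} for the $(n+j)$-th product of quantum fields, giving $c^j(w)=a(w)_{(n+j)}b(w)$. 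Finally, by the $n$-th product identity of Corollary \ref{cor3.1}(c), $a(w)_{(n+j)}b(w)=(a_{(n+j)}b)(w)$, which matches the claimed right-hand side of \eqref{3.10}.

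The main obstacle I anticipate is the careful bookkeeping of the two expansions $i_{z,w}$ and $i_{w,z}$ in the regime $n<0$, both for establishing locality and for making sure that $\Res G(z,w)(z-w)^j\,dz$ really collapses to the single clean expression \eqref{2.36}. Specifically, one must be confident that $(z-w)^j$ multiplies \emph{into} each expansion correctly, i.e.\ that $i_{z,w}(z-w)^n\cdot(z-w)^j=i_{z,w}(z-w)^{n+j}$ and likewise for $i_{w,z}$, and that taking residues does not mix the two pieces in an uncontrolled way. This is a routine but delicate manipulation with the rules $(z-w)\,i_{z,w}\frac{1}{z-w}=1$ and property (1) of the delta function from Section~\ref{subsec:1.2}. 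A secondary point to check is that the whole identity makes sense as an identity of $\End V$-valued distributions applied to any fixed vector $v\in V$: since $a(z),b(w)$ are quantum fields, $a(z)b(w)v$ and $b(w)a(z)v$ lie in appropriate Laurent-series spaces, so all residues and products are well defined, exactly as noted in the remark following Bakalov's formula \eqref{2.47}.
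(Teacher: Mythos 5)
Your proof is correct and follows essentially the same route as the paper's own: apply the Decomposition theorem to the left-hand side, identify the coefficients $c^j(w)$ via the unified $n$-th product formula \eqref{2.36}, and finish with the $n$-th product identity of Corollary \ref{cor3.1}(c). The only difference is that you spell out the locality of the left-hand side (multiplying by a large power of $(z-w)$ so that the two expansions coincide and locality of the pair $a(z),b(w)$ applies), which the paper simply asserts; your argument for that step is correct.
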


\begin{proof}
The left hand side of \eqref{3.10} is a local formal distribution in $z$ and $w$. Apply to it the Decomposition theorem to get that it is equal to 
\begin{equation}
 \sum_{j \in \Z_+}c^j(w)\partial_w^j\delta(z, w)/j!\,,
\end{equation}
where 
\begin{equation}
\begin{split}
c^j(w)&=\Res\big(a(z)b(w)i_{z,w}(z-w)^n-(-1)^{p(a)p(b)}
b(w)a(z)i_{w, z}(z-w)^n\big)(z-w)^jdz\\
&=\Res\big(a(z)b(w)i_{z,w}(z-w)^{n+j}-(-1)^{p(a)p(b)}
b(w)a(z)i_{w,z}(z-w)^{n+j}\big)dz\\
&=a(w)_{(n+j)}b(w)\\
&=(a_{(n+j)}b)(w).
\end{split}
\end{equation}
The last equality follows from the $n$-th product formula, all other equalities are just by definition.
\end{proof}
\begin{exr}
Prove that a unital $z$-algebra satisfying the Borcherds identity is a vertex algebra.
\end{exr}

\begin{prp}[Skewsymmetry]
\label{SkewSym}
For $a,b\in V$, where $V$ is a vertex algebra, we have: 
\begin{equation}
a(z)b= (-1)^{p(a) p(b)} e^{zT}b(-z)a.
\end{equation}
\end{prp}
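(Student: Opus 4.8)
The plan is to prove the identity $a(z)b = (-1)^{p(a)p(b)} e^{zT} b(-z) a$ by exploiting the Borcherds identity (Proposition \ref{BI}) together with the two structural facts already established: that $a(z)\vac = e^{zT}a$ (Lemma \ref{L:2}(b)) and the translation-covariance formula $e^{wT}a(z)e^{-wT} = i_{z,w}a(z+w)$ (Lemma \ref{L:6}). The key observation is that skewsymmetry is really a statement about applying a locality relation to the vacuum and then moving the translation operator across one of the two fields.

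First I would start from the Borcherds identity \eqref{3.10} specialized to $n = 0$ and applied to the vacuum vector $\vac$. Acting on $\vac$, the term $b(w)a(z)\vac$ simplifies because $a(z)\vac = e^{zT}a \in V[[z]]$ by Lemma \ref{L:2}(b), so it has no negative powers of $z$; this kills the $i_{w,z}$-expanded contribution in an appropriate region and lets me extract a clean relation. Concretely, I would take residues in $z$ against suitable powers of $(z-w)$ to isolate the coefficients, reducing the distributional identity to an identity among the $n$-th products $a_{(n)}b$ and $b_{(n)}a$. The right-hand side $\sum_{j}(a_{(j)}b)(w)\partial_w^j\delta(z,w)/j!$ contributes only finitely many terms since $a_{(j)}b = 0$ for $j$ large.

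The cleaner route, which I expect to be the main line of the argument, is to use locality directly: by Definition \ref{def1.3}(a) there is $N \in \Z_+$ with $(z-w)^N a(z)b(w) = (-1)^{p(a)p(b)}(z-w)^N b(w)a(z)$. I would apply both sides to $\vac$, use $b(w)\vac = e^{wT}b$ and $a(z)\vac = e^{zT}a$, and then set $w = 0$ after clearing the pole, obtaining $z^N a(z)b = (-1)^{p(a)p(b)} z^N b(w)a(z)\vac\big|_{w=0}$-type data. The crucial manipulation is to rewrite $b(w)a(z)\vac = b(w)e^{zT}a$ and push $e^{zT}$ to the left using Lemma \ref{L:6} in the form $e^{zT}b(-z) = b(w)e^{zT}\big|$ evaluated appropriately, i.e. $b(w)e^{zT} = e^{zT}\,i_{\cdot}\,b(w-z)$, so that after relabeling one recovers $e^{zT}b(-z)a$.

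The hard part will be the careful bookkeeping of the expansion operators $i_{z,w}$ versus $i_{w,z}$ and justifying that one may cancel the factor $(z-w)^N$ (or $z^N$) after restricting to the vacuum: this cancellation is legitimate precisely because both sides, after acting on $\vac$, lie in $V[[z]]$ (no genuine formal-distribution ambiguity remains), so there is no delta-function obstruction of the kind flagged after \eqref{e1.11}. I would therefore take care to first establish that $a(z)b$ and $e^{zT}b(-z)a$ are both honest power series in $z$ (the latter via Lemma \ref{L:2}(b) applied to $b$ and the invertibility of $e^{zT}$), and only then compare them coefficientwise, where the Borcherds relation at $n=0$ pins down the equality term by term.
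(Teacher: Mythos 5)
Your main line is exactly the paper's proof: apply the locality relation $(z-w)^N a(z)b(w) = (-1)^{p(a)p(b)}(z-w)^N b(w)a(z)$ to $\vac$, use Lemma \ref{L:2}(b) to write $a(z)\vac = e^{zT}a$ and $b(w)\vac = e^{wT}b$, move $e^{zT}$ across $b(w)$ with Lemma \ref{L:6} to obtain $(-1)^{p(a)p(b)}(z-w)^N e^{zT}\, i_{w,z}\, b(w-z)a$, set $w=0$, and cancel $z^N$. The Borcherds-identity detour of your first paragraph is unnecessary, and you rightly discard it in favour of this route.

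One supporting claim, however, is false and should be dropped: $a(z)b$ and $e^{zT}b(-z)a$ are not ``honest power series in $z$'' --- they are Laurent series in $V((z))$ with genuine poles in general (for the free boson, $a_{(1)}a = \vac$, so $a(z)a$ contains the term $z^{-2}\vac$), and Lemma \ref{L:2}(b) says nothing about $b(-z)a$ when $a \neq \vac$. Fortunately this claim is not needed. The delta-function obstruction flagged after \eqref{e1.11} concerns only cancellation of powers of $(z-w)$ in two variables; multiplication by $z^N$ in a single variable is injective on all of $V[[z,z^{-1}]]$, so once you reach $z^N a(z)b = (-1)^{p(a)p(b)} e^{zT} z^N b(-z)a$ the cancellation is automatic. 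The step that genuinely needs justification is the earlier substitution $w=0$: it is legitimate because, for $N \gg 0$, $(z-w)^N b(w-z)a$ is a formal power series in $(z-w)$ (since $b_{(n)}a = 0$ for $n \gg 0$), hence the right-hand side contains no negative powers of $w$ --- which is precisely the point the paper makes before setting $w=0$.
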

\begin{proof}
By locality, we know that, there exists $N\in\Z$, such that 
$$(z-w)^{N}a(z)b(w)=(-1)^{p(a) p(b)}(z-w)^{N}b(w)a(z).$$ 
Apply both sides to $\vac$; by Lemma \ref{L:2}(b) we get
\begin{equation}
(z-w)^{N}a(z)e^{wT}b=(-1)^{p(a) p(b)}(z-w)^{N}b(w)e^{zT}a.
\end{equation}
Now use Lemma \ref{L:6}:
\begin{equation}
RHS= (-1)^{p(a)p(b)} (z-w)^{N}e^{zT}e^{-zT}b(w)e^{zT}a= (-1)^{p(a)p(b)} (z-w)^{N}e^{zT}i_{w,z}b(w-z)a.
\end{equation}
For $N\gg 0$, this is a formal power series in $(z-w)$, so we can set $w=0$
and get 
\begin{equation}
LHS=z^Na(z)b= (-1)^{p(a) p(b)} e^{zT}z^Nb(-z)a=RHS,
\end{equation}
which proves the proposition.
\end{proof}

\begin{prp}
\label{T-Der}
$T$ is a derivation for all $n$-th products, i.e., 
\begin{equation}
T(a_{(n)}b)=(Ta)_{(n)}b+a_{(n)}(Tb),\quad\forall n\in\Z.
\label{3.16}
\end{equation}
\end{prp}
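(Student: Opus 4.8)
The plan is to derive that $T$ is a derivation of all $n$-th products directly from the translation covariance property and the $n$-th product identity, both of which are already established. The key observation is that the operator $T$ acts on $V$, and by Corollary \ref{cor3.1}(b) we have $(Ta)(z)=\partial_z a(z)$, so the derivation property of $T$ on the $n$-th products should follow from the fact that $\partial_z$ is a derivation of the $n$-th products of quantum fields, which was proved within Lemma \ref{L:4}.

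First I would recall the translation covariance relation \eqref{1.4}, namely $[T,a(z)]=\partial_z a(z)$, which holds for all $a\in V$ by part (c) of the Extension theorem. Equivalently, in components this reads $[T,a_{(n)}]=-na_{(n-1)}$. Second, I would apply this to a product $a_{(n)}b$ and write
\begin{equation}
T(a_{(n)}b)=[T,a_{(n)}]b+a_{(n)}(Tb)=-na_{(n-1)}b+a_{(n)}(Tb).
\end{equation}
The third step is to identify the term $-na_{(n-1)}b$ with $(Ta)_{(n)}b$. By Corollary \ref{cor3.1}(b), the state-field correspondence sends $Ta$ to $\partial_z a(z)=\sum_{n\in\Z}(-n-1)a_{(n)}z^{-n-2}=\sum_{n\in\Z}(-n)a_{(n-1)}z^{-n-1}$, so that $(Ta)_{(n)}=-na_{(n-1)}$ as operators on $V$. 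Hence $(Ta)_{(n)}b=-na_{(n-1)}b$, and substituting into the previous display yields exactly \eqref{3.16}.

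The computation is essentially routine once one lines up the reindexing in the formula for $(Ta)_{(n)}$. I expect the only point requiring care to be the shift of index when translating between $\partial_z a(z)$ and its coefficients: one must check that the coefficient of $z^{-n-1}$ in $\partial_z a(z)$ is $-na_{(n-1)}$ rather than off by one, which is precisely the content of the component form of \eqref{1.4}. There is no genuine obstacle here; the statement is a direct consequence of translation covariance combined with the description of $Ta$ in terms of $\partial_z a(z)$. An alternative route, should one prefer a more conceptual argument, would be to invoke Lemma \ref{L:4}, which already shows that $\partial_w$ is a derivation of the $n$-th product of quantum fields, and then transport this statement back to $V$ via the $n$-th product identity in Corollary \ref{cor3.1}(c) together with $(Ta)(z)=\partial_z a(z)$; both approaches give the result with equal ease.
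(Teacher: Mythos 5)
Your proof is correct and is essentially the paper's own argument spelled out in components: the paper simply cites Remark \ref{rem3.1}, which rests on Corollary \ref{cor3.1}(b) together with Remark \ref{rem1.1}, i.e., precisely on the combination of translation covariance $[T,a(z)]=\partial_z a(z)$ and $(Ta)(z)=\partial_z a(z)$ that you use. The reindexing check $(Ta)_{(n)}=-na_{(n-1)}$ is right, so there is no gap.
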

\begin{proof}
It follows from Remark \ref{rem3.1}.
%Apply $T$ to $a(z)b=\sum_{n\in\Z}(a_{(n)}b)z^{-n-1}$ and use the fact that
%$[T,a(z)]=\partial_za(z)=(Ta)(z)$. We get
%\begin{equation}
%LHS=Ta(z)b=[T,a(z)]b+a(z)Tb=(Ta)(z)b+a(z)Tb
%\end{equation}
%and 
%\begin{equation}
%RHS=\sum_{n\in\Z}T(a_{(n)}b)z^{-n-1}.
%\end{equation}
%Now, by comparing the coefficients of $z^{-n-1}$, we get \eqref{3.16}.
\end{proof}

In view of the $n$-th product identity, we let $:ab:=a_{(-1)}b$ and call this
the normally ordered product of two elements of a vertex algebra. 
\begin{prp} The $n$-th products for negative $n$ are expressed via the normally ordered product: $a_{(-n-1)}b=\,:\frac{T^na}{n!}b:$\,.
\end{prp}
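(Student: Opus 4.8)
The plan is to derive the formula $a_{(-n-1)}b = \,:\frac{T^n a}{n!}\,b:\,$ by translating the field-level identity \eqref{2.35} into a statement about vectors, using the $n$-th product identity of Corollary \ref{cor3.1}(c) together with Lemma \ref{L:5}(c). Recall that on the space of quantum fields we defined in \eqref{2.35} that $a(w)_{(-n-1)}b(w) = \frac{1}{n!}\,:\partial_w^n a(w)\,b(w):$, so the target formula should follow by applying the field-state correspondence to both sides of this definition and matching up vectors with fields via the dictionary established in Lecture 3.

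First I would apply the $n$-th product identity from Corollary \ref{cor3.1}(c), namely $(a_{(n)}b)(z) = a(z)_{(n)}b(z)$, with $n$ replaced by $-n-1$, to get $(a_{(-n-1)}b)(z) = a(z)_{(-n-1)}b(z)$. By the defining formula \eqref{2.35} for negative products of fields, the right-hand side equals $\frac{1}{n!}\,:\partial_z^n a(z)\,b(z):$. Next I would use Corollary \ref{cor3.1}(b), which says $(Ta)(z) = \partial_z a(z)$, iterated $n$ times to obtain $\partial_z^n a(z) = (T^n a)(z)$. Substituting, the right-hand side becomes $\frac{1}{n!}\,:(T^n a)(z)\,b(z):$.

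At this point I would invoke the field-level interpretation of the normally ordered product together with the field-state correspondence. Since $:\!c(z)b(z)\!: \,=\, c(z)_{(-1)}b(z)$ by \eqref{2.35} with $n=0$, and applying $\mathit{fs}$ (or equivalently evaluating at the vacuum and setting $z=0$) yields $\mathit{fs}(c(z)_{(-1)}b(z)) = c_{(-1)}b = \,:cb:$ by Lemma \ref{L:5}(c), I can take $c = T^n a / n!$. Reading the chain of equalities under $\mathit{fs}$, the left side gives $a_{(-n-1)}b$ and the right side gives $\,:\frac{T^n a}{n!}\,b:\,$, which is exactly the claimed identity.

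The step requiring the most care is the bookkeeping between fields and states: one must be sure that $\mathit{sf}(T^n a) = (T^n a)(z) = \partial_z^n a(z)$, which is precisely Corollary \ref{cor3.1}(b) iterated, and that the normally ordered product $:cb:$ of \emph{elements} defined just above the proposition as $c_{(-1)}b$ is compatible with the normally ordered product of \emph{fields} via the state-field correspondence. Both compatibilities are guaranteed by the $n$-th product identity, so once the identifications are made explicit the computation is a short substitution. I expect no genuine obstacle beyond ensuring the linearity of $\mathit{sf}$ is used correctly so that $\mathit{sf}(T^n a/n!) = \frac{1}{n!}(T^n a)(z)$ and the scalar $1/n!$ passes cleanly through.
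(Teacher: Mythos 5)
Your proposal is correct and follows essentially the same route as the paper's proof: both use the $n$-th product identity $(a_{(-n-1)}b)(z)=a(z)_{(-n-1)}b(z)$, the defining formula \eqref{2.35}, and the iterated identity $\partial_z^n a(z)=(T^na)(z)$ from Corollary \ref{cor3.1}(b), and then pass back to states (the paper via injectivity of $\mathit{sf}$, you via applying $\mathit{fs}$ and Lemma \ref{L:5}(c), which are the same step in two guises). No gaps.
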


\begin{proof}
We have $(a_{(-n-1)}b)(z)=a(z)_{(-n-1)}b(z)=\,:\frac{\partial_z^na(z)}{n!}b(z):$\,,
where the first equality is the $n$-th product identity and the second equality
is (2.37). But we also have
$T(a)(z)=\partial_za(z)$, hence by induction we have
$:\frac{\partial_z^na(z)}{n!}b(z):\,=\,:\frac{(T^na)(z)}{n!}b(z):\,=
(\frac{(T^na)}{n!}_{(-1)}b)(z)$, 
and by the bijection of the state-field correspondence,
we have $a_{(-n-1)}b=\frac{(T^na)}{n!}_{(-1)}b=\,:\frac{T^na}{n!}b:$\,.
\end{proof}

%\begin{rmk}
%From this proposition, we see that all negative products are expressed via normally
%ordered products.
%\end{rmk}

Now we take care of the $n$-th products $a_{(n)}b$ for $n\in \Z_+$.
For this we define the \emph{$\lambda$-bracket}
\begin{equation} \label{3.30}
[a_\lambda b]=\sum_{j\geq 0}\frac{\lambda^j}{j!}(a_{(j)}b)\in V[\lambda],\quad
\text{for}\quad a,b\in V.
\end{equation}
Thus we get a quadruple $(V,T,:ab:,[a_\lambda b])$, which will be shown in the
next lecture to have a very similar structure to a Poisson Vertex Algebra (PVA).

\subsection{Conformal vector and conformal weight, Hamiltonian operator}

\begin{defn}
\label{conformalvector}
A vector $L$ of a vertex algebra $V$ is called a \emph{conformal vector} if
\begin{itemize}
\item[(i)] $L(z)=\sum_{n\in\Z}L_nz^{-n-2}$, such that,
\begin{equation}
[L_m, L_n]=(m-n)L_{m+n}+\delta_{m,-n}\frac{m^3-m}{12}\,cI_V
\end{equation}
for some $c\in\mathbb{F}$, which is called the 
\emph
{central charge},
\item[(ii)] $L_{-1}=T$,
\item[(iii)] $L_0$ acts diagonalizably on $V$, its eigenvalues are called
\emph{conformal weights}.
\end{itemize}
\end{defn}

\noindent
Since $L_{n-1}=L_{(n)}$, using the commutator formula \eqref{e3.19}, we get
\begin{equation}
\label{e3.34}
[L(z),a(w)]=\sum_{j\geq 0}(L_{j-1}a)(w)\partial_w^j\delta(z,w)/j!,
\end{equation}
which is equivalent to (cf. \eqref{3.20})
\begin{equation}
[L_{(m)}, a_{(n)}]=\sum_{j\geq 0}{m \choose j}(L_{j-1}a)_{(m+n-j)}.
\label{3.33}
\end{equation}
So we have
\begin{equation}
[L_\lambda a]=\sum_{j\geq 0} \frac{\lambda^j}{j!} L_{j-1} a=Ta+\lambda\Delta_aa+o(\lambda).
\label{3.34}
\end{equation}
Here we assume that $a$ is an eigenvector of $L_0$ with the eigenvalue $\Delta _a$.
We call $L_0$ the energy operator. It is a Hamiltonian operator by the definition below and \eqref{3.33} for $ m = 0 $.

\begin{defn}
A diagonalizable operator $H$ is called a \emph{Hamiltonian operator}
if it satisfies the equation 
\begin{equation}
\label{3.24}
[H,a(z)]=(z\partial_z+\Delta_a)a(z) \quad\Longleftrightarrow\quad [H,a_{(n)}]=(\Delta_a-n-1)a_{(n)} 
\end{equation}
for any eigenvector $a$ of $H$ with eigenvalue $\Delta_a$.
\end{defn}

 If we write $a(z)=\sum_{n\in -\Delta_a+\Z}a_nz^{-n-\Delta_a}$,
then due to the equality $a_{(n)}=a_{n-\Delta_a+1}$, we have:
\begin{equation}
\label{e3.38}
[H,a_n]=-na_{n}.
\end{equation}
This is an equivalent definition of a Hamiltonian operator. 
\begin{prp}
If $H$ is a Hamiltonian operator, then we have:
\begin{itemize}
\item[(a)]  $\Delta_{\vac}=0$,
\item[(b)]  $\Delta_{Ta}= \Delta_a+1$,
\item[(c)]  $\Delta_{a_{(n)}b}=\Delta_{a}+\Delta_{b}-n-1$.
\end{itemize}
\end{prp}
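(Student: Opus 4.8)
The plan is to derive all three statements from the defining commutation relation $[H,a_{(n)}]=(\Delta_a-n-1)a_{(n)}$, valid for any $H$-eigenvector $a$, together with the elementary identities $\vac(z)=I_V$ (whence $\vac_{(-1)}=I_V$), the field--state correspondence $a=a_{(-1)}\vac$ from \eqref{1.12}, and $Ta=a_{(-2)}\vac$ from \eqref{e1.10}. I would settle (a) first, since it anchors the other two, then obtain (c) by a one-line commutator computation, and finally read off (b) as the special case $b=\vac$, $n=-2$ of (c).

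For (a) the subtle point is that one cannot simply substitute $a=\vac$ into the defining relation: that relation is only assumed for eigenvectors, and it is not yet known that $\vac$ is one. Instead I would bootstrap. Let $a$ be any eigenvector with eigenvalue $\Delta_a$. Writing $a=a_{(-1)}\vac$ and using $[H,a_{(-1)}]=\Delta_a a_{(-1)}$ (the case $n=-1$ of the defining relation), I compute
\begin{equation}
Ha=Ha_{(-1)}\vac=[H,a_{(-1)}]\vac+a_{(-1)}H\vac=\Delta_a a+a_{(-1)}H\vac.
\end{equation}
Since $Ha=\Delta_a a$, this forces $a_{(-1)}H\vac=0$. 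Because $H$ is diagonalizable its eigenvectors span $V$, and $a\mapsto a_{(-1)}$ is linear, so $a_{(-1)}H\vac=0$ for every $a\in V$. Taking $a=\vac$ and using $\vac_{(-1)}=I_V$ yields $H\vac=0$, i.e.\ $\Delta_{\vac}=0$; in particular $\vac$ is indeed an eigenvector.

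For (c), let $a,b$ be eigenvectors with eigenvalues $\Delta_a,\Delta_b$. Applying the defining relation to $a$ and the eigenvector property to $b$,
\begin{equation}
H(a_{(n)}b)=[H,a_{(n)}]b+a_{(n)}Hb=(\Delta_a-n-1)a_{(n)}b+\Delta_b\,a_{(n)}b=(\Delta_a+\Delta_b-n-1)\,a_{(n)}b,
\end{equation}
which shows $a_{(n)}b$ is an eigenvector of the asserted conformal weight. Finally (b) is the instance of (c) with $b=\vac$ and $n=-2$: using $Ta=a_{(-2)}\vac$ and $\Delta_{\vac}=0$ from (a) gives $\Delta_{Ta}=\Delta_a+\Delta_{\vac}-(-2)-1=\Delta_a+1$.

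The only genuinely delicate step is the bootstrap in (a), where diagonalizability and the vacuum field being $I_V$ are both essential to conclude $H\vac=0$ without assuming in advance that $\vac$ is an eigenvector. Once $\Delta_{\vac}=0$ is established, parts (b) and (c) are immediate from the commutator relation, and no further input about locality or the explicit form of the $n$-th products is required.
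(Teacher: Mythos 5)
Your proof is correct, and parts (b) and (c) coincide exactly with the paper's argument: the one-line commutator computation
$H(a_{(n)}b)=[H,a_{(n)}]b+a_{(n)}Hb=(\Delta_a+\Delta_b-n-1)a_{(n)}b$ for (c), and (b) obtained as the case $b=\vac$, $n=-2$ of (c) via $Ta=a_{(-2)}\vac$ (equation \eqref{e1.10}). Where you genuinely diverge is (a). The paper's proof is one line: since $\vac(z)=I_V$, apply \eqref{3.24} with $a=\vac$, so that $0=[H,I_V]=(z\partial_z+\Delta_{\vac})I_V=\Delta_{\vac}I_V$, forcing $\Delta_{\vac}=0$. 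This is shorter, but it tacitly treats $\vac$ as an $H$-eigenvector, which is precisely the point you flag as not yet established (the relation \eqref{3.24} is only postulated for eigenvectors). Your bootstrap — using the $n=-1$ case $[H,a_{(-1)}]=\Delta_a a_{(-1)}$ together with $a=a_{(-1)}\vac$ to get $a_{(-1)}H\vac=0$ for every eigenvector $a$, then invoking linearity of $a\mapsto a_{(-1)}$ and diagonalizability of $H$ to extend this to all of $V$, and finally specializing to $a=\vac$ with $\vac_{(-1)}=I_V$ — closes this gap and yields the stronger, unconditional statement $H\vac=0$. What the paper's route buys is brevity (and it can be patched, e.g.\ by decomposing $\vac$ into eigencomponents, at which point it essentially becomes your argument); what your route buys is that the vacuum being an eigenvector of weight zero comes out as a conclusion rather than a hidden hypothesis. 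Both arguments ultimately rest on the same two inputs, $\vac(z)=I_V$ and the defining relation \eqref{3.24}, so beyond this refinement of (a) the proofs are the same.
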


\begin{proof}
To prove (a), we just need to know that $\vac(z)=I_V$, and we use
\eqref{3.24} with $a=\vac$.
%, the left-hand side and the first term of the right-hand side
%would be zero, so we have $\Delta_{\vac}=0$. 
Since $Ta=a_{(-2)}\vac$, (b) follows from $(a)$ and $(c)$ with $b=\vac$, $n=-2$. For $(c)$, we have 
\begin{equation}
\begin{split}
H(a_{(n)}b)&=[H, a_{(n)}]b+a_{(n)}Hb\\
&=(\Delta_a-n-1)a_{(n)}b+\Delta_ba_{(n)}b\\
&=(\Delta_{a}+\Delta_{b}-n-1)a_{(n)}b .
\end{split}
\end{equation}
\end{proof}

\begin{rmk}
\begin{itemize}
\item[(a)] For a conformal vector $L$, we have
$[L_\lambda L]=(T+2\lambda)L+\frac{\lambda^3}{2}c\vac$, which implies $\Delta_L=2$. That is why we write $ L(z) $ in the form $L(z)=\sum_{n\in \Z}L_nz^{-n-2}$.
\item[(b)] Conformal weight is a good ``book-keeping device'', if we let
$\Delta_\lambda=\Delta_T=1$. Then all summands in the $\lambda$-bracket 
$[a_\lambda b]=\sum_{j\geq 0}\frac{\lambda^j}{j!}(a_{(j)}b)$ have the same
conformal weight $\Delta_a+\Delta_b-1$. 
\end{itemize}
\end{rmk}

\begin{rmk}
\label{remark3.5}
The translation covariance \eqref{1.4} of the quantum field $ a(z) $ is equivalent to the following ``global'' translation covariance:
\[ e^{\epsilon T} a(z) e^{-\epsilon T} = i_{z, \epsilon} a(z+ \epsilon). \]
Likewise, the property \eqref{3.24} of $ a(z)  $ is equivalent to the following ``global'' scale covariance:
\[ \gamma^H a(z) \gamma^{-H} = (\gamma^{\Delta_a} a) (\gamma z), \mbox{ where } Ha = \Delta_a a. \]
The more general property \eqref{e3.34} is called the conformal variance. It is the basic symmetry of conformal field theory.
\end{rmk}

\subsection{ Formal Fourier Transform}

\begin{defn}
The \emph{Formal Fourier Transform} is the map
$F^\lambda_z\colon U[[z,z^{-1}]]\mapsto U[[\lambda]]$ defined by
\begin{equation}
F^\lambda_za(z)=\Res  e^{\lambda z}a(z)dz.
\label{3.28}
\end{equation}
\end{defn}

%The Formal Fourier Transform satisfy the following properties.
\begin{prp}
\label{P:FFT}
\begin{itemize}
\item[(a)] $F^\lambda_z\partial_za(z)=-\lambda F^\lambda_za(z)$,
\item[(b)] $F^\lambda_z\partial_w^k\delta(z,w)=e^{\lambda w}\lambda^k$,
\item[(c)] $F^\lambda_za(-z)=-F^{-\lambda}_za(z)$,
\item[(d)] $F^\lambda_z(e^{zT}a(z))=F^{\lambda +T}_za(z)$, where $T\in\End U$,
provided that $a(z)\in U((z))$.
\end{itemize}
\end{prp}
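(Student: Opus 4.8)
The plan is to verify each of the four identities (a)--(d) directly from the definition $F^\lambda_z a(z) = \Res e^{\lambda z} a(z)\,dz$ using elementary properties of the formal residue, namely $\Res \partial_z(\cdots)\,dz = 0$ and the expansions already established in the excerpt. Each part is a short computation, so the proof is a sequence of four independent verifications rather than a single argument with one hard step.

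For part (a), I would integrate by parts in the formal sense. Since $\Res \partial_z\bigl(e^{\lambda z} a(z)\bigr)\,dz = 0$ (the residue of a total $z$-derivative vanishes), expanding the derivative via the Leibniz rule gives
\begin{equation}
0 = \Res \bigl(\lambda e^{\lambda z} a(z) + e^{\lambda z}\partial_z a(z)\bigr)\,dz = \lambda F^\lambda_z a(z) + F^\lambda_z \partial_z a(z),
\end{equation}
which is exactly (a). For part (b), I would use the explicit formula \eqref{1.24} for $\partial_w^k\delta(z,w)/k!$ together with property (5) of the delta function, $\Res a(z)\delta(z,w)\,dz = a(w)$, applied to $a(z) = e^{\lambda z}$; more directly, writing $\partial_w^k\delta(z,w) = \sum_{j}\binom{j}{k}k!\,w^{j-k}z^{-j-1}$ and taking $\Res e^{\lambda z}(\cdots)\,dz$ picks out the coefficient of $z^{-1}$, collapsing the sum to $\sum_{j}\binom{j}{k}k!\,w^{j-k}\lambda^{j}/j! = \lambda^k e^{\lambda w}$ after reindexing. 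For part (c), substituting $z\mapsto -z$ changes the sign of the residue (since $\Res$ picks out the $z^{-1}$ coefficient and $dz\mapsto -dz$), giving $F^\lambda_z a(-z) = -\Res e^{-\lambda z}a(z)\,dz = -F^{-\lambda}_z a(z)$.

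For part (d), which I expect to be the only part requiring genuine care, the point is to treat $T$ as a formal parameter commuting with $\lambda$ and $z$ inside the residue. Since $a(z)\in U((z))$, the operator $e^{zT}$ acting coefficientwise is well-defined and we may write
\begin{equation}
F^\lambda_z\bigl(e^{zT}a(z)\bigr) = \Res e^{\lambda z} e^{zT} a(z)\,dz = \Res e^{(\lambda + T)z} a(z)\,dz = F^{\lambda+T}_z a(z),
\end{equation}
where the middle equality merely combines the two exponentials in the single variable $z$. The main obstacle here is not the algebra but justifying that $e^{zT}a(z)$ lies in a space on which $F^\lambda_z$ makes sense and that the manipulation of the double series (the product of $e^{\lambda z}$, $e^{zT}$, and $a(z)$) is legitimate termwise; this is guaranteed by the hypothesis $a(z)\in U((z))$, which ensures only finitely many negative powers of $z$ occur, so that for each fixed power of $T$ and $\lambda$ the relevant coefficient of $z^{-1}$ is a finite sum. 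I would state this convergence remark explicitly and then let the exponential identity $e^{\lambda z}e^{zT} = e^{(\lambda+T)z}$ finish the proof.
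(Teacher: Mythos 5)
Your proposal is correct and follows essentially the same route as the paper: each identity is verified directly from the definition $F^\lambda_z a(z)=\Res e^{\lambda z}a(z)\,dz$, with (b) via the explicit expansion \eqref{1.24} of $\partial_w^k\delta(z,w)$ and (d) by combining exponentials, using $a(z)\in U((z))$ to ensure well-definedness, exactly as in the paper. Your handling of (a) (vanishing residue of a total $z$-derivative plus Leibniz) and (c) (sign change under $z\mapsto -z$) is phrased through standard residue identities instead of the paper's explicit coefficient expansions, but this is only a cosmetic variation on the same elementary calculation.
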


\begin{proof}
\begin{itemize}
\item[(a)]   Assume $a(z)=\sum_{n\in \Z}a_{(n)}z^{-n-1}$, then $\partial_za(z)=\sum_{n\in \Z}(-n-1)a_{(n)}z^{-n-2}$. Now
\begin{equation}
\begin{split}
F^\lambda_za(z)&=\Res e^{\lambda z}a(z)dz\\
&=\Res (\sum_{i\in \Z_+}\dfrac{\lambda^iz^i}{i!})(\sum_{n\in \Z}a_{(n)}z^{-n-1})dz\\
&=\sum_{n\in \Z_+}\dfrac{\lambda^n}{n!}a_{(n)} ,\\
F^\lambda_z\partial_za(z)&= \sum_{n\in \Z_+}\dfrac{\lambda^n}{n!}(-n)a_{(n-1)}\\
&=-\lambda \sum_{n\in \Z_+}\dfrac{\lambda^n}{n!}a_{(n)} .
\end{split}
\end{equation} 

\item[(b)]  Recall that $\frac{\partial^k_w\delta(z,w)}{k!}=\sum_{j\in\Z}{j\choose k}w^{j-k}z^{-j-1}$, so 
\begin{equation}
\begin{split}
F^\lambda_z\partial_w^k\delta(z,w)&=\Res e^{\lambda z}k!\sum_{j\in\Z_+}{j\choose k}w^{j-k}z^{-j-1}dz\\
&=\sum_{j\in\Z_+}\dfrac{\lambda^j}{j!}k!\dfrac{j!}{k!(j-k)!}w^{j-k}\\
&=\lambda^k\sum_{j-k\in\Z_+}\dfrac{\lambda^{j-k}}{(j-k)!}w^{j-k}\\
&=e^{\lambda w}\lambda^k .
\end{split}
\end{equation}
\item[(c)] By definition
\begin{equation}
\begin{split}
F^\lambda_za(-z)&=\Res(\sum_{i\in \Z_+}\dfrac{\lambda^iz^i}{i!})(\sum_{n\in \Z}a_{(n)}(-z)^{-n-1})dz\\
&=\sum_{n\in \Z_+}\dfrac{\lambda^n}{n!}(-1)^{n+1}a_{(n)}\\
&=-\sum_{n\in \Z_+}\dfrac{(-\lambda)^n}{n!}a_{(n)}\\
&=-F^{-\lambda}_za(z) .
\end{split}
\end{equation}
\item[(d)] Since $a(z)\in U((z))$, $e^{zT}a(z)\in U((z))$ is well defined. Now
\begin{equation}
\begin{split}
F^\lambda_z(e^{zT}a(z))&=\Res e^{\lambda z}e^{zT}a(z)dz\\
&=\Res e^{(\lambda+T) z}a(z)dz\\
&=F^{\lambda +T}_za(z) .
\end{split}
\end{equation}
\end{itemize}
\end{proof}

Similarly, we can define the Formal Fourier Transform in two variables.

\begin{defn}
The Formal Fourier Transform in two variables is the map 
\begin{equation}
F_{z,w}^{\lambda} : U[[z,z^{-1},w,w^{-1}]]\rightarrow U[[w,w^{-1}]][[\lambda]],
\end{equation}
defined by 
\begin{equation}
F_{z,w}^{\lambda}a(z, w)=\Res e^{\lambda(z-w)}a(z,w)dz = e^{-\lambda w} F^{\lambda}_z \, a(z,w).
\end{equation}
\end{defn}

\begin{prp}
\label{P:FFT2}
\begin{itemize}

 \item [($\alpha$)] 
 $F_{z,w}^{\lambda}\partial_{z}a(z,w)=-\lambda F_{z,w}^{\lambda}a(z,w)=[\partial_{w},F_{z,w}^{\lambda}]a(z,w)$,
 
 \item [($\beta$)] $F_{z,w}^{\lambda}\partial_{w}^k\delta(z,w)=\lambda^{k}$,
 
 \item [($\gamma$)] $F_{z,w}^{\lambda}a(w,z)=F_{z,w}^{-\lambda-\partial_{w}}a(z,w)$ provided that $a(z,w)$ is local,
 
 \item [($\delta$)] $F_{z,w}^{\lambda}F_{x,w}^{\mu}=F_{x,w}^{\lambda+\mu}F_{z,x}^{\lambda}$.
\end{itemize}
\end{prp}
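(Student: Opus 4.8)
The plan is to reduce all four identities to the one-variable Formal Fourier Transform via the factorization $F_{z,w}^{\lambda}=e^{-\lambda w}F_z^{\lambda}$ together with Proposition \ref{P:FFT}, and, for part ($\gamma$), to the Decomposition theorem. Parts ($\alpha$), ($\beta$), ($\delta$) are essentially direct residue computations; the genuine work is in ($\gamma$).

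For ($\alpha$), the first equality $F_{z,w}^{\lambda}\partial_z a=-\lambda F_{z,w}^{\lambda}a$ follows immediately from Proposition \ref{P:FFT}(a) applied with $w$ as a spectator variable, since $F_{z,w}^{\lambda}=e^{-\lambda w}F_z^{\lambda}$. For the second equality I would differentiate under the residue: applying the Leibniz rule to $\partial_w\big(e^{\lambda(z-w)}a(z,w)\big)=-\lambda e^{\lambda(z-w)}a(z,w)+e^{\lambda(z-w)}\partial_w a(z,w)$ and taking the residue in $z$ gives $\partial_w F_{z,w}^{\lambda}a=-\lambda F_{z,w}^{\lambda}a+F_{z,w}^{\lambda}\partial_w a$, which rearranges to $[\partial_w,F_{z,w}^{\lambda}]a=-\lambda F_{z,w}^{\lambda}a$. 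Part ($\beta$) is immediate from $F_{z,w}^{\lambda}=e^{-\lambda w}F_z^{\lambda}$ and Proposition \ref{P:FFT}(b): $F_{z,w}^{\lambda}\partial_w^k\delta(z,w)=e^{-\lambda w}\cdot e^{\lambda w}\lambda^k=\lambda^k$.

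The crux is ($\gamma$). Since $a(z,w)$ is local, the Decomposition theorem writes it as a finite sum $a(z,w)=\sum_{j\geq 0}c^j(w)\partial_w^j\delta(z,w)/j!$, so by linearity it suffices to treat a single term. I would first compute $a(w,z)$ by swapping arguments, using $\delta(w,z)=\delta(z,w)$ and $\partial_z\delta(z,w)=-\partial_w\delta(z,w)$ to obtain $a(w,z)=\sum_j(-1)^j c^j(z)\partial_w^j\delta(z,w)/j!$. Applying $F_z^{\lambda}$, pulling $\partial_w^j$ through the residue in $z$, and using properties (4) and (5) of the delta function yields $F_z^{\lambda}\big(c^j(z)\partial_w^j\delta(z,w)\big)=\partial_w^j\big(e^{\lambda w}c^j(w)\big)$; multiplying by $e^{-\lambda w}$, invoking the shift identity $e^{-\lambda w}\partial_w^j e^{\lambda w}=(\partial_w+\lambda)^j$, and absorbing the sign $(-1)^j$ gives $F_{z,w}^{\lambda}a(w,z)=\sum_j\frac{1}{j!}(-\lambda-\partial_w)^j c^j(w)$. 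On the other side, ($\beta$) gives $F_{z,w}^{\mu}a(z,w)=\sum_j\frac{\mu^j}{j!}c^j(w)$, and substituting $\mu=-\lambda-\partial_w$, with $\partial_w$ acting on the coefficients $c^j(w)$, reproduces the same expression, proving the identity. The main obstacle here is conceptual rather than computational: one must fix the correct operator-ordering convention for the substitution $\mu\mapsto-\lambda-\partial_w$, namely that $\partial_w$ differentiates the $w$-coefficients produced by the transform, and then verify that the swap $a(z,w)\mapsto a(w,z)$ interacts correctly with this convention.

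Finally, ($\delta$) is proved by writing out both composites on a distribution $a(z,x,w)$ in three variables as iterated residues. The left side produces $\Res_z\Res_x e^{\lambda(z-w)+\mu(x-w)}a\,dx\,dz$ and the right side produces $\Res_x\Res_z e^{(\lambda+\mu)(x-w)+\lambda(z-x)}a\,dz\,dx$; since both exponents equal $\lambda z+\mu x-(\lambda+\mu)w$ and the residues in distinct variables commute, the two sides agree. I expect no real difficulty here beyond bookkeeping.
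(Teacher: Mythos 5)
Your proposal is correct and follows essentially the same route as the paper: $(\alpha)$ and $(\beta)$ via the factorization $F_{z,w}^{\lambda}=e^{-\lambda w}F_z^{\lambda}$ and Proposition \ref{P:FFT}, $(\gamma)$ via the Decomposition theorem reduced to terms $c^j(w)\partial_w^j\delta(z,w)$ using the delta-function properties and the shift identity $e^{-\lambda w}\partial_w^j e^{\lambda w}=(\partial_w+\lambda)^j$, and $(\delta)$ by matching exponents in the iterated residues. Your explicit attention to the operator-ordering convention in the substitution $\mu\mapsto-\lambda-\partial_w$ is a point the paper leaves implicit, but the arguments coincide.
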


\begin{proof}
Since $F_{z,w}^{\lambda}=e^{-\lambda w}F_{z}^\lambda$, $(\alpha)$ and $(\beta)$ follow from the properties $(a)$ and $(b)$ in Proposition \ref{P:FFT}. $(\delta)$ holds since
\begin{equation}
\Res\Res e^{\lambda(z-w)+\mu(z-w)}a(z,w,x)dx dz=\Res\Res e^{\lambda(z-x)}e^{(\lambda+\mu)(x-w)}a(z,w,x)dx dz
\end{equation}
Finally, due to the Decomposition theorem, it suffices to check ($\gamma$) (interpretation as before) for $ a(z,w) = c(w) \partial^k_w \, \delta (z,w): $
\begin{align*}
\mathrm{LHS} &= \Res e^{\lambda (z-w)} c(z) \partial^k_z \, \delta (w,z) dz = (-1)^k \, \Res e^{\lambda (z-w)} c(z) \partial_w \, \delta (w,z) dz\\
& = (-1)^k e^{-\lambda w} \partial^k_w \, \Res e^{\lambda z} c(z) \delta(z,w) dz = (-1)^k e^{-\lambda w} \, \partial^k_w e^{\lambda w} c(w)\\
& = (- \lambda -\partial_w)^k \, c(w),
\end{align*}
using the properties (3) and (5) of the delta function.
%{\blu  to be finished.}
 
\end{proof}

\newpage

\section{Lecture 4 (December 18, 2014)} 

The Formal Fourier Transform $F_{z}^\lambda$ is very important for us, since the $ \lambda $-bracket \eqref{3.30} is  $[a_\lambda b]=F_{z}^{\lambda}a(z)b$, i.e., the Fourier transform of the $z$-product is the $\lambda$-bracket. We also note that $:ab:\,(=a_{(-1)}b)=\Res \dfrac{a(z)b}{z}dz$. These observations will be important for studying properties of the normally ordered product $:\;:$ and the $\lambda$-bracket. For simplicity we will further consider vertex algebras $ V $ of purely even parity only. The general case follows by the Koszul-Quillen rule. % $[\cdot {}_\lambda\cdot]$, 

\subsection{Quasicommutativity, quasiassociativity and the noncommutative Wick's formula}
\begin{lem}
[Newton-Leibniz (NL) Lemma] For any $a(z)\in U[[z]]$, we have

\begin{equation}
\label{NLlemma}
F_{z}^{\lambda}\frac{a(z)}{z}=\mbox{ \Res } a(z)\frac{dz}{z}+\int_{0}^{\lambda}F_{z}^{\mu}a(z)d\mu .
\end{equation}
\end{lem}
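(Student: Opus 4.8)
The plan is to read the left-hand side as a formal power series in $\lambda$ and reconstruct it from its $\lambda$-derivative together with its value at $\lambda=0$; in other words, to apply the formal fundamental theorem of calculus in the variable $\lambda$, which is precisely what the name of the lemma advertises. First I would set
\[
G(\lambda):=F_z^\lambda\frac{a(z)}{z}=\Res e^{\lambda z}\frac{a(z)}{z}\,dz .
\]
Since $a(z)\in U[[z]]$, the quotient $\frac{a(z)}{z}$ has at most a simple pole at $z=0$, so $G(\lambda)$ is a well-defined element of $U[[\lambda]]$ and the three objects appearing in the statement all make sense.

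The key step is to differentiate $G$ with respect to $\lambda$. Because $\Res$ is nothing but extraction of the coefficient of $z^{-1}$, it is $\F$-linear and commutes with $\partial_\lambda$; combining this with $\partial_\lambda e^{\lambda z}=z\,e^{\lambda z}$ and the trivial identity $z\cdot\frac{a(z)}{z}=a(z)$, I obtain
\[
\partial_\lambda G(\lambda)=\Res \big(\partial_\lambda e^{\lambda z}\big)\frac{a(z)}{z}\,dz=\Res z\,e^{\lambda z}\frac{a(z)}{z}\,dz=\Res e^{\lambda z}a(z)\,dz=F_z^\lambda a(z).
\]
Evaluating $G$ at $\lambda=0$ and using $e^{0}=1$ gives the initial value $G(0)=\Res\frac{a(z)}{z}\,dz=\Res a(z)\frac{dz}{z}$.

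Finally I would invoke the formal fundamental theorem of calculus: for any $G(\lambda)\in U[[\lambda]]$ one has $G(\lambda)=G(0)+\int_0^\lambda \partial_\mu G(\mu)\,d\mu$, where $\int_0^\lambda$ denotes the formal antiderivative acting term-by-term on powers of $\mu$. Substituting the two computations above yields
\[
F_z^\lambda\frac{a(z)}{z}=\Res a(z)\frac{dz}{z}+\int_0^\lambda F_z^\mu a(z)\,d\mu ,
\]
which is the assertion, and the same argument works verbatim if one only assumes $a(z)\in U((z))$, the case in which the integral term becomes genuinely nontrivial. I do not expect a real obstacle here: the only points needing care are that $\partial_\lambda$ passes through the formal residue — immediate, since $\Res$ is a coefficient-extraction and therefore commutes with differentiation in the independent parameter $\lambda$ — and that every expression in sight is an honest formal power series in $\lambda$, so that the term-by-term integration underlying the formal fundamental theorem of calculus is legitimate; both are guaranteed by the hypothesis $a(z)\in U[[z]]$, which keeps $\frac{a(z)}{z}$ free of higher-order poles.
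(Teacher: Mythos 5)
Your proof is correct and takes essentially the same route as the paper: the paper's (one-line) proof likewise observes that both sides are formal power series in $\lambda$ which agree at $\lambda=0$ and have equal $\lambda$-derivatives, which is exactly your reconstruction of $G(\lambda)$ from $G(0)$ and $\partial_\lambda G$ via the formal fundamental theorem of calculus. Your closing observation that the argument works verbatim for $a(z)\in U((z))$ is also apt, since that is the generality in which the lemma is actually invoked later (e.g.\ for $b(-z)a$ in the proof of quasicommutativity).
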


\begin{proof}
Both sides are formal power series in $\lambda$, they are equal at $\lambda=0$, and their derivatives by $\lambda$ are also equal, so they are equal.
\end{proof}

\begin{prp}[Quasicommutativity of :\;:] The commutator for the normally ordered product and $\lambda$-bracket are related as follows
\begin{equation}
\label{quasicommu}
:ab:-:ba:=\int_{-T}^{0}[a_\lambda b]d\lambda .
\end{equation}
\end{prp}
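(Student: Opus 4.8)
The plan is to translate both sides into the language of the Formal Fourier Transform and then read off the identity from the Newton--Leibniz Lemma, using skewsymmetry to reverse the order of $a$ and $b$. I would start from the two dictionary formulas recorded at the beginning of this lecture: $:ab:\,=a_{(-1)}b=\Res\frac{a(z)b}{z}dz$ and $[a_\mu b]=F_z^\mu\big(a(z)b\big)$. Viewing $c(z):=a(z)b$ as a $V$-valued Laurent series in $V((z))$ and applying the Newton--Leibniz Lemma \eqref{NLlemma} to it should give
\[
F_z^\lambda\frac{a(z)b}{z}=\,:ab:+\int_0^\lambda[a_\mu b]\,d\mu.
\]
Here I would note that \eqref{NLlemma}, although stated for $U[[z]]$, holds verbatim for any Laurent series (a one-line check comparing the $(n)$-th coefficients of both sides), which is exactly what is needed.

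Next I would transform the left-hand side using skewsymmetry. By Proposition \ref{SkewSym}, in the purely even case $a(z)b=e^{zT}b(-z)a$, hence $\frac{a(z)b}{z}=e^{zT}\frac{b(-z)a}{z}$. Applying $F_z^\lambda$ and invoking property (d) of Proposition \ref{P:FFT}, namely $F_z^\lambda\big(e^{zT}c(z)\big)=F_z^{\lambda+T}c(z)$, followed by property (c), $F_z^\nu c(-z)=-F_z^{-\nu}c(z)$ with $\nu=\lambda+T$ and $c(z)=\frac{b(z)a}{z}$ (observing $\frac{b(-z)a}{z}=-c(-z)$), I expect to arrive at
\[
F_z^\lambda\frac{a(z)b}{z}=F_z^{-\lambda-T}\frac{b(z)a}{z}.
\]
Applying the first displayed identity with $a,b$ interchanged and Fourier parameter $-\lambda-T$ rewrites the right-hand side as $:ba:+\int_0^{-\lambda-T}[b_\mu a]\,d\mu$, so that, as an identity in $V[\lambda]$,
\[
:ab:+\int_0^\lambda[a_\mu b]\,d\mu=\,:ba:+\int_0^{-\lambda-T}[b_\mu a]\,d\mu.
\]

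Finally I would specialize $\lambda=-T$. Since $\lambda$ is a central formal parameter commuting with $T$ inside each coefficient, the factor $(-\lambda-T)^{j+1}$ occurring in the second integral specializes to $(T-T)^{j+1}=0$, so the entire $b$-side integral collapses, while the $a$-side integral becomes $\int_0^{-T}[a_\mu b]\,d\mu=-\int_{-T}^0[a_\mu b]\,d\mu$. This yields precisely $:ab:-:ba:=\int_{-T}^0[a_\lambda b]\,d\lambda$.

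The hard part will be making the substitution $\lambda=-T$ of an operator for the formal parameter rigorous, that is, interpreting $F_z^{-\lambda-T}$, the integral $\int_0^{-\lambda-T}$, and the final specialization as honest maps rather than suggestive shorthand. The clean way is to regard the last displayed equation as an equality of polynomials in $\lambda$ with coefficients in $V$ and to apply the evaluation map $\sum_k v_k\lambda^k\mapsto\sum_k(-T)^kv_k$; because $\lambda$ commutes with $T$ within each coefficient, this evaluation sends $(-\lambda-T)^{j+1}$ to $0$, which is exactly what forces the $b$-side to vanish. As a cross-check (and an alternative that avoids the operator substitution), setting $\lambda=0$ instead gives the equivalent form $:ab:-:ba:=\int_0^{-T}[b_\mu a]\,d\mu$, which one converts to the stated expression using skewsymmetry of the $\lambda$-bracket together with a change of integration variable.
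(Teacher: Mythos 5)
Your proof is correct, and it rests on the same toolkit as the paper's: skewsymmetry of the $z$-product (Proposition \ref{SkewSym}) divided by $z$, the Newton--Leibniz Lemma, and properties (c) and (d) of the Formal Fourier Transform (Proposition \ref{P:FFT}). The only real difference is the endgame. The paper sets $\lambda=0$ at the outset and finishes by converting $-\int_0^T[b_{-\mu}a]\,d\mu$ into $\int_{-T}^0[a_\mu b]\,d\mu$ via skewsymmetry of the $\lambda$-bracket and a change of integration variable --- which is precisely your ``cross-check'' route. Your main route instead keeps $\lambda$ free, derives an identity in $V[\lambda]$, and applies the evaluation map $\sum_k v_k\lambda^k\mapsto\sum_k(-T)^k v_k$, under which each factor $(-\lambda-T)^{j+1}$ collapses to $(T-T)^{j+1}=0$ because $\lambda$ commutes with $T$ coefficient-wise; this kills the $b$-side integral and buys you independence from $\lambda$-bracket skewsymmetry (Proposition \ref{prop4.4}), at the cost of justifying the operator substitution, which you do correctly. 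Two points in your write-up are genuine improvements in rigor over the paper's own exposition: first, your observation that the Newton--Leibniz Lemma extends verbatim from $U[[z]]$ to Laurent (indeed bilateral) series is actually needed by the paper too, since it applies the lemma to $b(-z)a$, which in general lies only in $V((z))$; second, your explicit treatment of the parameter $-\lambda-T$ as an honest operation on $V[\lambda]$ makes precise a substitution (namely $\lambda+T$ as upper limit and Fourier parameter) that the paper performs silently.
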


\begin{proof}
Apply $F_{z}^\lambda$ to both sides of skewsymmetry, divided by $z$, and set $\lambda=0$. We get 
\begin{equation}
F_{z}^{\lambda}\frac{a(z)b}{z}\bigg|_{\lambda=0}=F_{z}^{\lambda}\frac{e^{zT}b(-z)a}{z}\bigg|_{\lambda=0} .
\end{equation}
By definition
\begin{equation}
LHS = \,:ab:\, =a_{(-1)}b .
\end{equation} 
Next, using property $(d)$ of the FFT in Proposition \ref{P:FFT}, we have 
\begin{equation}
\begin{split}
RHS &=  F_{z}^{\lambda+T}\frac{b(-z)a}{z}\bigg|_{\lambda=0}\\
(\text{by NL Lemma})\quad    &= \Res_{z}\frac{b(-z)a}{z}+\int_{0}^{\lambda+T}F_{z}^{\mu}b(-z)a \, d\mu\bigg|_{\lambda=0} \\
(\text{by property $(c)$ of FFT in Prop \ref{P:FFT}}) \quad
&= {} :ba:- \int_{0}^{T}F_z^{-\mu}b(z)a \, d\mu\\
&={} :ba: -\int_{0}^{T}[b_{-\mu}a] \, d\mu\\
(\text{by skewsymmetry of the $\lambda$-bracket})\quad 
&= {} :ba:+\int_{0}^{T}[a_{\mu+ T}b] \, d\mu\\
&= {} :ba:+\int_{-T}^{0}[a_{\mu }b] \, d\mu .
\end{split}
\end{equation}
\end{proof}

% &= :ba: -\int_{0}^{T}
%  [b_{-\mu}a]d\mu = :ba: +\int_{0}^{T}
%  [b_{-\mu}a]d\mu \\
%
Next we derive the following important identity.

\begin{prp} For $a, b, c$ in a vertex algebra $V$, we have the following identity in $V[[\lambda, w, w^{-1}]]$
\begin{equation}
\label{very important}
[a_\lambda b(w)c]=e^{w\lambda}[a_\lambda b](w)c+b(w)[a_\lambda c].
\end{equation}
\end{prp}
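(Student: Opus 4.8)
The plan is to derive the identity directly from the commutator formula \eqref{e3.19} by applying the Formal Fourier Transform $F^\lambda_z$ in the variable $z$. The key is to fix the meaning of each expression. By definition the $\lambda$-bracket is $[a_\lambda u]=F^\lambda_z\,a(z)u$ for $u\in V$, and I would extend this $w$-linearly, so that $[a_\lambda b(w)c]=F^\lambda_z\,a(z)b(w)c$, where $a(z)b(w)c$ denotes the composition of the operators $a(z)$ and $b(w)$ applied to $c$. Likewise $[a_\lambda b](w)c=\sum_{j\geq 0}\frac{\lambda^j}{j!}(a_{(j)}b)(w)c$ and $b(w)[a_\lambda c]=b(w)\sum_{j\geq 0}\frac{\lambda^j}{j!}(a_{(j)}c)$. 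With these readings the claim becomes a one-line consequence of \eqref{e3.19} and Proposition \ref{P:FFT}(b).

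First I would write the commutator formula \eqref{e3.19}, with the $n$-th product identity $a(w)_{(j)}b(w)=(a_{(j)}b)(w)$, applied to the vector $c$:
\[
a(z)b(w)c-b(w)a(z)c=\sum_{j\geq 0}\big((a_{(j)}b)(w)c\big)\frac{\partial^j_w\delta(z,w)}{j!}.
\]
Since $a(z)$ is a quantum field, $a_{(j)}b=0$ for $j\gg 0$, so the sum on the right is finite and there are no convergence issues. I would stress that, read coefficientwise, this is an equality of genuine bilateral series in $V[[z,z^{-1},w,w^{-1}]]$ after applying to $c$, even though the two operator orderings on the left naturally live in the different completions $V((z))((w))$ and $V((w))((z))$; the delta-function terms on the right are exactly the discrepancy between these two expansions.

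Next I would apply $F^\lambda_z=\Res e^{\lambda z}(\,\cdot\,)dz$ to both sides. Because $F^\lambda_z$ is well defined on every bilateral series in $z$ (for each power of $\lambda$ it extracts a single Fourier coefficient), it may be applied term by term. The first term on the left gives $[a_\lambda b(w)c]$ by the definition above; in the second term $b(w)$ is independent of $z$, so $F^\lambda_z$ passes through it to give $b(w)\,F^\lambda_z a(z)c=b(w)[a_\lambda c]$. For the right-hand side I would invoke Proposition \ref{P:FFT}(b), namely $F^\lambda_z\partial^j_w\delta(z,w)=e^{\lambda w}\lambda^j$, so that it collapses to
\[
\sum_{j\geq 0}\frac{\lambda^j}{j!}\big((a_{(j)}b)(w)c\big)e^{\lambda w}=e^{w\lambda}[a_\lambda b](w)c.
\]
Rearranging the resulting equality yields \eqref{very important}.

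The computation itself is routine; the only genuine obstacle is the bookkeeping of completions. The real content is checking that $a(z)b(w)c$, $b(w)a(z)c$, and their difference are all legitimately read as $V$-valued bilateral series, so that the commutator formula \eqref{e3.19} applies and $F^\lambda_z$ may be taken termwise. Once that is settled, the identity follows immediately from \eqref{e3.19} and Proposition \ref{P:FFT}(b).
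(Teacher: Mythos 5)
Your proof is correct and is essentially the paper's own argument: both start from the identity $a(z)b(w)c-b(w)a(z)c=[a(z),b(w)]c$, apply $F^\lambda_z$, and use the $n$-th product identity to identify the commutator term with $e^{w\lambda}[a_\lambda b](w)c$. The only cosmetic difference is that you expand $[a(z),b(w)]$ via the Decomposition theorem and use Proposition \ref{P:FFT}(b), while the paper writes $F^\lambda_z=e^{w\lambda}F^\lambda_{z,w}$ and invokes the two-variable transform of the commutator, which amounts to the same residue computation.
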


\begin{proof}
The following identity in $ V[[ z^{\pm1}, w^{\pm 1}]] $ is obvious:
\begin{equation}
a(z)b(w)c=[a(z),b(w)]c+b(w)a(z)c .
\end{equation}
Applying to both sides $F_{z}^\lambda=e^{w\lambda}F_{z,w}^\lambda$, we get
\begin{equation}
[a_{\lambda}b(w)c] = e^{w\lambda}F_{z,w}^\lambda [a(z),b(w)]c+ b(w) F_{z}^\lambda a(z)c = e^{w\lambda} [a_{\lambda} b] (w) c+b(w)[a_\lambda c],
\end{equation}
where we have used the $n$-th product formula $a(w)_{(n)}b(w)=(a_{(n)}b)(w),  n \in \mathbb{Z}_+$.

\end{proof}

We have the following two important properties of a vertex algebra.
\begin{prp}
 Assume $a, b, c$ in a vertex algebra  $V$. Then we have
\begin{itemize}

\item[(a)] Quasiassociativity formula
\begin{equation}
\label{quasiassoci}
::ab:c:-:a:bc::=:\left(\int_{0}^{T}d\lambda a\right)[b_{\lambda}c]:+:\left(\int_{0}^{T}d\lambda b\right)[a_{\lambda}c]: .
\end{equation}

\item[(b)] Non-commutative Wick's formula
\begin{equation}
\label{noncommwick}
[a_{\lambda}:bc:]=:[a_{\lambda}b]c:+:b[a_{\lambda}c]:+\int_{0}^{\lambda}
[[a_{\lambda}b]_{\mu}c]d\mu .
\end{equation}
\end{itemize}
\end{prp}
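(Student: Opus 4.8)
The plan is to handle the two parts separately, each reducing to a residue computation together with the identification of negative-index products with the normally ordered product.

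For (b) I would exploit the identity \eqref{very important} that has just been established. Writing the normally ordered product as a residue, $:bc:=\Res_w \frac{b(w)c}{w}\,dw$, and using that the bracket $[a_\lambda\,\cdot\,]=F^\lambda_z a(z)\,\cdot$ is linear and commutes with $\Res_w$, I get
\[
[a_\lambda :bc:]=\Res_w \frac{1}{w}\,[a_\lambda b(w)c]\,dw .
\]
Substituting \eqref{very important}, the summand $b(w)[a_\lambda c]$ contributes $\Res_w\frac{b(w)}{w}\,dw\,[a_\lambda c]=b_{(-1)}[a_\lambda c]=\,:b[a_\lambda c]:$, since $[a_\lambda c]$ is free of $w$. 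For the summand $e^{w\lambda}[a_\lambda b](w)c$ I would expand $\frac{e^{w\lambda}}{w}=\frac1w+\sum_{k\geq 1}\frac{\lambda^k w^{k-1}}{k!}$: the $\frac1w$ term gives $:[a_\lambda b]c:$, while, writing $d:=[a_\lambda b]$, the higher terms give $\sum_{k\geq 1}\frac{\lambda^k}{k!}\,d_{(k-1)}c$, which is exactly $\int_0^\lambda [d_\mu c]\,d\mu$ because $\int_0^\lambda \frac{\mu^j}{j!}\,d\mu=\frac{\lambda^{j+1}}{(j+1)!}$. Adding the three pieces yields \eqref{noncommwick}.

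For (a) I would compute the difference $::ab:c:-\,:a:bc::$ directly. Using $:a:bc::=a_{(-1)}(b_{(-1)}c)=\Res_z\Res_w\frac{a(z)b(w)c}{zw}\,dw\,dz$ and $::ab:c:=\Res_z\frac{:a(z)b(z):\,c}{z}\,dz$ with $:a(z)b(z):=a(z)_+b(z)+b(z)a(z)_-$, I would rewrite the creation/annihilation parts through the formal Cauchy formulas \eqref{2.28} as residues in an auxiliary variable $x$ against $i_{x,z}\frac{1}{x-z}$ and $i_{z,x}\frac{1}{x-z}$. Expanding these geometric series and subtracting the $\frac{1}{xz}$ coming from $:a:bc::$, the two diagonal $k=0$ terms cancel and one is left with
\[
::ab:c:\;-\;:a:bc::\;=\;\sum_{k\geq 1}a_{(-k-1)}\big(b_{(k-1)}c\big)\;+\;\sum_{l\geq 0}b_{(-l-2)}\big(a_{(l)}c\big).
\]
(The same identity can be obtained from the component form of the Borcherds identity, Proposition \ref{BI}, applied to $(a_{(-1)}b)_{(-1)}c$.) I would then invoke the earlier formula $a_{(-n-1)}b=\,:\frac{T^na}{n!}b:$ to turn these sums into $\sum_{i\geq 0}:\frac{T^{i+1}a}{(i+1)!}(b_{(i)}c):$ and $\sum_{l\geq 0}:\frac{T^{l+1}b}{(l+1)!}(a_{(l)}c):$; unwinding the symbol $\int_0^T d\lambda\,a$ via $\int_0^T\lambda^i\,d\lambda=\frac{T^{i+1}}{i+1}$ shows these equal the two normally ordered integrals in \eqref{quasiassoci}.

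The clean part is (b). The main obstacle is the bookkeeping in (a): keeping the auxiliary variable and the two expansion prescriptions $i_{x,z}$ versus $i_{z,x}$ straight so that precisely the diagonal terms cancel, and then matching the resulting power series against the $\int_0^T$-with-$T$ notation on the right-hand side. A minor point to justify along the way is that the operators $a(z)$ and the Fourier transform may be pulled through the residues, which is legitimate since every residue merely extracts a coefficient of a genuine series applied to the fixed vector $c$.
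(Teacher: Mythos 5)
Your proposal is correct and follows essentially the same route as the paper: for (b) you apply $\Res \frac{dw}{w}$ to the identity \eqref{very important} and your expansion of $e^{w\lambda}/w$ is exactly the Newton--Leibniz lemma computation done inline, while for (a) you extract $\Res\frac{dz}{z}$ from the $-1$st product identity $(a_{(-1)}b)(z)=\,:a(z)b(z):$ and arrive at the same sums $\sum_{j\geq 0}a_{(-j-2)}b_{(j)}c+\sum_{j\geq 0}b_{(-j-2)}a_{(j)}c$ that the paper then rewrites in the $\int_0^T$ notation. The only cosmetic difference is that you organize the coefficient extraction in (a) via the formal Cauchy formulas with an auxiliary variable, where the paper reads off the coefficients directly; there is no gap.
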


\begin{proof}

\begin{itemize}

\item[(1)]
Apply $\Res\frac{1}{z}dz$ to the -1st product identity:
\begin{equation}
:ab:(z)c=:a(z)b(z):c=a(z)_+b(z)c+b(z)a(z)_-c ,
\end{equation}
 and use that
\begin{equation}
\begin{split}
\Res\frac{1}{z}(:ab:(z)c)dz&=(:ab:)_{(-1)}c=::ab:c: \, ,\\
\Res\frac{1}{z} (a(z)_+b(z)c)dz&=:a:bc::+\sum_{j\in \Z_+}a_{(-j-2)}b_{(j)}c\\
&=:a:bc::+:(\int_0^T d\lambda a)[b_\lambda c]: \, , \\
\Res\frac{1}{z}(b(z)a(z)_-c)dz&=\sum_{j\in \Z_+} b_{(-j-2)}a_{(j)}c\\
&=:(\int_0^T d\lambda b)[a_\lambda c]: \,.
\end{split}
\end{equation}

\item[(2)] 
Take $\Res\frac{1}{w}dw$ of both sides of formula \eqref{very important}:
\begin{equation}
\label{very important recall}
\Res\frac{1}{w}[a_\lambda b(w)c]dw=\Res\frac{1}{w}(e^{w\lambda}[a_\lambda b](w)c+b(w)[a_\lambda c])dw.
\end{equation}
Since  $\Res\frac{b(w)c}{w}dw=b_{(-1)}c=:bc:$, we have 
\begin{equation}
\Res\frac{1}{w}[a_\lambda b(w)c]dw=[a_{\lambda}:bc:] .
\end{equation}
For the second term of the right-hand side of \eqref{very important recall},
\begin{equation}
\Res\frac{1}{w}b(w)[a_\lambda c] dw =:b[a_\lambda c]: \,.
\end{equation} 
Using the NL Lemma \ref{NLlemma} for the first term in the right hand side of \eqref{very important recall}, we have 
\begin{equation}
\begin{split}
\Res e^{w\lambda}\dfrac{[a_\lambda b](w)c}{w}dw&=F_{w}^\lambda \dfrac{[a_\lambda b](w)c}{w}\\
&=\Res \dfrac{[a_\lambda b](w)c}{w}dw + 
\int_{0}^{\lambda}F_{w}^{\mu}[a_\lambda b](w)c d\mu\\
&=:[a_\lambda b]c:+\int_{0}^{\lambda}[[a_\lambda b]_\mu c]d\mu .
\end{split}
\end{equation} 

\end{itemize}

\end{proof}

\begin{rmk}
The expression $:(\int_0^T d\lambda a)[b_\lambda c]:$ should be understood in the following way. We know that $[b_\lambda c]=\sum_{j\in \Z_+} b_{(j)}c \dfrac{\lambda ^j}{j!}$, so $:a[b_\lambda c]:=\sum_{j\in \Z_+}a_{(-1)}b_{(j)}c\dfrac{\lambda ^j}{j!}$. We have $\int_0^T \dfrac{\lambda ^j}{j!} d\lambda= \dfrac{T^{j+1}}{(j+1)!}$; letting  $\dfrac{T^{j+1}}{(j+1)!}$ act just on $a$ we get 
$\left( \dfrac{T^{j+1}a}{(j+1)!}  \right) _{(-1)}
=a_{(-j-2)}$, so $:(\int_0^T d\lambda a)[b_\lambda c]:=\sum_{j\in \Z_+}a_{(-j-2)}b_{(j)}c$ .
\end{rmk}

\subsection{Lie conformal algebras vs vertex algebras.}

Let $\mathfrak{g}$ be a Lie algebra, and let $\mathfrak{g}[[w,w^{-1}]]$ be the space of all $\mathfrak{g}$-valued formal
distributions. This space is an $\mathbb{F}[\partial]$-module by defining 
\begin{equation}
\partial a(w):=\partial_{w}a(w) .
\end{equation}
It is closed under the following (formal) $\lambda$-bracket: for $a=a(w),b=b(w) \in \mathfrak{g}[[w,w^{-1}]]$. Let 
\begin{equation}
[a_{\lambda}b](w):=F_{z,w}^{\lambda}[a(z),b(w)].
\end{equation}
Indeed, by definition of $F_{z,w}^{\lambda}$ and its property $(\beta)$, we have:
\begin{equation}
\begin{array}{lll}
 [a_{\lambda}b](w) & = & \Res e^{\lambda(z-w)}[a(z),b(w)]dz\\
     & = & \underset{j\in \mathbb{Z_{+}}}{\sum}\frac{\lambda^j}{j!}\Res (z-w)^j[a(z),b(w)]dz\\
     & = & \underset{j\in \mathbb{Z_{+}}}{\sum}\frac{\lambda^j}{j!}(a_{(j)}b)(w)\in \mathfrak{g}[[w,w^{-1}]][[\lambda]].
\end{array} 
\end{equation}
Thus $[a_{\lambda}b](w)$ is a generating series for  $j$-th products of $a(w)$ and $b(w)$. It is a formal power series in $\lambda$ in general, but if the pair $(a(w), b(w))$ is local, $[a_{\lambda}b]\in \mathfrak{g}[[w,w^{-1}]][\lambda]$ is polynomial in $\lambda$.

\begin{prp}
\label{prop4.4}
Assume $a(w), b(w), c(w)\in \mathfrak{g}[[w,w^{-1}]]$ for some Lie algebra $\mathfrak{g}$ with $\partial=\partial_w$ defined as above. Denote $a=a(w), b=b(w), c=c(w)$. Then the $\lambda$-bracket defined as above satisfies the following properties: 
\begin{equation}
\begin{array}{ll}
\mbox{(sesquilinearity)}& [\partial a_{\lambda}b]=-\lambda[a_{\lambda}b],\quad [a_{\lambda}\partial b ]=(\lambda+\partial)[a_{\lambda}b],\\

\mbox{(skewsymmetry)} & [b_{\lambda}a]=-[a_{-\lambda-\partial}b]\quad \mbox{if $a, b$ is a local pair}, \\

\mbox{(Jacobi identiy)} & [a_{\lambda}[b_{\mu}c]]=[[a_{\lambda}b]_{\lambda+\mu}c]+[b_{\mu}[a_{\lambda} c]].
\end{array} 
\end{equation}

\end{prp}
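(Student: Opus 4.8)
The plan is to push the definition $[a_\lambda b](w)=F_{z,w}^{\lambda}[a(z),b(w)]$ through the four properties of the Formal Fourier Transform in Proposition~\ref{P:FFT2}; each of the three identities corresponds to one of these properties, and nothing beyond the bracket axioms of $\mf g$ is needed. For \emph{sesquilinearity} I would argue directly: writing $(\partial a)(z)=\partial_z a(z)$, we have $[\partial a_\lambda b](w)=F_{z,w}^{\lambda}\,\partial_z[a(z),b(w)]$ since $\partial_z$ only hits the first slot, and property $(\alpha)$ turns this into $-\lambda F_{z,w}^{\lambda}[a(z),b(w)]=-\lambda[a_\lambda b]$. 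For the second identity, $[a_\lambda\partial b](w)=F_{z,w}^{\lambda}\,\partial_w[a(z),b(w)]$, and the relation $[\partial_w,F_{z,w}^{\lambda}]=-\lambda F_{z,w}^{\lambda}$ from $(\alpha)$ rearranges to $F_{z,w}^{\lambda}\partial_w=(\lambda+\partial)F_{z,w}^{\lambda}$, whence $(\lambda+\partial)[a_\lambda b]$.

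For \emph{skewsymmetry} I would combine the antisymmetry of the Lie bracket with property $(\gamma)$. Set $P(z,w)=[a(z),b(w)]$; then $[b_\lambda a](w)=F_{z,w}^{\lambda}[b(z),a(w)]=-F_{z,w}^{\lambda}\,P(w,z)$. Because the pair is local, $P$ is a local distribution, so $(\gamma)$ applies and gives $F_{z,w}^{\lambda}P(w,z)=F_{z,w}^{-\lambda-\partial_w}P(z,w)$, which is exactly $[a_{-\lambda-\partial}b]$. Thus $[b_\lambda a]=-[a_{-\lambda-\partial}b]$, and locality is used precisely here, matching the stated hypothesis.

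The \emph{Jacobi identity} is the substantive step. I would introduce a third variable $x$ for $a$ and start from the Jacobi identity in $\mf g$,
\[
[a(x),[b(z),c(w)]]=[[a(x),b(z)],c(w)]+[b(z),[a(x),c(w)]],
\]
then apply the composite operator $F_{x,w}^{\lambda}F_{z,w}^{\mu}$ to both sides. Since each partial transform commutes with bracketing against a field in a disjoint variable, the left-hand side becomes $[a_\lambda[b_\mu c]]$. On the right, the second summand is handled by moving $F_{x,w}^{\lambda}$ inside $[b(z),-]$, producing $[b_\mu[a_\lambda c]]$. The first summand is where property $(\delta)$ does the real work: relabeling its dummy variables so that $F_{x,w}^{\lambda}F_{z,w}^{\mu}=F_{z,w}^{\lambda+\mu}F_{x,z}^{\lambda}$, the inner transform $F_{x,z}^{\lambda}$ converts $[a(x),b(z)]$ into $[a_\lambda b](z)$, and the outer $F_{z,w}^{\lambda+\mu}$ then yields $[[a_\lambda b]_{\lambda+\mu}c]$. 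The shift $\lambda\mapsto\lambda+\mu$ in the spectral parameter is thus forced by $(\delta)$.

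I expect the only genuine obstacle to be bookkeeping: tracking which variable each transform acts on, verifying that $F_{x,w}^{\lambda}$ and $F_{z,w}^{\mu}$ really do commute with Lie bracketing against fields in the remaining variables, and invoking $(\delta)$ with the correct identification of dummy variables. No locality hypothesis is required for Jacobi, since the whole computation is an identity of formal power series in $\lambda$ and $\mu$.
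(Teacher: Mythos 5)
Your proposal is correct and follows essentially the same route as the paper: sesquilinearity from property $(\alpha)$, skewsymmetry from $(\gamma)$ applied to the local distribution $[a(z),b(w)]$, and the Jacobi identity by applying the composed transforms to the Lie-algebra Jacobi identity in three variables and invoking $(\delta)$ (your variable labeling $x,z,w$ versus the paper's $z,x,w$ is immaterial). You merely spell out details the paper leaves implicit, such as the commutation of the partial transforms with bracketing in a disjoint variable.
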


\begin{proof}
The sesquilinearity comes from $(\alpha)$ and the skewsymmetry comes from $(\gamma)$ in Proposition \ref{P:FFT2} about properties of formal Fourier transform in two variables.
For the Jacobi identity we have:
\begin{equation}
\begin{split}
[a_{\lambda}[b_{\mu}c]](w) & :=  F_{z,w}^{\lambda} [a(z),F_{x,w}^{\mu}[b(x),c(w)]] \\
 & =  F_{z,w}^{\lambda} F_{x,w}^{\mu} [a(z),[b(x),c(w)]]   \\ 
& =  F_{z,w}^{\lambda} F_{x,w}^{\mu} [[a(z),b(x)],c(w)]]+ F_{z,w}^{\lambda} F_{x,w}^{\mu} [b(x),[a(z),c(w)]] .
\end{split}
\end{equation}
The last equality comes from the Jacobi identiy in the Lie algebra $\mathfrak{g}$. By property $(\delta)$ of the formal Fourier transform in Proposition \ref{P:FFT2}, we have:
\begin{equation}\begin{array}{lll}
F_{z,w}^{\lambda} F_{x,w}^{\mu} [[a(z),b(x)],c(w)]] & =  & F_{x,w}^{\lambda+\mu} F_{z,x}^{\lambda} [[a(z),b(x)],c(w)]\\ 
   & =  & F_{x,w}^{\lambda+\mu} [F_{z,x}^{\lambda} [[a(z),b(x)],c(w)]]\\
   &= &[[a_{\lambda}b]_{\lambda+\mu}c](w),
\end{array}
\end{equation}
while $ F_{z,w}^{\lambda} F_{x,w}^{\mu} [b(x),[a(z),c(w)]]  =  [b_{\mu}[a_{\lambda}c]](w)$ is just by definition.
\end{proof}
 
\begin{defn}
A \emph{Lie conformal algebra} (LCA)  is an $\mathbb{F}[\partial]$-module $R$ endowed with an $\mathbb{F}$-bilinear $\lambda$-bracket $[a_{\lambda}b]\in R[\lambda]$ for $a, b\in R$, which satisfies the axioms of sesquilinearty, skewsymmetry and the Jacobi identity. 
\end{defn}
%***Insert 26a,b***
\begin{exm}
\label{ex4.1}
The Virasoro formal distribution Lie algebra from Example \ref{ex2.1} gives rise, by Proposition \ref{prop4.4}, to the Virasoro Lie conformal algebra
\begin{equation}
\label{e4.23}
\Vir = \mathbb{F} [\partial] L \oplus \mathbb{F}C 
\end{equation}
with $ \lambda $-bracket
\[ [L_\lambda L] = (\partial + 2 \lambda)L + \frac{\lambda^3}{12}C, \quad [C_\lambda \Vir ] = 0 \, . \]
Replacing $ L $ by $ L - \tfrac{1}{2} \alpha C, $ where $ \alpha \in \mathbb{F}, $ we obtain a $ \lambda $-bracket with a trivial cocycle added:
\begin{equation}
\label{e4.24}
[L_\lambda L] = (\partial + 2 \lambda)L + \alpha \lambda C +  \frac{\lambda^3}{12}C, \quad [C_\lambda \Vir ] = 0.
\end{equation}
\end{exm}

\begin{exm}
\label{ex4.2}
The Kac-Moody formal distribution Lie algebra from Example \ref{ex2.2} gives rise to the Kac-Moody Lie conformal algebra
\begin{equation}
\label{e4.25}
\mathrm{Cur} \, \mf{g} = \mathbb{F}[\partial] \otimes \mf{g} + \mathbb{F}K
\end{equation}
with $ \lambda $-bracket $ (a,b \in \mf{g}): $
\[ [a_\lambda b] = [a,b]+ \lambda (a|b)K, \quad [K_\lambda \mathrm{Cur} \, \mf{g}] = 0. \]
Fix $ s \in \mf{g}; $ replacing $ a $ by $ a - (a|s)K,  $ we obtain a $ \lambda $-bracket with a trivial cocycle added:
\begin{equation}
\label{e4.26}
[a_\lambda b] = [a,b]+ \lambda (a|b)K +(s|[a,b])K, \quad [K_\lambda \mathrm{Cur} \, \mf{g}] = 0.
\end{equation}

Of course, adding a trivial cocycle doesn't change the Lie conformal algebra. However this will become crucial in the proof of the integrability of the associated integrable systems. 
\end{exm}

Due to the $n$-th product identity in a vertex algebra (Corollary 3.1(c)), we derive from the last proposition the following.

\begin{prp}
A vertex algebra $V$ is a Lie conformal algebra with $\partial = T$, the translation operator, and  $\lambda$-bracket 
\begin{equation}
\label{lambdabra}
[a_{\lambda}b]=\sum_{n\geq 0} \frac{\lambda^{n}}{n!}a_{(n)}b,\, a, b\in V. 
\end{equation} 
\end{prp}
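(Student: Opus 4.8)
The plan is to obtain all three Lie conformal algebra axioms by transporting, along the state-field correspondence $\mathit{sf}\colon V\to\mc{F}_{\max}$, the structure already established in Proposition \ref{prop4.4}. The point is that the $\lambda$-bracket \eqref{lambdabra} on $V$ is nothing but the pullback under $\mathit{sf}$ of the formal-distribution $\lambda$-bracket of that proposition, applied to the Lie algebra $\mathfrak{g}=\Lie_V$. So almost no new computation is needed: the work has been done in Proposition \ref{prop4.4}, and what remains is to match up the two pictures correctly.

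First I would invoke the second Corollary of the proof (in Section \ref{subsec:3.1}), which says that $\Lie_V=\Span\{a_{(n)}\mid a\in V,\ n\in\Z\}$ is a Lie subalgebra of $\End V$ and that $\{a(z)\}_{a\in V}$ is a family of $\Lie_V$-valued formal distributions, each lying in $\Lie_V[[w,w^{-1}]]$. By the locality axiom of a vertex algebra, every pair $a(z),b(w)$ is local. Proposition \ref{prop4.4} therefore applies verbatim to $\mathfrak{g}=\Lie_V$ and these distributions, yielding sesquilinearity, skewsymmetry (its local-pair hypothesis being automatic here), and the Jacobi identity for the \emph{field} $\lambda$-bracket $[a(w)_\lambda b(w)]=F^\lambda_{z,w}[a(z),b(w)]$, with $\partial=\partial_w$.

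Next I would identify this field $\lambda$-bracket with \eqref{lambdabra}. Expanding the commutator by \eqref{e3.19} and applying property $(\beta)$ of Proposition \ref{P:FFT2}, namely $F^\lambda_{z,w}\partial_w^j\delta(z,w)=\lambda^j$, gives $[a(w)_\lambda b(w)]=\sum_{j\geq 0}\tfrac{\lambda^j}{j!}\,a(w)_{(j)}b(w)$, which by the $n$-th product identity (Corollary \ref{cor3.1}(c)) equals $\sum_{j\geq 0}\tfrac{\lambda^j}{j!}(a_{(j)}b)(w)=\mathit{sf}\big([a_\lambda b]\big)$. Since $\mathit{sf}$ is a bijection (Extension theorem) and, by Corollary \ref{cor3.1}(b), satisfies $(Ta)(z)=\partial_z a(z)$, it is an isomorphism of $\F[\partial]$-modules ($\partial=T$ on $V$, $\partial=\partial_w$ on fields) that intertwines \eqref{lambdabra} with the field $\lambda$-bracket. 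Transporting the three axioms back through $\mathit{fs}=\mathit{sf}^{-1}$ then establishes them for \eqref{lambdabra}, with $T$ in the role of $\partial$; the skewsymmetry $[b_\lambda a]=-[a_{-\lambda-T}b]$ holds unconditionally because all pairs in a vertex algebra are local. The only place demanding care is this compatibility dictionary—verifying that $\mathit{sf}$ respects both the $\F[\partial]$-action and the bracket—but both compatibilities are immediate from Corollary \ref{cor3.1}(b),(c), so there is no genuine obstacle.
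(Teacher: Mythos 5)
Your proposal is correct and follows essentially the same route as the paper: both invoke the corollary that $(\Lie_V,\{a(w)\}_{a\in V})$ is a formal distribution Lie algebra with pairwise local distributions, apply Proposition \ref{prop4.4} to get the Lie conformal algebra axioms for the field $\lambda$-bracket, and then transfer them to $V$ via the $n$-th product identity (Corollary \ref{cor3.1}(c)) together with $(Ta)(z)=\partial_z a(z)$. Your write-up merely makes explicit the ``dictionary'' step that the paper compresses into the sentence ``Due to the $n$-th product identity, Proposition 4.5 follows.''
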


\begin{proof}
The $\lambda$-bracket defined by \eqref{lambdabra} is the formal Fourier transform of the $z$-product in $V$. $V$ is obviously an $\F[T]$-module. Moreover, the Fourier coefficients of the formal distributions $\{a(w) | a\in V\} \subset \End V [[w, w^{-1}]]$ span a Lie subalgebra of $\Lie_V$ of $ \End \, V $ (Corollary 3.2), and they are pairwise local, hence the skewsymmetry is always satisfied. Thus, $ (\Lie_V,  \{ a(w)\}_{a \in V} ) $ is a formal distribution Lie algebra. Hence, by Proposition 4.4, the formal distributions $ \{ a(w)\}_{a \in V} $ satisfy all axioms of a Lie conformal algebra. Due to the $ n $-th product identity, Proposition 4.5 follows. 
\end{proof}
We thus obtain the following
\begin{thm}
\label{quantumpoisson}
Let $V$ be a vertex algebra. Then the quintuple $(V,\vac,T, ::, 
[._\lambda .])$ satisfies the following properties of a ``quantum Poisson vertex algebra''.
\begin{itemize}
\item [(a)] $(V, T, [ \cdot_\lambda \cdot]) $ is a Lie conformal algebra.
\item [(b)] $(V, \vac, T, ::)$ is a quasicommutative, quasiassociative unital differential algebra.
\item [(c)] The normally order product :: and the $ \lambda $-bracket $[\cdot_\lambda \cdot]$ are related by the  noncommutative Wick formula \eqref{noncommwick}.
\end{itemize}
\end{thm}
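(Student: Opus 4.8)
This is a packaging theorem: each of its three assertions has essentially been proved already in the preceding propositions of this lecture, so the plan is to assemble those results and to supply the two elementary facts about the unit and the derivation that were not recorded separately. I expect no genuine obstacle here; the work is organizational.

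First, for (a) there is nothing new to do: the claim that $(V,T,[\cdot_\lambda\cdot])$ is a Lie conformal algebra is exactly the content of the proposition immediately preceding this theorem, where a vertex algebra was shown to carry an LCA structure with $\partial=T$ and $\lambda$-bracket \eqref{lambdabra} (itself deduced from Proposition \ref{prop4.4} via the $n$-th product identity).

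Next, for (b) I would check the four constituent properties separately. For unitality, since $:ab:\,=a_{(-1)}b$, the vacuum axiom $\vac(z)=I_V$ forces $1_{(-1)}=I_V$, giving $:\vac\,a:\,=a$, while $:a\,\vac:\,=a_{(-1)}\vac=a$ by the field-state correspondence \eqref{1.12}; hence $\vac$ is a two-sided unit. That $T$ is a derivation of $::$---so that $(V,\vac,T,::)$ is a unital differential algebra---is the case $n=-1$ of Proposition \ref{T-Der}, together with $T\vac=0$ from \eqref{e1.9}. Quasicommutativity is formula \eqref{quasicommu} and quasiassociativity is formula \eqref{quasiassoci}, both already established.

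Finally, (c) is literally the non-commutative Wick formula \eqref{noncommwick}, proved earlier. Thus the only genuinely new verifications are the unit and derivation statements in (b), and both are immediate; the nearest thing to a subtlety is simply recognizing which assertions must be re-derived versus merely cited. The point of the theorem is organizational, namely to record that the normally ordered product and the $\lambda$-bracket of an arbitrary vertex algebra assemble into the ``quantum'' analogue of a Poisson vertex algebra, with the non-commutative Wick formula playing the role of the compatibility axiom between the two products.
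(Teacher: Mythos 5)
Your proposal is correct and matches the paper's own treatment: the theorem is stated there with the words ``We thus obtain,'' i.e., it is exactly the packaging of the preceding results (Proposition \ref{prop4.4} via the $n$-th product identity for (a), quasicommutativity \eqref{quasicommu} and quasiassociativity \eqref{quasiassoci} for (b), and the non-commutative Wick formula \eqref{noncommwick} for (c)) that you describe. Your explicit verification of the unit and derivation properties (via $\vac(z)=I_V$, the field-state correspondence, Proposition \ref{T-Der} with $n=-1$, and $T\vac=0$) fills in details the paper leaves tacit, and is correct.
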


%\begin{proof}
%All these properties are already proved.
%\end{proof}

\begin{rmk}
In fact, properties  (a), (b), (c) of Theorem 4.1 characterize a vertex algebra structure, i.e., a quintuple $(V, \vac, T, ::, [ \cdot_\lambda \cdot ])$ satisfying the above ``quantum Poisson vertex algebra'' properties, is a vertex algebra. This is proved in \cite{BK03}.
\end{rmk}

\begin{exm}
\label{freebosonconformal}
(Computation with the non-commutative Wick's formula)
The simplest example is a free boson. Recall Example \ref{Ex:Free-Boson} in Lecture 1. For a free boson field $a(z)$, we have 
\begin{equation}
[a(z),a(w)]=\partial_w\delta(z, w).
\end{equation}
In the language of $\lambda$-brackets this means   for $ a = f\!s(a(z)): $
\begin{equation}
\label{4.19}
[a_{\lambda}a]=\lambda \vac,
\end{equation}
i.e., $a_{(1)}a=1$ and 
$a_{(n)}a=0$ for $n=0$ or $n\geq 2$. 

Now let $L:=\frac{1}{2}:aa:$, then 
\begin{equation}
\label{4.20}
[L_{\lambda}a]=(T+\lambda)a, \qquad [L_{\lambda}L]=(T+2\lambda)L+\frac{\lambda^{3}}{12}\vac.
\end{equation}
Indeed,
\[ 2[a_{\lambda} L] = [a_\lambda : aa:] = :[a_\lambda a] a: + :a[a_\lambda a]: + \int_{0}^{\lambda} [[a_\lambda a]_\mu a] d \mu. \]
Using \eqref{4.19}, we obtain $ [a_{\lambda}L] = \lambda a $ (since $ [\vac_\lambda a ] = 0$). By the skewsymmetry of the $\lambda$-bracket, the first equation in \eqref{4.20} follows. 
\end{exm}

Next we have:
\begin{align*}
[L_{\lambda} L] & = \frac{1}{2} [L_{\lambda} :aa:] \\
& = \frac{1}{2} :[L_{\lambda} a]a: + \frac{1}{2} :a[L_{\lambda}a]: + \int_{0}^{\lambda} [[L_{\lambda}a]_\mu a] \, d \mu \\
& = \frac{1}{2} :((T + \lambda) a)a: + \frac{1}{2} :a (T + \lambda)a : + \int_{0}^{\lambda} [(T + \lambda) a_{\mu} a] \, d \mu  \\
& = \frac{1}{2} T(:aa:) + \lambda :aa: + \int_{0}^{\lambda} (\lambda - \mu) \mu \, d \mu \ \vac \\
& = (T + 2 \lambda) L + \frac{\lambda^3}{12} \ \vac,
\end{align*}
\noindent proving the second equation in \eqref{4.20}.

Of course, there is a simpler way of manipulating with free quantum fields, see Theorem 3.3 in \cite{VAB}. However, exactly the same method as above works well for arbitrary quantum fields (like currents, discussed below).

%%%%%%
The following proposition tells us how to prove that a vector $ L $ is a conformal vector, hence how to construct a Hamiltonian operator $ H = L_0 $. 
  
\begin{prp}
\label{P4.6}
 Let $(V,\vac, T, \mathcal{F})$ be a pre-vertex algebra and let $L\in V$ be such that for $a(z)\in \mathcal{F}$,
 \begin{itemize}
    \item [(i)] $[L_{\lambda}a]=(T+\triangle_{a}\lambda)a+o(\lambda)$
    \item [(ii)] $L(z)$ satisfies the Virasoro relation: $[L_{\lambda}L]=(T+2\lambda)L+
    \frac{\lambda^{3}}{12}c\vac$.
 \end{itemize}
Then $L$ is a conformal vector of the corresponding (by the Extension theorem) vertex algebra.
\end{prp}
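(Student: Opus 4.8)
The plan is to check, one by one, the three defining conditions of a conformal vector in Definition \ref{conformalvector}, reading off the needed $n$-th products from hypotheses (i)--(ii) and propagating them to all of $V$ by means of the commutator formula \eqref{3.20}.

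For condition (i), I would first expand hypothesis (ii) in powers of $\lambda$: comparing coefficients in $[L_\lambda L]=\sum_{j\geq 0}\frac{\lambda^j}{j!}L_{(j)}L$ with $(T+2\lambda)L+\frac{\lambda^3}{12}c\vac$ gives $L_{(0)}L=TL$, $L_{(1)}L=2L$, $L_{(3)}L=\frac{c}{2}\vac$, and $L_{(j)}L=0$ for all other $j$. Writing $L(z)=\sum_n L_n z^{-n-2}$, i.e.\ $L_n:=L_{(n+1)}$, and substituting these four products into \eqref{3.20}, using $(Ta)_{(k)}=-k\,a_{(k-1)}$ (a consequence of Corollary \ref{cor3.1}(b)) and $\vac_{(k)}=\delta_{k,-1}I_V$, I expect the binomial bookkeeping to collapse to $[L_m,L_n]=(m-n)L_{m+n}+\delta_{m,-n}\frac{m^3-m}{12}cI_V$. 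This step is purely mechanical.

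For condition (ii), $L_{-1}=T$, the strategy is to prove the operator identity $L_{(0)}=T$ on all of $V$. Setting $m=0$ in \eqref{3.20} yields $[L_{(0)},a_{(n)}]=(L_{(0)}a)_{(n)}$, so $L_{(0)}$ is a derivation of every $n$-th product; $T$ is a derivation as well (Proposition \ref{T-Der}). Both operators annihilate $\vac$ (for $L_{(0)}$ because $a_{(n)}\vac=0$ when $n\geq 0$, by Lemma \ref{L:2}), and they agree on each generator $a=\mathit{fs}(a(z))$, $a(z)\in\mathcal{F}$, by the $\lambda^0$-coefficient of hypothesis (i). Since $V$ is spanned by the monomials $a^{j_1}_{(n_1)}\cdots a^{j_s}_{(n_s)}\vac$ (completeness), an induction on $s$ built on the common derivation property forces $L_{(0)}$ and $T$ to coincide everywhere.

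For condition (iii), the diagonalizability of $L_0=L_{(1)}$, I would use the identity $L_{(0)}=T$ just obtained. The $\lambda^1$-coefficient of (i) makes each generator $a$ an $L_0$-eigenvector of eigenvalue $\Delta_a$, and $\vac$ an eigenvector of eigenvalue $0$. Taking $m=1$ in \eqref{3.20} and inserting $(Ta)_{(n+1)}=-(n+1)a_{(n)}$ gives $[L_0,a_{(n)}]=(\Delta_a-n-1)a_{(n)}$ on generators; an induction on $s$ then shows that every spanning monomial $a^{j_1}_{(n_1)}\cdots a^{j_s}_{(n_s)}\vac$ is an $L_0$-eigenvector, with eigenvalue $\sum_i(\Delta_{a^{j_i}}-n_i-1)$. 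As these span $V$, the operator $L_0$ is diagonalizable.

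The step I expect to be the main obstacle is condition (ii): the hypotheses only pin $L$ against the generators and against itself, so the real content is upgrading the generator-level equality $L_{(0)}a=Ta$ to the global operator identity $L_{(0)}=T$. The shared-derivation induction is exactly the bridge here, and it is also the hinge for condition (iii), whose eigenvalue recursion relies on already knowing $L_{(0)}=T$.
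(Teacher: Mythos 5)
Your proposal is correct and takes essentially the same route as the paper's proof: both read off the needed mode relations from hypotheses (i)--(ii) via the commutator formula \eqref{3.20}, prove $L_{-1}=T$ by showing the two operators have identical commutators with all generator modes and both kill $\vac$ (your shared-derivation induction over spanning monomials is this same argument), and obtain diagonalizability of $L_0$ from $[L_0,a_{(n)}]=(\Delta_a-n-1)a_{(n)}$ applied to the spanning monomials. The only cosmetic difference is that you verify the Virasoro commutation relations for the modes $L_n$ explicitly, whereas the paper dismisses this step with ``$L(z)$ is already a Virasoro field,'' relying on the equivalence already established in Example \ref{ex2.1}.
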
 
 
\begin{proof}
$L(z)$ is already a Virasoro field, so we only need to prove that $L_{-1}=T$ and that $L_0$ acts diagonalizably on $V$. By completeness, $V$ is spanned by $a_{(k_{1})}^{j_1}\cdots a_{(k_{s})}^{j_s}\vac$, 
where $a^{j_i}(z)\in \mathcal{F}$.
Furthermore, property $(i)$ tells us 
\begin{equation}
\label{l0l1}
[L_{-1},a_{(n)}]=-na_{(n-1)}\quad\text{and}\quad [L_{0},a_{(n)}]=(\Delta_{a}-n-1)a_{(n)}.
\end{equation}
Moreover, letting $a=\vac$ in $(i)$, we get
\begin{equation}
L_{-1}\vac=0 \quad\text{and}\quad L_{0}\vac=0.
\end{equation}
Remember that $T$ also satisfies the first equation in \eqref{l0l1}, so $[L_{-1}-T, a_{(k)}]=0$ for all $k\in \Z$. Moreover $(L_{-1}-T)\vac =0$, so $L_{-1}-T$, being a derivation of all $ n $-th products, is zero, i.e., $L_{-1}=T$. $L_{0}$ is diagonizable by \eqref{l0l1}.
\end{proof}

%The following examples tell us how to construct a conformal vector in a vertex algebra.

%\begin{exm} [Free boson]
%Assume $\alpha(z)$ is a free boson field, then $[\alpha(z),\alpha(w)]=\partial_{w}\delta(z,w)$, in the language of $\lambda$-bracket, it means $[\alpha_\lambda\alpha] = \lambda$. Let $L=\frac{1}{2}:\alpha \alpha:$. Then in example \ref{freebosonconformal}, we already calculated that
%\begin{equation}
%[L_{\lambda}\alpha]=(T+\lambda)\alpha\quad\text{and}\quad  %[L_{\lambda}L]=(T+2\lambda)L+\frac{\lambda^{3}}{12}
%\end{equation}
%Thus L is a Virasoro field with $c=1$ and $\alpha$ is primary of conformal weight $1$. So if $h \in V$ is the vector in vertex algebra corresponding to the free boson field $\alpha(z)$, then $\dfrac{1}{2}h_{(-1)}h$ is the vector corresponding to the field $L$ and thus a conformal vector.
%\end{exm}

It follows from Proposition \ref{P4.6} and \eqref{4.20} that $ L $ is a conformal vector for the free boson vertex algebra, the free boson $ a  $ being primary  of conformal weight 1. Exactly the same method works for the affine vertex algebras. 

\begin{exr}
Let  $V^k(\mathfrak{g})$ be the universal affine vertex algebra of level $ k $ associated to a simple  Lie algebra $\mathfrak{g}$. Let $a^{i},b^{i}$ be dual bases of $\mathfrak{g}$, i.e., $(b^{i}|a^{j})=\delta_{ij}$ with respect to the Killing form. Assume that $ k \neq -h^{\vee}, $ where $ 2h^{\vee} $ is the eigenvalue of the Casimir element of $U(\mf{g}) $ in the adjoint representation 
($h^\vee$ is called the dual Coxeter number). Let $L=\frac{1}{2(k+h^{\vee})}\sum_i:{a^ib^{i}}:$ (the so called Sugawara construction). Show that $ L $ is a conformal vector  with central charge $c=\frac{k \dim \mathfrak{g} }{2(k+h^{\vee})}$, all $a\in \mathfrak{g}$ being primary of conformal weight $1$.
\end{exr}

%%%%%

\subsection{Quasiclassical limit of vertex algebras.}

Suppose we have a family of vertex algebras, i.e. a  vertex algebra $(V_{\hbar}, T_\hbar, \vac_\hbar, ::_{\hbar}, [\cdot_\lambda \cdot]_{\hbar})$ over $\F[[\hbar]]$, such that
 \begin{itemize}
  \item [(i)] for $v\in V_\hbar$, $\hbar v=0$ only if $v=0$ (e.g. if $V_{\hbar}$ is a free $\F[[\hbar]]$-module),
  \item [(ii)] $[a_{\lambda}b]_{\hbar}\in \hbar V_{\hbar}$ for $a, b\in V_\hbar$.
 \end{itemize}

Given a vertex algebra $(V_{\hbar}, T_\hbar, \vac_\hbar, ::_{\hbar}, [_\lambda]_{\hbar})$ over $\F[[\hbar]]$, satisfying the above two conditions, let $\mathcal{V}:=V_{\hbar}/\hbar V_{\hbar}$. This is a vector space over $ \mathbb{F}. $ Denote by 1 the image of $ \vac_{\hbar} \in V_{\hbar} $ in $ \mathcal{V} $ and by $ \partial  $ the operator on $ \mathcal{V}, $  induced by $ T \in \End \, V_{\hbar} $ ($ \hbar V_{\hbar} $ is obviously $ T $-invariant). The subspace (over $ \mathbb{F} $) $ \hbar V_{\hbar} $ is obviously an ideal for the product $ :v:_{\hbar}, $ hence we have the induced product $ \cdot  $ on $ \mathcal{V}, $ which is bilinear over $ \mathbb{F}. $ Finally, define a $ \lambda $-bracket $ \{ a_{\lambda} b \} $ on $ \mathcal{V} $ as follows. Let $ \tilde{a} $ and $ \tilde{b}  $ be preimages in $ V_{\hbar} $ of $ a $ and $ b; $ then we have
\[ [\tilde{a}_{\lambda} \tilde{b}]_{\hbar} = \hbar [\tilde{a}_{\lambda} \tilde{b}]' \]
where $ [\tilde{a}_{\lambda} \tilde{b}]' $ is uniquely defined due to (i) and (ii). We let 
\[ \{ a_{\lambda} b \} = \mbox{ image of } [\tilde{a}_{\lambda} \tilde{b}]' \mbox{ in } \mathcal{V}. \]
Obviously this $ \lambda $-bracket is independent of the choices of the preimages of $ a $ and $ b. $

\begin{defn}
\label{D4.3}
The \textit{quasiclassical limit} of the family of vertex algebras $ V_{\hbar} $ is the quintuple $ (\mathcal{V}, 1, \partial, \cdot, \{ \cdot_\lambda \cdot \} ). $
\end{defn}

%Below for $a, b\in V$,  $\tilde{a}, \tilde{b}$ are some representative of $a, b$ in $V_\hbar$; we use $\bar{x}$ for $x\in V_\hbar$ to denote the image of $x$ in $V$. We let
%\begin{itemize}
%\item  $Tv=\overline{T_\hbar \tilde{v}}$ for $v\in \mathcal{V}$,
%\item  $\vac=\overline{\vac_\hbar}$,
%\item  $ ab = a \cdot b=\overline{:\tilde{a} \tilde{b}:_\hbar}$, for $ a,b \in \mathcal{V}, $
%\item  $\{a_{\lambda}b\}= \frac{[\tilde{a}_{\lambda}\tilde{b}]_{\hbar}}{\hbar}|_{\hbar=0}$, for $ a,b \in \mathcal{V} $.
%\end{itemize}
%It is easy to check that all these structures are well defined on $ \mathcal{V}. $
 
%\begin{defn}
%The \emph{quasiclassical limit} of $V_{\hbar}$ is the vector space $V:=V_{\hbar}/\hbar V_{\hbar}$ with the above structures.
%\end{defn}

\begin{defn}
\label{poissonva}
A \emph{Poisson vertex algebra} is a quintuple $(\mathcal{V}, \vac, \partial, \cdot, \{\cdot_{\lambda} \cdot \})$ which satisfies the following axioms,

\begin{itemize}
  \item [(A)] $(\mathcal{V}, \partial, \{_{\lambda}\})$ is a Lie conformal algebra,
  \item [(B)] $(\mathcal{V}, 1, \partial, \cdot)$ is a commutative associative unital differential algebra,
  \item [(C)] $\{a_{\lambda} bc \}=\{a_{\lambda}b\}c+b\{a_{\lambda}c\}$ for all $ a, b,c \in \mathcal{V} $ (left Leibniz rule).
\end{itemize}
\end{defn}

\begin{thm}
The quasiclassical limit $\mathcal{V}$ of the family 
of vertex algebras $V_\hbar$ is a Poisson vertex algebra.
\end{thm}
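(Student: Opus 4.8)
The plan is to obtain each of the three Poisson vertex algebra axioms (A), (B), (C) of Definition~\ref{poissonva} by reducing modulo $\hbar$ the corresponding ``quantum'' identities that hold in $V_\hbar$, all of which are furnished by Theorem~\ref{quantumpoisson}. The unifying mechanism is that, by construction, $\{a_\lambda b\}$ is the class in $\mathcal{V}$ of $\hbar^{-1}[\tilde a_\lambda \tilde b]_\hbar$, so by the hypothesis that $[\cdot_\lambda\cdot]_\hbar$ is valued in $\hbar V_\hbar$, every occurrence of the quantum bracket carries at least one factor of $\hbar$. The three ``quantum defects'' appearing in Theorem~\ref{quantumpoisson} — quasicommutativity, quasiassociativity, and the extra integral in the non-commutative Wick formula — are each built out of $\lambda$-brackets and are therefore of strictly higher order in $\hbar$ than the terms that survive; after normalizing and passing to $\mathcal{V}=V_\hbar/\hbar V_\hbar$ they vanish, leaving exactly the classical axioms. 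Since reduction modulo $\hbar$ is an $\F$-algebra homomorphism, since $\hbar V_\hbar$ is $T$-invariant and an ideal for $::_\hbar$, and since $\{\cdot_\lambda\cdot\}$ is well-defined on $\mathcal{V}$ (independent of the choice of preimages, using torsion-freeness $\hbar v=0\Rightarrow v=0$), all operations descend and every identity may be checked on representatives.

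For axiom (B) I would begin from Theorem~\ref{quantumpoisson}(b). The product $\cdot$ on $\mathcal{V}$ is induced by $::_\hbar$, the unit $1$ is the image of $\vac_\hbar$, and $\partial$ is induced by $T$; the facts that $\partial$ is a derivation of $\cdot$ with $\partial 1=0$ and that $1$ is a two-sided unit descend verbatim from the corresponding statements for $(V_\hbar,\vac_\hbar,T,::_\hbar)$. Commutativity follows from quasicommutativity \eqref{quasicommu}: its right-hand side $\int_{-T}^0[a_\lambda b]_\hbar\,d\lambda$ lies in $\hbar V_\hbar$, so $a\cdot b-b\cdot a=0$ in $\mathcal{V}$. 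Associativity follows identically from \eqref{quasiassoci}, whose entire right-hand side is a sum of terms each containing a quantum $\lambda$-bracket and hence lies in $\hbar V_\hbar$; thus $(a\cdot b)\cdot c-a\cdot(b\cdot c)=0$. Therefore $(\mathcal{V},1,\partial,\cdot)$ is a commutative associative unital differential algebra.

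For axiom (A) I would use that $(V_\hbar,T,[\cdot_\lambda\cdot]_\hbar)$ is a Lie conformal algebra (Theorem~\ref{quantumpoisson}(a)). Writing $[\tilde a_\lambda \tilde b]_\hbar=\hbar[\tilde a_\lambda\tilde b]'$, sesquilinearity and skewsymmetry are $\F[[\hbar]]$-linear in the bracket, so dividing by the common factor $\hbar$ and reducing gives at once the sesquilinearity and skewsymmetry of $\{\cdot_\lambda\cdot\}$. The Jacobi identity requires degree counting: in the quantum Jacobi identity every term is a doubly nested bracket, for instance $[\tilde a_\lambda[\tilde b_\mu\tilde c]_\hbar]_\hbar=\hbar^2\,[\tilde a_\lambda[\tilde b_\mu\tilde c]']'$, so each of the three terms is divisible by $\hbar^2$. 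Dividing the whole identity by $\hbar^2$ and reducing modulo $\hbar$ yields the classical Jacobi identity $\{a_\lambda\{b_\mu c\}\}=\{\{a_\lambda b\}_{\lambda+\mu}c\}+\{b_\mu\{a_\lambda c\}\}$.

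Axiom (C) is the crux, and I expect the $\hbar$-bookkeeping here to be the main obstacle. The starting point is the non-commutative Wick formula \eqref{noncommwick} applied in $V_\hbar$:
\[
[a_\lambda :bc:_\hbar]_\hbar = :[a_\lambda b]_\hbar\,c:_\hbar + :b\,[a_\lambda c]_\hbar:_\hbar + \int_0^\lambda \bigl[\,([a_\lambda b]_\hbar)_\mu\, c\,\bigr]_\hbar\, d\mu.
\]
Each of the first three terms carries exactly one factor of $\hbar$, and their images in $\mathcal{V}$ after division by $\hbar$ are precisely $\{a_\lambda(b\cdot c)\}$, $\{a_\lambda b\}\cdot c$, and $b\cdot\{a_\lambda c\}$, using that $::_\hbar$ reduces to $\cdot$ and that $\{\cdot_\lambda\cdot\}$ is the leading $\hbar$-coefficient of the quantum bracket. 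The decisive point is that the correction term is an iterated bracket, hence lies in $\hbar^2 V_\hbar$; after dividing the whole identity by $\hbar$ it remains in $\hbar V_\hbar$ and so vanishes in $\mathcal{V}$. This leaves exactly the left Leibniz rule $\{a_\lambda bc\}=\{a_\lambda b\}c+b\{a_\lambda c\}$, establishing (C). The only care required is to confirm that reduction commutes with the integrations in $\lambda$ and $\mu$ — immediate, since these are finite $\F$-linear combinations of $j$-th products — and that each factor of $\hbar$ is counted exactly once, so that the three surviving terms are genuinely first order while the correction is genuinely second order in $\hbar$.
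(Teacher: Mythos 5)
Your proof is correct and follows essentially the same route as the paper: axioms (B) and (C) are obtained because the quantum corrections in quasicommutativity, quasiassociativity, and the non-commutative Wick formula each carry an extra factor of $\hbar$ and hence vanish upon reduction, while axiom (A) follows because the Lie conformal algebra axioms are homogeneous in the bracket, so the overall powers of $\hbar$ cancel via torsion-freeness. The paper compresses this into a few sentences; your version merely supplies the explicit $\hbar$-bookkeeping that the paper leaves implicit.
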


\begin{proof}
Since $V_\hbar$ is a vertex algebra over $\F[[\hbar]]$, due to Theorem  \ref{quantumpoisson} we have the quasicommutativity formula, the quasiassociativity formula and the non-commutative Wick formula for representatives in $ V_\hbar $ of elements of 
$\mathcal{V}$
. After taking the images of these formulas in $\mathcal{V}$, the ``quantum corrections'' disappear, hence $\mathcal{V}$ satisfies properties 
(B) and (C) of a PVA. Property (C) is satisfied as well since the axioms of a Lie conformal algebra are homogeneous in its elements.
%of $\mathcal{V}$.
\end{proof}

\begin{exr}
\label{exr4.2}
Deduce from the left Leibniz rule and the skewcommutativity of the $ \lambda $-bracket of a Poisson vertex algebra, the right Leibniz rule:
\[ \{ ab_\lambda c \} = \{ b_{\lambda + \partial} c \}_\rightarrow a + \{ a_{\lambda + \partial} c \}_\rightarrow b. \]
\end{exr}
%\begin{rmk}
%Compare the definition of a Poisson vertex algebra with the properties of a vertex algebra collected in theorem \ref{quantumpoisson}, we found that for a vertex algebra, we have some ``quantum'' terms appear in what we called quasicommutative formula, quasiassociative formula and non-commutative Wick's formula. These ``quantum'' terms disappear in a Poisson vertex algebra.
%\end{rmk}

Given a Lie algebra $\mathfrak{g}$, we can associate to it two structures: the universal enveloping algebra $U(\mathfrak{g})$ and the Poisson algebra $S(\mathfrak{g})$. The Poisson bracket on $S(\mathfrak{g})$ is the extension of $\{a,b\}=[a,b]$ for all $a,b\in\mathfrak{g}$ by left and right Leibneiz rules. In fact, $S(\mathfrak{g})$ is the quasiclassical limit of $U(\mathfrak{g}_{\hbar})$, where $\mathfrak{g}_\hbar$ is the Lie algebra $\mathbb{F}[[\hbar]]\otimes\mathfrak{g}$ over $\mathbb{F}[[\hbar]]$ with bracket $[a,b]_{\hbar} =\hbar[a,b]$  for $a, b \in\mathfrak{g}$. Indeed it is easy to see that the ordered monomials in a basis of $\mathfrak{g}$ form a basis of $U(\mathfrak{g_\hbar})$ over $\mathbb{F}[[\hbar]]$. Hence $U(\mathfrak{g}_{\hbar})/\hbar  U(\mathfrak{g}_{\hbar})=S(\mathfrak{g})$ as associative algebras, and $\{a,b\}=\dfrac{[\tilde{a}, \tilde{b}]_\hbar}{\hbar} \bigg{|}_{\hbar =0} = [a,b]$ for all $a,b\in\mathfrak{g}$ defines the Poisson structure on $S(\mathfrak{g})$.

Similar picture holds if in place of a Lie algebra $\mathfrak{g}$ we take a Lie conformal algebra $R$, and in place of $U(\mathfrak{g})$ we take $V(R),$ its universal enveloping vertex algebra. Recall its construction. We have the ``maximal'' formal distribution Lie algebra $ (\Lie R, R) $, associated to $ R,  $ which is regular (see \cite{VAB}, Chapter 2). Then $ V(R) = V(\Lie R, R) $ (for another construction, entirely in terms of $ R, $ see \cite{DSK06}).
 
Consider the vertex algebra  $V(R_{\hbar})$ over $\mathbb{F}[[\hbar]]$, where $R_{\hbar}=R[[\hbar]]$ for the Lie conformal algebra $R$ over $\F$, with $\lambda$-bracket defined by $[a_{\lambda}b]_{\hbar}= \hbar[a_{\lambda}b]$ for $a, b\in R$. In the same way as in the Lie algebra case, the quasiclassical limit is the Poisson vertex algebra, which, as a differential algebra, is $ S(R)$ 
(the symmetric algebra of the $ \mathbb{F} $-vector space $ R $) with $ \partial, $ extended as its derivation, endowed with the $ \lambda $-bracket $ \{a_\lambda b \} = [a_\lambda b] $ on $ R, $ which is extended to $ S(R)  $ by the left and right Leibniz rules. 

%\[ V(R_{\hbar})/\hbar V(R_{\hbar})=S(R). \]
\subsection{Representations of vertex algebras and Zhu algebra} 
We have the following diagram
\begin{center}
\begin{tikzpicture}
     % set up the nodes
     \node (D) at (0,0) {$VA$};
     \node[left=of D] (F) {$PVA$};
     \node[below=of D] (A) {$AA$};
     \node[below=of F] (B) {$PA$};
      draw arrows and text between them
     \draw[->] (D)--(F) node [midway,above] {q.lim};
        node [midway,above] { };
    
     \draw[->] (F)--(B) node [midway,left] {\Zhu} 
                node [midway, left] {};
     \draw[->] (A)--(B) node [midway,below] {} 
               node [midway,above] {q.lim};
     \draw[->] (D)--(A) node [midway,right] {\Zhu};     
\end{tikzpicture}
\end{center}

\noindent In the diagram, $AA$ means associative algebras, $PA$ means Poisson algebras, $VA$ means vertex algebras and $PVA$ means Poisson vertex algebras; q.lim means the quasiclassical limit and Zhu means a functor from vertex algebras to associative algebras (resp. from Poisson vertex algebras to Poisson algebras), explained below. 

Let $V$ be a vertex algebra with a Hamiltonian operator $H$. Throughout this section we will assume (for simplicity) that all eigenvalues of $H$ are integers. Recall the Borcherds identity from Section 3.2. For $ a $ and $ b \in V $ with eigenvalues of $ H $ equal $ \Delta_a $ and $ \Delta_b $ respectively, we write 

\[ a(z) =\underset{n\in \mathbb{Z}}{\sum}a_{n}  z^{-n-\Delta_a }, \quad  b(z)  = \sum_{a \in \mathbb{Z}} b_n z^{-n -\Delta_b } \ .  \]
Then, comparing the coefficients of monomials in $ z $ and $ w $ in the Borcherds identity we have, for $ m, n, k\in \Z $:
\begin{equation}
\label{4.29}
\sum_{j\geq 0}{k \choose j}(-1)^{j}(a_{m+k-j}b_{n+j}-(-1)^{n} b_{n+k-j}a_{m+j})=\underset{j\geq0}{\sum} {m+\triangle_{a}-1 \choose j}(a_{(k+j)}b)_{m+n+k} \, .
\end{equation}

\begin{defn}
A \emph{representation} of the vertex algebra $V$ in a vector space $M$ is a linear map 
\begin{equation}
\label{4.30}
 V\longrightarrow (\End M) [[z,z^{-1}]], \quad a\longmapsto a^{M}(z)=\sum_{n\in \Z} a_{n}^{M}z^{-n-\triangle_{a}},
\end{equation}
defined for eigenvectors of $ H $ and then extended linearly to  $V$, such that,
\begin{itemize}
   \item [(i)]  $a^{M}(z)$ is an $\End M$-valued quantum field for all $a \in V$ (i.e., given $m\in M$, $a_{(n)}^Mm=0$ for $n\gg 0$), 
   \item [(ii)] $\vac^{M}(z)=I_M$,
   \item [(iii)] Borcherds identity holds, i.e., for $a, b \in V, \ c \in M, \  m, n, k \in \Z$ we have (cf. \eqref{4.29}):
   \begin{equation}
   \label{4.31}
   \sum_{j\geq 0}{k \choose j}(-1)^{j}(a_{m+k-j}^{M}b_{n+j}^{M} c-(-1)^{n} b_{n+k-j}^{M}a_{m+j}^{M} c)=\underset{j\geq0}{\sum} {m+\triangle_{a}-1 \choose j}(a_{(k+j)}b)_{m+n+k}^M c \, .
   \end{equation}
\end{itemize}
\end{defn}
  
\begin{rmk}
Note that $(Ta)_n=(-n-\Delta_a)a_n$ and $Ha=\Delta_a a$, hence, $((T+H)a)_0=0.$
\end{rmk}  
 
Now assume that our vertex algebra $V$ contains a conformal vector $L$ of central charge $ c \in \mathbb{F} $ (see Definition \ref{conformalvector}), so that $ L_{-1} = T $ and $ L_0 = H $ is a Hamiltonian operator. Then we have $L^M(z)=\sum_{n\in\Z}L^M_nz^{-n-2}$, and
$[L^M_m, L^M_n]=(m-n)L^M_{m+n}+\delta_{m,-n}\frac{m^3-m}{12}c I_M$ .

\begin{defn}
A \emph{positive energy representation} $M$ of $V$ is a representation with $L^M_0$ acting diagonalizably on $M$ with spectrum bounded below, i.e.,  $M=\oplus_{j\geq h}M_{j}$ for some $h$, where $M_j=\{m\in M| L^M_0m=jm\}$.
\end{defn}
  
By \eqref{e3.38} (which follows from the Borcherds identity) we have 
\begin{equation}
\label{4.32}
a_{n}^{M} M_h=0 \mbox{ for $n>0$}, \ a_{0}^{M} M_h\subset M_h \, .
\end{equation}
So we have a linear map with $(H+T)V$ contained in the kernel (by Remark 4.3):
\begin{equation}
\label{4.33}
\pi_{M}:V \longrightarrow \End M_h, \ a \longmapsto a_0^M|_{M_h} \, .
\end{equation}
Taking $m=1,k=-1,n=0$ in Borcherds identity \eqref{4.31} for $ c \in M_h $, we get, by \eqref{4.32},
\[ \pi_M(a)\pi_M(b)c=\pi_{M}(a\ast b)c, \mbox{ for } a,b \in V,  \]
where 
\begin{equation}
\label{4.34}
a*b:=\sum_{j\geq 0} {\Delta_a \choose j}a_{(j-1)}b \ .
\end{equation}
Thus we get a representation of the algebra $(V, \ast)$ in the vector space $M_h$. The multiplication $ \ast $ on $ V $ is not associative. However, we have the following remarkable theorem.

\begin{thm}[\cite{Zhu}]
\begin{itemize}
 \item [(a)] $ J(V): = ((T + H)V) \ast V $ is a two-sided ideal of the algebra $(V,*)$.
 \item [(b)] $\Zhu V: =(V/J(V),*)$ is a unital associative algebra with 1 being the image of $ \vac $. %(called the \emph{Zhu algebra} of $V$).
 \item [(c)] The map $M \rightarrow M_h$ induces a map from the equivalence classes of positive energy $V$-modules to the equivalence classes of $\Zhu V$-modules, which is bijective on irreducible modules.
\end{itemize}
\end{thm}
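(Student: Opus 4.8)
The plan is to recast the product $*$ and the ideal $J(V)$ in the language of residues, so that parts (a) and (b) become specializations of the Borcherds identity \eqref{4.29}, and then to treat the module correspondence (c) separately. Writing $a(z)=\sum_n a_{(n)}z^{-n-1}$ as before, the definition \eqref{4.34} reads $a*b=\Res\frac{(1+z)^{\Delta_a}}{z}\,a(z)b\,dz$, and I introduce the auxiliary operation $a\circ b=\Res\frac{(1+z)^{\Delta_a}}{z^2}\,a(z)b\,dz$. Using $(Ta)(z)=\partial_z a(z)$ (Corollary \ref{cor3.1}(b)), $\Delta_{Ta}=\Delta_a+1$, and $\Res\partial_z(\cdots)dz=0$, an integration by parts gives $Ta*b=-\Delta_a(a*b)+a\circ b$, whence
\[ ((T+H)a)*b=Ta*b+\Delta_a(a*b)=a\circ b . \]
Thus $J(V)=((T+H)V)*V$ is precisely the span of the elements $a\circ b$ — the form in which Zhu's computations are cleanest.

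Next I would establish the algebraic core. The engine is \eqref{4.29} (in its residue form): fixing $a,b,c\in V$ and feeding in suitable triples $(m,n,k)$ produces the needed facts. As raw material, a family of ``higher'' relations $\Res\frac{(1+z)^{\Delta_a+m}}{z^{m+2}}a(z)b\,dz\in J(V)$, $m\ge 0$, follows by induction using $(Ta)(z)=\partial_z a(z)$ and the same integration by parts. From these and further specializations of \eqref{4.29} one deduces that $J(V)$ is absorbing on both sides, $V*J(V)\subseteq J(V)$ and $J(V)*V\subseteq J(V)$, giving (a); and that $*$ is associative modulo $J(V)$, i.e.\ $(a*b)*c\equiv a*(b*c)\pmod{J(V)}$. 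The unit is in fact exact: since $\vac(z)=I_V$ and $\Delta_\vac=0$ we get $\vac*a=a$, while $a_{(j-1)}\vac=0$ for $j\ge 1$ and $a_{(-1)}\vac=a$ give $a*\vac=a$. Together these yield the unital associative algebra $\Zhu V=(V/J(V),*)$ of (b).

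For (c), the forward functor is essentially recorded in the excerpt: for a positive-energy module $M=\oplus_{j\ge h}M_j$, the module Borcherds identity \eqref{4.31} with $m=1,\ k=-1,\ n=0$ together with \eqref{4.32} shows that $\pi_M\colon a\mapsto a_0^M|_{M_h}$ satisfies $\pi_M(a)\pi_M(b)=\pi_M(a*b)$; and since $((T+H)a)_0=0$ (Remark 4.3) we have $\pi_M(J(V))=0$, so $\pi_M$ descends to a $\Zhu V$-module structure on $M_h$. The reverse direction is the substance of the theorem: given a $\Zhu V$-module $U$, I would build a positive-energy ``generalized Verma'' $V$-module generated by $U$ placed in its lowest degree and let $L(U)$ be its unique irreducible quotient meeting $U$; one then verifies that $M\mapsto M_h$ and $U\mapsto L(U)$ are mutually inverse on irreducibles, i.e.\ that $M_h$ is $\Zhu V$-irreducible when $M$ is $V$-irreducible, and that $L(U)_h\cong U$.

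The hard part is precisely this reverse construction. The difficulty is to manufacture an honest $V$-module — to verify the full module Borcherds identity \eqref{4.31} at all energy levels — from data that a priori controls only the single top level $M_h$ through $\Zhu V$; concretely, one must show that the top-level multiplication recorded by $*$ propagates consistently to all modes $a_n^M$, so that the induced module is well defined and nonzero in each degree, and that passing to the irreducible quotient does not shrink the top. By contrast, the identities in (a) and (b), though lengthy, are mechanical specializations of \eqref{4.29}; it is the bijectivity on irreducibles in (c) that carries the genuine representation-theoretic content and that I expect to consume most of the effort.
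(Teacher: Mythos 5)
First, a point of comparison: the paper itself offers no proof of this theorem --- its ``proof'' is a citation to the original paper \cite{Zhu}, and to \cite{DSK06} for a version without the integrality assumption on the eigenvalues of $H$. So your proposal must be measured against Zhu's original argument, whose route you do follow. The identities you actually carry out are correct: the residue form $a*b=\Res\frac{(1+z)^{\Delta_a}}{z}\,a(z)b\,dz$ of \eqref{4.34}; the integration-by-parts computation $Ta*b=-\Delta_a(a*b)+a\circ b$, hence $((T+H)a)*b=a\circ b$, which correctly identifies $J(V)$ with the span of the elements $a\circ b$ (Zhu's ideal $O(V)$); the unit computations $\vac*a=a=a*\vac$; the inductive scheme for the higher relations $\Res\frac{(1+z)^{\Delta_a+m}}{z^{m+2}}a(z)b\,dz\in J(V)$; and the observation that the forward half of (c) is already contained in the paper's discussion preceding the theorem, via $\pi_M$ and Remark 4.3.

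Nevertheless, as a proof the proposal has genuine gaps, which you yourself flag. For (a) and (b), the absorption properties $V*J(V)\subseteq J(V)$ and $J(V)*V\subseteq J(V)$, and the associativity of $*$ modulo $J(V)$, are asserted to follow from ``further specializations of \eqref{4.29}'' but are never performed; these computations (combining the higher relations with commutator-type identities) are precisely the algebraic core of Zhu's argument, and in particular the left absorption is not a one-line specialization. More seriously, for (c) the inverse construction is only named, not built: one must produce, from a $\Zhu V$-module $U$, an honest positive-energy $V$-module --- that is, define all the modes $a^M_n$ on an induced space and verify the full module Borcherds identity \eqref{4.31} at every energy level --- and then prove that $L(U)_h\cong U$ and that irreducibility is preserved in both directions. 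This is the substance of the theorem (carried out in \cite{Zhu} via a recursive construction of the module structure, and by a different method in \cite{DSK06}), and nothing in the proposal resolves it; you explicitly defer it as ``the hard part.'' So what you have is a correct plan in the spirit of Zhu's original proof, with correct preliminary identities, but not a proof of the statement: parts (a) and (b) are incomplete, and part (c) is essentially untouched beyond the forward functor.
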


\begin{proof}
We refer for the proof to the original paper \cite{Zhu} or to \cite{DSK06} for a simpler proof of a similar result without the assumption that the eigenvalues of
$H$ are integers.
\end{proof}

\begin{exr}
\label{exer4.2}
Prove the commutator formula in Zhu algebra:

\begin{equation}
[a,b]:=a*b-b*a={\sum}_{j\geq 0} {\Delta_a-1 \choose j }a_{(j)}b
\end{equation}
\end{exr}

\begin{exr}
Let 
$ \mathcal{V}$ 
be a Poisson vertex algebra and let $ H $ be a diagonalizable operator on 
$ \mathcal{V}$, such that
\[ \Delta_{a_{(n)}b} = \Delta_a + \Delta_b - n - 1, \quad \Delta_{\partial a} = \Delta_a+1, \quad \Delta_{ab} = \Delta_a + \Delta_b,  \]
\noindent where $ \Delta_a $ is the eigenvalue of $ a, $ and 
\[ \{ a_{\lambda} b \} = \sum_{n \in \mathbb{Z}_+} \frac{\lambda^n}{n!} a_{(n)}b. \]
Show that $ \Zhu \mathcal{V} := \mathcal{V} / ((\partial + H)\mathcal{V})\mathcal{V} $ is a unital commutative associative algebra with the well defined Poisson bracket (cf. Exercise \ref{exer4.2})
\begin{equation}
 \{a,b \} = \sum_{j \geq 0} \begin{pmatrix}
\Delta_a -1 \\j
\end{pmatrix} a_{(j)} b. 
\end{equation}
\end{exr}
\begin{exr}
Let $V$ (resp. $ \mathcal{V}$) be a vertex algebra (resp. Poisson vertex algebra). Then 
$J=:(TV)V:$ (resp. $J=(\partial\mathcal{V})\cdot \mathcal{V}$) is a two-sided ideal of the algebra 
$(V,::)$ (resp. $(\mathcal{V},\cdot)$), and 
$V/J$ (resp. $\mathcal{V}/J$) is a Poisson algebra with the product, induced
by $::$ (resp $\cdot$), and the well defined bracket, induced by the 
$0$-th product of the $\lambda$-bracket.
 
\end{exr}
Of course, Zhu's theorem is just the beginning of the representation theory of vertex algebras, which has been a rapidly developing field in the past twenty years. Some of the most remarkable results of this theory are presented in the beautiful lecture course by T. Arakawa in this school. 
\newpage

\section{Lecture 5 (January 14, 2015)}
\label{sec:5}

Given a vertex algebra $ V $, one can construct its \textit{quasiclassical limit}. As a result we get a Poisson vertex algebra (PVA). This can be done both considering a filtration of the vertex algebra V or by constructing a one parameter family of vertex algebras $V_{\hbar}$, as previously done in Lecture $ 4 $. This construction resembles the way a Poisson algebra arises as a quasiclassical limit of a family of associative algebras, hence the name ``Poisson'' vertex algebra. The reason we are interested in such structures is that the theory of Poisson vertex algebras has  important relation with the theory of integrable systems of PDE's. This relation is parallel to (but a bit different from) the relation of Poisson algebras with the theory of integrable systems of ODE's. 

\subsection{From finite-dimensional to infinite-dimensional Poisson structures}
Let us start by recalling the definition of a Poisson vertex algebra:

\begin{defn}
A PVA is a quintuple $(\mathscr{V},\,\partial,\,1
,\,\cdot\,,\{\cdot_{\lambda}\cdot\})$ such that:
\begin{enumerate}
\item $(\mathscr{V},\,\partial,\,1,\,\cdot\,) $ is a differential algebra;
\item $(\mathscr{V},\,\partial,\{\cdot_{\lambda}\cdot\})$ is a Lie conformal algebra, whose $\lambda$-bracket satisfies the following axioms:
\begin{itemize}
\item[(i)](sesquilinearity) $\{\partial a_{\lambda}b\} = - \lambda\{a_{\lambda}b\}$, \quad\quad $\{a_{\lambda}\partial b\} =(\partial + \lambda)\{a_{\lambda}b\}$;
\item[(ii)](skewsymmetry) $\{b_{\lambda}a\} = - \{a_{-\partial - \lambda}b\}$;
\item[(iii)](Jacobi identity) $\{a_{\lambda}\{b_{\mu}c\}\} - \{b_{\mu}\{a_{\lambda}c\}\} = \{\{a_{\lambda}b\}_{\lambda + \mu}c\}$;
\end{itemize}
\item $\{\cdot_{\lambda}\cdot\} $ and $ \cdot $ are related by the following Leibniz rules:
\begin{itemize}
\item[(i)](left Leibniz rule) $ \{a_{\lambda}bc\} = \{a_{\lambda}b\}c + b\{a_{\lambda}c\}$;
\item[(ii)](right Leibniz rule) $ \{ab_{\lambda}c\} = \{a_{\lambda + \partial}c\}_{\rightarrow}b + \{b_{\lambda + \partial}c\}_{\rightarrow}a$.
\end{itemize}
\end{enumerate}
\end{defn}

\begin{rmk} We use the following notation: if $ \{a_{\lambda}b\} = \sum\limits_{n \in \bZ_+} \frac{{\lambda}^n}{n!} a_{(n)}b $, then when a right arrow appears it means that $ \lambda + \partial $ has to be moved to the right: $ \{a_{\lambda + \partial}b\}_{\rightarrow}c = \sum\limits_{n \in \bZ_+} \frac{a_{(n)}b}{n!}(\lambda+ \partial)^n c$. However, if no arrow appears we just have $ \{a_{-\partial - \lambda}b\} = \sum\limits_{n \in \bZ_+} \frac{(-\lambda - \partial)^n}{n!} a_{(n)}b $.
\end{rmk}

%\begin{exr}
%Retrieve the right Leibniz rule from the left Leibniz rule and skew-symmetry.
%\end{exr}

In the theory of Hamiltonian ODEs the key role is played by the Poisson bracket on the space of smooth functions $\mathcal{F}$ on a manifold. Choosing local coordinates $ u_1,\ldots,u_\ell $ on the manifold, we can endow $\mathcal{F}$ with a structure of Poisson algebra, letting
\begin{equation}
\label{5.1}
\{u_j,u_i\} = H_{ij} \in \mathcal{F}.
\end{equation}
By the Leibniz rule this extends to polynomials in the variables $ u_i $ as follows:
\begin{equation}\label{masterformulazero}
\{f,g\} = \frac{\partial g}{\partial u} \cdot H\frac{\partial f}{\partial u},
\end{equation}
where $\frac{\partial f}{\partial u} = \left( \begin{array}{c}
\frac{\partial f}{\partial u_1} \\
\vdots \\
\frac{\partial f}{\partial u_\ell} \end{array} \right)$, $u =\left( \begin{array}{c}
 u_1 \\
\vdots \\
 u_\ell \end{array} \right)$,  
$ H = (H_{ij})^\ell_{\substack{i,j=1}} $ is an $ \ell \times \ell $ matrix with coefficients in $ \mathcal{F} $, and $\cdot$ is the usual dot product of vectors from $ \mathcal{F}^\ell $ with values in $ \mathcal{F}. $ Formula \eqref{masterformulazero} extends to arbitrary functions $ f, g \in \mathcal{F} $. This bracket obviously satisfies the Leibniz rule, but it is not necessarily skewsymmetric, neither it satisfies the Jacobi identity. If the matrix $ H $ is skewsymmetric (i.e. $ H^{T} = - H$), then the bracket (\ref{masterformulazero}) is skewsymmetric. If, in addition, it satisfies the Jacobi identity (which happens iff $[H,H]=0 $, where $ [ \cdot , \cdot ] $ is the Schouten bracket), then the matrix $ H $ is called a \textit{Poisson structure}
%, and it defines a Poisson algebra structure 
on $ \mathcal{F}$.
 
 \begin{defn}
 The \emph{Hamiltonian ODE} associated with this Poisson structure is
 \begin{equation}\label{eq:HamiltonianODE}
 \frac{du}{dt} = \{h,u\} = H\frac{\partial h}{\partial u},
 \end{equation}
where the second equality follows from (\ref{masterformulazero}). The function $ h  \in \mathcal{F} $ is called the \textit{Hamiltonian} of this equation. 
\end{defn}
This is a special case of what is called an \textit{evolution ODE}, that is 
\[ \frac{du}{dt} = F(u),\, \hbox{for some}\,\, F\in \mathcal{F}^{\ell}.
\]
% the variables $ u = u(t) $.

In the theory of Hamiltonian PDEs a similar role is played by PVAs. Let us now see how to construct a similar machinery.

First of all we need to define which kind of differential algebra $\mathscr{V}$ we want for our PVA. The basic example is the \textit{algebra of differential polynomials} in $ \ell $ variables $ 
\mathscr{P}_\ell 
= \F[u_i^{(n)} \,\vert\, i \in I=\{1,\ldots,\ell\},\, n \in \bZ_+] $, which is a differential algebra with derivation $ \partial $, called the \textit{total derivative}, such that $ \partial u_i^{(n)} = u_i^{(n+1)} $. 
%We denote this algebra by $ R_l $.

\begin{defn}\label{algebradifffunctions}
An \emph{algebra of differential functions} in $ \ell $ variables $ \mathscr{V} $ is a differential algebra with a derivation $\partial$, which is an extension of the algebra of differential polynomials $\mathscr{P}_\ell $, endowed with linear maps $ \frac{\partial}{\partial u_i^{(n)}}: \mathscr{V} \rightarrow \mathscr{V} $ for all $ i \in I $, $n \in \bZ_+ $, which are commuting derivations of $ \mathscr{V}$, extending the usual partial derivatives in $ \mathscr{P}_\ell $, and satisfying the following axioms:
\begin{itemize}

\item[(i)] given $f\in \mathscr{V}$, $\frac{\partial f}{\partial u^{(n)}_i} = 0 $ for all but finitely many pairs $ (i,n) \in I \times \bZ_{+} $;
\item[(ii)] $ [\frac{\partial}{\partial u_i^{(n)}}, \partial ] = \frac{\partial}{\partial u_i^{(n-1)}} $ \quad (where the RHS is considered to be zero if $ n = 0 $).
\end{itemize}
%Here $ \partial: \mathscr{V} \rightarrow \mathscr{V} $ is the derivation, extending that of $\mathscr{P}_\ell $ by:
%\begin{equation}
%\partial = \sum_{i \in I, n \in \bZ_+} u^{(n+1)}_i \frac{\partial}{\partial u^{(n)}_i}.
%\end{equation}
\end{defn}

Which differential algebras are algebras of differential functions? The algebra of differential polynomials $ \mathscr{P}_\ell$ itself clearly satisfies these axioms (it suffices to check (ii) on the generatots 
$u_i^{(n)}$). One can as well consider the corresponding field of fractions 
$ \mathscr{Q}_\ell  = \F(u_i^{(n)} \,\vert\, i \in I,\, n \in \bZ_+) $, or any algebraic extension of $ \mathscr{P}_\ell  $ or $\mathscr{Q}_\ell $, obtained by adding a solution of a polynomial equation. However, if we want both axioms to hold, we can not add a solution of an arbitrary differential equation: for example, we can add $e^u$, solution of $f' = fu' $, but we can not add a 
non-zero solution of $ f'= fu $.
\begin{exr}
Let $\mathscr{V}=\mathscr{P}_1 [v]$ with the derivation $\partial$, extended from $\mathscr{P}_1 $
by $\partial v=vu_1$ or by $\partial v=u_1$. Show that the structure of an algebra of differential functions cannot be extended from $\mathscr{P}_1$ to $\mathscr{V}$.  
\end{exr}
The reasons why we want both properties (i) and (ii) to hold will soon be clear.\\

We also want an analogue of the bracket given by (\ref{masterformulazero}) and to understand what a Poisson structure is in the infinite-dimensional case. Recall the following (non-rigorous) formula which appears in any textbook on integrable
Hamiltonian PDE, cf. \cite{TF86}. It defines the Poisson bracket on generators ($i,j \in I $) as
\begin{equation}\label{physicistsformula}
\{u_i(x),u_j(y)\} = H_{ji}(u(y),u'(y),\ldots,u^{(n)}(y);\frac{\partial}{\partial y}) \delta (x-y),	
\end{equation}
where $ H = (H_{ji})_{i,j=1}^\ell $ is an $ \ell \times \ell $ matrix differential operator on $\mathscr{V}^\ell$, the $u_i$'s are viewed as 
functions in $x$ on a one-dimensional manifold, and $ \delta (x-y) $ is the usual delta function. %This formula is very important: as the operator product expansion (OPE) can be regarded as the basic formula in the theory of vertex algebras, and more generally in quantum field theory, it can be regarded as the basic formula in classical field theory.

\begin{exm}\label{ex:gfz}
The first example is given by the Gardner-Faddeev-Zakharov (GFZ) bracket, for $ \mathscr{V} = \mathscr{P}_1 $, and it goes back to 1971:
\begin{equation}
\{u(x),u(y)\} = \frac{\partial}{\partial y} \delta (x-y).
\end{equation}
\end{exm}

As in the ODE case, we can extend the bracket defined in (\ref{physicistsformula}) by the Leibniz rule. Then, for arbitrary $f,\,g \in \mathscr{V} $ we have
\begin{equation}\label{physicistsleibniz}
\{f(x),g(y)\} = \sum_{i,j\in I,\, p,q \in \bZ_+ } \frac{\partial f}{\partial u_i^{(p)}}\frac{\partial g}{\partial u_j^{(q)}} \partial_x^p \partial_y^q \{u_i(x),u_j(y)\}.
\end{equation}
The basic idea is to introduce the $ \lambda $-bracket by application of the Fourier transform
\begin{equation}
F(x,y) \mapsto \int e^{\lambda (x-y)}F(x,y)\,dx
\end{equation}
to both sides of 
(\ref{physicistsleibniz}):
\begin{equation}
\{f_{\lambda} g\} := \int e^{\lambda (x-y)} \{f(x),g(y)\} dx.
\end{equation}
Thus, for arbitrary $f,g \in \mathscr{V} $, we get a rigorous formula, called
the \textit{Master Formula}:
\begin{equation}\label{masterformula}
\{f_{\lambda}g\} = \sum\limits_{i,j\in I,\, p,q \in \bZ_+} \frac{\partial g}{\partial u_j^{(q)}}(\partial + \lambda)^q\{{u_i}_{\,\partial + \lambda}u_j\}_{\rightarrow}(-\partial - \lambda)^p\frac{\partial f}{\partial u_i^{(p)}} \tag{MF}.
\end{equation}
Here, $ \{{u_j}_{\,\partial + \lambda}u_i\} = H_{ij}(\partial + \lambda) $, where $ H(\partial) = (H_{ij}(\partial) )_{i,j\in I} $ is a matrix differential operator with coefficients in $\mathscr{V} $ for which the $\lambda$-bracket is its symbol.

\begin{exr} Derive (\ref{masterformula}) from (\ref{physicistsleibniz}).
\end{exr}

Note that (\ref{masterformula}) is similar to the formula for the Poisson bracket defined by Equation (\ref{masterformulazero}). In fact, to go from the former to the latter we just put $ \lambda $ and $ \partial $ equal to $ 0 $.
%
% One can easily notice that it resembles Formula (\ref{masterformulazero}) for the ODE case. To ease the notation, where no further specification is needed we will sometimes denote the matrix differential operator $ H(\partial) $ simply by $ H $.  %Conversely, $ \{{u_j}_{\lambda} u_i \} = H_{ij}(\partial + \lambda) \cdot 1 $. So, $ H_{ij}(\partial) = \{{u_j}_{\partial} u_i \} $.
%

\begin{thm}[\cite{BDSK}]
\label{thm:bdsk}
Let $ \mathscr{V} $ be an algebra of differential functions in the variables $ \{u_i\}_{i \in I}$. For each pair $ i,j \in I $ choose $ \{{u_i}_\lambda u_j\} = H_{ji}(\lambda) \in \mathscr{V}[\lambda] $. Then
\begin{enumerate}
\item The Master Formula (\ref{masterformula}) defines a $ \lambda$-bracket on $ \mathscr{V} $ which satisfies sesquilinearity, the left and right Leibniz rules, and extends the given $ \lambda$-bracket on the variables 
 $u_i$'s. Consequently, any $\lambda$-bracket on the algebra of differential polynomials, satysfying these properties, is given by the Master Formula.
%\item This $ \lambda$-bracket is skew-symmetric if and only if the differential operator $ H(\partial) $ is self-adjoint: $ H(\partial) = -H^*(\partial) $;
\item This $\lambda$-bracket is skewsymmetric provided skewsymmetry holds for every pair of variables:
\begin{equation}
\label{e5.10}
\{{u_i}_{\lambda}u_j\} = - \{{u_j}_{-\lambda - \partial}u_i\}, \quad \forall \,i,j\in I.
\end{equation}
\item If this $ \lambda$-bracket is skewsymmetric, then it satisfies the Jacobi identity, provided Jacobi identity holds for every triple of variables:
\begin{equation}
\label{e5.11}
\{{u_i}_{\lambda}\{{u_j}_{\mu} u_k\}\} - \{{u_j}_{\mu}\{{u_i}_{\lambda} u_k\}\} = \{{\{{u_i}_{\lambda} u_k\}}_{\lambda + \mu} u_j\}, \quad \forall\, i,j,k \in I.
\end{equation}
\end{enumerate}
\end{thm}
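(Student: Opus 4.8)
The plan is to prove the three parts in order, treating part (1) as the computational foundation and then deducing parts (2) and (3) by reducing everything to the generators $u_i$. Throughout, axiom (i) of Definition \ref{algebradifffunctions} (finiteness of the partial derivatives) guarantees that all sums in \eqref{masterformula} are finite, so that the right-hand side is an honest element of $\mathscr{V}[\lambda]$, and axiom (ii), namely $[\frac{\partial}{\partial u_i^{(n)}},\partial] = \frac{\partial}{\partial u_i^{(n-1)}}$, is the engine behind sesquilinearity.

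For part (1) everything is a direct computation from \eqref{masterformula}. Taking $f=u_i$, $g=u_j$ collapses the derivatives to $\frac{\partial f}{\partial u_k^{(p)}}=\delta_{ik}\delta_{p,0}$, so the formula returns $\{{u_i}_\lambda u_j\}$; this is the extension property. For sesquilinearity I would use axiom (ii) to write $\frac{\partial(\partial f)}{\partial u_i^{(p)}} = \partial\frac{\partial f}{\partial u_i^{(p)}} + \frac{\partial f}{\partial u_i^{(p-1)}}$, substitute into \eqref{masterformula}, and re-index the sum over $p$: using $\partial = -(-\partial-\lambda)-\lambda$, the shifted term cancels the top-degree contribution and leaves exactly the factor $-\lambda$ (the computation for the second slot is the mirror image and yields $\lambda+\partial$). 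The left Leibniz rule follows because $\frac{\partial}{\partial u_j^{(q)}}$ is a derivation, so $\frac{\partial(bc)}{\partial u_j^{(q)}}$ splits into two terms whose coefficients sit to the \emph{left} of the operators $(\partial+\lambda)^q\cdots$ and, being elements of the commutative algebra $\mathscr{V}$, factor out cleanly. The right Leibniz rule is the only delicate point of part (1): here $\frac{\partial(ab)}{\partial u_i^{(p)}}=\frac{\partial a}{\partial u_i^{(p)}}b+a\frac{\partial b}{\partial u_i^{(p)}}$ sits to the \emph{right} of $(-\partial-\lambda)^p$, and expanding that operator by the Leibniz rule for $\partial$ produces precisely the arrow substitution $\lambda\mapsto\lambda+\partial$ acting on $b$ (resp.\ $a$), reproducing $\{a_{\lambda+\partial}c\}_\rightarrow b + \{b_{\lambda+\partial}c\}_\rightarrow a$.

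For part (2), I would first observe that by sesquilinearity it suffices to verify \eqref{e5.10} for the generators $u_i^{(m)},u_j^{(n)}$: the hypothesis gives the case $m=n=0$, and one checks that skewsymmetry $\{a_\lambda b\}=-\{b_{-\lambda-\partial}a\}$ is preserved under $a\mapsto\partial a$ (and $b\mapsto\partial b$) directly from the two sesquilinearity relations, using that $\partial$ commutes with $-\lambda-\partial$. I would then induct on the total number of generator-factors appearing in $f$ and $g$. Writing, say, $g=g_1g_2$, I expand $\{f_\lambda g_1g_2\}$ by the left Leibniz rule and $\{{g_1g_2}_{-\lambda-\partial}f\}$ by the right Leibniz rule, and match the two using the inductive hypothesis together with the compatibility of the arrow substitution with sesquilinearity. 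In effect, skewsymmetry, the left Leibniz rule, and the right Leibniz rule are mutually consistent, so skewsymmetry propagates from the generators to all of $\mathscr{V}$.

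Part (3) is the main obstacle. The strategy is to introduce the Jacobiator
\[
J_{\lambda,\mu}(f,g,h)=\{f_\lambda\{g_\mu h\}\}-\{g_\mu\{f_\lambda h\}\}-\{{\{f_\lambda g\}}_{\lambda+\mu}h\},
\]
and to show that, assuming skewsymmetry (part 2) and the properties from part (1), $J$ is a derivation in each of its three arguments and is compatible with sesquilinearity, hence determined by its values on generators. The left Leibniz rule gives directly that $J_{\lambda,\mu}(f,g,\cdot)$ is a derivation in the third slot; skewsymmetry then transfers this to a derivation property in the first and second slots (the Jacobi expression being antisymmetric under the appropriate exchange of entries); and sesquilinearity reduces $u_i^{(m)}$ to the bare variable $u_i$. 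Granting the hypothesis \eqref{e5.11} for triples of variables, the triderivation property forces $J\equiv 0$. The genuine difficulty, and where I expect most of the work to lie, is the verification of the triderivation property: expanding $J$ by two applications of \eqref{masterformula} produces a large number of terms, and correctly tracking the arrow substitutions $\lambda+\mu\mapsto\lambda+\mu+\partial$ through the Leibniz expansions while invoking skewsymmetry to symmetrize is intricate. I would organize this by first establishing the clean third-slot Leibniz property and only then deducing the first- and second-slot properties from it via skewsymmetry, rather than attempting all three slots simultaneously.
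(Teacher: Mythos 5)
You should know at the outset that the paper itself contains no proof of Theorem \ref{thm:bdsk}: it is quoted from \cite{BDSK}, and the proof there is a direct computation with the Master Formula. Measured against that, your part (1) is correct and is essentially the standard argument: the extension property, sesquilinearity (via axiom (ii) of Definition \ref{algebradifffunctions} and the cancellation/regrouping you describe), and the two Leibniz rules are direct computations, the right Leibniz rule being the one place where the arrow convention must be unwound. (You never address the closing ``consequently'' clause of part (1), i.e.\ uniqueness of a sesquilinear Leibniz bracket on the algebra of differential polynomials, but that is a routine induction and a minor omission.)

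The genuine gap is in parts (2) and (3). Your entire strategy there is to reduce to the generators $u_i^{(n)}$ by induction on the number of generator-factors, using the Leibniz rules (for skewsymmetry) and the triderivation property of the Jacobiator $J_{\lambda,\mu}(f,g,h)$ (for Jacobi). Such an induction can only reach elements that are polynomial in the $u_i^{(n)}$, i.e.\ the subalgebra $\mathscr{P}_\ell \subseteq \mathscr{V}$. But the theorem is stated for an arbitrary algebra of differential functions $\mathscr{V}$, and the paper's own examples of such algebras --- the field of fractions $\mathscr{Q}_\ell$, or extensions containing $e^u$, $u^{-1}$, $\log u$ --- contain elements that are not polynomial in the generators; ``number of generator-factors'' is not even defined for such an $f$, so no amount of Leibniz-rule induction reaches it. This is exactly why the proof in \cite{BDSK} does not argue by generation: skewsymmetry and the Jacobi identity are verified directly from the Master Formula for arbitrary $f,g,h \in \mathscr{V}$. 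One expands both sides (resp.\ all three terms of the Jacobi identity); the terms involving second partial derivatives of $f$, $g$, $h$ cancel in pairs (by commutativity of the $\frac{\partial}{\partial u_i^{(n)}}$), and what remains is a bilinear (resp.\ trilinear) differential-operator expression in the first partials of the arguments whose coefficients vanish identically precisely when \eqref{e5.10} (resp.\ \eqref{e5.11}) holds for the variables --- a computation that is insensitive to whether the arguments are polynomial. Your observations inside the inductive framework are themselves sound (the clean third-slot Leibniz property of $J$, the antisymmetry $J_{\lambda,\mu}(f,g,h) = -J_{\mu,\lambda}(g,f,h)$ granted skewsymmetry and sesquilinearity, the compatibility of skewsymmetry with $a \mapsto \partial a$), with one caveat: the $1\leftrightarrow 2$ exchange does not by itself transfer the third-slot Leibniz property to the first slot; for that you need either the $1\leftrightarrow 3$ symmetry of $J$, which involves the substitution $\lambda \mapsto -\lambda-\mu-\partial$ and produces an arrowed (right-Leibniz-type) property, or a direct right-Leibniz expansion in the first slot. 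So your argument, once completed, proves the theorem for $\mathscr{V} = \mathscr{P}_\ell$ only, which is strictly less than what is claimed.
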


It follows from Theorem \ref{thm:bdsk} that, if the corresponding conditions 
on the variables $u_i$'s hold, the $\lambda$-bracket defined by the Master Formula (\ref{masterformula}) endows $\mathscr{V} $ with a structure of PVA. As in the finite-dimensional case, this structure is completely defined by $ H(\lambda)=
(H_{ij}(\lambda))\in \text{Mat}_{\ell \times \ell}\mathscr{V}[\lambda]  $. 
%We shall denote by $ \{ \cdot_{\lambda} \cdot \}_H $ the $\lambda$-bracket on $ \mathscr{V} $ corresponding to the matrix differential operator $ H(\partial) $.
%in fact, there is a one-to-one correspondence between $ \lambda$-brackets on $\mathscr{V} $ and $ l \times l $ matrices $ H(\lambda) = (H_{ij}(\lambda))_{i,j \in I} $ satisfying the required conditions (equivalently, with matrix differential operators $ H(\partial) = (H_{ij}(\partial))_{i,j \in I} $). 

\begin{defn}
We say that the matrix differential operator $ H(\partial) 
\in \text{Mat}_{\ell \times \ell}\mathscr{V}[\partial]$ with the symbol
$H(\lambda)$ is a \emph{Poisson structure} if the corresponding $ \lambda$-bracket defines a PVA structure on $ \mathscr{V} $.
\end{defn}

\begin{exr}
The $\lambda$-bracket, given by the Master Formula, is skewsymmetric
if and only if the matrix differential operator $ H(\partial)$
is skewadjoint.
% \in \text{Mat}_{l \times l}\mathscr{V}[\partial] $, conditions $(ii) $ and $(iii) $ of Theorem \ref{thm:bdsk} are respectively equivalent to $ H = - H^* $ (skew-adjointness of $ H $) and $ [H,H] = 0 $ (with respect to the Schouten bracket).% A matrix differential operator $ H $ satisfying these conditions is called a {\bf Hamiltonian operator}.
\end{exr}

\begin{exm}
Let $ \mathscr{V}=\mathscr{P}_1 = \F[u,u',u'',\ldots]$. From the GFZ bracket defined in Example \ref{ex:gfz} we get the following $\lambda$-bracket: $\{ u_{\lambda} u\} = \lambda $. The skewsymmetry and the Jacobi identity for the $\lambda$-bracket, given by the Master Formula, are immediate by Theorem \ref{thm:bdsk}. The associated Poisson structure is 
$ H(\partial) = \partial $. This PVA is the quasi-classical limit of the 
family of free boson vertex algebras $ B_{\hbar} $.
\end{exm}

\begin{exm}
\label{ex5.3}
Let $ \mathscr{V}=\mathscr{P}_1 = \F[u,u',u'',\ldots]$. The Magri-Virasoro PVA with central charge $ c \in \F $ is defined by the following $\lambda $-bracket: 
\begin{equation}
 \{u_\lambda u\} = (\partial + 2\lambda)u + c\lambda^3 + \alpha\lambda.
\end{equation} 
Of course, it is straightforward to check that the pair $ u,u $ satisfies \eqref{e5.10} and the triple $ u,u,u $ satisfies \eqref{e5.11}, hence, by Theorem \ref{thm:bdsk}, we get a PVA. It is instructive, however, to give a more conceptual proof. Consider the Lie conformal algebra $ \Vir $ from Example \ref{ex4.1}. Then by Theorem 5.1, $ S(\Vir) $ is a PVA, hence its quotient $ \mathcal{V}^c $ by the ideal, generated by $ C-c, $ is a PVA, which is obviously isomorphic to the Magri-Virasoro PVA. 
The corresponding family of Poisson structures is 
\begin{equation}
\label{MV}
 H(\partial) = u' + 2u\partial + c\partial^3 + \alpha\partial. 
\end{equation} 
These Poisson structures were discovered by Magri; the name is due to its connection to the Virasoro algebra. 
Note that $ \mathcal{V}^c $ is the quasiclassical limit of the family of universal Virasoro vertex algebras $ V^{12c}_\hbar. $
\end{exm}
The following exercise shows that the discrete series vertex algebras $V_c$
with $c$ given by (\ref{2.13}) is a purely quantum effect.
\begin{exr} 
Show that the PVA $\mathcal{V}^c$ is simple 
if $c\neq 0$.
%(Thus, the discrete series vertex algebras $V_c$ with $c$, giiven by \ref{eq2.19}, is a ``quantum effect''.
\end{exr} 
%***Insert 33b***
\begin{exm}
\label{ex5.4}
Given a vector space $ U, $ denote by $ \mathcal{P}(U) = S(\mathbb{F}[\partial] \otimes U) $ the algebra of differential polynomials over $U. $ Let $ \mf{g}, (.\, | \, .) $ be as in Example \ref{ex2.2}, let $ k \in \mathbb{F}, $ and fix $ s \in \mf{g}. $ Then the associated {\it affine PVA} 
$ \mathcal{V}^k (\mf{g},s)$ is defined as the algebra of differential  
polynomials $ \mathcal{P} (\mf{g})$, endowed with the $ \lambda $-brackets $ (a,b \in \mf{g}): $
\begin{equation}
\label{e5.13}
\{ a_\lambda b \} = [a,b] + \lambda (a|b) k + (s|[a,b])1.
\end{equation}
The two proofs from Example \ref{ex5.3} apply to show that 
$ \mathcal{V}^k (\mf{g},s)$  
is a PVA. Of course, up to isomorphism, it is independent of $ s, $ but the trivial cocycle is important for the associated integrable system, since we get a multiparameter family of Poisson structures. Note that $ \mathcal{V}^k (\mf{g},s) $ is the quasiclassical limit of $ V^k_\hbar(\mf{g}). $
\end{exm}

%This structure is used to establish integrability of evolution PDE
%
Now we recall how one passes from the definition of a Hamiltonian ODE to that of a Hamiltonian PDE. 
%First of all we have to define the Poisson bracket, and we want it to be an ``honest'' Poisson bracket. 
The following idea goes back to the $1970's$: in order to get an ``honest''
Lie algebra bracket, we should not consider the whole algebra of differential functions $ \mathscr{V}$, but its quotient $ \mathscr{V}/\partial\mathscr{V} $,
which is not an algebra anymore, just a vector space. Denote by $ \int $ the 
quotient map $ \int: \mathscr{V} \rightarrow 
\mathscr{V}/\partial\mathscr{V} $. 
The corresponding bracket is defined by
\begin{equation}\label{eq:liebracket1}
\{\smallint f, \smallint g\} = \int \frac{\delta g}{\delta u}\cdot H(\partial) \frac{\delta f}{\delta u},
\end{equation}
where $ \frac{\delta f}{\delta u} $ is the vector of \textit{variational derivatives} of $ f $: 
\[\frac{\delta f}{\delta u_i} = \sum\limits_{n \in \bZ_+} (-\partial)^n \frac{\partial f}{\partial u_i^{(n)}}.\]
Elements $\smallint f \in\mathscr{V}/\partial\mathscr{V} $ are called \textit{local functionals}.
 
Equation (\ref{eq:liebracket1}) is analogous to Equation (\ref{masterformulazero}), with variational derivatives instead of partial derivatives, and a matrix differential operator $H(\partial) $ instead of a matrix of functions. It is rather difficult to prove directly that
(\ref{eq:liebracket1}) 
is a Lie algebra bracket on $ \mathscr{V}/\partial\mathscr{V}$. 
%We call {\it
%Hamiltonian functions } the elements $ \int f \in \mathscr{V}/\partial\mathscr{V} $. 
The connection to the PVA theory, explained further on, makes it very easy.

The following exercise shows that (\ref{eq:liebracket1}) is well defined.
\begin{exr} 
The variational derivative $ \frac{\delta f}{\delta u} $ depends only on the 
image of $ f \in \mathscr{V} $ in the quotient space $ \mathscr{V}/\partial\mathscr{V} $, since $ \frac{\delta}{\delta u} \circ \partial = 0 $. 
%Hence $ \frac{\delta f}{\delta u} = \frac{\delta \int f}{\delta u} $.
Deduce the latter fact from axiom (ii) in the Definition \ref{algebradifffunctions} of an algebra of differential functions. 
\end{exr}
%The theory of PVAs gives a very simple explanation of why we do such things.
Given a local functional $\smallint h$, in analogy with (\ref{eq:HamiltonianODE}), one defines the associated \textit{Hamiltonian PDE} as the following 
evolution PDE:  
\begin{equation} \frac{du}{dt} 
%= \{\smallint h, u\} 
= H(\partial) \frac{\delta \smallint h}{\delta u}.
\end{equation}
The local functional $\smallint h$ is called the \textit{Hamiltonian}
of this equation.

We shall explain further on how these classical definitions fit nicely in the 
framework of Poisson vertex algebras.

\subsection{Basic notions of the theory of integrable equations.}

An evolution equation in the infinite-dimensional case is quite the same as in the finite-dimensional case, except it is a partial differential equation.

\begin{defn}
Let $\mathscr{V}$ be an algebra of differential functions in $\ell $ variables $ u_1,\ldots,u_\ell $. An \emph{evolution PDE} is
\begin{equation}\label{evolutionPDE}
\frac{du}{dt} = F(u,u',\ldots, u^{(n)}),
\end{equation}
where $u =\left( \begin{array}{c}
 u_1 \\
\vdots \\
 u_\ell \end{array} \right)$ and $F = \left( \begin{array}{c}
 F_1 \\
\vdots \\
 F_\ell \end{array} \right) \in \mathscr{V}^\ell $. Here, $ u_i = u_i(x,t) $ is a function in one independent variable $ x $, and the parameter $ t $ is called \textit{time}.
\end{defn}

Given an arbitrary differential function $ f \in \mathscr{V} $, by the chain rule we have
\begin{equation}
\frac{df}{dt} = \sum\limits_{i\in I,\, n \in \bZ_+} \frac{d( u_i^{(n)})}{dt} \frac{\partial f}{\partial u_i^{(n)}}.
\end{equation}
Since, by (\ref{evolutionPDE}), we have $ \frac{d( u_i^{(n)})}{dt} = \partial^{n} F_i $, the function $f$  evolves in virtue of Equation (\ref{evolutionPDE}) as
\[ \frac{df}{dt} = X_{F} f, \] 
where
\begin{equation}
X_{F} = \sum\limits_{i \in I,\, n \in \bZ_+}(\partial^{n}F_i) \frac{\partial}{\partial u_i^{(n)}}
\end{equation}
is a derivation of the algebra $ \mathscr{V} $, called the \textit{evolutionary vector field} with characteristic $ F \in \mathscr{V}^\ell $.
It is now clear why Axiom ($i$) in Definition \ref{algebradifffunctions} is 
important: otherwise, the evolutionary vector field would give a divergent sum when applied to arbitrary functions $ f \in \mathscr{V} $.

An important notion in the theory of integrable systems is 
\textit{compatibility} of evolution equations:

\begin{defn}
Equation (\ref{evolutionPDE}) is called \textit{compatible} with the evolution PDE
\begin{equation}\label{complatiblePDE}
\frac{du}{d\tau} = G(u,u',\ldots, u^{(m)}) \in \mathscr{V}^\ell
\end{equation}
where, as before, $u =\left( \begin{array}{c}
 u_1 \\
\vdots \\
 u_\ell \end{array} \right)$ and $G = \left( \begin{array}{c}
 G_1 \\
\vdots \\
 G_\ell \end{array} \right) \in \mathscr{V}^\ell $, if the corresponding flows commute, that is if $ \frac{d}{dt}\frac{d}{d\tau}f = \frac{d}{d\tau}\frac{d}{dt}f $ holds for every function $ f \in \mathscr{V} $.
\end{defn}

By the above discussion, the compatibility of evolution equations
(\ref{evolutionPDE}) and (\ref{complatiblePDE}) is equivalent to the 
property that the corresponding evolutionary vector fields commute:
$ [X_{F},X_{G}] = 0 $, which is a purely Lie algebraic condition. In fact, we can easily see that the commutator of two evolutionary vector fields is again 
an evolutionary vector field. This follows from the next exercise. 
%Precisely, if $ X_{F} $ and $ X_{G} $ are evolutionary vector fields with characteristics $ F $ and $ G $ respectively, then $ [X_{F},X_{G}]$ is an evolutionary vector field with characteristic $ [F,G] 
\begin{exr}
Prove that $ [X_{F},X_{G}]= X_{[F,G]} $, where $ [F,G]:= X_{F}G - X_{G}F $.
\end{exr}
Thus, the bracket $ [F,G] = X_{F}G - X_{G}F  $ endows $ \mathscr{V}^\ell $ with a Lie algebra structure, called the \textit{Lie algebra of evolutionary vector fields}.

If two evolutionary vector fields commute, then each of them is called a \textit{symmetry} of the other. So if $ [X_{F}, X_{G}] = 0 $, $ F $ is a symmetry of $ G $ and  $ G $ is a symmetry of $ F $. Note that every evolutionary vector field commutes with $\partial = X_{u'} = \sum\limits_{ i\in I,\, n \in \bZ_+}u_i^{(n+1)} \frac{\partial}{\partial u_i^{(n)}} $.

Let us now introduce the notion of \textit{integrability} for an evolution equation.
\begin{defn}
Equation (\ref{evolutionPDE}) is called 
%\emph{integrable in the sense of Lie} (or 
\emph{Lie integrable} if $ X_{F} $ is contained in an infinite-dimensional abelian subalgebra of the Lie algebra $ \mathscr{V}^\ell $.
\end{defn}
\begin{rmk}
Informally, one says that equation (\ref{evolutionPDE}) is Lie integrable if it admits infinitely many commuting symmetries. 
\end{rmk}

%***Insert 36a***

\begin{exm}
\label{ex5.5}
The linear equations over $ \mathscr{P}_1 $:
\[ u_t = u^{(n)}, \ n \in \mathbb{Z}_+, \]
are Lie integrable. Indeed, $ X_{u^{(m)}} (u^{(n)}) = u^{(m+n)} $ is symmetric in $ m $ and $ n, $ hence the corresponding evolutionary vector fields commute.
\end{exm}
\begin{exm}
\label{ex5.6}
The dispersionless equations over $\mathscr{P}_1 $:
\[ u_t = f(u)u', \,\,\,\, f(u) \in \mathscr{P}_1, \]
are Lie integrable, since
\[ X_{f(u)u'} (g(u)u') = \frac{\partial}{\partial u} (f(u)g(u))u'^{2} + f(u)g(u) u'' \]
is symmetric in $ f $ and $ g, $ hence the corresponding evolutionary vector fields commute. 
\end{exm}

The motivation for the definition of Lie integrability of PDE's comes 
from a theorem of Lie in the theory of ODE's, saying that if the evolution
ODE in $\ell$ variables  $\frac{du}{dt} = F(u) $ posesses $\ell $ commuting symmetries with a non-degenerate Jacobian, then it can be solved in quadratures. Of course, in the PDE case the number of coordinates is infinite, therefore we need to require infinitely many commuting symmetries.

There has been a lot of work trying to establish integrability of various partial differential equations. One well-known method of constructing symmetries of an evolution equation is called \textit{recursion operator}; however, in all examples the recursion operator is actually a pseudodifferential operator (which is an element of $ \mathscr{V}(({\partial}^{-1}))$), hence it can not be applied to functions, as Exercise 5.1 demonstrates. We will discuss a different approach, the \textit{Hamiltonian} approach, which is completely rigorous.

We shall deduce Lie integrability from the stronger \textit{Liouville integrability} of Hamiltonian PDE, which, analogously to the definition for ODEs, requires the existence of infinitely many integrals of motion in involution. 

\subsection{Poisson vertex algebras and Hamiltonian PDE}

In order to translate the traditional language of Hamiltonian PDE's, discussed above, to the
language of PVA's, and also, to connect the two notions of integrability, 
the following simple lemma is crucial.

\begin{lem}[Basic lemma]
Let $ \mathscr{V} $ be a PVA. Let $ \bar{\mathscr{V}} := \mathscr{V}/\partial \mathscr{V} $ and let $ \int: \mathscr{V} \rightarrow  \bar{\mathscr{V}} $ be the corresponding quotient map. Then we have the following well-defined brackets:
\begin{itemize}
\item[(i)] $ \bar{\mathscr{V}} \times \bar{\mathscr{V}} \longrightarrow \bar{\mathscr{V}}, \qquad \{\int a, \int b\} := \int \{a_{\lambda} b\}_{\lambda = 0} $,
\item[(ii)] $ \bar{\mathscr{V}} \times \mathscr{V} \longrightarrow \mathscr{V}, \qquad \{\int a, b\} := \{a_{\lambda} b\}_{\lambda=0} $.
\end{itemize}
Moreover, ($ i $) defines a Lie algebra bracket on $ \bar{\mathscr{V}} $, and ($ ii $) defines a representation of the Lie algebra $ \bar{\mathscr{V}} $ on $ \mathscr{V} $ by derivations of the product and the $\lambda$-bracket of $ \mathscr{V} $, commuting with $ \partial$.%evolutionary vector fields: $ \int f \mapsto X_{H \frac{\delta f}{\delta u}} $.
\end{lem}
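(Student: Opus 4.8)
The whole statement reduces to evaluating the PVA axioms at $\lambda = 0$ (and, for the Jacobi-type identities, at $\mu = 0$ as well), together with the single structural fact that $\int \partial = 0$ on $\mathscr{V}$. I would dispose of well-definedness first. For (ii) the only issue is the first slot: if $a = \partial a'$, then sesquilinearity gives $\{a_\lambda b\} = \{\partial a'{}_\lambda b\} = -\lambda\{a'{}_\lambda b\}$, which vanishes at $\lambda = 0$, so $\{a_\lambda b\}_{\lambda=0}$ depends only on $\int a$. For (i) one must also check the second slot: if $b = \partial b'$, the other sesquilinearity relation gives $\{a_\lambda \partial b'\}_{\lambda=0} = \partial\bigl(\{a_\lambda b'\}_{\lambda=0}\bigr) \in \partial\mathscr{V}$, which dies under $\int$. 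Thus both brackets are well-defined.

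Next I would establish (ii), namely that $\rho(\int a) := \{\int a, \cdot\} = a_{(0)}(\cdot)$ is a representation of $\bar{\mathscr{V}}$ by derivations commuting with $\partial$. The derivation-of-product property is the left Leibniz rule read off at $\lambda = 0$; commutation with $\partial$ is the second sesquilinearity relation at $\lambda = 0$, giving $a_{(0)}\partial b = \partial(a_{(0)}b)$; and the derivation-of-$\lambda$-bracket property is the Jacobi identity with only $\lambda$ set to $0$ (keeping $\mu$). The homomorphism property $\rho(\{\int a, \int b\}) = [\rho(\int a), \rho(\int b)]$ is precisely the Jacobi identity at $\lambda = \mu = 0$, that is $a_{(0)}(b_{(0)}c) - b_{(0)}(a_{(0)}c) = (a_{(0)}b)_{(0)}c$. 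Here one should note that the factor $\lambda^k$ multiplying $a_{(k)}b$ inside $\{a_\lambda b\}$ kills all $k \geq 1$ contributions to the outer bracket on the right, so it collapses to $(a_{(0)}b)_{(0)}c$.

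Finally, part (i) is a Lie bracket. Bilinearity is immediate, and the Jacobi identity for (i) follows at once by applying the linear map $\int$ to the $\mathscr{V}$-level identity just displayed. The one step that is genuinely more than a substitution — and which I expect to be the crux — is skewsymmetry, since this is exactly where passing to the quotient $\bar{\mathscr{V}}$ does its work. The skewsymmetry axiom gives $\{b_\lambda a\}_{\lambda=0} = -\{a_{-\partial}b\} = -\sum_{n\geq 0}\frac{(-\partial)^n}{n!}\,a_{(n)}b$, which is \emph{not} simply $-a_{(0)}b$; however every term with $n \geq 1$ carries a factor $\partial$ and therefore vanishes after applying $\int$, leaving $\{\int b, \int a\} = -\int(a_{(0)}b) = -\{\int a, \int b\}$. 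This is the conceptual heart of the lemma: the PVA skewsymmetry, which holds only up to the shift $\lambda \mapsto -\partial - \lambda$, becomes honest antisymmetry precisely modulo $\partial\mathscr{V}$, which is why $\bar{\mathscr{V}}$ rather than $\mathscr{V}$ carries the Lie algebra structure.
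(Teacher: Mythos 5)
Your proof is correct and follows essentially the same route as the paper's: every assertion is obtained by evaluating the PVA axioms at $\lambda=0$ (and $\mu=0$ where relevant), with both well-definedness and skewsymmetry resting on the single fact that $\int\circ\,\partial=0$. Your explicit verification of the collapse $\{\{a_\lambda b\}_{\lambda+\mu}c\}\big|_{\lambda=\mu=0}=(a_{(0)}b)_{(0)}c$ and of how the shifted skewsymmetry $\{b_\lambda a\}=-\{a_{-\lambda-\partial}b\}$ becomes honest antisymmetry modulo $\partial\mathscr{V}$ is, if anything, slightly more detailed than the paper's own wording of the same steps.
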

%\begin{rmk}This Lie algebra bracket coincides with the Lie bracket on $ \mathscr{V}/\partial\mathscr{V} $ defined by Equation (\ref{eq:liebracket1}).\end{rmk}
\begin{proof}%[Idea of proof of the Basic Lemma]
It all follows directly when we put $ \lambda = 0 $ in the axioms for the $ \lambda$-bracket $ \{\cdot_{\lambda} \cdot\} $ of a PVA. First, both brackets are well defined since sesquilinearity holds for $ \{\cdot_{\lambda} \cdot\} $: for every $ a, b \in \mathscr{V} $ we have $ \{\partial a, b\} =-\lambda\{ a_{\lambda} b \}_{\lambda = 0}=0 $ and $ \{a, \partial b\} = \{a_{\lambda} \partial b\}_{\lambda = 0 } = \partial \{a_{\lambda} b\} \in \partial \mathscr{V} $.
%, hence(ii) is well defined and  $\partial \mathscr{V} $ is a two-sided ideal of $ \mathscr{V} $ with respect to ($i$).

Let us now verify the Lie algebra axioms for the first bracket: note that $ \int \{b_{-\lambda - \partial}a\}_{\lambda = 0} = \int \{b_{\lambda}a\}_{\lambda = 0 } $ since only the coefficients of the $0$-th power of $-\lambda - \partial$ and $ \lambda $ respectively survive in $ \bar{\mathscr{V}} $, and they obviously coincide. By skewsymmetry of $ \{\cdot_{\lambda} \cdot\} $ we have
\begin{equation}
\{\smallint a, \smallint b\} = \smallint \{a_{\lambda} b \}_{\lambda = 0 } = - \smallint \{b_{-\lambda - \partial} a\}_{\lambda = 0 } = - \smallint \{b_{\lambda} a\}_{\lambda = 0 } = - \{ \smallint b, \smallint a \}.
\end{equation}
Hence, skewsymmetry holds for ($i$). Similarly, the Jacobi identity for $ \{\cdot_{\lambda} \cdot\} $ provides that the Jacobi identity holds for this bracket as well, just putting $ \lambda = \mu = 0 $ in the corresponding definitions:
\begin{equation}
\{\smallint a, \{\smallint b, \smallint c\}\} = \{\smallint b, \{\smallint a, \smallint c\}\} + \{\{\smallint a, \smallint b\}, \smallint c \}.
\end{equation}
Therefore, $ \bar{\mathscr{V}} $ is endowed with a Lie algebra structure with the Lie bracket defined by ($i$).

Next, we have to check that (ii) is a representation of 
$\bar{ \mathscr{V}} $ on $ \mathscr{V} $, i.e., that
\begin{equation}
\{\{\smallint f, \smallint g\},a\} = \{\smallint f, \{\smallint g, a\}\} - \{ \smallint g, \{\smallint f, a\}\}
\end{equation}
holds for all $ \int f, \int g \in \bar{\mathscr{V}}$, $ a \in \mathscr{V} $. Again, this is due to the Jacobi identity. Then we have to check that  $\bar{\mathscr{V}}$ acts on $ \mathscr{V} $ as derivations of the product. For $ a, b \in \mathscr{V} $ and $ \int h \in \bar{\mathscr{V}} $ we have, by the left Leibniz rule:
\begin{align}
\{\smallint h, ab\} = \{h_{\lambda}ab\}_{\lambda = 0} & = (\{h_{\lambda}a\}b)_{\lambda = 0 } + (\{h_{\lambda}b\}a)_{\lambda = 0 } = \nonumber\\
& = \{h_{\lambda}a\}_{\lambda = 0 }\,b + \{h_{\lambda}b\}_{\lambda = 0 }\,a = \{\smallint h,a\}b +\{\smallint h,b\}a.
\end{align}
Similarly, by the Jacobi identity, we check that it acts by derivations of the 
$\lambda$-bracket.
Finally, we have to check that the derivations $ \{\int h, \cdot\,\} $ commute with $ \partial $. For every $ a \in \mathscr{V} $ we have 
\begin{equation}
(\{\smallint h, \cdot\, \} \circ \partial) a = \{ \smallint h, \partial a \} = \{h_{\lambda} \partial a \}_{\lambda = 0} = ((\lambda+ \partial)\{ h_{\lambda} a\})_{\lambda = 0} = \partial \{h_{\lambda}a\}_{\lambda = 0 } = (\partial \circ \{\smallint h, \cdot\,\})a
\end{equation}
due to the sesquilinearity of $ \{\cdot_{\lambda} \cdot\} $.
\end{proof}

\begin{defn}
Given a PVA 
$\mathscr{V}$
and a local functional $\smallint h \in\bar{\mathscr{V}}$, the associated 
 \emph{Hamiltonian PDE} is
\begin{equation}\label{HamiltonianPDE}
\frac{du}{dt} = \{ \smallint h, u \}.
\end{equation}
The local functional $ \int h$ 
%\in\mathscr{V}/\partial \mathscr{V}$ 
is called the \textit{Hamiltonian} of this equation. %Note that last equality is given by the Master Formula (\ref{masterformula}). 
\end{defn}

In the case when the PVA $\mathscr{V} $ is an algebra of differential functions in the variables $\{u_i\}_{i \in I} $ and the $\lambda$-bracket is given by the Master Formula (\ref{masterformula}), we reproduce the traditional definitions:
\begin{itemize}
\item[(i)] Hamiltonian PDE:  $ \frac{du}{dt} = \{ \int h, u \} = H \frac{\delta \int h}{\delta u} $;
\item[(ii)] Poisson bracket on $\bar{\mathscr{V}}$: $\{\int f, \int g\} =  \int \frac{\delta g}{\delta u} \cdot H \frac{\delta f}{\delta u}$.
\end{itemize}
The first claim is obvious, and the second is obtained by integration by parts. 

It follows that in this case $ \bar{\mathscr{V}} $ acts on $\mathscr{V} $ by evolutionary vector fields: $ \int f \mapsto  X_{H \frac{\delta f}{\delta u}} $,
and that the following holds. 
\begin{cor}
\label{cor:basiclemma}
We have a Lie algebra homomorphism $ \bar{\mathscr{V}} \rightarrow 
\mathscr{V}^\ell $, $ \int f \mapsto X_{H \frac{\delta f}{\delta u}} $.
\end{cor}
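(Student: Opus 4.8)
The plan is to reduce the statement to two facts that are already in place: the identity expressing the Hamiltonian action as an evolutionary vector field, and the representation property furnished by the Basic Lemma. Write $\Phi(\int f) = X_{H\frac{\delta f}{\delta u}}$ for the proposed map, and recall that $\mathscr{V}^\ell$ carries the Lie bracket $[F,G] = X_F G - X_G F$ of evolutionary vector fields, for which the preceding exercise gives $[X_F, X_G] = X_{[F,G]}$. Note also that $\frac{\delta}{\delta u}$ descends to $\bar{\mathscr{V}}$, so $\Phi$ is well defined on local functionals.

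First I would establish the key identity
\begin{equation}
\{\textstyle\int f,\, g\} = X_{H\frac{\delta f}{\delta u}}\,g, \qquad \forall\, f,g\in\mathscr{V}, \tag{$*$}
\end{equation}
which is exactly the assertion that $\bar{\mathscr{V}}$ acts on $\mathscr{V}$ by the evolutionary vector fields $\Phi(\int f)$. This comes directly from the Master Formula \eqref{masterformula} at $\lambda = 0$: the inner expression $\sum_{i,p}H_{ji}(\partial)(-\partial)^p\frac{\partial f}{\partial u_i^{(p)}}$ collapses, by the definition $\frac{\delta f}{\delta u_i} = \sum_p(-\partial)^p\frac{\partial f}{\partial u_i^{(p)}}$, to the component $F_j := (H\frac{\delta f}{\delta u})_j$, leaving $\{\int f, g\} = \sum_{j,q}(\partial^q F_j)\frac{\partial g}{\partial u_j^{(q)}} = X_F g$. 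The arrow convention is precisely what guarantees that the $\partial$ inside $H_{ji}(\partial)$ acts on the $\frac{\partial f}{\partial u_i^{(p)}}$ so that the variational derivative assembles correctly.

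Next I would invoke the Basic Lemma, whose representation part asserts that $\int f \mapsto \{\int f,\cdot\,\}$ is a homomorphism of $\bar{\mathscr{V}}$ into the Lie algebra of derivations of $\mathscr{V}$; in operator form,
\begin{equation*}
\{\,\textstyle\int\{f_\lambda g\}_{\lambda=0},\ \cdot\,\} = \big[\{\textstyle\int f,\cdot\,\},\ \{\textstyle\int g,\cdot\,\}\big],
\end{equation*}
the right-hand side being the commutator of operators on $\mathscr{V}$. Substituting $(*)$ into both sides, and writing $\int h := \{\int f,\int g\}$ for the Lie bracket in $\bar{\mathscr{V}}$, this becomes the operator identity $X_{H\frac{\delta h}{\delta u}} = [\,X_{H\frac{\delta f}{\delta u}},\, X_{H\frac{\delta g}{\delta u}}\,]$.

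Finally I would pass from operators back to characteristics. By the exercise, the right-hand side equals $X_{[H\frac{\delta f}{\delta u},\,H\frac{\delta g}{\delta u}]}$; since $X_F u_i = F_i$, the assignment $F \mapsto X_F$ is injective, whence $\Phi(\{\int f,\int g\}) = [\,\Phi(\int f),\,\Phi(\int g)\,]$ in $\mathscr{V}^\ell$, which is the homomorphism property. I expect the only genuine content to be the identity $(*)$: once the Master Formula produces the variational derivative at $\lambda=0$, everything else is formal, being inherited from the representation in the Basic Lemma via the dictionary $F \leftrightarrow X_F$. The single point demanding care is the bookkeeping of the arrow convention and the signs in $(*)$, ensuring that $H\frac{\delta f}{\delta u}$ itself (and not some adjoint variant) emerges as the characteristic.
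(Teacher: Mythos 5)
Your proposal is correct and follows essentially the same route as the paper: the paper derives the corollary by combining the Basic Lemma (the representation of $\bar{\mathscr{V}}$ on $\mathscr{V}$ by derivations) with the Master Formula identification $\{\smallint f,\,\cdot\,\} = X_{H\frac{\delta f}{\delta u}}$ at $\lambda=0$, then transfers the bracket to $\mathscr{V}^\ell$ via $[X_F,X_G]=X_{[F,G]}$. You have merely made explicit the steps (the collapse of the Master Formula to the characteristic $H\frac{\delta f}{\delta u}$, and the injectivity of $F\mapsto X_F$) that the paper leaves implicit.
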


Thus, in the case when $\mathscr{V}$ is an algebra of differential functions
with the Poisson $\lambda$-bracket given by the Master formula,
the Hamiltonian equation is a special case of the evolution equation with RHS $ H \frac{\delta \int h}{\delta u} $ and the corresponding evolutionary vector field is $ X_{H\frac{\delta h}{\delta u}}$. 
%Comparing with Equation (\ref{eq:HamiltonianODE}) note that, as for the Poisson bracket, going from the finite to the infinite-dimensional case amounts to replacing $ \frac{\partial}{\partial u} $ by the variational derivative $ \frac{\delta}{\delta u} $, and the matrix of functions $ H $ with  the matrix differential operator $H = H(\partial) $.

\begin{defn}
A local functional $ \int f \in \bar{\mathscr{V}} $ is called an \emph{integral of motion} of the evolution equation (\ref{evolutionPDE}) and $f$ is called
a \emph{conserved density}, 
if $ \int \frac{df}{dt} = 0 $, or, equivalently, if $ \int X_F f = 0 $. 
Integrating by parts, this, in turn, is equivalent to
\begin{equation}
\int \frac{\delta f}{\delta u}\cdot F =0.
\end{equation}
Hence, $ \int f $ is an integral of motion of the Hamiltonian equation (\ref{HamiltonianPDE})
 if and only if $ f $ and $ h $ are in \textit{involution}, that is if $ \{ \int f, \int h \} = 0 $.
\end{defn}

So, we have completely translated the language of Hamiltonian PDEs into the language of PVAs.

\begin{defn}
The Hamiltonian PDE (\ref{HamiltonianPDE}) is called \emph{Liouville integrable} if $ \int h $ is contained in an infinite-dimensional abelian subalgebra of the Lie algebra $ \bar{\mathscr{V}} $. That is, if there exists an infinite sequence of linearly independent local functionals $\int h_n $, 
such that $ \int h_0 = \int h $ and $ \{\int h_n, \int h_m \} = 0 $ for all $ n,m \in \bZ_+ $.
\end{defn}

By Corollary \ref{cor:basiclemma}, integrals of motion in involution go to commuting evolutionary vector fields $ X_{H \frac{\delta \int h}{\delta u}} $. Hence Liouville integrability usually implies Lie integrability (provided we make some weak assumption on $ H(\partial) $, such as $ H(\partial) $ 
is non-degenerate). In fact, in order to check that the local functionals are linearly independent, it is usually easier to check that the corresponding evolutionary vector fields are linearly independent.

\begin{exr}
Show that the equation 
$\frac{du}{dt}=u''$ 
is Lie integrable, but has no non-trivial integrals of motion, hence is not Hamiltonian. On the other hand the equation $\frac{du}{dt}=u'''$ is Hamiltonian
with $H=\partial$, 
$h=-\frac{1}{2} (u')^2$, 
and it is both Lie and Liouville integrable. 
\end{exr}

\begin{rmk}
Let $ F, G,\ldots $ be a sequence of elements of $ \mathscr{V}^\ell $, such that the corresponding evolutionary vector fields commute, i.e. the 
corresponding evolution equations are compatible. Then we have a \textit{hierarchy} of evolution equations
\begin{equation}
\frac{du}{dt_0} = F,\quad \frac{du}{dt_1} = G, \,\ldots,
\end{equation}
so that the solution of this hierarchy depends now on $ x $ and on infinitely many times: $ u = u(x,t_0,t_1,t_2,\ldots ) $.
\end{rmk}

\subsection{The Lenard-Magri scheme of integrability}
There is a very simple scheme to prove integrability, called the \textit{Lenard-Magri scheme}. Although it is not a theorem, it always works in practice.

Let $\mathscr{V}$ be an algebra of differential functions in $\ell$ variables
$u_1,...,u_\ell$. First of all, introduce the following symmetric bilinear forms on $ \mathscr{V}^\ell $:
\begin{equation}\label{bilinearform1}
(\cdot\vert\cdot): \mathscr{V}^\ell \times \mathscr{V}^\ell \longrightarrow \bar{\mathscr{V}}, \quad (F \vert G ) = \smallint F \cdot G.
\end{equation}
Given a matrix differential operator $ H(\partial) \in 
\text{Mat}_{\ell \times \ell} \mathscr{V}[\partial] $
\begin{equation}\label{bilinearform2}
\langle \cdot, \cdot \rangle_{H}: \mathscr{V}^\ell \times \mathscr{V}^\ell \longrightarrow \bar{\mathscr{V}}, \quad \langle F, G \rangle_{H} = (H(\partial) F \vert G ).
\end{equation}
Note that $ (H(\partial) F \vert G ) = (F \vert H^{\ast}(\partial) G) $, where $ H^{\ast}(\partial) $ is the adjoint differential operator of 
$ H(\partial) $. Indeed, defining $ \ast$ on $ \mathscr{V}[\partial] $ as an anti-involution such that $ \ast(f) = f $ and $\ast(\partial) = - \partial $, 
we get $ (\partial f \vert g ) = - ( f \vert \partial g ) $ because $ (\partial f \vert g ) + ( f \vert \partial g ) = \int\,\partial (f g ) = 0 $ in $ \bar{\mathscr{V}} $.     
Hence, if $ H(\partial) $ is skewadjoint, then the bilinear form (\ref{bilinearform2}) is skewsymmetric.

Proof of Liouville integrability is based on the following result.
\begin{lem}[Lenard lemma]
\label{LMlemma}
Let $ H(\partial) $ and $ K(\partial) $ be skewadjoint differential operators on $ \mathscr{V}^\ell $. Suppose elements 
$ \xi_0,\ldots,\xi_N \in\mathscr{V}^\ell $ satisfy the following Lenard-Magri relation:%
\begin{equation}
\label{LMrelation}
K(\partial)\xi_{n+1} = H(\partial) \xi_n,\quad n=0,\ldots, N-1.
\end{equation}
Then, the  $ \langle \xi_m,\xi_n \rangle =0 $  for all $ m,\,n=0,\ldots, N $, whenever we consider it with respect to $ H $ or $ K $: $ \langle \xi_m,\xi_n \rangle_{H,K} = 0 $.
\end{lem}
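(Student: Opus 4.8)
The plan is to track the two skew-symmetric ``Gram matrices'' attached to the sequence $\xi_0,\dots,\xi_N$ and show they are forced to vanish by a combination of the Lenard--Magri relation and skewsymmetry. Write $a_{mn}=\langle\xi_m,\xi_n\rangle_H=(H(\partial)\xi_m\,\vert\,\xi_n)$ and $b_{mn}=\langle\xi_m,\xi_n\rangle_K=(K(\partial)\xi_m\,\vert\,\xi_n)$ for $0\le m,n\le N$. Since $H$ and $K$ are skewadjoint, the computation preceding the lemma gives $a_{mn}=-a_{nm}$ and $b_{mn}=-b_{nm}$; in particular $a_{mm}=b_{mm}=0$.

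First I would convert the Lenard--Magri relation into a shift identity relating the two forms. Feeding $H(\partial)\xi_m=K(\partial)\xi_{m+1}$ (valid for $0\le m\le N-1$) directly into the definition of the pairing gives
\[
a_{mn}=(H(\partial)\xi_m\,\vert\,\xi_n)=(K(\partial)\xi_{m+1}\,\vert\,\xi_n)=b_{m+1,n},\qquad 0\le m\le N-1.
\]
Applying skewsymmetry of both forms to this identity, namely $a_{mn}=-a_{nm}=-b_{n+1,m}=b_{m,n+1}$ (valid for $0\le n\le N-1$), yields the companion relation $a_{mn}=b_{m,n+1}$. Comparing the two expressions for $a_{mn}$ on the overlap $0\le m,n\le N-1$ produces the key anti-diagonal relation
\[
b_{m+1,n}=b_{m,n+1},\qquad 0\le m,n\le N-1.
\]

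The heart of the argument is then purely combinatorial. The last identity says that $b_{mn}$ is constant along each anti-diagonal $m+n=s$; one checks that the successive shifts connect all grid points with a fixed sum $s$, so $b$ takes a single value $\beta_s$ on the whole anti-diagonal inside $\{0,\dots,N\}^2$. But each anti-diagonal is stable under the reflection $(m,n)\mapsto(n,m)$, and skewsymmetry forces $b_{nm}=-b_{mn}$; since both $(m,n)$ and $(n,m)$ carry the same value $\beta_s$, we get $\beta_s=-\beta_s$, hence $\beta_s=0$ (here $\mathrm{char}\,\F=0$ is used). Thus $b_{mn}=0$ for all $m,n$. Feeding this back into $a_{mn}=b_{m+1,n}$ kills $a_{mn}$ for $m\le N-1$, the remaining case $m=N$ follows from $a_{Nn}=-a_{nN}=0$, and the lone entry $a_{NN}$ vanishes by skewsymmetry.

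I expect the only delicate point to be the bookkeeping at the boundary indices $m=N$ or $n=N$, where the shift identity $a_{mn}=b_{m+1,n}$ is unavailable and one must invoke skewsymmetry to reduce to an interior index; everything else is formal once the skewsymmetry of $\langle\cdot,\cdot\rangle_H$ and $\langle\cdot,\cdot\rangle_K$ (from skewadjointness of $H$ and $K$) is in hand.
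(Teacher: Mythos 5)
Your proposal is correct and is essentially the paper's own argument in different packaging: the paper derives the very same shift identities (in your notation $a_{mn}=b_{m,n+1}$ and $\langle \xi_m,\xi_n\rangle_K=\langle\xi_m,\xi_{n-1}\rangle_H$, coming from the Lenard--Magri relation plus skewadjointness) and runs an induction on $|m-n|$, which is exactly your statement that the pairings are constant along anti-diagonals and must vanish because skewsymmetry kills them on the diagonal. The only difference is organizational -- you first annihilate the $K$-matrix globally and then feed it back into the $H$-matrix, while the paper reduces both pairings simultaneously by induction -- so the mathematical content coincides.
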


\begin{proof}
Proceed by induction on $ i = \vert m - n \vert$. If $ i =0 $, then $ m = n $ and we get $ \langle \xi_n, \xi_n\rangle_{H,K} = - \langle \xi_n, \xi_n\rangle_{H,K} $ because the form is skewsymmetric, therefore it is equal to zero. Now let $i > 0 $; by skewsymmetry we may assume $ m > n $. We have
\begin{equation}
\langle \xi_m, \xi_n\rangle_{H} = (H(\partial)\xi_m \vert \xi_n) = - (\xi_m \vert H(\partial)\xi_n) = - (\xi_m \vert K(\partial) \xi_{n+1}) = (K(\partial)\xi_m \vert \xi_{n+1}) = \langle \xi_m, \xi_{n+1} \rangle_{K}, 
\end{equation}
and, by the induction hypothesis, the RHS is zero, since $ \vert m - (n+1) \vert < \vert m -n \vert $. Similarly we have, assuming $ n > m $:
\begin{equation}
\langle \xi_m, \xi_n \rangle_K = ( K(\partial) \xi_m \vert \xi_n ) = - ( \xi_m \vert K(\partial) \xi_n ) = - ( \xi_m \vert H(\partial) \xi_{n-1} ) = (H(\partial) \xi_m \vert \xi_{n-1}) = \langle \xi_m, \xi_{n-1} \rangle_H
\end{equation}
and again, by induction hypothesis the RHS is zero since $ \vert n - 1 - m \vert < \vert n - m \vert $.
\end{proof}

This lemma is important since, if we can prove that the elements $ \xi_m \in \mathscr{V}^\ell $ are variational derivatives, i.e. $ \xi_m = \frac{\delta \int h_m}{\delta u} $ for some local functionals $ \int h_m $, it guarantees 
that 
$\int h_m$ and $ \int h_n $ are in involution with respect to both brackets on
 $\bar{\mathscr{V}}$
%: $ \{\int h_m, \int h_n \} = 0 $
. Indeed, we know that the bracket on  $\bar{\mathscr{V}}$ for the Poisson structure $H$ is given by
%on $\mathscr{V}$ an algebra of differential functions is given by
\begin{equation}
\{\smallint f, \smallint g\}_H 
= \int \frac{\delta g}{\delta u} \cdot H(\partial) \frac{\delta f}{\delta u} = \bigg( \frac{\delta g}{\delta u} \vert H(\partial)  \frac{\delta f}{\delta u}\bigg) = \bigg\langle \frac{\delta f}{\delta u}, \frac{\delta g}{\delta u} \bigg\rangle_H,
\end{equation}
therefore, if $ \xi_n, \xi_m $ are variational derivatives, then by Lemma \ref{LMlemma} we get
\begin{equation}
\{\smallint h_m, \smallint h_n \}_H = \bigg\langle\frac{\delta  \int h_m}{\delta u }, \frac{\delta \int h_n}{\delta u}\bigg\rangle_H = \langle \xi_m, \xi_n \rangle_H = 0 ,
\end{equation}
and the same holds for $K$. In other words, we have the following corollary of 
Lenard's lemma.
\begin{cor}
\label{LMcor} Let $ H(\partial) $ and $ K(\partial) $ be skewadjoint 
differential operators on $ \mathscr{V}^\ell $. Suppose that the local functionals $\smallint h_0,...,\smallint h_N$  
satisfy the following relation:
\begin{equation}
\label{LM}
K(\partial)\frac{\delta\smallint h_{n+1}}{\delta u} =H(\partial)\frac{\delta\smallint h_{n}}{\delta u} 
,\quad n=0,\ldots, N-1.
\end{equation}
Then all these local functionals  
are in involution with respect to both brackets 
$\{., .\}_H$ and $ \{., .\}_K$ on 
$\bar{\mathscr{V}}$.
\end{cor}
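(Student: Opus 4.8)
The plan is to derive this directly from the Lenard Lemma (Lemma \ref{LMlemma}), so that the bulk of the work is already done; what remains is to feed the right elements into that lemma and to translate its conclusion back into the language of the involution brackets on $\bar{\mathscr{V}}$.

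First I would set $\xi_n := \frac{\delta \smallint h_n}{\delta u} \in \mathscr{V}^\ell$ for $n = 0, \ldots, N$. These are by definition the variational gradients of the given local functionals, and the hypothesis \eqref{LM} says precisely that they satisfy the Lenard-Magri relation \eqref{LMrelation} of Lemma \ref{LMlemma}, namely $K(\partial)\xi_{n+1} = H(\partial)\xi_n$ for $n = 0, \ldots, N-1$. Since $H(\partial)$ and $K(\partial)$ are assumed skewadjoint, Lemma \ref{LMlemma} applies verbatim and yields $\langle \xi_m, \xi_n\rangle_H = \langle \xi_m, \xi_n\rangle_K = 0$ for all $m, n \in \{0, \ldots, N\}$.

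Second, I would connect the bilinear form $\langle \cdot, \cdot \rangle_H$ of \eqref{bilinearform2} with the Lie bracket on $\bar{\mathscr{V}}$ coming from the PVA structure (bracket (i) of the Basic Lemma). Using that the $\lambda$-bracket is given by the Master Formula \eqref{masterformula}, that $\{\smallint f, \smallint g\} = \smallint \{f_\lambda g\}_{\lambda = 0}$, and integrating by parts inside $\smallint$, one obtains the identity $\{\smallint f, \smallint g\}_H = \smallint \frac{\delta g}{\delta u}\cdot H(\partial)\frac{\delta f}{\delta u} = \langle \frac{\delta f}{\delta u}, \frac{\delta g}{\delta u}\rangle_H$, and likewise with $K$ in place of $H$. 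Applying this with $f = h_m$, $g = h_n$ and invoking the vanishing from the previous step gives $\{\smallint h_m, \smallint h_n\}_H = \langle \xi_m, \xi_n\rangle_H = 0$, and the identical computation for $K$ finishes the claim.

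The genuine content sits entirely in Lemma \ref{LMlemma}, whose induction on $\vert m-n\vert$ uses skewadjointness together with the chain $K(\partial)\xi_{n+1} = H(\partial)\xi_n$ to slide indices toward the diagonal, where skewsymmetry forces vanishing; so there is no serious obstacle in the corollary itself. The only point requiring care is the bookkeeping identity $\{\smallint f, \smallint g\}_H = \langle \frac{\delta f}{\delta u}, \frac{\delta g}{\delta u}\rangle_H$, which rests on the Master Formula and on the fact that $\frac{\delta}{\delta u}\circ \partial = 0$, ensuring both that the bracket descends to $\bar{\mathscr{V}}$ and that only the variational derivatives survive after integration by parts. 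Everything else is a direct substitution.
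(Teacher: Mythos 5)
Your proposal is correct and follows essentially the same route as the paper: the text immediately preceding the corollary defines $\xi_n = \frac{\delta\smallint h_n}{\delta u}$, applies Lemma \ref{LMlemma} to get $\langle \xi_m,\xi_n\rangle_{H,K}=0$, and uses the identity $\{\smallint f,\smallint g\}_H = \big\langle \frac{\delta f}{\delta u},\frac{\delta g}{\delta u}\big\rangle_H$ (coming from the Master Formula, the Basic Lemma, and integration by parts) to conclude involution. Your added remark that well-definedness rests on $\frac{\delta}{\delta u}\circ\partial=0$ is consistent with the paper's earlier exercise on this point.
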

In the case when (\ref{LM}) holds, and $K, H$ are Poisson structures, one
says that the evolution equations
\[\frac{du}{dt_n}
=K(\partial)\frac{\delta\smallint h_{n+1}}{\delta u} =H(\partial)\frac{\delta\smallint h_{n}}{\delta u}\]
form a  hierarchy of {\it bi-Hamiltonian} equations. Note that if the right-hand sides of these equations span an infinite-dimensional subspace in the space of evolutionary vector fields, then all of these equations are both Lie and Liouville integrable. 

We now must address two issues:
\begin{enumerate}
\item How can we construct vectors $ \xi_n $'s satisfying equation (\ref{LMrelation})?
\item How can we prove that such $\xi_n $'s are variational derivatives?
\end{enumerate}
Although the second issue has been completely solved considering some reduced de Rham complex, called the \textit{variational complex}, discussed in the next lecture, the first and basic issue is far from being  resolved, though there are some partial results.

We will now see how to construct a sequence of vectors $ \xi_n$'s satisfying the Lenard-Magri relation.

\begin{lem}[Extension lemma]\cite{BDSK}
\label{lem:extension}
Suppose that, in addition to the hypothesis of Lemma \ref{LMlemma}, we also have the following \textit{orthogonality condition}: assume to have vectors $ \xi_0,\ldots,\xi_N \in \mathscr{V}^\ell$, satisfying the Lenard-Magri relation (\ref{LMrelation}),
such that 
\[ {\text{Span}\{\xi_0,\ldots,\xi_N\}^{\perp} \subseteq \mathrm{Im}\,K(\partial)},\]
where
$ \text{Span}\{\xi_0,\ldots,\xi_N\}^{\perp}$  is the orthogonal complement with respect to the symmetric bilinear form (\ref{bilinearform1}).
Then we can extend the given sequence to an infinite sequence of vectors satisfying the Lenard-Magri relation 
(\ref{LMrelation}) for any $ n \in \bZ_{+}$.
\end{lem}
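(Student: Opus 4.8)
The plan is to establish the extension by a single inductive step: assuming we have $\xi_0,\ldots,\xi_N$ satisfying the Lenard-Magri relation \eqref{LMrelation} together with the orthogonality condition, I would produce $\xi_{N+1}\in\mathscr{V}^\ell$ with $K(\partial)\xi_{N+1}=H(\partial)\xi_N$, and then check that the enlarged sequence $\xi_0,\ldots,\xi_{N+1}$ again satisfies both hypotheses, so that the same step can be repeated indefinitely.

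The core of the argument is to show that $H(\partial)\xi_N$ lies in $\mathrm{Im}\,K(\partial)$; once this is known, any preimage may be chosen as $\xi_{N+1}$. By the orthogonality hypothesis it suffices to show that $H(\partial)\xi_N\in\mathrm{Span}\{\xi_0,\ldots,\xi_N\}^{\perp}$, i.e., that $(H(\partial)\xi_N\,\vert\,\xi_m)=0$ in $\bar{\mathscr{V}}$ for every $m=0,\ldots,N$ (using that the form \eqref{bilinearform1} is symmetric). But by the definition \eqref{bilinearform2} of the second bilinear form we have $(H(\partial)\xi_N\,\vert\,\xi_m)=\langle\xi_N,\xi_m\rangle_H$, and this vanishes for all $m=0,\ldots,N$ by Lemma \ref{LMlemma} applied to $\xi_0,\ldots,\xi_N$. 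Hence $H(\partial)\xi_N$ is orthogonal to each $\xi_m$, so it lies in $\mathrm{Span}\{\xi_0,\ldots,\xi_N\}^{\perp}\subseteq\mathrm{Im}\,K(\partial)$, and $\xi_{N+1}$ exists.

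It remains to verify that the hypotheses persist for the longer sequence. The Lenard-Magri relation now holds for $n=0,\ldots,N$ by construction. For the orthogonality condition, note that enlarging the spanning set can only shrink its orthogonal complement: since $\mathrm{Span}\{\xi_0,\ldots,\xi_N\}\subseteq\mathrm{Span}\{\xi_0,\ldots,\xi_{N+1}\}$, we get
\[
\mathrm{Span}\{\xi_0,\ldots,\xi_{N+1}\}^{\perp}\subseteq\mathrm{Span}\{\xi_0,\ldots,\xi_N\}^{\perp}\subseteq\mathrm{Im}\,K(\partial).
\]
Thus the inductive hypotheses are restored, and iterating the construction produces an infinite sequence $\{\xi_n\}_{n\in\bZ_+}$ satisfying \eqref{LMrelation} for all $n\in\bZ_+$.

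The entire argument is essentially bookkeeping once Lemma \ref{LMlemma} is in hand; the only point that requires a moment's thought is recognizing that the orthogonality condition is self-propagating, which is precisely what lets the induction continue. There is no genuine analytic obstacle, since solvability of $K(\partial)\xi_{N+1}=H(\partial)\xi_N$ has been reduced to the membership $H(\partial)\xi_N\in\mathrm{Im}\,K(\partial)$, and that membership is delivered for free by combining Lenard's lemma with the orthogonality assumption.
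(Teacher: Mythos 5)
Your proof is correct and follows essentially the same route as the paper's: apply Lenard's lemma to conclude $H(\partial)\xi_N\perp\xi_m$ for all $m\le N$, use the orthogonality hypothesis to get $H(\partial)\xi_N\in\mathrm{Im}\,K(\partial)$, choose any preimage as $\xi_{N+1}$, and observe that the orthogonality condition automatically persists because enlarging the span shrinks the orthogonal complement. Your write-up merely spells out a few steps (such as the identity $(H(\partial)\xi_N\,\vert\,\xi_m)=\langle\xi_N,\xi_m\rangle_H$) that the paper leaves implicit.
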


\begin{proof}
It suffices to construct $ \xi_{N+1} $ such that equation (\ref{LMrelation}) holds for $ n = N $. In fact, the orthogonal complement to $\text{Span}\{\xi_0,\ldots, \xi_{N+1}\} $ is contained in the orthogonal complement to $ \text{Span}\{\xi_0,\ldots,\xi_N\} $, hence the orthogonality condition would hold for the extended sequence. By Lemma \ref{LMlemma}, $ H(\partial)\xi_N \perp \xi_n $ for every $ n=0,\ldots, N $. Hence, by the orthogonality condition, $ H(\partial)\xi_N \subset \mathrm{Im}\,K(\partial) $. Therefore, $ H(\partial)\xi_N = K(\partial) \xi_{N+1} $ for some element $ \xi_{N+1} \in \mathscr{V}^\ell $. We can now iterate this procedure to construct an infinite sequence of vectors.
\end{proof}

%Before discussing the general situation about the existence of vectors 

Now, let us address the question why the $ \xi_n$'s, satisfying 
equation  
(\ref{LMrelation}),   are variational derivatives. Note that so far we only have used the fact that $ H $ and $ K $ are skewadjoint, but none of their other properties as Poisson structures. However, we will need these properties in order  to prove that the $\xi_n$'s are variational derivatives. Moreover, we will need the notion of \textit{compatibility of Poisson structures}:

\begin{defn}[Magri compatibility]
Given two Poisson structures $ H $ and $ K $, they (and the corresponding $\lambda$-brackets) are \textit{compatible} if any their linear combination $ \alpha H + \beta K $ is again a Poisson structure.
\end{defn}
Examples \ref{ex5.3} and \ref{ex5.4} provide multiparameter families of compatible Poisson structures.

The importance of compatibility of Poisson structures is revealed by the following theorem.
\begin{thm}[see \cite{O}, Lemma 7.25]\label{thm:olver}
Suppose that $ H,\,  K \in \text{Mat}_{\ell \times \ell}\mathscr{V}[\partial]$
   are compatible Poisson structures, with $ K $ non-degenerate (i.e. $ KM = 0 $ implies $ M = 0 $ for any differential operator 
$ M \in \text{Mat}_{\ell \times \ell}\mathscr{V}[\partial]  $).  
Suppose, moreover, that the Lenard-Magri relation $ K(\partial)\xi_{n+1} = H(\partial) \xi_n $ holds for $ n =0,\,1 $, and that $ \xi_0,\,\xi_1 $ are variational derivatives: $ \xi_0 = \frac{\delta\int  h_0}{\delta u} $, $\xi_1 = \frac{\delta \int h_1}{\delta u}$ for some $ \int h_0, \int h_1 \in \bar{\mathscr{V}} $. Then $ \xi_2 $ is closed in the variational complex (discussed in the next lecture).
%(i.e. $d\xi_2 = 0 $) and hence it is exact in some extension $\widetilde{\mathscr{V}} $ of $ \mathscr{V} $: 

\end{thm}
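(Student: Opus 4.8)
The plan is to reduce the claim to the Helmholtz self-adjointness criterion and then exploit the nondegeneracy of $K$. Recall (this is the piece of the variational complex to be set up in the next lecture) that a vector $\xi \in \mathscr{V}^\ell$ is closed if and only if its Fréchet derivative $D_\xi$, the matrix differential operator with entries $(D_\xi)_{ij} = \sum_{n \in \bZ_+} \frac{\partial \xi_i}{\partial u_j^{(n)}}\partial^n$, is self-adjoint; equivalently, the skewadjoint operator $\omega(\xi) := D_\xi - D_\xi^{*}$ vanishes (with $*$ the anti-involution of $\mathscr{V}[\partial]$ introduced above). Since $\xi_0$ and $\xi_1$ are variational derivatives they are closed, so $\omega(\xi_0) = \omega(\xi_1) = 0$, and the goal is to prove $\omega(\xi_2) = 0$.

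First I would differentiate the two Lenard-Magri relations $K(\partial)\xi_1 = H(\partial)\xi_0$ and $K(\partial)\xi_2 = H(\partial)\xi_1$. For a matrix differential operator $P$ with coefficients in $\mathscr{V}$ and a vector $\alpha$, the Fréchet derivative obeys the product rule $D_{P\alpha} = P \circ D_\alpha + (D_P\!\bullet\!\alpha)$, where $(D_P\!\bullet\!\alpha)$ denotes the operator obtained by replacing every coefficient of $P$ by its directional derivative contracted against $\alpha$. Applying this to $K\xi_2 = H\xi_1$, together with the adjoint identity and $K^{*} = -K$, $H^{*} = -H$, expresses $K(\partial)\,\omega(\xi_2)$ in terms of the operators $(D_K\!\bullet\!\xi_2)$, $(D_H\!\bullet\!\xi_1)$, the already-vanishing $\omega(\xi_1)$, and — after re-expanding via $K\xi_1 = H\xi_0$ — the terms $(D_K\!\bullet\!\xi_1)$, $(D_H\!\bullet\!\xi_0)$ and $\omega(\xi_0)$.

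The point is that the ``spurious'' contributions $(D_H\!\bullet\!\xi)$ and $(D_K\!\bullet\!\xi)$ are precisely what the Poisson axioms control. Encoding ``$H$ is a Poisson structure'', ``$K$ is a Poisson structure'' and ``$H,K$ are compatible'' as the vanishing of the three functional trivectors (Schouten brackets) $[H,H] = [K,K] = [H,K] = 0$ — that is, as trilinear identities among the coefficient-derivative terms — allows one to cancel all of these contributions against one another. Using in addition $\omega(\xi_0) = \omega(\xi_1) = 0$, the whole expression should collapse to $K(\partial)\,\omega(\xi_2) = 0$ (equivalently $\omega(\xi_2)\,K(\partial) = 0$, the two being adjoint to each other since both $K$ and $\omega(\xi_2)$ are skewadjoint). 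This cancellation — correctly bookkeeping the coefficient Fréchet derivatives, their adjoints, and matching them to the Schouten-bracket identities — is the computational heart of the argument and the step I expect to be the main obstacle; it is also where both hypotheses $\omega(\xi_0) = 0$ and $\omega(\xi_1) = 0$ are genuinely needed, since the analysis of $\xi_2$ feeds on both Lenard-Magri relations.

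Finally, I would invoke nondegeneracy. Since $\omega(\xi_2)$ is itself a matrix differential operator and $K(\partial)\,\omega(\xi_2) = 0$, the hypothesis that $K$ is nondegenerate (i.e.\ $K(\partial)M = 0$ forces $M = 0$) yields $\omega(\xi_2) = 0$. Hence $D_{\xi_2}$ is self-adjoint and $\xi_2$ is closed in the variational complex, as claimed. (Exactness of the variational complex would then further give $\xi_2 = \delta\!\int h_2/\delta u$ for some local functional $\int h_2$, completing the next Lenard-Magri step, but only closedness is asserted here.)
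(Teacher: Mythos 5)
You should first note that the paper itself gives no proof of Theorem \ref{thm:olver}: it is quoted from \cite{O}, Lemma 7.25, so the relevant benchmark is Olver's argument. Your outline reproduces its skeleton correctly: closedness of $\xi_2$ is the Helmholtz condition that $D_{\xi_2}$ be self-adjoint (as stated in Lecture 6), the hypotheses give $D_{\xi_0}=D_{\xi_0}^{*}$ and $D_{\xi_1}=D_{\xi_1}^{*}$, the two Lenard--Magri relations are to be differentiated via the product rule for Fr\'echet derivatives, and nondegeneracy of $K$ supplies the endgame. Your closing logic is also sound: from $K\omega(\xi_2)K=0$ nondegeneracy gives $\omega(\xi_2)K=0$, taking adjoints (both factors being skewadjoint) gives $K\omega(\xi_2)=0$, and nondegeneracy again gives $\omega(\xi_2)=0$.

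The genuine gap is the step you yourself flag as the expected main obstacle. That cancellation is not a deferrable verification: it \emph{is} Lemma 7.25, since everything else in the outline is routine. Worse, in the form you set it up it cannot go through as described. The three conditions $[H,H]=[K,K]=[H,K]=0$ are not pointwise operator identities that can be ``cancelled against'' terms of a one-sided operator equation; they are trilinear identities --- the Jacobi identities \eqref{e5.11} for the $\lambda$-brackets of $H$, of $K$, and of $H+K$ --- in which every coefficient-derivative term $(D_H\!\bullet\!\cdot)$ or $(D_K\!\bullet\!\cdot)$ comes attached to a further copy of $H(\partial)$ or $K(\partial)$: either the direction of differentiation is a ($\lambda$-shifted) column of $H$ or $K$, or the term is post-multiplied by $H$ or $K$. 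By contrast, differentiating $K\xi_2=H\xi_1$ and $K\xi_1=H\xi_0$ produces terms $(D_H\!\bullet\!\xi_1)$, $(D_K\!\bullet\!\xi_2)$, $(D_H\!\bullet\!\xi_0)$, $(D_K\!\bullet\!\xi_1)$ whose direction slot is a free test vector and whose contraction slot is a free $\xi_n$; neither matches the shape of the Schouten identities, so no direct term-by-term cancellation to $K(\partial)\,\omega(\xi_2)=0$ is available. The known proof therefore never works with the one-sided expression $K\omega(\xi_2)$: it tests the $2$-form of $\xi_2$ on pairs of Hamiltonian vector fields, i.e.\ proves $\int KP\cdot\omega(\xi_2)\,KQ=0$ for all $P,Q\in\mathscr{V}^{\ell}$, which (the pairing $\int F\cdot G$ being nondegenerate) yields the conjugated operator identity $K\omega(\xi_2)K=0$, and only then runs the nondegeneracy chain above. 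Your plan needs to be reorganized around this conjugated target, and the actual trilinear bookkeeping --- the entire mathematical content of the result --- still has to be carried out; as written, the proposal is a plausible plan, not a proof.
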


Theorem \ref{thm:olver} 
allows us to construct an infinite series of integrals of motion in involution. In fact, if we are given a pair of compatible Poisson structures $H,K $ with $ K $ non-degenerate and we know that the first two vectors $\xi_0$ and $\xi_1$, satisfying the Lenard-Magri relation, are exact in the variational complex
(i.e. they are variational derivatives), it would follow that, whenever we can construct an extending sequence of $ \xi_n$'s, then all of them would be closed, and hence exact in some extension 
$\widetilde{\mathscr{V}}$
of the algebra of differential functions 
$\mathscr{V}$
(i.e. $\xi_n = \frac{\delta h_n}{\delta u} $ for some 
$ h_n \in \widetilde{\mathscr{V}}$).
This is a consequence of the theory of the variational complex, discussed in the next lecture. Note, however, that $\xi_n\in\mathscr{V}^\ell$ for all $n$.

\begin{rmk}
Let $\xi_{-1} = 0 = \frac{\delta}{\delta u} 1 $. If $ K(\partial)\xi_0 = 0 $, then for Theorem \ref{thm:olver} to hold it suffices to have only $ \xi_1 $ such that $ K(\partial)\xi_1 = H(\partial)\xi_0 $, since the first step is trivial.
\end{rmk}

\begin{prp}
Suppose we have two compatible Poisson structures $ H $ and $ K $ on $\mathscr{V} $, with $K$ non-degenerate, and consider a basis $ \xi_0^1,\ldots,\xi_0^s $ of $ \Ker\,K $ (it is finite dimensional since $ K $ is non-degenerate). Suppose that each $ \xi_0^i $ can be extended to infinity so that equation (\ref{LMrelation}) holds for all $ n \in \bZ_+ $, and hence we have $ \xi_n^i$ for all $ n \in \bZ_+ $. Assume moreover that all vectors ${\xi_0^i}$ are exact: $ \xi_0^i = \frac{\delta \int h_0^i}{\delta u}  $ for some local functional $ \int h_0^i $. Then all the $ {h_n^i}'s$ are in involution. Hence, we have constructed canonically an abelian subalgebra of the Lie algebra $ \bar{\mathscr{V}} $, corresponding to the pair of compatible Poisson structures.
\end{prp}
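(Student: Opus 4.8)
The plan is to prove the proposition in two stages. First I establish that every $\xi_n^i$ is a variational derivative, so that the local functionals $\int h_n^i$ actually exist; then I verify that all of them are pairwise in involution with respect to both brackets $\{\cdot,\cdot\}_H$ and $\{\cdot,\cdot\}_K$ on $\bar{\mathscr{V}}$. The exactness stage is where the real work lies and is the main obstacle; the involution stage is a short computation with the bilinear forms \eqref{bilinearform1} and \eqref{bilinearform2}.

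For the exactness stage I run, for each fixed $i$, an induction producing exactness of $\xi_n^i$ for all $n$. Set $\xi_{-1}^i:=0=\frac{\delta}{\delta u}1$. Since $\xi_0^i\in\Ker K$ we have $K(\partial)\xi_0^i=0=H(\partial)\xi_{-1}^i$, so the first step of the Lenard--Magri recursion is trivial; the Remark preceding the proposition then lets me apply Theorem~\ref{thm:olver} to the triple $\xi_{-1}^i,\xi_0^i,\xi_1^i$ (with $\xi_{-1}^i$ and $\xi_0^i$ both exact) and conclude that $\xi_1^i$ is closed, hence exact in a suitable extension $\widetilde{\mathscr{V}}$. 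For the inductive step, assuming $\xi_{n-1}^i$ and $\xi_n^i$ exact and $K(\partial)\xi_{n+1}^i=H(\partial)\xi_n^i$, Theorem~\ref{thm:olver} gives that $\xi_{n+1}^i$ is closed, hence exact. Thus each $\xi_n^i=\frac{\delta\int h_n^i}{\delta u}$ for some local functional $\int h_n^i$, while all the gradients $\xi_n^i$ remain in $\mathscr{V}^\ell$.

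For the involution stage, recall that $\{\int h_n^i,\int h_m^j\}_H=\langle\xi_n^i,\xi_m^j\rangle_H=(H(\partial)\xi_n^i\vert\xi_m^j)$, and similarly with $K$. The key claim is that $\langle\xi_p^i,\xi_q^j\rangle_K=0$ for all $p,q\in\bZ_+$ and all $i,j$, which I prove by induction on $p$. The base case $p=0$ is immediate since $K(\partial)\xi_0^i=0$. For $p\ge1$, using $K\xi_p^i=H\xi_{p-1}^i$, skewadjointness of $H$ and $K$, and $H\xi_q^j=K\xi_{q+1}^j$, one gets
\[
\langle\xi_p^i,\xi_q^j\rangle_K=(H(\partial)\xi_{p-1}^i\vert\xi_q^j)=-(\xi_{p-1}^i\vert K(\partial)\xi_{q+1}^j)=\langle\xi_{p-1}^i,\xi_{q+1}^j\rangle_K,
\]
which vanishes by the inductive hypothesis (the index sum $p+q$ is preserved while the first index drops to $0$). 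Finally $\langle\xi_n^i,\xi_m^j\rangle_H=(K(\partial)\xi_{n+1}^i\vert\xi_m^j)=\langle\xi_{n+1}^i,\xi_m^j\rangle_K=0$. Hence all the $\int h_n^i$ are in involution with respect to both Poisson structures, and together they span the claimed abelian subalgebra of $\bar{\mathscr{V}}$.

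The main obstacle is the exactness stage: it rests on Theorem~\ref{thm:olver} and on the theory of the variational complex (developed only in the next lecture), and one must keep track of the fact that the potentials $\int h_n^i$ may a priori live in an extension $\widetilde{\mathscr{V}}$ while the gradients $\xi_n^i$ stay in $\mathscr{V}^\ell$. By contrast the involution computation is elementary: it is the cross-series version of the Lenard lemma (Lemma~\ref{LMlemma}), the only new point being that for $i\neq j$ one cannot invoke skewsymmetry of $\langle\cdot,\cdot\rangle$ at equal indices, so the induction must instead be anchored at the relation $K(\partial)\xi_0^i=0$ coming from $\xi_0^i\in\Ker K$.
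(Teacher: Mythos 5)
Your proposal is correct and takes essentially the same approach as the paper: the paper reduces the proposition to its Compatibility Lemma \ref{lem:compatibility}, whose proof is exactly your involution induction anchored at $K(\partial)\xi_0^i=0$ (your single induction on the first index via $\langle\xi_p^i,\xi_q^j\rangle_K=\langle\xi_{p-1}^i,\xi_{q+1}^j\rangle_K$ is just a compressed reorganization of that lemma's simultaneous induction on the two bilinear forms). Your exactness stage likewise matches the paper's surrounding discussion, which combines Theorem \ref{thm:olver}, the remark on $\xi_{-1}=0$, and the variational-complex theory of Lecture 6, including the caveat that the potentials $\smallint h_n^i$ may live in an extension $\widetilde{\mathscr{V}}$ while the $\xi_n^i$ stay in $\mathscr{V}^\ell$.
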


This proposition holds by the following result.

\begin{lem}[Compatibility lemma]\label{lem:compatibility}
Let $ H(\partial) $ and $ K(\partial) $ be skewadjoint differential operators. Suppose we have vectors $ \xi_0,\ldots, \xi_N \in \mathscr{V}^\ell $ such that $ K(\partial)\xi_0 = 0 $ and equation (\ref{LMrelation}) holds for $ n =0,\ldots, N$. Suppose moreover to have an infinite sequence of vectors $ \xi_0',\ldots,\xi_M',\ldots$ satisfying equation (\ref{LMrelation}).
% for $ n =0,\ldots, M,\ldots$. 
Then all $ \xi_i$'s are in involution with all $\xi_j'$'s 
with respect to both Poisson structures $H$ and $K$.
\end{lem}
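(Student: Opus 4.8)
The plan is to run the ``zigzag'' argument of the Lenard lemma (Lemma \ref{LMlemma}), but now \emph{across} the two sequences $(\xi_i)$ and $(\xi_j')$ rather than within a single one. Recall first that, since $H(\partial)$ and $K(\partial)$ are skewadjoint, both bilinear forms $\langle\cdot,\cdot\rangle_H$ and $\langle\cdot,\cdot\rangle_K$ on $\mathscr{V}^\ell$ are skewsymmetric, and that ``$\xi_i$ is in involution with $\xi_j'$'' means precisely $\langle \xi_i,\xi_j'\rangle_H = \langle \xi_i,\xi_j'\rangle_K = 0$ in $\bar{\mathscr{V}}$. The whole point will be that the infinite sequence $(\xi_j')$ lets me shift an index upward without ever running out of terms, while the single relation $K(\partial)\xi_0 = 0$ supplies the base case that forces the brackets to vanish.

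First I would record two index-shifting identities, using only the relation $(F\vert K(\partial)G) = -(K(\partial)F\vert G)$ (and the analogous one for $H$) coming from skewadjointness, together with the symmetry of $(\cdot\vert\cdot)$. For $1\le m\le N$ the Lenard--Magri relation $K(\partial)\xi_m = H(\partial)\xi_{m-1}$ gives
\begin{equation}
\langle \xi_m,\xi_n'\rangle_K = (K(\partial)\xi_m\vert \xi_n') = (H(\partial)\xi_{m-1}\vert \xi_n') = \langle \xi_{m-1},\xi_n'\rangle_H ,
\end{equation}
while for every $n\ge 0$ the relation $H(\partial)\xi_n' = K(\partial)\xi_{n+1}'$ on the primed sequence gives
\begin{equation}
\begin{split}
\langle \xi_m,\xi_n'\rangle_H &= (H(\partial)\xi_m\vert \xi_n') = -(\xi_m\vert H(\partial)\xi_n')\\
&= -(\xi_m\vert K(\partial)\xi_{n+1}') = (K(\partial)\xi_m\vert \xi_{n+1}') = \langle \xi_m,\xi_{n+1}'\rangle_K .
\end{split}
\end{equation}

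Composing the two identities yields $\langle \xi_m,\xi_n'\rangle_K = \langle \xi_{m-1},\xi_{n+1}'\rangle_K$ whenever $1\le m\le N$; that is, each application lowers the first index by one and raises the second by one. Iterating $m$ times brings the first index down to $0$, so that $\langle \xi_m,\xi_n'\rangle_K = \langle \xi_0,\xi_{n+m}'\rangle_K$, and since $K(\partial)\xi_0 = 0$ this equals $(K(\partial)\xi_0\vert\xi_{n+m}') = 0$. Hence $\langle \xi_m,\xi_n'\rangle_K = 0$ for all $0\le m\le N$ and all $n\ge 0$, and then the second identity immediately gives $\langle \xi_m,\xi_n'\rangle_H = \langle \xi_m,\xi_{n+1}'\rangle_K = 0$ as well, which is the assertion. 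I do not expect a genuine obstacle here: the argument is purely formal, and the only points demanding care are the bookkeeping of the index ranges (the first identity requires $1\le m\le N$, the second requires the primed Lenard--Magri relation, which holds for \emph{all} $n$ precisely because $(\xi_j')$ is infinite) and the observation that this infiniteness guarantees the shifted term $\xi_{n+m}'$ always exists, so the reduction genuinely terminates at the base case $K(\partial)\xi_0=0$.
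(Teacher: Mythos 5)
Your proof is correct and is essentially the paper's own argument: the two index-shifting identities you record (lowering the unprimed index via $K(\partial)\xi_m = H(\partial)\xi_{m-1}$, and raising the primed index via skewadjointness together with $H(\partial)\xi_n' = K(\partial)\xi_{n+1}'$) are exactly the moves the paper combines in its induction on the unprimed index, with base case $K(\partial)\xi_0 = 0$. The only difference is presentational — you unwind the induction into an explicit telescoping $\langle \xi_m,\xi_n'\rangle_K = \langle \xi_0,\xi_{n+m}'\rangle_K = 0$, which is the same computation.
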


\begin{proof}
Proceed by induction on $ i $. The induction basis follows by the fact that for $ i = 0 $ we have $ K(\partial)\xi_0 = 0 $:
\begin{equation}
\langle\xi_0, \xi_j' \rangle_K = ( K(\partial) \xi_0 \vert \xi_j' ) = ( 0 \vert \xi_j') = 0
\end{equation}
and
\begin{align}
\langle\xi_0, \xi_j' \rangle_H = (H(\partial) \xi_0 \vert \xi_j' ) = ( \xi_0 \vert H^{\ast}(\partial) \xi_j' ) & = - (\xi_0 \vert H(\partial) \xi_j') =\nonumber\\
& = - (\xi_0 \vert K(\partial) \xi_{j+1}' ) = ( K(\partial) \xi_0 \vert \xi_{j+1}') = ( 0 \vert \xi_{j+1}' ) = 0.
\end{align}
Now let $N> i > 0 $ and suppose $ \langle \xi_h, \xi_j' \rangle_{H,K} = 0 $ for all $ h \leq i $. We want to prove that $ \langle \xi_{i+1}, \xi_j' \rangle_{H,K} = 0 $. We have
\begin{equation}
\langle \xi_{i+1}, \xi_j' \rangle_K = ( K(\partial) \xi_{i+1} \vert \xi_j' ) = ( H(\partial)\xi_i \vert \xi_j' ) = \langle \xi_i, \xi_j' \rangle_H = 0
\end{equation}
and
\begin{align}
\langle \xi_{i+1}, \xi_j' \rangle_H = ( H(\partial) \xi_{i+1} \vert \xi_j' ) = - (\xi_{i+1} \vert H(\partial) \xi_j') & = - (\xi_{i+1} \vert K(\partial) \xi_{j+1}' ) = \nonumber\\
& = (K(\partial)\xi_{i+1} \vert \xi_{j+1}') = (H(\partial)\xi_i \vert \xi_{j+1}' ) = \langle \xi_i, \xi_{j+1}' \rangle_H = 0, 
\end{align}
where in both cases the last equality is given by the induction hypothesis. Hence $ \langle \xi_i,\xi_j'\rangle_{H,K} = 0 $ for all $ i,j $ in question.
\end{proof}

%Then we have the following very nice picture: if we are given an algebra of differential functions with a compatible pair of Poisson stuctures, then we can construct canonically an infinite-dimensional abelian Lie subalgebra of the Lie algebra $ \bar{\mathscr{V}} $. The only thing to check is that the corresponding $ \xi_i$'s are linearly independent, but we will explain how to do so in the next lecture. Therefore we get an integrable system of equations.

%This is the main method of proving integrability.  
In the next lecture we will demonstrate the Lenard-Magri method on the example of the KdV hierarchy, hence establishing its integrability.
\newpage

\section{Lecture 6 (January 15, 2015)}

\subsection{An example: the KdV hierarchy}
We begin this lecture with an example.

Consider the PVA 
$\mathscr{P}_1 = \F[u, u', u'', \ldots ]$ with two compatible compatible $\lambda$-brackets: one is the Gardner-Faddeev-Zakharov (GFZ) $\lambda$-bracket $\{u_\lambda u\}_K= \lambda$, and the other one is the Magri-Virasoro (MV) $\lambda$-bracket $\{u_\lambda u\}_H=(\partial+2\lambda)u+c\lambda^3$ for some $c\in \F$. The corresponding compatible Poisson structures are $K(\partial)=\partial$ and $H(\partial)= u'+2u\partial +c\partial^3$ respectively (see Example \ref{ex5.3}). Note that $\Ker\,\partial =\F$.

We shall use the Lenard-Magri scheme discussed in the previous lecture to construct an infinite hierarchy of integrable Hamiltonian equations: we want to construct an infinite sequence of vectors $ \xi_n \in \mathscr{P}_1 $, such that $ K(\partial)\xi_{n+1} = H(\partial)\xi_n $, $ n \in \bZ_+ $, and $ \xi_0 \in \Ker\,K(\partial) $. We also want to compute the conserved densities $h_n$, such that 
$ \xi_n = \frac{\delta h_n}{\delta u} $.

We can take $ \xi_0 = 1$ and, consequently, $h_0= u$. Taking 
$\xi_{-1}=0$, $h_{-1}=0$, we can apply Theorem \ref{thm:olver} to establish by induction on $n$ that all the $\xi_n$'s, satisfying the Lenard-Magri relation,
are closed in the variational complex. Since, by Corollary \ref{Cor6.1} from the next section,
every closed $1$-form is exact over the algebra of differential polynomials,
we conclude that there exist $h_n\in\mathscr{P}_1$, such that 
$ \xi_n = \frac{\delta h_n}{\delta u} $.  

The first step of the Lenard-Magri scheme:
\begin{equation}
H(\partial)\xi_0= K(\partial)\xi_1 \,\Longrightarrow\, u'= \xi_1' \,\Longrightarrow\, \xi_1=u \,\Longrightarrow\, h_1=\dfrac{1}{2} u^2.
\end{equation}
The second step of the Lenard-Magri scheme:
\begin{equation}
H(\partial)\xi_1= K(\partial) \xi_2 \,\Longrightarrow\, 3uu'+cu'''=\xi_2' \,\Longrightarrow\, \xi_2 =\dfrac{3}{2} u^2 +cu'' \,\Longrightarrow\, \xi_2 = \frac{\delta}{\delta u} h_2,\,  h_2=\dfrac{1}{2}(u^3 +cuu'').
\end{equation}
And so on.
\begin{rmk}
All $ \xi_n $'s are defined up to adding an element of $ \Ker\,K(\partial) $, 
hence, in this case, up to adding a constant.
\end{rmk}
The corresponding KdV hierarchy of Hamiltonian equations is given by $\dfrac{du}{dt_n}= K(\partial) \xi_{n+1} = \partial \xi_{n+1}$, namely:
\begin{equation}
\frac{du}{dt_0}=u', \quad 
\frac{du}{dt_1}=3uu' + cu''', \quad\ldots.
%\quad\dfrac{du}{dt_2}=K\xi_2=3uu^{(1)}+cu^{(3)} , \quad\dots 
\end{equation}
Note that for $ n = 1 $ we get the classical KdV equation, which is the simplest dispersive equation (cf. Example 5.6) .

The hierarchy can be extended to infinity because the orthogonality condition 
$(\xi_0)^\perp \subset \mathrm{Im}\,K(\partial)$ holds (see the Extension 
Lemma \ref{lem:extension}):
since $\xi_0=1$ and $1^\bot= \partial \mathscr{P}_1=\mathrm{Im}\,K(\partial)$ (equivalently, if $ P \in (\xi_0)^\perp $ then $\int 1\cdot P=0 \Leftrightarrow P\in\partial \mathscr{P}_1$, therefore $P\in\mathrm{Im}\,K(\partial)$). It is easy to show by induction that the differential order of $ K(\partial)\xi_n $ is $2n+1$ (if $ c \neq 0 $), hence they are linearly independent, and we consequently have Lie integrability. Then automatically all the $\smallint h_n$'s are linearly independent, and we have Liouville integrability as well. Therefore the KdV equation is integrable, as are all the other equations $\dfrac{d u}{d t_n} = K(\partial)\xi_{n+1}= H(\partial)\xi_n$.
% are integrable as well. By Theorem \ref{thm:olver}, since $ \xi_1 $ is a variational derivative and $ K(\partial) $ is non-degenerate, then the whole sequence consists of closed $ 1$-forms. Since, as we will soon show, in an algebra of differential polynomials closed is equivalent to exact, all the $ \xi_n's $ are variational derivatives.

\begin{exr}
Show that the the next equation of the KdV hierarchy is 
\[\dfrac{du}{dt_2}=\partial\xi_3=\frac{15}{2}u^2u'+10cu'u''+5cuu''' ,\]
and the next conserved density is
\[ h_3=\frac{5}{8}u^4+  \frac{5}{3}cu^2u''+  
\frac{5}{6}cuu'^2+  \frac{1}{2}c^2uu^{(4)}.                                
\]
\end{exr}

\subsection{The variational complex}

As professor S.\@ S.\@ Chern used to say, ``In life both men and women are important; likewise, in geometry both vector fields and differential forms are important''. In our theory vector fields are evolutionary vector fields over an
algebra of differential functions $\mathscr{V}$:
\begin{equation}
X_P = \sum_{\substack{i=1,\ldots,\ell \\ n\in\bZ_+}} \partial^n P_i \frac{\partial}{\partial u_i^{(n)}},  \quad P \in \mathscr{V}^\ell,
\end{equation}
and, as we have already seen in Lecture \ref{sec:5}, they commute with $\partial = X_{u'} $. Differential forms in our theory are ``variational differential forms'' which are obtained by the reduction of the de Rham complex over 
$\mathscr{V}$ by the image of $\partial$.

Let $J=\{1,...,N\}$, where $N$ can be infinite. Given a unital commutative associative algebra $A$, containing the algebra of polynomials 
$\F[x_j|\,j\in J]$ and endowed with $N$ commuting derivations $\dfrac{\partial}{\partial x_j}$, extending those on the subalgebra of polynomials,
the \emph{de Rham complex} $\widetilde\Omega(A)$ over $A$ consists of finite
%(hence $ J: = \{1,\ldots, N\} $ is also allowed to be infinite)
linear combinations of the form 
\begin{equation}
\sum_{i_1< \cdots < i_k} f_{i_1,\ldots,i_k} \, d x_{i_1} \wedge \ldots \wedge d x_{i_k} \in \widetilde\Omega^k(A), \quad f_{i_1,\ldots,i_k}  \in A,
\end{equation}
so that we have the decomposition
\[ \widetilde\Omega(A) = \bigoplus_{k\in\Z_+} \widetilde\Omega^k(A). 
\] 
Moreover, $\widetilde\Omega(A)$ is a $\Z_+$-graded
 associative commutative superalgebra 
with parity given by $p(A)=\bar 0$ and $p(dx_i)=\bar 1$. This is a complex 
with the usual de Rham differential, namely an odd derivation $ d : \widetilde\Omega^k(A) \longrightarrow \widetilde\Omega^{k+1}(A)$ of $ \widetilde\Omega(A) $ such that
\begin{equation}
 d(dx_i) = 0; \,\, df = \sum_{j \in J} \frac{\partial f}{\partial x_j} d x_j\,\hbox{for}\,\, f\in A.
\end{equation}
It is easily checked that $ d $ is a differential, namely that $ d^2 =0 $. We will denote this complex by $ (\widetilde\Omega(A),d)$.

Let us define now an increasing filtration on $ A $ by subalgebras:
\begin{equation}
A_j=\{a\in A \;\vert\; \dfrac{\partial a}{\partial x_i}=0, \;\forall\;i \geq j\}.
\end{equation}
%This defines an increasing filtration on $A$: $A_0 \subset A_1 \subset A_2 \subset \ldots$ where 
We call $A_0 $  the subalgebra of {\it quasiconstants}. 
If $\dfrac{\partial}{\partial x_j}A_j=A_j$ for all $j \in J $,
%(i.e.\@ assume that all partial derivatives are surjective on the corresponding element of the filtration). 
we call $A$ \emph{normal}. Obviously, the algebra of polynomials in any (including infinite) number of variables is normal. 

%Recall the following

\begin{lem}[Algebraic Poincar\'{e} Lemma]\label{lem:algebraicpoincare}
Let $A$ be a normal commutative associative algebra as above, and let $(\widetilde{\Omega}(A), d)$ be its de Rham complex. Then 
\begin{equation}
H^k(\widetilde{\Omega}(A), d)=0, \,\, k> 0 ;\,\,\,\,
H^0(\widetilde{\Omega}(A), d)= A_0.
%\{a\in A \;|\; \dfrac{\partial a}{\partial x_i}=0, \,\forall i \}.
\end{equation}
\end{lem}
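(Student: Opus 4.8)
The plan is to treat $H^0$ by inspection and then kill $H^k$ for $k>0$ by an induction that peels off one variable at a time, using normality only to invert the top derivation. For $H^0$ there is nothing to do: a $0$-form $f\in A$ is closed iff $df=\sum_{j}\frac{\partial f}{\partial x_j}\,dx_j=0$, i.e. iff $\frac{\partial f}{\partial x_j}=0$ for all $j$, which by the definition of the filtration is exactly the condition $f\in A_0$. Hence $H^0(\widetilde\Omega(A),d)=A_0$, and the whole content is the vanishing of $H^k$ for $k\ge 1$. So I fix such a $k$ and a closed $\omega\in\widetilde\Omega^k(A)$, and aim to write it as $d(\text{something})$. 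Since $\omega$ is a \emph{finite} combination of wedges of the $dx_i$, and (using $A=\bigcup_j A_j$, which holds for $\mathscr{P}_\ell$ and for the algebras of differential functions by axiom (i) of Definition \ref{algebradifffunctions}, and is automatic with $m\le N$ when $N<\infty$) each coefficient lies in some $A_j$, there is a largest index $m$ such that either $dx_m$ occurs in $\omega$ or some coefficient of $\omega$ depends on $x_m$. I would induct on this integer $m$. The base case $m=0$ is immediate: then no $dx_i$ appears and, as $k\ge 1$, $\omega=0$.

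For the inductive step I would write $\omega=dx_m\wedge\alpha+\beta$, where neither the $(k-1)$-form $\alpha$ nor the $k$-form $\beta$ contains $dx_m$, and set $d^{<m}=\sum_{i<m}dx_i\wedge\frac{\partial}{\partial x_i}$, so that on coefficients depending only on $x_1,\dots,x_m$ one has $d=d^{<m}+dx_m\wedge\frac{\partial}{\partial x_m}$. Normality says precisely that $\frac{\partial}{\partial x_m}$ is surjective on the subalgebra of elements independent of the remaining variables; choosing an $\F$-linear right inverse and applying it coefficientwise produces a $(k-1)$-form $\widetilde\alpha$, not involving $dx_m$, with $\frac{\partial}{\partial x_m}\widetilde\alpha=\alpha$. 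Then
\[
d\widetilde\alpha=d^{<m}\widetilde\alpha+dx_m\wedge\frac{\partial\widetilde\alpha}{\partial x_m}=d^{<m}\widetilde\alpha+dx_m\wedge\alpha,
\]
so that $\omega':=\omega-d\widetilde\alpha=\beta-d^{<m}\widetilde\alpha$ is a closed $k$-form, cohomologous to $\omega$, and containing no $dx_m$.

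The gain is then extracted from closedness alone: expanding $0=d\omega'=d^{<m}\omega'+dx_m\wedge\frac{\partial}{\partial x_m}\omega'$ and separating the terms with and without $dx_m$ forces both $d^{<m}\omega'=0$ and $\frac{\partial}{\partial x_m}\omega'=0$. Thus $\omega'$ neither contains $dx_m$ nor has any coefficient depending on $x_m$ (and no index $>m$ ever entered), so its top index is $\le m-1$ and it is a closed form in the subcomplex in the variables $x_1,\dots,x_{m-1}$. By the induction hypothesis $\omega'=d\psi$, whence $\omega=\omega'+d\widetilde\alpha=d(\psi+\widetilde\alpha)$ is exact, completing the induction and hence the proof that $H^k=0$ for $k>0$.

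The step that really matters — and the one I expect to be the crux — is the construction of $\widetilde\alpha$: it is exactly here that normality is indispensable, since without a right inverse for the top derivation one cannot trade the $dx_m$-part of $\omega$ for an exact form. I would emphasize that the argument is arranged so that \emph{only} surjectivity of $\frac{\partial}{\partial x_m}$ is used, and no compatibility of the antiderivative with the lower derivations $\frac{\partial}{\partial x_i}$ $(i<m)$ is required — the remaining cancellations are delivered for free by the closedness of $\omega'$. The only other point to check carefully is the well-foundedness of the induction when $N=\infty$, namely that every form involves only finitely many variables so that the top index $m$ exists; this is guaranteed by $A=\bigcup_j A_j$.
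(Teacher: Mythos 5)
Your proof is correct and takes essentially the same route as the paper: your coefficientwise antiderivative $\widetilde\alpha$ is precisely the paper's homotopy operator $K_m\omega$, and your induction on the top index $m$ is the paper's downward iteration through the filtration $\widetilde\Omega_m(A)$. The only cosmetic difference is that the paper records the homotopy identity $K_md\omega+dK_m\omega-\omega\in\widetilde\Omega^k_{m-1}(A)$ for arbitrary $\omega$, whereas you invoke closedness directly to kill the residual $x_m$-dependence of $\omega-d\widetilde\alpha$ (and you make explicit the exhaustiveness $A=\bigcup_j A_j$ that the paper leaves implicit).
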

 
\begin{proof}
Extend the filtration of $A$ to $\widetilde\Omega(A)$ by letting $\widetilde\Omega_j(A)$ be the subalgebra, generated by $A_j$ and $dx_1,\ldots ,dx_j$. Introduce ``local'' {\it homotopy operators} $K_m : \widetilde\Omega_m^k(A) \to \widetilde\Omega_m^{k-1}(A)$ by
\begin{equation}
K_m(f\,dx_{i_1}\wedge \cdots \wedge dx_{i_s} \wedge dx_m)=
\begin{cases}
(-1)^s(\int f\,dx_m)\, dx_{i_1}\wedge\cdots\wedge dx_{i_s}  \\
0, \quad\text{if}\; dx_m \,\text{does not occur}
\end{cases}
\end{equation}
where $ i_1 < \ldots < i_s < m $. Here
the integral $ \int f\,dx_m $ is a preimage in $A_m$ of $ f\in A_m $ under the
map $ \frac{\partial}{\partial x_m} $, which exists by normality of $A$.

Let $\omega\in \widetilde{\Omega}^k_m(A)$. Then it is straightforward to check that 
\begin{equation}
K_md\omega + dK_m\omega-\omega \in \widetilde{\Omega}^k_{m-1}(A), \quad \text{for}\quad m\geq 1.
\end{equation}
Hence, if $\omega\in \widetilde{\Omega}^k_m(A)$ is closed, then 
\begin{equation}
d(K_m\omega)-\omega \in \widetilde{\Omega}^k_{m-1}(A).
\end{equation}
Equivalently $ \omega \in \widetilde{\Omega}^k_{m-1}(A) + d\widetilde{\Omega}(A)$, i.e.\@ we may assume that $\omega\in \widetilde{\Omega}^k_{m-1}(A)$ modulo adding an exact tail. Repeating the same argument we proceed downward in the filtration, and after a finite number of steps we get $0$, hence $ \omega \in d\widetilde{\Omega}(A)$.
\end{proof}

%\begin{rmk}
%We call elements of $H^0(\widetilde{\Omega}(A), d)$ \emph{quasi-constants} and we denote them by $\mc{F}$.
%\end{rmk}

Let $\mathscr{V}$ be an algebra of differential functions. Consider the lexicographic order on pairs $ (m,i) \in \bZ_+ \times I $, and consider the corresponding filtration by subalgebras as above:
%. For a pair $ (m,i) \in \bZ_+ \times I $ set
\begin{equation}
\mathscr{V}_{m,i} = \{ f \in \mathscr{V} \,\vert\, \frac{\partial f}{\partial u^{(n)}_j} = 0, \;\forall\, (n,j)\geq (m,i) \}.
\end{equation}
Hence we can define normality of $ \mathscr{V} $ as above.
%\begin{defn}
%An algebra of differential functions 
%is \emph{normal} if the map $ \frac{\partial}{\partial u^{(m)}_i}: \mathscr{V}_{m,i} \longrightarrow \mathscr{V}_{m,i}$ is surjective for all $ (m,i) \in \bZ_+ \times I $.
%\end{defn}
\begin{exm}
The algebra of differential polynomials in $ \ell $ variables 
$\mathscr{P}_\ell$ is normal.
\end{exm}

The derivation $\partial$ of $\mathscr{V}$ extends to an even derivation
of the superalgebra $ \widetilde{\Omega}(\mathscr{V}) $
by letting $ \partial(du^{(n)}_i) =d u^{(n+1)}_i $.
%where $ \partial $ is an even derivation of the superalgebra $ \widetilde{\Omega}(\mathscr{V}) $, extended from $ \mathscr{V} $ by $ \partial(u^{(n)}_i) = u^{(n+1)}_i $.
%In order to show that this is a complex we need
\begin{exr}
Show that $ d\partial = \partial d $. (Hint: use Axiom $(ii)$ in Definition \ref{algebradifffunctions}.)
\end{exr}
Due to this exercise, we may consider the reduced complex 
\[( \Omega(\mathscr{V}),d) = (\widetilde{\Omega}(\mathscr{V})/\partial\widetilde{\Omega}(\mathscr{V}) ,d),
\] 
%Hence $(\Omega(\mathscr{V}), d)$ is again a complex, and it is 
called the \emph{variational complex} over the algebra of differential functions $\mathscr{V} $.
%\end{defn}
\begin{exr}
\label{exr:partialinjective}
Show that $ \partial$ is injective on $ \widetilde{\Omega}^k(\mathscr{V}) $ for $ k \geq 1 $.
\end{exr}

\begin{thm}[\cite{BDSK}]
\label{Th6.1}
Let $ \mathscr{V} $ be a normal algebra of differential functions. Then
\begin{equation}
H^k(\Omega(\mathscr{V}), d)=0, \quad k> 0;\,\,\,\,H^0(\Omega(\mathscr{V}), d)=\mc{F}/\partial \mc{F},
\end{equation}
where $\mc{F}\subset\mathscr{V}$ is the subalgebra of quasiconstants.
\end{thm}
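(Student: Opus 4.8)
The plan is to deduce the theorem from the Algebraic Poincaré Lemma (Lemma \ref{lem:algebraicpoincare}) by a long exact sequence argument, exploiting the two properties of $\partial$ recorded in the preceding exercises: that $\partial$ commutes with the de Rham differential, $d\partial = \partial d$, and that $\partial$ is injective on $\widetilde{\Omega}^k(\mathscr{V})$ for $k \geq 1$ (Exercise \ref{exr:partialinjective}). First I would observe that $d\partial = \partial d$ makes $\partial\widetilde{\Omega}(\mathscr{V})$ a subcomplex of $\widetilde{\Omega}(\mathscr{V})$, so that, by the very definition $\Omega(\mathscr{V}) = \widetilde{\Omega}(\mathscr{V})/\partial\widetilde{\Omega}(\mathscr{V})$, we have a short exact sequence of cochain complexes
\[ 0 \longrightarrow \partial\widetilde{\Omega}(\mathscr{V}) \xrightarrow{\ \iota\ } \widetilde{\Omega}(\mathscr{V}) \xrightarrow{\ \pi\ } \Omega(\mathscr{V}) \longrightarrow 0. \]
Since $\mathscr{V}$ is normal, Lemma \ref{lem:algebraicpoincare} supplies the cohomology of the middle term: $H^k(\widetilde{\Omega}(\mathscr{V})) = 0$ for $k > 0$, while $H^0(\widetilde{\Omega}(\mathscr{V})) = \mc{F}$, because $df = 0$ means $\tfrac{\partial f}{\partial u_i^{(n)}} = 0$ for all $i,n$, i.e. $f$ is a quasiconstant.

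The crucial step is to compute the cohomology of the subcomplex $\partial\widetilde{\Omega}(\mathscr{V})$. In degree $0$, an element $\partial f \in \partial\mathscr{V}$ is closed iff $\partial(df) = 0$; as $df \in \widetilde{\Omega}^1(\mathscr{V})$ and $\partial$ is injective there, this forces $df = 0$, i.e. $f \in \mc{F}$, so that $H^0(\partial\widetilde{\Omega}(\mathscr{V})) = \partial\mc{F}$. In every degree $k \geq 1$ the subcomplex is acyclic: if $\beta = \partial\alpha$ is a closed element of $\partial\widetilde{\Omega}^k(\mathscr{V})$, then $\partial(d\alpha) = d\beta = 0$, and injectivity of $\partial$ on $\widetilde{\Omega}^{k+1}(\mathscr{V})$ gives $d\alpha = 0$; the Poincaré Lemma then yields $\alpha = d\gamma$, whence $\beta = d(\partial\gamma)$ is exact. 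Thus $H^k(\partial\widetilde{\Omega}(\mathscr{V})) = 0$ for all $k \geq 1$.

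Feeding these into the long exact sequence finishes the proof. For $k \geq 1$ both neighbouring terms $H^k(\widetilde{\Omega}(\mathscr{V}))$ and $H^{k+1}(\partial\widetilde{\Omega}(\mathscr{V}))$ vanish, so $H^k(\Omega(\mathscr{V})) = 0$. For $k = 0$ the sequence begins
\[ 0 \longrightarrow H^0(\partial\widetilde{\Omega}(\mathscr{V})) \xrightarrow{\ \iota_*\ } H^0(\widetilde{\Omega}(\mathscr{V})) \longrightarrow H^0(\Omega(\mathscr{V})) \longrightarrow H^1(\partial\widetilde{\Omega}(\mathscr{V})) = 0, \]
i.e. $0 \to \partial\mc{F} \to \mc{F} \to H^0(\Omega(\mathscr{V})) \to 0$; here I would check that $\iota_*$ is the genuine inclusion $\partial\mc{F} \hookrightarrow \mc{F}$, which requires $\partial\mc{F} \subseteq \mc{F}$, a one-line consequence of the relation $[\tfrac{\partial}{\partial u_i^{(n)}}, \partial] = \tfrac{\partial}{\partial u_i^{(n-1)}}$ from Definition \ref{algebradifffunctions}(ii). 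Exactness then identifies $H^0(\Omega(\mathscr{V}))$ with the cokernel $\mc{F}/\partial\mc{F}$, as claimed.

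The main obstacle is the computation of $H^\bullet(\partial\widetilde{\Omega}(\mathscr{V}))$, and in particular the vanishing of $H^1(\partial\widetilde{\Omega}(\mathscr{V}))$: it is exactly this vanishing that pins down $H^0(\Omega(\mathscr{V}))$ as the relatively small quotient $\mc{F}/\partial\mc{F}$ rather than $\mc{F}/(\mc{F} \cap \partial\mathscr{V})$, and it is precisely here that the injectivity of $\partial$ in positive degrees is indispensable. An alternative, more hands-on route would bypass the long exact sequence by transporting the local homotopy operators $K_m$ from the proof of the Poincaré Lemma to the reduced complex; but one then has to verify that they descend modulo $\partial\widetilde{\Omega}(\mathscr{V})$ and control the degree-zero anomaly, which again leans on $d\partial = \partial d$ together with the injectivity of $\partial$. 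The homological argument above keeps these two inputs cleanly separated and is therefore the one I would carry out.
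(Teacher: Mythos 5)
Your proof is correct and follows essentially the same route as the paper: the same short exact sequence $0 \to \partial\widetilde{\Omega}(\mathscr{V}) \to \widetilde{\Omega}(\mathscr{V}) \to \Omega(\mathscr{V}) \to 0$, the Algebraic Poincar\'e Lemma for the middle term, and the commutation $d\partial = \partial d$ together with injectivity of $\partial$ in positive degrees to compute the cohomology of the subcomplex, followed by the long exact sequence. Your write-up is in fact slightly more explicit than the paper's in degree zero, where you identify $H^0(\partial\widetilde{\Omega}(\mathscr{V})) = \partial\mc{F}$ and verify that the connecting map is the genuine inclusion $\partial\mc{F} \hookrightarrow \mc{F}$ — details the paper compresses into a single ``obviously''.
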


\begin{proof}
We have a short exact sequence of complexes
\begin{equation}
0 \longrightarrow \partial \widetilde{\Omega}(\mathscr{V}) \longrightarrow \widetilde{\Omega}(\mathscr{V}) \longrightarrow \Omega(\mathscr{V}) \longrightarrow 0,
\end{equation}
which induces a long exact sequence in cohomology:
\begin{equation}
 \label{6.15} 
\text{H}^0(\partial\widetilde{\Omega}(\mathscr{V})) \longrightarrow \text{H}^0(\widetilde{\Omega}(\mathscr{V})) \longrightarrow \text{H}^0(\Omega(\mathscr{V})) \longrightarrow \text{H}^1(\partial\widetilde{\Omega}(\mathscr{V})) \longrightarrow \text{H}^1(\widetilde{\Omega}(\mathscr{V})) \longrightarrow \text{H}^1\Omega(\mathscr{V}))\longrightarrow\ldots\,.
\end{equation}
Since $ \mathscr{V} $ is normal, by Lemma \ref{lem:algebraicpoincare} we get 
\[ \text{H}^k(\widetilde{\Omega}(\mathscr{V}))= 0 \,\, \hbox{for}\,\, k>0.\] 
% = \text{H}^2(\widetilde{\Omega}(\mathscr{V})) 

Now note that $ \text{H}^k(\partial\widetilde{\Omega}(\mathscr{V}),d) = 0 $ for $ k > 0 $. Indeed, take $ \tilde{\omega} \in \widetilde{\Omega}^k(\mathscr{V}) $ with $ k \geq 0 $. If $ d(\partial \tilde{\omega}) = 0 $, then $ \partial(d\tilde{\omega}) = 0 $ since $ d $ and $ \partial $ commute. So, thanks to Exercise \ref{exr:partialinjective}, we have $ d\tilde{\omega} = 0 $. By Lemma \ref{lem:algebraicpoincare}, since $ \tilde{\omega} $ is closed, it is exact: $ \tilde{\omega} = d \tilde{\eta} $ for some $ \tilde{\eta} \in \widetilde{\Omega}(\mathscr{V}) $, hence $ \partial\tilde{\omega} = \partial(d\tilde{\eta}) = d(\partial\tilde{\eta}) $, and
\[ \text{H}^k(\partial\widetilde{\Omega}(\mathscr{V}))= 0 \,\, \hbox{for}\,\, k>0.\] 
Therefore the long cohomology exact sequence (\ref{6.15}) becomes
\begin{equation}
\ldots \longrightarrow  \text{H}^0(\Omega(\mathscr{V})) \longrightarrow 0 \longrightarrow 0 
\longrightarrow 
\text{H}^1(\Omega(\mathscr{V})) 
\longrightarrow 0 \longrightarrow 0 \longrightarrow \text{H}^2(\Omega(\mathscr{V}))\longrightarrow 0 \longrightarrow 0 \ldots,
\end{equation}
so $H^k(\Omega(\mathscr{V}), d)=0 $ for $ k> 0 $. When $ k = 0 $ we obviously get $ \text{H}^0(\Omega(\mathscr{V})) \cong \text{H}^0(\widetilde{\Omega}(\mathscr{V}))/\text{H}^0(\partial\widetilde{\Omega}(\mathscr{V}))= \mc{F}/\partial \mc{F} $. 
\end{proof}

Let us study the variational complex more closely. We can write down explicitly the first terms of the complex $ \Omega(\mathscr{V}) $:
\begin{itemize}
\item $\Omega^0(\mathscr{V})={\mathscr{V}}/{\partial \mathscr{V}}$;
\item $ \Omega^1(\mathscr{V}) = \mathscr{V}^\ell $;
\item $ \Omega^2(\mathscr{V}) = \{ \text{skewadjoint } \ell \times \ell \text{ matrix differential operators over } \mathscr{V}\} $.
\end{itemize}
The corresponding maps are
\begin{equation}
\Omega^0(\mathscr{V}) \overset{d}{\longrightarrow} \Omega^1(\mathscr{V}) \overset{d}{\longrightarrow} \Omega^2(\mathscr{V}) \longrightarrow \ldots,\,\,\,\,
\int f \overset{d}{\mapsto} \frac{\delta}{\delta u}\int f \overset{d}{\mapsto} \frac{1}{2}(D_F - D_F^\ast) \mapsto \dots,
\end{equation}
where $ F = \frac{\delta}{\delta u} \int f $ and $ D_F = \sum\limits_{i \in I,n \in \bZ_+} \frac{\partial F}{\partial u^{(n)}_i} \partial^n $ is the {\it Fr\'{e}chet derivative}.

The first identification is clear since $ \widetilde{\Omega}(\mathscr{V}) = \mathscr{V} $. Let us explain how to obtain the identification 
$\Omega^1(\mathscr{V}) = \mathscr{V}^\ell $. We have
\begin{equation}
\Omega^1(\mathscr{V}) = \bigg\{ \sum_{i \in I, n \in \bZ_+} f_{i,n} du^{(n)}_i \bigg\} / \partial\widetilde{\Omega}^1(\mathscr{V}) = \bigg\{\int \sum_{i \in I, n \in \bZ_+} f_{i,n} du^{(n)}_i \bigg\} = \bigg\{ \int \sum_{i \in I, n \in \bZ_+} f_{i,n}\partial^n du_i \bigg\},
\end{equation}
where last equality is due to the fact that $ d $ and $ \partial $ commute. Integrating by parts, we get
\begin{equation}
\int \sum\limits_{i \in I, n \in \bZ_+} f_{i,n}\partial^n du_i = \sum\limits_{i=1}^\ell \left(\int \sum\limits_{n \in \bZ_+} (-\partial)^n f_{i,n}\right)du_i.
\end{equation}
Thus the identification $\Omega^1(\mathscr{V}) \overset{\sim}{\longrightarrow} \mathscr{V}^\ell$ is given by 
\begin{equation}\label{eq:omega1}
\int \sum_{i \in I, n \in \bZ_+}f_{i,n}du_i^{(n)} \mapsto \left(\sum_{n\in \bZ_+} (-\partial)^n f_{i,n}\right)_{i \in I}
\end{equation}
and this is an isomorphism of vector spaces. In particular, we conclude that if we take $ df = \sum\limits_{i \in I,n\in \bZ_+} \frac{\partial f}{\partial u^{(n)}_i} du^{(n)}_i $ (in this case $ f_{i,n} = \frac{\partial f}{\partial u^{(n)}_i}$), then the RHS of (\ref{eq:omega1}) becomes exactly the vector of variational derivative of $ f $. It also explains the action of the first differential $ d : \Omega^0(\mathscr{V}) \longrightarrow \Omega^1(\mathscr{V}) $. Moreover, it is clear that a $ 1$-form $ \xi \in \mathscr{V}^\ell $ is exact iff $ \xi = \frac{\delta f}{\delta u} $, and it is closed iff $ D_\xi $ is self-adjoint.

\begin{exr}
Show that the algebra of differential functions 
$\mathscr{P}_1[u^{-1}, \log u]$ is normal.
Show that any algebra of differential functions $\mathscr{V}$ can be included
in a normal one.
\end{exr}
%***Insert 44a***

Since the algebra of differential polynomials $\mathscr{ P}_\ell  $ is 
normal, we obtain the following corollary of Theorem \ref{Th6.1}.
\begin{cor}
\label{Cor6.1} 
Let $\mathscr{V} = \mathscr{P}_\ell $ be an algebra of differential polynomials. Then
\begin{enumerate}
\item[(a)] $ \Ker \frac{\delta}{\delta u} = \mathbb{F} + \mathrm{Im} \, \partial \, . $
\item[(b)] $ \mathrm{Im} \frac{\delta}{\delta u} = \{ F \in \mathcal{V}^\ell \ | \ D_F \mbox{ is selfadjoint} \}. $
\item[(c)]$ \omega \in \Omega_k (\mathcal{V}), k \geq 1, $ is closed if and only if it is exact.
\end{enumerate}
\end{cor}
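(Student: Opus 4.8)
The three assertions are all consequences of Theorem~\ref{Th6.1} applied to $\mathscr{V} = \mathscr{P}_\ell$, which is normal, together with the explicit description of the low-degree part of the variational complex recorded just above. The plan is therefore to translate the vanishing of cohomology into the stated assertions about $\frac{\delta}{\delta u}$ and the Fr\'echet derivative; essentially all of the analytic content already resides in Theorem~\ref{Th6.1} (and, through it, the Algebraic Poincar\'e Lemma~\ref{lem:algebraicpoincare}), so the corollary is a bookkeeping exercise with the identifications $\Omega^0(\mathscr{V}) = \mathscr{V}/\partial\mathscr{V}$, $\Omega^1(\mathscr{V}) = \mathscr{V}^\ell$, and $\Omega^2(\mathscr{V}) = \{\text{skewadjoint }\ell\times\ell\text{ matrix differential operators}\}$.

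I would begin with (c), the most transparent part. A form $\omega \in \Omega^k(\mathscr{V})$ with $k \geq 1$ is exact iff $\omega \in \mathrm{Im}\,d$ and closed iff $d\omega = 0$, and exact forms are automatically closed since $d^2 = 0$. Since $\mathscr{P}_\ell$ is normal, Theorem~\ref{Th6.1} gives $H^k(\Omega(\mathscr{P}_\ell),d) = 0$ for $k > 0$, i.e.\ every closed form is exact; this is exactly (c).

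For (b), I would invoke the explicit second differential $d : \Omega^1(\mathscr{V}) \to \Omega^2(\mathscr{V})$, $F \mapsto \frac{1}{2}(D_F - D_F^\ast)$, so that a $1$-form $F \in \mathscr{V}^\ell$ is closed iff $D_F$ is selfadjoint. On the other hand, by the identification of the first differential $d : \Omega^0 \to \Omega^1$ with $\frac{\delta}{\delta u}$, exactness of $F$ means precisely $F = \frac{\delta}{\delta u}\int f$ for some $\int f$, i.e.\ $F \in \mathrm{Im}\,\frac{\delta}{\delta u}$. Combining these two descriptions with (c) in degree $k = 1$ (closed $\Leftrightarrow$ exact) yields $\mathrm{Im}\,\frac{\delta}{\delta u} = \{F \in \mathscr{V}^\ell : D_F \text{ selfadjoint}\}$, which is (b).

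Finally (a) requires computing $H^0$, and this is the one step that genuinely uses the structure of $\mathscr{P}_\ell$ rather than pure homological algebra. The point is that the subalgebra $\mathcal{F}$ of quasiconstants of $\mathscr{P}_\ell$ is just $\mathbb{F}$: a differential polynomial annihilated by every $\frac{\partial}{\partial u_i^{(n)}}$ is constant, and $\partial\mathbb{F} = 0$. Hence Theorem~\ref{Th6.1} gives $H^0(\Omega(\mathscr{P}_\ell),d) = \mathcal{F}/\partial\mathcal{F} = \mathbb{F}$, regarded inside $\Omega^0 = \mathscr{V}/\partial\mathscr{V}$. Since $\frac{\delta}{\delta u}$ factors through $\mathscr{V}/\partial\mathscr{V}$ (because $\frac{\delta}{\delta u}\circ\partial = 0$), an element $f \in \mathscr{V}$ lies in $\Ker\frac{\delta}{\delta u}$ iff its class $\int f$ lies in $H^0 = \mathbb{F}\ (\mathrm{mod}\ \partial\mathscr{V})$, i.e.\ iff $f \in \mathbb{F} + \partial\mathscr{V} = \mathbb{F} + \mathrm{Im}\,\partial$. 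The inclusion $\supseteq$ is immediate ($\frac{\delta}{\delta u}$ kills constants and $\partial\mathscr{V}$), while the reverse inclusion is exactly the $H^0$ computation. The only place demanding any care is thus the identification $H^0 = \mathcal{F}/\partial\mathcal{F}$ and the determination $\mathcal{F} = \mathbb{F}$; the rest is translation via the already-established identifications of $\Omega^0, \Omega^1, \Omega^2$ and their differentials.
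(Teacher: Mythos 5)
Your proposal is correct and follows essentially the same route as the paper: since $\mathscr{P}_\ell$ is normal, Theorem~\ref{Th6.1} gives the vanishing of $H^k$ for $k>0$ and $H^0=\mc{F}/\partial\mc{F}=\mathbb{F}$ (the quasiconstants of $\mathscr{P}_\ell$ being just $\mathbb{F}$), and the three claims are then read off via the identifications $\Omega^0(\mathscr{V})=\mathscr{V}/\partial\mathscr{V}$, $\Omega^1(\mathscr{V})=\mathscr{V}^\ell$, $\Omega^2(\mathscr{V})=\{\text{skewadjoint matrix differential operators}\}$ and the explicit differentials $\smallint f\mapsto\frac{\delta}{\delta u}\smallint f\mapsto\frac{1}{2}(D_F-D_F^\ast)$. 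Your write-up merely makes explicit the bookkeeping the paper leaves implicit, including the correct observation that $\frac{\delta}{\delta u}$ descends to $\mathscr{V}/\partial\mathscr{V}$, so nothing is missing.
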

Claim (a) is usually attributed to a paper by Gelfand-Manin-Shubin from the 70's, though it is certainly much older. Claim (b) is called the Helmholz criterion, and apparently, it was first proved by Volterra in the first half of the $ 20^{\mathrm{th}} $ century. 

If we know that $ \xi \in  \sV^\ell $ is a variational derivative: $ \xi = \frac{\delta h}{\delta u} $ for some $ h \in \sV $ (which is not unique since we can add to $ h $ elements from $ \partial \, \sV $), there is a simple formula to find one of such $ h: $
\begin{exr}
  \label{exercise 6.5}
  Let
  \[ \Delta = \sum_{i \in I, n \in \mathbb{Z}_+} u_i^{(n)} \ \frac{\partial}{\partial u_i^{(n)}}  \]
  be the degree evolutionary vector field, and suppose that $ \xi \in \sV^\ell  $ is such that $ \Delta (u \cdot \xi) \neq 0 $. Let $ h \in \Delta^{-1} (u \cdot \xi). $
  Show that
  \[ \frac{\delta h}{\delta u_i} - \xi_i \in \Ker (\Delta +1) \mbox{ for all } i \in I.\] 
Consequently, if $ \Ker (\Delta + 1) = 0,  $ then $ \frac{\delta h}{\delta u} = \xi $.\end{exr}
  %***Insert page 44a***

%: $\mathscr{V} \subset \widetilde{\mathscr{V}}$.

\subsection{Homogeneous Drinfeld-Sokolov hierarchy and the classical affine Hamiltonian reduction.}
The method of constructing solutions of the Lenard-Magri relation, described in Section 5.4, uses Theorem \ref{thm:olver}, which assumes that $ K $ is non-degenerate. In this section I will describe the direct method of Drinfeld and Sokolov on the example of the so called homogeneous hierarchy, which avoids the use of Theorem \ref{thm:olver}.

Consider the affine PVA $ \sV = \sV^1 (\mf{g}, s), $ where $ \mf{g} $ is a reductive Lie algebra with a non-degenerate invariant symmetric bilinear form 
$ (. \, | \, .) $ and $ s $ is a semisimple element of $ \mf{g}, $ with compatible Poisson $ \lambda $-brackets $ (a, b \in \mf{g}): $
\begin{equation}
\label{e6.21}
\{ a_\lambda b \}_H = [a,b] + (a|b) \lambda, \quad \{ a_\lambda b \}_K = (s| [a,b]),
\end{equation}
see Example \ref{ex5.4}. Note that the Poisson structure $ K $ is degenerate, as $ s $ is a central element of the corresponding $\lambda$-bracket. 

The Lenard-Magri relation \eqref{LM} for infinite $ N $ 
%with $\xi_n=\frac{\delta h}{\delta u}$ 
can be rewritten as follows:
\begin{equation}
\label{e6.22}
\{ \smallint h_n, u \}_H = \{  \smallint h_{n+1}, u   \}_K, \ n \in \mathbb{Z}_+, \ u \in \mf{g}.
\end{equation}
The Drinfeld-Sokolov method of constructing solutions to this equation is as follows, see \cite{DS85} and \cite{DSKV13}.
Choosing dual bases $ \{ u_i\}_{i \in I} $ and $ \{ u^i\}_{i \in I} $ 
of $\mf{g}$, let 
\[ L(z) = \partial + \sum_{i \in I} u^i \otimes u_i - z (s \otimes 1) \in \mathbb{F} \partial \ltimes (\mf{g} \otimes \sV )[z] \, . \]
The first step consists of finding a solution $ F(z) = \sum_{n\geq 0} F_n z^{-n} \in (\mf{g} \otimes \sV) [[z^{-1}]] $ of the following equations in $ \mathbb{F} \partial \ltimes (\mf{g} \otimes \sV) ((z^{-1})): $
\begin{equation}
\label{e6.23}
[L(z), F(z)] = 0, \ [s \otimes 1, F_0] = 0 \, .
\end{equation}
\begin{thm}
\label{Th6.3.1}
Assume that the element $ s $ is semisimple, and let 
$ \mf{h}$  
be the centralizer of $s$ in $ \mf{g} $, so that 
$ \mf{g}= \mf{h}\oplus \mf{h}^\perp$. Then
\begin{enumerate}
\item[(a)] There exist unique $ U (z) \in z^{-1} (\mf{h}^\perp \otimes \sV) [[z^{-1}]] $ and $f(z) \in (\mf{h} \otimes \sV) [[z^{-1}]]$,  such that 
\[ e^{ad \,  U (z)} L(z) = \partial + f(z)- z(s \otimes 1).\]  
The coefficients of $ U(z) $ and $ f(z) $ can be recursively computed. 
\item[(b)] Let $ a $ be a central element of $ \mf{h} $. 
%such that $ [a, \mf{g}] \neq 0. $ 
Then $ F^a(z) = e^{-ad \, U (z)} (a \otimes 1) $ satisfies equations \eqref{e6.23}.
\end{enumerate}
\end{thm}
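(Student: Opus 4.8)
The plan is to treat (a) as a \emph{gauge transformation to canonical form}, solved by a degree-by-degree recursion in $z^{-1}$, and then to deduce (b) almost formally from (a) using that $e^{\ad U(z)}$ is a Lie algebra automorphism. First I would record the one structural fact that drives everything: since $s$ is semisimple, $\ad s$ is a diagonalizable operator on $\mf{g}$ with $\mf{h}=\Ker(\ad s)$, and invariance of $(\cdot|\cdot)$ gives $\im(\ad s)\perp\Ker(\ad s)$; a dimension count (using nondegeneracy of the form) then yields $\mf{h}^\perp=\im(\ad s)$, so that $\ad s:\mf{h}^\perp\to\mf{h}^\perp$ is invertible. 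This is the only place semisimplicity enters, and this invertibility is the engine of the recursion.

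For (a), write $q=\sum_{i}u^i\otimes u_i$ and $U=\sum_{n\ge 1}U_n z^{-n}$ with $U_n\in\mf{h}^\perp\otimes\sV$. Using $[\partial,a\otimes v]=a\otimes\partial v$ one obtains, with $\partial U:=(1\otimes\partial)U$,
\[
e^{\ad U}L(z)=\partial-\frac{e^{\ad U}-1}{\ad U}(\partial U)+e^{\ad U}(q)-z\,(e^{\ad U}-1)(s)-z\,s .
\]
Imposing that this equal $\partial+f(z)-z\,s$ with $f\in\mf{h}\otimes\sV[[z^{-1}]]$ means the $\mf{h}^\perp$-projection of the non-$\partial$, non-$zs$ part vanishes. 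The crucial point is the $z$-grading: in $-z(e^{\ad U}-1)(s)$ the leading contribution is $-z[U,s]=\sum_n(\ad s)(U_n)z^{1-n}$, so the coefficient of $z^{-m}$ contains $(\ad s)(U_{m+1})$ \emph{linearly}, while every remaining term at order $z^{-m}$ — the higher brackets in $(e^{\ad U}-1)(s)$, the contributions of $e^{\ad U}(q)$, and those of $\frac{e^{\ad U}-1}{\ad U}(\partial U)$ — involves only $U_1,\dots,U_m$ and $q$. Hence the $\mf{h}^\perp$-projection at order $z^{-m}$ reads
\[
(\ad s)(U_{m+1})=-\,\pi_{\mf{h}^\perp}\big(\text{data in }U_1,\dots,U_m,q\big),
\]
which, by invertibility of $\ad s$ on $\mf{h}^\perp$, determines $U_{m+1}$ uniquely; the recursion is initialized by $U_1=-(\ad s)^{-1}\pi_{\mf{h}^\perp}(q)$. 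The $\mf{h}$-projection at each order then defines $f_m$, starting from $f_0=\pi_{\mf{h}}(q)$. This gives existence, uniqueness, and the stated recursive computability simultaneously.

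For (b), set $\tilde L(z)=e^{\ad U(z)}L(z)=\partial+f(z)-z\,s$ from (a). Since $a\in Z(\mf{h})$ and $f\in\mf{h}\otimes\sV$, a direct check gives $[\tilde L(z),a\otimes 1]=0$: the $\partial$-term vanishes because $\partial 1=0$, the $zs$-term because $[s,a]=0$ (as $a\in\mf{h}$), and $[f,a\otimes 1]=0$ precisely because $a$ is central in $\mf{h}$. As $U(z)$ has no $z^0$ term, $e^{\pm\ad U}$ is a well-defined automorphism of $\mathbb{F}\partial\ltimes(\mf{g}\otimes\sV)((z^{-1}))$ (each coefficient is a finite sum), and automorphisms preserve brackets, so
\[
0=[\tilde L,a\otimes 1]=[e^{\ad U}L,\,e^{\ad U}F^a(z)]=e^{\ad U}[L,F^a(z)],
\]
whence $[L(z),F^a(z)]=0$. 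Finally $F^a(z)=a\otimes 1+O(z^{-1})$, so $F_0=a\otimes 1$ and $[s\otimes 1,F_0]=[s,a]\otimes 1=0$, giving both equations in \eqref{e6.23}.

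The main obstacle is the careful $z$-degree bookkeeping in (a): one must verify that in the Baker--Campbell--Hausdorff-type expansions the unknown $U_{m+1}$ enters the order-$z^{-m}$ equation \emph{only} through the single linear term $(\ad s)(U_{m+1})$, with all other contributions depending on strictly lower data. Once this triangular structure is established, invertibility of $\ad s$ on $\mf{h}^\perp$ closes the recursion, and part (b) follows formally from the automorphism property of $e^{\ad U(z)}$ together with the centrality of $a$ in $\mf{h}$.
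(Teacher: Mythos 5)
The paper itself gives no proof of this theorem — it defers to the reference [DSKV13] — but your argument is correct and is essentially the standard Drinfeld–Sokolov dressing argument used there: for (a), the degree-by-degree recursion whose solvability rests on the invertibility of $\ad s$ on $\mf{h}^\perp=\im(\ad s)$ (which is exactly where semisimplicity of $s$ and invariance of the form enter), and for (b), the formal deduction from the automorphism property of $e^{\ad U(z)}$ together with centrality of $a$ in $\mf{h}$. Your verification of the triangular structure — that $U_{m+1}$ enters the $z^{-m}$ coefficient only through the linear term $(\ad s)(U_{m+1})$, all other contributions involving only $U_1,\dots,U_m$ — is the key point, and it is handled correctly.
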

Define the variational derivative of $ \smallint f \in \sV / \partial \sV $ in invariant form:
\[ \frac{\delta \smallint f}{\delta u} = \sum_{i \in I} u^i \otimes \frac{\delta \smallint f}{\delta u_i} \, . \]
The second step is given by the following.

\begin{thm}
\label{Th6.3.2}
Let $ f(z), a $ and $ F^a(z) $ be as in Theorem \ref{Th6.3.1}. Let $ h^a (z) = (a \otimes 1 | f(z)). $
Then 
\begin{enumerate}
\item[(a)] $ F^a (z) = \frac{\delta \smallint h^a (z)}{\delta u} \, . $
\item[(b)] The coefficients of $ h^a (z) = \sum_{n \geq 0}\smallint h^a_n z^{-n} $ satisfy the Lenard-Magri relation (\ref{e6.22}). 
\item[(c)] All the elements $ \int h^a_n \in \sV / \partial \sV  $,
where $n\in\Z_+$ and $a$ is a central 
element of $\mf{h}$, are in involution with respect to both Poisson structures $ H $ and $ K. $
\end{enumerate}
\end{thm}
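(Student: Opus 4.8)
The three claims are tightly linked, so the plan is to prove (a) first, read off (b), and then deduce (c) from the abstract lemmas of Lecture 5. The heart of the matter is (a): identifying $F^a(z)$ with a variational derivative. The plan is to compute the variation of $\int h^a(z)$ under an arbitrary infinitesimal change $\delta q = \sum_i u^i\otimes\delta u_i$ of the field $q=\sum_i u^i\otimes u_i$, and to match it with $\int(F^a(z)\,|\,\delta q)$; by the very definition of $\frac{\delta}{\delta u}$ in invariant form, this matching is equivalent to $F^a(z)=\frac{\delta\int h^a(z)}{\delta u}$. Write $\tilde L(z)=e^{\ad U(z)}L(z)=\partial+f(z)-z(s\otimes 1)$, so that $h^a(z)=(a\otimes 1\,|\,f(z))$ and $\delta\!\int h^a(z)=\int(a\otimes 1\,|\,\delta f(z))$, since $\partial$ and $z(s\otimes 1)$ are field-independent. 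Differentiating $\tilde L=e^{\ad U}L$ gives $\delta\tilde L=e^{\ad U}\delta L+[W,\tilde L]$, where $W=\tfrac{e^{\ad U}-1}{\ad U}\delta U\in\mf{g}\otimes\sV$ and $\delta L=\delta q$.

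The decisive point is that $a\otimes 1$ commutes with $\tilde L$: one has $[\tilde L, a\otimes 1]=0$, because $a$ is a constant central element of $\mf{h}$ while $f(z)\in\mf{h}\otimes\sV$ and $s\in\mf{h}$. Hence, by invariance of the form $(\cdot\,|\,\cdot)$, the cocycle term satisfies $(a\otimes 1\,|\,[W,\tilde L])\equiv(W\,|\,[\tilde L,a\otimes 1])=0\pmod{\partial\sV}$, the only discrepancy being the total derivative $-\partial\big((a\otimes 1\,|\,W)\big)$ coming from the $\partial$-component of $\tilde L$, which dies under $\int$. Therefore
\[
\int(a\otimes 1\,|\,\delta f)=\int(a\otimes 1\,|\,e^{\ad U}\delta q)=\int(e^{-\ad U}(a\otimes 1)\,|\,\delta q)=\int(F^a(z)\,|\,\delta q),
\]
where the middle equality uses that $e^{\ad U}$ preserves the $\mf{g}$-form (here no $\partial$ appears, as $U$ and $\delta q$ both lie in $\mf{g}\otimes\sV$, so this step is exact). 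Comparing coefficients of $z^{-n}$ yields $F^a_n=\frac{\delta\int h^a_n}{\delta u}$, which is (a).

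For (b), the plan is to expand the defining relation $[L(z),F^a(z)]=0$ from \eqref{e6.23}. Using $[\partial,X]=\partial(X)$ and collecting powers of $z$, one obtains the recursion $[s\otimes 1,F^a_{n+1}]=(\partial+\ad q)F^a_n$ for all $n\in\Z_+$, together with $[s\otimes 1,F^a_0]=0$. A direct computation with the Master Formula \eqref{masterformula} identifies the Poisson structures of the $\lambda$-brackets \eqref{e6.21} with the skewadjoint operators $H(\partial)=\partial+\ad q$ and $K(\partial)=\ad s$ on $\Omega^1(\sV)\cong\mf{g}\otimes\sV$, so the recursion becomes $K(\partial)F^a_{n+1}=H(\partial)F^a_n$. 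Combining this with part (a) and the Basic Lemma (which gives $\{\int h,u\}_H=H(\partial)\frac{\delta\int h}{\delta u}$ and $\{\int h,u\}_K=K(\partial)\frac{\delta\int h}{\delta u}$) is precisely the Lenard–Magri relation \eqref{e6.22}, proving (b). For (c), note that $H$ and $K$ are skewadjoint (invariance of $(\cdot\,|\,\cdot)$ makes $\ad q$ and $\ad s$ skewsymmetric and $\partial^{*}=-\partial$), and each sequence $\{F^a_n\}_{n\ge 0}$ satisfies the Lenard–Magri relation with $K(\partial)F^a_0=[s,a]\otimes 1=0$. Applying the Compatibility Lemma~\ref{lem:compatibility} to the pair of sequences $\{F^a_n\}$ and $\{F^b_m\}$ (for any central $a,b\in\mf{h}$) gives $\langle F^a_n,F^b_m\rangle_{H,K}=0$ for all $n,m$; since $F^a_n=\frac{\delta\int h^a_n}{\delta u}$ by (a), this says $\{\int h^a_n,\int h^b_m\}_{H}=\langle F^a_n,F^b_m\rangle_H=0$ and likewise for $K$, which is the asserted involutivity.

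The main obstacle is part (a): one must control the variation of the gauge transformation $e^{\ad U(z)}$, whose naive cocycle term $[W,\tilde L]$ does not vanish in general. What rescues the argument is the combination of the centrality of $a$ in $\mf{h}$ (killing the $\mf{g}$-part exactly) with working modulo total derivatives via $\int$ (killing the $\partial$-part). One must therefore keep careful track of the $\partial$-component of $L(z)$, which breaks strict invariance of the pairing but only by total derivatives. Once (a) is secured, (b) is the routine expansion of $[L,F^a]=0$ plus the identification of $H,K$, and (c) is a direct invocation of the commutativity lemmas of Lecture 5.
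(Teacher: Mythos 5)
Your proposal is correct, and it proves more than the paper does: the paper's entire proof of this theorem is the citation ``for proofs of these theorems we refer to \cite{DSKV13}'', plus the single remark that claim (c) follows from claim (b) and Lemma \ref{lem:compatibility}. Your part (c) is exactly that remark made explicit: both sequences $\{F^a_n\}$ and $\{F^b_m\}$ are infinite and satisfy the Lenard--Magri recursion, $K(\partial)F^a_0=[s\otimes 1,a\otimes 1]=0$ because $U(z)=O(z^{-1})$ forces $F^a_0=a\otimes 1$, and part (a) turns $\langle F^a_n,F^b_m\rangle_{H,K}=0$ into involutivity of the $\smallint h^a_n$'s; so there you coincide with the paper. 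For (a) and (b) you supply the content the paper outsources to \cite{DSKV13}, and your arguments are sound: the derivative-of-exponential formula $\delta\tilde L=e^{\ad U}\delta q+[W,\tilde L]$, the identity $[\tilde L,a\otimes 1]=0$ (from centrality of $a$ in $\mf{h}$, $[s,a]=0$, and $\partial(a\otimes 1)=0$), and invariance of the form, which kills $(a\otimes 1\,|\,[W,\tilde L])$ up to the total derivative $-\partial\bigl((a\otimes 1\,|\,W)\bigr)$ that dies under $\smallint$; and, for (b), the expansion of $[L(z),F^a(z)]=0$ into $[s\otimes 1,F^a_{n+1}]=(\partial+\ad q)F^a_n$. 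The only step you assert rather than carry out is the Master Formula identification of the $\lambda$-brackets \eqref{e6.21} with the operators $\partial+\ad q$ and $\ad (s\otimes 1)$ under the identification $\sV^\ell\cong\mf{g}\otimes\sV$; this is true with consistent signs, but it is precisely where the dual-basis bookkeeping matters: the invariant gradient characterized by your matching property $\delta\smallint h=\smallint(F\,|\,\delta q)$ is $\sum_i u_i\otimes\frac{\delta h}{\delta u_i}$ (basis vectors paired with derivatives with respect to the same basis), and that is how the paper's displayed definition of $\frac{\delta\smallint f}{\delta u}$ must be read. Since you define the invariant variational derivative by the matching property itself, your proof is internally consistent; just write out that verification, as it is a genuine (if short) computation rather than a notational triviality.
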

For proofs of these theorems we refer to \cite{DSKV13}. Note that the claim (c) of Theorem \ref{Th6.3.2} follows from claim (b) and Lemma \ref{lem:compatibility}.

\begin{exm}
\label{example6.3.1} 
Let $ s $ be a regular semisimple element of $ \mf{g}, $ so that $ \mf{h} $ is a Cartan subalgebra, and let $ a \in \mf{h} $.
% is a non-central element of $ \mf{g}. $ 
Then the above procedure gives the following sequence of densities of local functionals in involution, satisfying the Lenard-Magri relation:
\[ 
\begin{split}
& h_{-1} = 0, \ h_0 = a, \ h_1 = \frac{1}{2} \sum_{\alpha \in \Delta} \frac{\alpha (a)}{\alpha (s)} e_{-\alpha} e_\alpha, \\
& h_2 = \frac{1}{2} \sum_{\alpha \in \Delta} \frac{\alpha (a)}{\alpha (s)} e_{-\alpha} e'_\alpha + \frac{1}{2} \sum_{\alpha \in \Delta} \frac{\alpha (a)}{\alpha (s)^2} e_{-\alpha} e_\alpha  [e_{-\alpha}, e_{\alpha}] + \frac{1}{3} \sum_{\substack{\alpha, \beta \in \Delta \\ \alpha \neq \beta}} \frac{\alpha (a)}{\alpha (s) \beta(s)} e_{-\beta} e_\alpha [e_{-\alpha}, e_\beta], \ldots, 
\end{split}
 \]
where $ \Delta  $ is the set of roots of $ \mf{g} $ and the root vectors $ e_{\alpha} $ are chosen such that $ (e_{\alpha} | e_{-\alpha}) = 1. $

The corresponding Hamiltonian equations are:
\begin{equation}
\label{e6.24}
\frac{d b}{dt_n} = 0 \mbox{ for } b \in \mf{h}, 
\ n \in \mathbb{Z}_+, \,\,\, \frac{d e_\alpha}{dt_0} = \alpha (a) e_\alpha,
\end{equation}
\begin{equation}
\label{e6.25}
\frac{d e_\alpha}{dt_1} = \frac{\alpha (a)}{\alpha (s)} e'_\alpha + \sum_{\beta \in \Delta} \frac{\beta(a)}{\beta(s)} e_{- \beta} [e_\beta, e_\alpha].
\end{equation}
The next equation is more complicated, so we give it only for 
$ \mf{g} = s \ell_2,\, a = s,\, \alpha(s) = 1:  $
\begin{equation}
\label{e6.26}
\frac{d e_\alpha}{dt_2} = e''_\alpha - (2 e'_\alpha \alpha + e_\alpha \alpha') - (\alpha | \alpha) e^2_\alpha e_{-\alpha} \, .
\end{equation}
Note that the elements of $ \mf{h} $ do not evolve since they are central for the Poisson structure $ K. $
\end{exm}

%insert page 46 stuff here
In order to construct new PVAs from existing ones, we can use the classical affine Hamiltonian reduction of a PVA $\mathscr{V}$.

The classical affine Hamiltonian reduction of a PVA $\sV$, associated to a triple $(\sV_0,I_0,\varphi)$, where $\sV_0$ is a PVA, $I_0\subset\sV_0$ is a PVA ideal and $\varphi : \sV_0 \to \sV$ is a PVA homomorphism, is
\begin{equation}
\sW=\sW(\sV,\sV_0,I_0,\varphi) = (\sV/\sV\varphi(I_0))^{\text{ad}_\lambda \varphi(\sV_0)},
\end{equation}
where $ \text{ad}_\lambda \varphi(\sV_0)$ means that we are taking the adjoint action of $ \sV\varphi(I_0) $ on $ \sV $ with respect to the 
$ \lambda$-bracket.

\begin{rmk}
$ \sV/\sV\varphi(I_0) $ is a differential algebra, but the $ \lambda$-bracket is not well defined on this quotient. However, the $ \lambda$-bracket is well defined on the subspace of invariants $  (\sV/\sV\varphi(I_0))^{\text{ad}_\lambda \varphi(\sV_0)} $. 
\end{rmk}

\begin{thm}
The $\lambda$-bracket on $\sW$ given by
\begin{equation}
\{{f+\sV\varphi(I_0)}_\lambda\, g+\sV\varphi(I_0)\} = \{f{}_\lambda g\} + \sV\varphi(I_0)
\end{equation}
is well defined and it endows the differential algebra $\sW$ with a structure of a PVA.
\end{thm}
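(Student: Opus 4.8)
The plan is to transport the entire PVA structure of $\sV$ down to the subquotient $\sW$ through the projection map, after verifying that the defining ideal and the invariance condition are compatible with every operation. Throughout write $\mc I := \sV\varphi(I_0)$ and let $\pi\colon \sV \to \sV/\mc I$, $f \mapsto \bar f$, be the projection; concretely $\sW = \{\bar f \in \sV/\mc I \mid \{\varphi(a)_\lambda f\} \in \mc I[\lambda]\ \text{for all}\ a \in \sV_0\}$. I would first record two preliminary facts. (1) $\mc I$ is a $\partial$-invariant ideal of the differential algebra $\sV$: it is an ideal by construction, and it is $\partial$-invariant because $I_0$ is a PVA ideal (so $\partial I_0 \subseteq I_0$) and $\varphi$ commutes with $\partial$, whence $\partial(v\varphi(i)) = (\partial v)\varphi(i) + v\varphi(\partial i) \in \mc I$; thus $\sV/\mc I$ is a commutative associative unital differential algebra. (2) For every $a \in \sV_0$ one has $\{\varphi(a)_\lambda \mc I\} \subseteq \mc I[\lambda]$: by the left Leibniz rule $\{\varphi(a)_\lambda v\varphi(i)\} = \{\varphi(a)_\lambda v\}\varphi(i) + v\{\varphi(a)_\lambda\varphi(i)\}$, where the first summand lies in $\mc I[\lambda]$, and since $\varphi$ is a PVA homomorphism and $I_0$ a PVA ideal, $\{\varphi(a)_\lambda\varphi(i)\} = \varphi(\{a_\lambda i\}) \in \varphi(I_0)[\lambda]$, so the second summand lies in $\mc I[\lambda]$ as well. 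By (2) each adjoint operator $\{\varphi(a)_\lambda\,\cdot\,\}$ descends to $\sV/\mc I$, and $\sW$ is precisely their joint kernel; this is the meaning of the superscript, and in particular the defining condition is representative-independent.

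The heart of the matter is a key lemma: if $\bar f \in \sW$ and $x \in \mc I$, then both $\{f_\lambda x\} \in \mc I[\lambda]$ and $\{x_\lambda f\} \in \mc I[\lambda]$. Writing $x = v\varphi(i)$ with $i \in I_0 \subseteq \sV_0$, the left Leibniz rule gives $\{f_\lambda v\varphi(i)\} = \{f_\lambda v\}\varphi(i) + v\{f_\lambda\varphi(i)\}$, whose first term is in $\mc I[\lambda]$, while $\{f_\lambda\varphi(i)\} = -\{\varphi(i)_{-\lambda-\partial}f\} \in \mc I[\lambda]$ by skewsymmetry together with $\bar f \in \sW$ (the substitution $\lambda \mapsto -\lambda-\partial$ keeps coefficients in the $\partial$-stable $\mc I$); hence $\{f_\lambda x\} \in \mc I[\lambda]$, and then $\{x_\lambda f\} = -\{f_{-\lambda-\partial}x\} \in \mc I[\lambda]$ by skewsymmetry again. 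Since the class $\bar 0 \in \sW$ is represented by all of $\mc I$, the lemma yields well-definedness of $\{\bar f_\lambda\bar g\} := \overline{\{f_\lambda g\}}$: replacing $f,g$ by $f+x,\,g+y$ with $x,y \in \mc I$ alters $\{f_\lambda g\}$ by $\{f_\lambda y\} + \{x_\lambda g\} + \{x_\lambda y\}$, and each of these lies in $\mc I[\lambda]$ by the lemma.

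It remains to see that $\{\bar f_\lambda\bar g\}$ actually lands in $\sW[\lambda]$ and that the axioms hold. For the first point I would invoke the Jacobi identity: for $a \in \sV_0$, $\{\varphi(a)_\mu\{f_\lambda g\}\} = \{\{\varphi(a)_\mu f\}_{\mu+\lambda} g\} + \{f_\lambda\{\varphi(a)_\mu g\}\}$, and both summands lie in $\mc I[\lambda,\mu]$ because $\{\varphi(a)_\mu f\},\{\varphi(a)_\mu g\} \in \mc I[\mu]$ (as $\bar f,\bar g \in \sW$) and the key lemma controls brackets of $\mc I$ against $f$ and $g$; hence $\overline{\{f_\lambda g\}} \in \sW[\lambda]$. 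That $\sW$ is a unital differential subalgebra of $\sV/\mc I$ follows from the left Leibniz rule (a product of invariants is invariant) and sesquilinearity ($\{\varphi(a)_\lambda\partial f\} = (\lambda+\partial)\{\varphi(a)_\lambda f\} \in \mc I[\lambda]$), with $\bar 1$ invariant since $\{\varphi(a)_\lambda 1\} = 0$; commutativity, associativity and unitality are inherited from $\sV/\mc I$. Finally, because $\pi$ intertwines $\partial$, the product, and all $\lambda$-bracket operations, each of sesquilinearity, skewsymmetry, the Jacobi identity and the left Leibniz rule descends verbatim from $\sV$, giving axioms (A)--(C) of a PVA on $\sW$.

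I expect the main obstacle to be exactly the key lemma together with the subsequent Jacobi computation, since these are the only places where the three hypotheses interact essentially — the PVA-ideal property of $I_0$, the homomorphism property of $\varphi$, and skewsymmetry/Jacobi for $\sV$ — and one must check carefully that the formal substitution $\lambda \mapsto -\partial-\lambda$ and the sesquilinearity factor $(\lambda+\partial)$ keep all coefficients inside the $\partial$-invariant ideal $\mc I$. Everything else is a routine transport of structure across the surjection $\pi$.
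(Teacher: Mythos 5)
Your proof is correct and takes essentially the same route as the paper's: you work with the preimage $\widetilde{\sW}=\{f\in\sV \mid \{\varphi(\sV_0)_\lambda f\}\subset \sV[\lambda]\varphi(I_0)\}$, prove that brackets between $\widetilde{\sW}$ and the ideal $\sV\varphi(I_0)$ stay in the ideal (your key lemma, via the Leibniz rules, skewsymmetry, and $\partial$-stability), and prove closure of the bracket on $\widetilde{\sW}$ via the Jacobi identity --- precisely the two steps of the paper's argument, after which the structure descends to the quotient. Your write-up is merely more explicit about points the paper leaves implicit, such as $\sV\varphi(I_0)\subset\widetilde{\sW}$, the well-definedness of the induced bracket, and the fact that the substitution $\lambda\mapsto-\lambda-\partial$ needs the ideal to be $\partial$-stable.
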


\begin{proof}
Let $\widetilde\sW=\{f\in\sV \,|\, \{\varphi(\sV_0) {}_\lambda f\} \subset \sV[\lambda]\varphi(I_0)\}$, so that $\sW=\widetilde\sW/\sV\varphi(I_0)$. It is a 
subalgebra 
of the differential algebra $\sV$, and $\sV\varphi(I_0)$ is its differential ideal.

Check that $\widetilde\sW$ is closed under the $\lambda$-bracket of $\sV$ (i.e.\@ $\widetilde\sW$ is a PVA subalgbra): let $h\in I_0$, $f,g\in\widetilde\sW$, then by the Jacobi identity
\begin{align}
\{h_\lambda \{ f_\mu g\}\} &=\{\{h_\lambda f\}_{\lambda+\mu} g\} + \{ f_\mu \{ h_\lambda g\}\} \subset \{\sV[\lambda]\varphi(I_0)_{\lambda + \mu} g \} + \{f_\mu \sV[\lambda]\varphi(I_0)\} \subset \nonumber\\
& \subset \{\sV[\lambda]\varphi(\sV_0)_{\lambda + \mu} g \} + \{f_\mu \sV[\lambda]\varphi(\sV_0)\} \subset \sV[\lambda, \mu] \varphi(I_0).
\end{align}
Finally, by the right Leibniz rule, $\sV\varphi(I_0)$ is a Poisson ideal of $\widetilde\sW$: for $f\in\widetilde\sW$ we have 
\begin{equation}
\{f_\lambda \sV\varphi(I_0)\} \subset \sV\{f_\lambda \varphi(I_0)\} + \{f_\lambda \sV\}\varphi(I_0) \subset \sV\{f_\lambda \varphi(\sV_0)\} + \sV[\lambda]\varphi(I_0) \subset \sV[\lambda]\varphi(I_0).
\end{equation}
\end{proof}

The main example of this construction is the classical affine $W$-algebra. 

\begin{exm}
\label{example6.3.4}
Consider the affine PVA $ \sV = \sV^1 (\mf{g}, s) $ with compatible $ \lambda $-brackets \eqref{e6.21}. Let $ f $ be a nilpotent element of $ \mf{g} $ and let $ \{ f, h, e \} $ be an $ s \ell_2 $-triple, containing $ f. $ Let
\[ \mf{g} =  \underset{j \in \frac{1}{2} \mathbb{Z}}{\bigoplus} \mf{g}_j \]
be the $ \frac{1}{2} ad \, h $ eigenspace decomposition (so that $ f \in \mf{g}_{-1} $). Assume that $ s \in \mf{g}_d,  $ where $ d = $ max$ \{ j | \mf{g}_j \neq 0 \}.  $ Let $ \sV = \mathcal{P} (\mf{g}_{>0}), $ let $ \varphi: \sV_0 \rightarrow \sV  $ be the inclusion homomorphism, and let $ I_0 \subset \sV_0 $ be the differential ideal, generated by the set 
\[ M = \{ m -(f|m) |\, m \in \mf{g}_{\geq 1} \}. \] 
It is easily checked that $ I_0 $ is a PVA ideal of $ \sV_0 $ with respect to both $ \lambda $-brackets \eqref{e6.21}. Then the \emph{classical affine $ W $-algebra} is the corresponding classical Hamiltonian reduction for both $ \lambda $-brackets:
\[ \sW (\mf{g}, f, s) = \sW (\sV, \sV_0, I_0, \varphi). \]
\end{exm}

%insert page 47 stuff here

\begin{rmk}
The same construction does not work for vertex algebras because of the presence of quantum corrections. However, for the usual associative algebras it actually works. The \emph{quantum (finite) Hamiltonian reduction} of an associative algebra $ A $ is $W=W(A,A_0,I_0,\varphi)$ constructed as above, where $ A_0 $ is an associative algebra, $\varphi : A_0 \hookrightarrow A$ is a homomorphism of associative algebras and $I_0 \subset A_0$ is a two-sided ideal. Thus, taking $A=U(\mf{g})$, $A_0=U(\mf{g}_{>0})$ and $I_0$ the two-sided ideal generated by the above set $M$, we get the \emph{quantum finite $W$-algebra}
\begin{equation}
W(\mf{g}, f) = W(A,A_0,I_0,\varphi).
\end{equation} 
\end{rmk}

\begin{thm}[\cite{DSKV13}]
As a differential algebra, the $ W$-algebra 
$\sW(\mf{g},f, s)$ 
is isomorphic to the algebra of differential polynomials on $\mf{g}^f$, the centralizer of $ f $ in $ \mf{g} $.
\end{thm}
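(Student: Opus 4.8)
The statement concerns only the underlying differential algebra, so I would forget the $\lambda$-bracket on $\sW$ itself and just compute $\sW=\widetilde\sW/\sV\varphi(I_0)$ as a commutative differential algebra, in two stages. First I would describe the quotient $\sV/\sV\varphi(I_0)$. Since $\sV=\mc P(\mf g)=S(\F[\partial]\otimes\mf g)$ and $\varphi(I_0)$ is the differential ideal generated by $\{m-(f\,|\,m)\mid m\in\mf g_{\ge 1}\}$, passing to the quotient imposes, for every $m\in\mf g_{\ge 1}$, the relations $m\equiv(f\,|\,m)$ (a scalar, nonzero only for $m\in\mf g_1$) and $m^{(n)}\equiv 0$ for $n\ge 1$, thereby eliminating all generators attached to $\mf g_{\ge 1}$. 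Hence, as a differential algebra,
\[
\sV/\sV\varphi(I_0)\;\cong\;\mc P(\mf g_{<1}),
\]
the generator of $u\in\mf g_{<1}$ being retained. Under this identification $\sW$ is the subalgebra of $g\in\mc P(\mf g_{<1})$ killed, modulo $\sV\varphi(I_0)$, by the infinitesimal gauge action $X_a:=\{a_\lambda\,\cdot\,\}_H$ for all $a\in\mf g_{>0}$, which by \eqref{e6.21} is determined on generators by $\{a_\lambda u\}_H=[a,u]+(a\,|\,u)\lambda$.

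It is convenient to assemble the surviving generators into a $\mf g$-valued ``connection'' $\partial+f+q$, where the constraint has forced the $\mf g_{\ge1}$-part to equal $f$ and the remaining coefficient $q$ is valued in $\mf g_{>-1}$; then the $X_a$ ($a\in\mf g_{>0}$) generate exactly the infinitesimal Drinfeld--Sokolov gauge transformations $\partial+f+q\mapsto e^{\ad S}(\partial+f+q)$, $S\in\mf g_{>0}$. I would grade everything by the conformal weight $\Delta(u)=1-j$ for $u\in\mf g_j$ and $\Delta(\partial)=1$, so that the constraint and the gauge action are homogeneous. The key linear-algebra input is the Jacobson--Morozov decomposition $\mf g=\mf g^e\oplus[f,\mf g]$ (with $\mf g^e=\Ker\ad e$), compatible with the grading; since the invariant form restricts to a perfect pairing $\mf g^e\times\mf g^f\to\F$ one has $(\mf g^e)^\ast\cong\mf g^f$, and together with $\dim\mf g_j=\dim\mf g_{-j}$ and a telescoping sum this gives $\dim\mf g_{<1}-\dim\mf g_{>0}=\dim\mf g^e=\dim\mf g^f$, the correct count for $\mc P(\mf g^f)$.

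The heart of the argument is a gauge-fixing (slicing) lemma: every constrained connection $\partial+f+q$ is carried by a unique $e^{\ad S}$, $S\in\mf g_{>0}$, into a canonical form $\partial+f+q^{\mathrm{can}}$ with $q^{\mathrm{can}}$ valued in the slice $\mf g^e$, complementary to the gauge directions $\ad f(\mf g_{>0})=[f,\mf g]\cap\mf g_{>-1}$. I would solve this recursively on $\Delta$: at each weight the undetermined part of $S$ is governed by the restriction of $\ad f$ to the relevant graded piece of $\mf g_{>0}$, which is injective there (as $\Ker\ad f=\mf g^f\subset\mf g_{\le 0}$), so the equation removing the $[f,\mf g]$-component of $q$ is uniquely solvable order by order. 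The coordinate functions on the slice are then indexed, via $(\mf g^e)^\ast\cong\mf g^f$, by a basis $\{a\}$ of $\mf g^f$, producing distinguished invariants $w^a\in\sW$ whose image in $\mathrm{gr}\,\mc P(\mf g_{<1})$ (for the filtration by polynomial degree refined by conformal weight) is the generator $u_a$ attached to $a$. This triangularity forces the $w^a$ to be differentially algebraically independent and to generate $\sW$, yielding the differential-algebra isomorphism $\mc P(\mf g^f)\xrightarrow{\ \sim\ }\sW$, $a\mapsto w^a$.

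The main obstacle is precisely this gauge-fixing lemma---the existence and, above all, the uniqueness of the canonical form, equivalently the freeness and triangularity of the $w^a$. It rests on checking that at every conformal weight the pertinent restriction of $\ad f$ (resp. $\ad e$) is a bijection between the correct graded subspaces; the delicate case is the half-integer grading with $\mf g_{1/2}\neq 0$, where the part of the gauge algebra in $\mf g_{1/2}$ pairs nontrivially with the surviving fields and one must verify that the slice is still transversal and met exactly once. Once this inductive transversality is in place, the isomorphism of differential algebras follows formally from the triangular structure of the generators $w^a$.
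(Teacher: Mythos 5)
The paper itself contains no proof of this theorem—it is stated with a citation to [DSKV13]—so the only meaningful comparison is with that reference. Your reconstruction (computing the quotient as $\mathcal{P}(\mf{g}_{<1})$, identifying the $\ad_\lambda$-action of $\mf{g}_{>0}$ with infinitesimal gauge transformations of $\partial+f+q$, gauge-fixing uniquely onto the slice $\mf{g}^e$ by a recursion on conformal weight using injectivity of $\ad f$ on $\mf{g}_{>0}$, and deducing free generation from the triangular form of the invariants $w^a$ indexed by a basis of $\mf{g}^f$) is correct and is essentially the same Drinfeld--Sokolov gauge-fixing argument by which the structure theorem is proved in [DSKV13].
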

In particular, for $ f $ principal nilpotent we get the classical Drinfeld-Sokolov reduction. 

The problem is, for which nilpotent elements $f$ can one construct the associated with $\sW(\mf{g},f, s)$  integrable hierarchy of Hamiltonian PDE's? 
Drinfeld and Sokolov constructed such hierarchy in \cite{DS85} for the
principal nilpotent $ f $. (For $\mf{g}=s\ell_n$ it coincides with the
Gelfand-Dickey $n$-th  KdV hierarchy, 
$n=2$ being the KdV hierarchy.) 
Their method is similar to that in the homogeneous case. The same method can be extended, but unfortunately not for all nilpotent elements.
\begin{defn}
A nilpotent element $f\in\mf{g}$ is called of \emph{semisimple type} if $f+s$ is a semisimple element of $\mf{g}$ for some $s\in\mf{g}_d$.
\end{defn}
These elements are classified for all simple Lie algebras $\mf{g}$ \cite{EKV}. For example, principal, subprincipal and minimal nilpotent elements are of semisimple type. In exceptional Lie algebras about one third of the nilpotent elements are of the semisimple type. In $\mf{s\ell}_N$ only those elements corresponding to partitions $(n,\ldots,n,1,\ldots,1)$ of $ N $ are of semisimple type.

\begin{thm} [\cite{DSKV13}]
Let $\mf{g} $ be a simple Lie algebra.  If $f\in \mf{g}$ is a nilpotent element, such that $ f + s $ 
is semisimple for $s\in \mf{g}_d$, then there exists a bi-Hamiltonian hierarchy associated to $\sW(\mf{g},f, s)$, which is both Lie and Liouville is integrable. 
%(For $ f $ principal nilpotent it is the Drinfeld-Sokolov (DS) hierarchy).
\end{thm}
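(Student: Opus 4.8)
The plan is to adapt the Drinfeld--Sokolov dressing method of Theorems~\ref{Th6.3.1} and~\ref{Th6.3.2}, used above for the homogeneous hierarchy, to the $W$-algebra $\sW(\mf{g},f,s)$, with the semisimple element $s$ replaced by the semisimple element $f+s$. First I would record that $\sW(\mf{g},f,s)$ is genuinely bi-Hamiltonian: by Example~\ref{example6.3.4} the ideal $I_0$ is a PVA ideal for \emph{both} $\lambda$-brackets \eqref{e6.21}, so the reduction theorem above applies to each of $H$ and $K$ and produces Poisson $\lambda$-brackets $\{\cdot_\lambda\cdot\}_H$, $\{\cdot_\lambda\cdot\}_K$ on $\sW$; these are compatible since compatibility is inherited from $\sV^1(\mf{g},s)$ (every combination $\alpha H+\beta K$ reduces as well). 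By the Lenard--Magri machinery of Lecture~\ref{sec:5} it then suffices to construct an infinite Lenard--Magri sequence $\int h_n$ satisfying \eqref{LM} whose Hamiltonian vector fields span an infinite-dimensional subspace of $\sW/\partial\sW$.

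Next I would set up the Lax operator adapted to the reduction. Realizing $\sW(\mf{g},f,s)$ on a Drinfeld--Sokolov gauge slice, the relevant operator can be written in the form
\begin{equation}
L(z)=\partial+q+z\big((f+s)\otimes 1\big)\in \F\partial\ltimes(\mf{g}\otimes\sW)[z],
\end{equation}
where $q$ encodes the generators of $\sW$ obtained after gauge fixing, and the crucial point is that the coefficient of $z$ is now the \emph{semisimple} element $f+s$ (the homogeneous case of Theorem~\ref{Th6.3.1} being recovered when $f=0$). Let $\mf{h}=\mf{g}^{f+s}$ be its centralizer, so that $\mf{g}=\mf{h}\oplus\mf{h}^\perp$.

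I would then run the dressing argument verbatim. Exactly as in Theorem~\ref{Th6.3.1}, one solves recursively for unique $U(z)\in z^{-1}(\mf{h}^\perp\otimes\sW)[[z^{-1}]]$ and $f(z)\in(\mf{h}\otimes\sW)[[z^{-1}]]$ with $e^{\ad U(z)}L(z)=\partial+f(z)+z((f+s)\otimes 1)$; the recursion closes at every order precisely because $\ad(f+s)$ is invertible on $\mf{h}^\perp$, which is where semisimplicity of $f+s$ is used. For a central element $a\in\mf{h}$ put $F^a(z)=e^{-\ad U(z)}(a\otimes 1)$ and $h^a(z)=(a\otimes 1\,|\,f(z))$. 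The analogues of Theorem~\ref{Th6.3.2} give $F^a(z)=\delta\!\int\! h^a(z)/\delta u$ and show that the coefficients of $h^a(z)=\sum_{n\geq 0}\int h^a_n\,z^{-n}$ satisfy the Lenard--Magri relation \eqref{e6.22} for the pair $(H,K)$. Since $K$ is degenerate (as noted after \eqref{e6.21}), the hypothesis $K(\partial)\xi_0=0$ of the Compatibility Lemma~\ref{lem:compatibility} is met for the initial density $\xi_0=F^a_0$, whence all the $\int h^a_n$ are in involution with respect to both $H$ and $K$ just as in Theorem~\ref{Th6.3.2}(c); exactness of the $\xi_n=F^a_n$ (hence existence of the densities $h^a_n$) follows from Corollary~\ref{Cor6.1}, as $\sW$ is an algebra of differential polynomials on $\mf{g}^f$.

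The genuine obstacle is the final nontriviality step: one must show that the hierarchy is infinite-dimensional, i.e.\ that the evolutionary vector fields $X_{H(\partial)\,\delta\int h^a_n/\delta u}$ are linearly independent, which then yields Lie integrability and, together with involutivity, Liouville integrability. The hard part is a differential-order (equivalently, conformal-weight) count on the dressed Lax operator showing that successive densities $h^a_n$ genuinely grow---the analogue of the $2n+1$ estimate in the KdV computation at the beginning of this lecture---and this must be controlled uniformly along the nilpotent orbit of $f$. It is exactly here that the \emph{semisimple type} hypothesis is essential: it guarantees that the recursion never stalls and that the centralizer $\mf{h}=\mf{g}^{f+s}$ supplies enough central elements $a$ to produce an infinite, independent family of commuting flows. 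For the detailed verification of these estimates I would appeal to \cite{DSKV13}.
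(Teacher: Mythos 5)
Your high-level plan --- reduce both $\lambda$-brackets \eqref{e6.21} to $\sW(\mf{g},f,s)$ as in Example \ref{example6.3.4}, run a Drinfeld--Sokolov dressing recursion, feed the resulting densities into the Lenard--Magri machinery, and defer the hard order/independence estimates to \cite{DSKV13} --- is indeed the route the paper intends: the paper gives no proof beyond the citation and the remark that the method of \cite{DS85} extends from the homogeneous case. The bi-Hamiltonian part of your argument (the ideal $I_0$ is a PVA ideal for every linear combination $\alpha H+\beta K$, so compatibility descends) is correct.

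However, there is a genuine error in your dressing step, and it sits exactly where the generalization from the homogeneous case is nontrivial. The Lax operator is not $L(z)=\partial+q+z\bigl((f+s)\otimes 1\bigr)$: the nilpotent $f$ must enter at order $z^0$, because it is built into the constraints $M=\{\,m-(f|m)\mid m\in\mf{g}_{\geq 1}\,\}$ that define $\sW(\mf{g},f,s)$, while the spectral parameter multiplies only $s$. The correct operator, on a gauge slice with $q\in\mf{g}^f\otimes\sW$, is $L(z)=\partial+q+(f+zs)\otimes 1$. With your operator the recursion is literally the homogeneous recursion of Theorem \ref{Th6.3.1} applied to the semisimple element $f+s$ over $\mathcal{P}(\mf{g})$; it produces the homogeneous hierarchy attached to $f+s$ and bears no relation to $\sW(\mf{g},f,s)$. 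Consequently the relevant ``centralizer'' is not $\mf{g}^{f+s}\otimes\sW$: one must decompose with respect to $\ad\Lambda$, $\Lambda=f+zs$, acting on the loop algebra $\mf{g}((z^{-1}))$. This is where the semisimple-type hypothesis is actually used: in the grading $\deg(az^k)=j-(d+1)k$ for $a\in\mf{g}_j$, both $f$ and $zs$ are homogeneous of degree $-1$, so $\Lambda$ is homogeneous, and semisimplicity of $f+s$ yields $\mf{g}((z^{-1}))=\Ker(\ad\Lambda)\oplus\im(\ad\Lambda)$, which is what makes the recursion solvable order by order; note that $\Ker(\ad\Lambda)$ is a subalgebra of the loop algebra, not of the form $\mf{h}\otimes(\cdot)$ with $\mf{h}\subset\mf{g}$. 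A smaller slip: degeneracy of $K$ does not by itself give $K(\partial)\xi_0=0$ for your particular $\xi_0$; this must be verified, as in the homogeneous case, from centrality of $a$ in the appropriate centralizer. Once the operator and the graded decomposition are corrected, the rest of your outline (exactness via Corollary \ref{Cor6.1} and the differential-polynomial structure of $\sW$ on $\mf{g}^f$, involution via Lemma \ref{lem:compatibility}, then the independence count) does follow the lines of \cite{DSKV13}.
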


\begin{rmk}
In the recent paper \cite{DSKV15} for any nilpotent element $ f $ of $ g \ell_N $ and non-zero $s\in\mf{g}_d$ an integrable
hierarchy associated to $ W (g \ell_N, f, s) $ is constructed.
\end{rmk}

\subsection{Non-local Poisson structures and the Dirac reduction}

Unfortunately in  many important examples the PVA structure is not enough to deal with integrable systems, as it is in the case of the KdV equation, since in practice most of the Poisson structures are non-local. Thus we need to consider \emph{non-local PVAs}, for which the $ \lambda$-bracket takes value in $\mathscr{V}((\lambda^{-1}))$. Equivalently, the associated operator $ H(\partial) \in \text{Mat}_{\ell \times \ell} \mathscr{V}((\partial^{-1}))$ is now a matrix \textit{pseudodifferential} operator.

Still, we can work with these structures, but we have to check that the axioms for a PVA bracket still make sense when the $ \lambda$-bracket is a map $ \{ \cdot_\lambda \cdot \} : \mathscr{V} \otimes \mathscr{V} \rightarrow \mathscr{V}((\lambda^{-1})) $. Sesquilinearity and the left and right Leibniz rules are clear. For skewsymmetry we have to make sense of $ (\lambda + \partial)^{-1} $: write $ (\lambda +  \partial)^{-1} = \lambda^{-1}( 1 + \frac{\partial}{\lambda})^{-1} $ and then expand in the geometric progression, so we get a Laurent series in $ \lambda $. More generally, for an $ n \in \mathbb{Z} $ we let% Moreover, we always expand $ (\lambda + \partial)^n $ in negative powers of $ \lambda$:
\begin{equation}
(\lambda + \partial)^n = \sum\limits_{k \in \bZ_+} {{n}\choose{k}} \lambda^{n-k} \partial^k.
\end{equation}
We only have problems with the Jacobi identity, and in order for it to make sense we need the $ \lambda$-bracket to satisfy an additional property, called {\it admissibility}:
\begin{equation}
\{\{a_\lambda b\}_\mu c\} \subset \mathscr{V}[[\lambda^{-1},\mu^{-1},(\lambda-\mu)^{-1}]][\lambda,\mu].
\end{equation}
The fact is that when we consider a term like $ \{a_\lambda \{b_\mu c \} \} $ we have to take Laurent series in $ \lambda $ and then Laurent series in $ \mu $ and these can not be interchanged, since what we get are completely different spaces. So, two different terms of the Jacobi identity cannot a priori be compared, and we need this admissibility property in order to do so. If $H(\partial)=A(\partial)\circ B(\partial)^{-1}$ is a rational matrix pseudodifferential operator (that is, both $ A(\partial) $, $ B(\partial) $
 are $\ell \times \ell$ matrix differential operators and $ B(\partial)$ is non-degenerate), then the $ \lambda$-bracket defined by the Master Formula (\ref{masterformula}) is admissible.

\begin{exm}
 For $ \mathscr{V} = \mathscr{P}_1 = \F[u,u',u'',\ldots] $ examples of non-local Poisson structures are:
\begin{itemize}
\item $H(\partial) = \partial^{-1}$ (Toda)
\item $H(\partial) =u' \partial^{-1} u'$ (Sokolov).
\end{itemize}
\end{exm}

More information about non-local PVA can be found in \cite{DSK13}. In particular, it is shown there that the Lenard-Magri scheme can be applied if both $ K(\partial) $ and $ H(\partial) $ are rational pseudodifferential operators. 
One of the most important examples is the following pair of compatible
non-local Poisson structures on the algebra of differential polynomials
in $u$ and $v$, where $\kappa \in \F$: 
\begin{equation}
\label{e.kappa1}
K = \begin{pmatrix}
0 & -1 \\1 & 0 \\
\end{pmatrix}, \qquad H = \begin{pmatrix}
0 & \partial \\
\partial & 0 
\end{pmatrix} \ + 2 \kappa \begin{pmatrix}
u \partial^{-1} \circ u & - u \partial^{-1} \circ v \\
-v \partial^{-1} \circ u & v \partial^{-1} \circ v \\
\end{pmatrix}, 
\end{equation}
which produces the non-linear Schr\"{o}dinger (NLS) equation:
\begin{equation}
\label{e.kappa2}
\begin{split}
\frac{du}{dt} &= u'' + \kappa u^2v \\
\frac{dv}{dt} &= -v'' - \kappa uv^2.
\end{split}
\end{equation}

An important construction, leading to non-local PVA's, is the Dirac 
reduction for PVA's, introduced in \cite{DSKV14}, which generalizes the classical Dirac reduction for Poisson algebras \cite{Dir50}.

\begin{thm}[\cite{DSKV14}]
\label{Dirac}
Let $ (\sV, \{ ._\lambda .\}, \cdot ) $ be a (possibly non-local) PVA. Let $ \theta_1, \ldots, \theta_m \in \sV $ be some elements (constraints) such that
\[  C(\partial) = (( \{\theta_{\beta \ \partial} \ \theta_\alpha \})^m_{\alpha, \beta = 1})_\rightarrow  \]
is a non-degenerate matrix pseudodifferential operator. For $ f, g \in \sV $ let 
\begin{equation}
\label{e6.36}
\{ f_\lambda g \}^D = \{ f_\lambda g \} - \sum_{\alpha, \beta = 1}^{m} \{ \theta_{\beta \ \lambda + \partial} \ g \}_\rightarrow \ (C^{-1})_{\beta \alpha} (\lambda + \partial ) \{ f_\lambda \theta_\alpha \} \, .
\end{equation}
Then this modified $\lambda $-bracket provides $ \sV  $ with a structure of a non-local PVA, such that all elements $ \theta_\alpha $ are central. Consequently, the differential ideal of the PVA $ \sV^D = (\sV, \{ ._\lambda .\}^D, \cdot ), $ generated by the $ \theta_\alpha $'s is a PVA ideal, so that the quotient %$ \bar{\sV}^D $ 
of $ \sV^D $ by this ideal is a non-local PVA. 
\end{thm}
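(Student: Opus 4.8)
The plan is to verify directly that $\{\cdot_\lambda\cdot\}^D$ satisfies all the axioms of a (possibly non-local) PVA, then to read off centrality of the $\theta_\alpha$'s, and finally to deduce the statement about the quotient as a formal consequence. The one structural fact I would record at the outset is that $C(\partial)$ is skewadjoint. Indeed, $C_{\alpha\beta}(\partial)$ is the symbol of $\{\theta_{\beta\,\partial}\,\theta_\alpha\}$, and the skewsymmetry of the ambient $\lambda$-bracket, $\{\theta_{\beta\,\lambda}\,\theta_\alpha\} = -\{\theta_{\alpha\,-\lambda-\partial}\,\theta_\beta\}$, says precisely that $(C_{\alpha\beta})^\ast = -C_{\beta\alpha}$, i.e. $C^\ast = -C$. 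Hence $C^{-1}$ is skewadjoint too, $(C^{-1})^\ast = (C^\ast)^{-1} = -C^{-1}$, a fact I will need for the skewsymmetry of $\{\cdot_\lambda\cdot\}^D$. I will also use repeatedly that $\lambda$ is a central parameter, so that $A(\lambda+\partial)\circ B(\lambda+\partial) = (A\circ B)(\lambda+\partial)$ for matrix pseudodifferential operators $A,B$.

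Sesquilinearity and the two Leibniz rules are the easy axioms, checked by inspection: the correction term is assembled from $\{f_\lambda\theta_\alpha\}$ and $\{\theta_{\beta\,\lambda+\partial}\,g\}_\rightarrow$, in each of which the corresponding property already holds, and the operator $(C^{-1})_{\beta\alpha}(\lambda+\partial)$ sits between them in a position that does not interfere with the $f$- and $g$-bookkeeping. Centrality is the computation that motivates the whole formula, and it is clean. Putting $g=\theta_\gamma$ and using $\{\theta_{\beta\,\lambda+\partial}\,\theta_\gamma\}_\rightarrow = C_{\gamma\beta}(\lambda+\partial)$ by definition of $C$, the correction term of $\{f_\lambda\theta_\gamma\}^D$ becomes $\sum_{\alpha,\beta} C_{\gamma\beta}(\lambda+\partial)(C^{-1})_{\beta\alpha}(\lambda+\partial)\{f_\lambda\theta_\alpha\}$; summing over $\beta$ collapses $\sum_\beta C_{\gamma\beta}(C^{-1})_{\beta\alpha} = \delta_{\gamma\alpha}$, so the correction equals exactly $\{f_\lambda\theta_\gamma\}$ and $\{f_\lambda\theta_\gamma\}^D = 0$. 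The complementary relation $\{\theta_{\gamma\,\lambda}\,g\}^D = 0$ then follows from skewsymmetry of $\{\cdot_\lambda\cdot\}^D$.

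For skewsymmetry I would substitute $\{f_\lambda\theta_\alpha\} = -\{\theta_{\alpha\,-\lambda-\partial}\,f\}$ and the skewsymmetry relation for $\{\theta_{\beta\,\lambda+\partial}\,g\}_\rightarrow$ into the correction term of $\{f_\lambda g\}^D$, then transport $(C^{-1})_{\beta\alpha}$ across using its skewadjointness; the result matches $-$(the correction term of $\{g_{-\lambda-\partial}f\}^D$), in parallel with the skewsymmetry $\{f_\lambda g\} = -\{g_{-\lambda-\partial}f\}$ of the leading term. The Jacobi identity is the main obstacle and the only genuinely long part. I would expand $\{f_\lambda\{g_\mu h\}^D\}^D - \{g_\mu\{f_\lambda h\}^D\}^D - \{\{f_\lambda g\}^D{}_{\lambda+\mu}h\}^D$ into (a) the all-original triple bracket, which vanishes by the Jacobi identity of $\sV$, and (b) terms carrying one or two factors of $C^{-1}$. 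The mechanism for collapsing the $C^{-1}$-terms is to apply Jacobi in $\sV$ to brackets of the form $\{f_\lambda\{\theta_{\beta\,\mu}\,h\}\}$, which produces a term $\{\theta_{\beta}\,\cdots\,\{f_\lambda h\}\}$ that is killed by centrality, and to use the ``inverse-derivative'' identity $\{f_\lambda(C^{-1})_{\beta\alpha}\} = -\sum (C^{-1})_{\beta\gamma}\{f_\lambda C_{\gamma\delta}\}(C^{-1})_{\delta\alpha}$ (the analogue of $d(A^{-1})=-A^{-1}(dA)A^{-1}$) to cancel the remaining cross terms against each other. Admissibility follows since $C^{-1}$ is rational, so every term is a composition of symbols rational in $\lambda$, $\mu$ and $\lambda+\mu$, landing in $\sV[[\lambda^{-1},\mu^{-1},(\lambda-\mu)^{-1}]][\lambda,\mu]$. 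All the difficulty is in tracking operator orderings and the arrow notation; everything else is forced.

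Finally, the statement about the quotient is formal. Let $\mathcal{I}$ be the differential algebra ideal generated by the $\theta_\alpha$'s. For a generator $v\,\theta_\alpha^{(n)}$, the left Leibniz rule gives $\{f_\lambda v\theta_\alpha^{(n)}\}^D = \{f_\lambda v\}^D\theta_\alpha^{(n)} + v\{f_\lambda\theta_\alpha^{(n)}\}^D$, and $\{f_\lambda\theta_\alpha^{(n)}\}^D = (\lambda+\partial)^n\{f_\lambda\theta_\alpha\}^D = 0$ by sesquilinearity and centrality; hence the whole bracket lies in $\mathcal{I}$, and by skewsymmetry the same holds for brackets on the left. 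Thus $\mathcal{I}$ is a PVA ideal of $\sV^D$, and the induced product and $\lambda$-bracket endow $\sV^D/\mathcal{I}$ with the structure of a non-local PVA, completing the proof.
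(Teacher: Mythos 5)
Your proposal is correct and takes essentially the same approach as the paper: the paper's own proof consists precisely of the observation that the displayed formula is the unique $\lambda$-bracket satisfying sesquilinearity and skewsymmetry for which the $\theta_\alpha$ are central, with the Jacobi identity deferred as ``a long, but straightforward, calculation'' (carried out in full only in the cited reference \cite{DSKV14}). Your verification of the easy axioms, the centrality collapse via $C(\lambda+\partial)\circ C^{-1}(\lambda+\partial)=1$, the skewadjointness of $C$, your strategy for the Jacobi identity, and the quotient argument all match that plan, just in greater detail than the paper itself provides.
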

\begin{proof}
Formula (\ref{e6.36}) defines the only $\lambda$-bracket, which satisfies sesquilinearity and skewsymmetry, and for which all the $\theta_i$ are central.
The proof of Jacobi identity is a long, but straightforward, calculation.
\end{proof}

\begin{exm}
\label{ex6.5}
Consider the affine PVA $ \sV = \sV^1 (s\ell_2, s) $ with the two compatible Poisson $ \lambda $-brackets $ \{ ._\lambda .  \}_H $ and 
$ \{ ._\lambda .  \}_K $, given by \eqref{e6.21}. As in Example \ref{example6.3.1}, choose a basis $ e_\alpha, e_{-\alpha}, s $ of $ s\ell_2, $ such that 
\[ [ e_\alpha, e_{-\alpha}] = s, \quad [s, e_{\pm \alpha}]=\pm e_{\pm\alpha}, \]
and the invariant bilinear form, such that $ (e_\alpha | e_{-\alpha}) = 1, (\alpha | \alpha ) = - \kappa. $ 

Consider the constraint $ \theta = s $ (=a  multiple of $ \alpha $). This constraint is central with respect to the $ \lambda $-bracket $ \{ ._\lambda .  \}_K. $ The quotient of $ \sV $ by the differential ideal, 
generated by $ \theta $, is the algebra of differential polynomials $ \mathscr{P}_2 $ in the indeterminates $ u = e_\alpha, v = e_{-\alpha}. $ The induced on $\mathscr{ P}_2$  $\lambda $-bracket by $ \{ ._\lambda .  \}_K $ is given by the matrix $ K $ in \eqref{e.kappa1}, and the Dirac reduced $ \lambda $-bracket $ \{ ._\lambda .  \}_H $ on $\mathscr{P}_2 $ is given by the matrix $ H $ in \eqref{e.kappa1}. The reduced by the constraint $ \theta $ evolution equation \eqref{e6.26} is the NLS equation (\ref{e.kappa2}). 
\end{exm}
This approach establishes integrability of the NLS equation, see \cite{DSKV14a} for details. For other approaches see \cite{TF86} and \cite{DSK13}.

\begin{exr} 
Dirac reduction of the affine PVA $\sV^1(\mf{g},s)$ by a basis of $\mf{h}$,
applied to equation (\ref{e6.25}), gives an integrable Hamiltonian equation
on root vectors of the reductive Lie algebra 
$\mf{g}$:
\[\frac{d e_\alpha}{dt} = \frac{\alpha (a)}{\alpha (s)} e'_\alpha + \sum_{\beta \in \Delta, \beta\neq -\alpha} \frac{\beta(a)}{\beta(s)} e_{- \beta} [e_\beta, e_\alpha],\]
where $a$ and $s$ are some fixed elements of $\mf{h}$, $s$ being regular.
Find its Poisson structures.
\end{exr}

%In my opinion, the three most important methods for establishing integrability of Hamiltonian PDE is the Lenard-Magri method
%%%%%%%%%%%%%%%%%%%%%%%%%%%%%%%%%%%%%%%%%%%%%%%%%%%%%%%

\newpage

\end{document}